\documentclass[11pt]{article}
\usepackage[letterpaper, margin=1in]{geometry}

\usepackage{makecell}
% Set page size and margins
% Replace `letterpaper' with `a4paper' for UK/EU standard size

% Useful packages
\usepackage{amsfonts}
\usepackage{amsmath}
\usepackage{amsthm}
\usepackage{thm-restate}
\usepackage{graphicx}
\usepackage{tikz}
\usetikzlibrary{calc}
\usetikzlibrary{shapes.geometric, arrows, positioning}
\usepackage{xcolor}
\usepackage[colorlinks=true, allcolors=blue]{hyperref}

\usepackage{algorithm}
\usepackage{algpseudocode}
\usepackage{tcolorbox}
\usepackage{subcaption}

\newtheorem{defn}{Definition}
\newtheorem{fact}{Fact}

\newtheorem{prop}{Proposition}
\newtheorem{lem}{Lemma}

\newtheorem{thm}{Theorem}

\DeclareMathOperator{\poly}{poly}

\usepackage{cite}
\usepackage{algorithm}
\usepackage{algpseudocode}

\newcommand{\mc}[1]{\mathcal{#1}}
\newcommand{\ol}{\overline}

 \newcommand{\aaron}[1]{}

 \newcommand{\seri}[1]{}
\newcommand{\E}{\mathop{\textbf{E}}}

\title{Round Elimination via Self-Reduction: Closing Gaps for Distributed Maximal Matching}

\author{
  Seri Khoury\thanks{INSAIT, Sofia University “St. Kliment Ohridski”. \texttt{seri.khoury@insait.ai}} 
  \and 
  Aaron Schild\thanks{Google Research. \texttt{aaron.schild@gmail.com}}
}

\date{}

\begin{document}
\maketitle

\begin{abstract}
Maximal Independent Set (MIS) and Maximal Matching (MM) play a vital role in distributed symmetry breaking. Despite decades of research, the complexity of both problems in the standard \textsc{LOCAL} model remains unresolved, and several gaps between the best-known upper and lower bounds persist. For \(n\)-node graphs with maximum degree \(\Delta\), the best current upper bound for randomized algorithms is \(O(\log \Delta + \operatorname{poly}(\log \log n))\), shown by Barenboim, Elkin, Pettie, and Schneider for MM [FOCS'12, JACM'16], and by Ghaffari for MIS [SODA'16]. On the other hand, the best-known lower bound for the two problems is
\(
\Omega\left(\min\left\{\sqrt{\frac{\log n}{\log \log n}}, \frac{\log \Delta}{\log \log \Delta}\right\}\right),
\)
shown by Kuhn, Moscibroda, and Wattenhofer [PODC'04, JACM'16].

In this work, we present an \(\Omega\left(\min\{\log \Delta, \sqrt{\log n}\}\right)\) lower bound for MM in $\Delta$-ary trees against randomized algorithms. By a folklore reduction, the same lower bound applies to MIS, albeit not in trees. As a function of \(n\), this is the first advancement in our understanding of the randomized complexity of the two problems in more than two decades. As a function of \(\Delta\), this shows that the current upper bounds are optimal for a wide range of \(\Delta \in 2^{O(\sqrt{\log n})}\), answering an open question by Balliu, Brandt, Hirvonen, Olivetti, Rabie, and Suomela [FOCS'19, JACM'21].

Moreover, our result implies a surprising and counterintuitive separation between MIS and MM in trees, as it was very recently shown that MIS in trees can be solved in $o(\sqrt{\log n})$ rounds. While MIS can be used to find an MM in general graphs, the reduction does not preserve the tree structure when applied to trees. Our separation shows that this is not an artifact of the reduction, but a fundamental difference between the two problems in trees. This also implies that MIS is strictly harder in general graphs compared to trees. 

Our main technical contribution is a novel technique in which we show that there is a self-reduction from a matching problem in \(r\) rounds to the same matching problem in \(r - 1\) rounds (with slightly weaker probabilistic guarantees). Conceptually, this resembles the celebrated round elimination technique, which transforms an \(r\)-round algorithm for a problem \(\Pi\) into an \((r - 1)\)-round algorithm for a different problem \(\Pi'\). However, our proof differs significantly from the round elimination framework in several fundamental aspects. One of the key concepts we analyze in achieving our result is \emph{vertex survival probability}, where we show that after \(r \ll \min\{\log \Delta, \sqrt{\log n}\}\) rounds, any algorithm that finds a matching must leave two surviving unmatched nodes that are adjacent.

%To prove our lower bound, we develop a novel technique that resembles the celebrated round-elimination framework, yet with several key differences. In particular, our technique uses a self-reduction from MM in $r$ rounds to MM in $r-1$ rounds, enabling 

%Other implications of our result include an exponentially higher lower bound as a function of $n$ for MM and MIS in regular graphs, and a separation between MM and constant-approximate minimum vertex cover for a wide regime of $\Delta\in 2^{o(\sqrt{\log n}\cdot \log\log n)}$. Prior to our work, such a separation was known only for small values of $\Delta = O(\log\log n/\log\log\log n)$

%Finally, our lower bound implies an $\Omega(\log n)$ lower bound for detereminitic algorithms, improving upon the $\Omega(\log n/\log\log n)$ one by []. We conjecture that this lower bound is the best possible for deterministic algorithms, as recent algorithms are approaching the logarithmic bound. 
%In this document, we show survival probability lower bounds for $r$-round algorithms for matching in $\Delta$-regular trees. Here, we show an $\Omega(\log\Delta)$-round hardness result for regular trees, which has many implications.
\end{abstract}

%\documentclass{article}
%\usepackage{pgfplots}
%\pgfplotsset{compat=1.17}

%\begin{document}
\newpage
\section{Introduction}

Among the most extensively studied problems in distributed symmetry breaking are the Maximal Independent Set (MIS) and Maximal Matching (MM) problems~\cite{Luby86,AlonBI86,RozhonG20,2013Barenboim,BalliuBHORS21,Ghaffari16,BarenboimEPS16,KuhnMW16,BarenboimE10,Linial92,Naor91,0001GHIR23,0001G23,GhaffariP19,ecomposition,BalliuBO22,Balliu0KO22,Balliu0KO21,LenzenW11,MetivierRSZ11,wattenhofer2020mastering,SchneiderW08}. Classical randomized algorithms find an MIS or MM in $O(\log n)$ rounds with high probability in the classical LOCAL model~\cite{Luby86,AlonBI86}.\footnote{We say that an algorithm succeeds with high probability if it succeeds with probability $1-1/n^c$ for an arbitrarily large constant $c>1$.} In this model, there is a network of $n$ nodes that can communicate via synchronized communication rounds, and their goal is to solve some task (e.g., find an MIS or MM) while minimizing the number of communication rounds. 

Remarkably, the classical $\approx 40$-year-old algorithms by~\cite{Luby86,AlonBI86} remain the best known for general graphs. Whether there exists a $o(\log n)$-round algorithm for MIS or MM remains one of the most fascinating mysteries in the field~\cite{2013Barenboim,Ghaffari16,BarenboimEPS16,HarrisSS18,wattenhofer_podc,abs-2406-19430}. On the other hand, the best-known lower bound as a function of $n$ remains the $21$-year-old $\Omega(\sqrt{\log n/\log\log n})$ lower bound by Kuhn, Moscibroda, and Wattenhofer~\cite{KuhnMW04,KuhnMW16}, which applies also to randomized algorithms. For deterministic ones, the goal is to approach as closely as possible the logarithmic state-of-the-art bound for randomized algorithms.\footnote{The current best upper bound for deterministic algorithms is roughly \(O(\log^{5/3} n)\) rounds~\cite{ecomposition}, and the current best lower bound is \(\Omega(\min\{\Delta, \log n / \log \log n\})\)~\cite{BalliuBHORS21}. For more on deterministic algorithms, we refer the reader to the excellent surveys by Suomela~\cite{Suomela13} and Rozhon~\cite{abs-2406-19430}. The focus of this work is on randomized algorithms.}

Unfortunately, our understanding of the randomized complexity of the two problems as a function of $n$ has not improved for more than two decades. Nevertheless, several breakthroughs have substantially advanced our understanding of their complexity as a function of the maximum degree $\Delta$. First, when expressed as a function of both $n$ and $\Delta$, the lower bound of~\cite{KuhnMW16} becomes $\Omega(\min\{\log\Delta/\log\log\Delta,\sqrt{\log n/\log\log n}\})$. A few years later, Barenboim, Elkin, Pettie, and Schneider~\cite{BarenboimEPS16} showed that MM can be solved in $O(\log\Delta + \poly(\log\log n))$ rounds, almost matching the lower bound of~\cite{KuhnMW16} as a function of $\Delta$. Moreover, they presented a similar algorithm for MIS, with a worse dependence on $\Delta$, namely $\log^2\Delta$ rather than $\log\Delta$. Subsequently, Ghaffari presented an improved algorithm for MIS that takes $O(\log\Delta + \poly(\log\log n))$ rounds, closing the gap between the state-of-the-art algorithms for MIS and MM.

Hence, we now know that the logarithmic dependence on $n$ from the classical algorithms of~\cite{AlonBI86,Luby86} can be replaced by a logarithmic dependence on $\Delta$. Interestingly, for graphs with small $\Delta \ll \log\log n$, this dependence is exponentially higher, due to the $\Omega(\min\{\Delta,\log\log n/\log\log\log n\})$ lower bound by Balliu, Brandt, Hirvonen, Olivetti, Rabie, and Suomela~\cite{BalliuBHORS21}, which is tight in that regime of $\Delta$~\cite{Kuhn09,BarenboimE09,BarenboimEK14}.

It is worth mentioning that at the time of the publication of the logarithmic-in-$\Delta$ algorithms for MIS and MM by~\cite{Ghaffari16} and~\cite{BarenboimEPS12}, the community believed that an $\Omega\left(\min\left\{\log\Delta,\sqrt{\log n}\right\}\right)$ lower bound had been established (see also~\cite{abs-1011-5470}). It was thought that such a lower bound held even against any constant-factor approximate minimum vertex cover (MVC). Due to a trivial reduction from MM to 2-approximate MVC (that simply adds all the nodes that are matched in the MM to the vertex cover), the same lower bound would apply to MM and MIS. However, it was later clarified in~\cite{KuhnMW16,Bar-YehudaCS17} that this belief was incorrect. In particular, a constant-approximate MVC can be found in $O\left(\log\Delta/\log\log\Delta\right)$~\cite{Bar-YehudaCS17} rounds, which is tight due to the original (correct) $\Omega(\min\{\log\Delta/\log\log\Delta,\sqrt{\log n/\log\log n}\})$ lower bound of~\cite{KuhnMW16}. 

While the upper bound of~\cite{Bar-YehudaCS17} settled the complexity of constant-approximate MVC as a function of~$\Delta$, it does not apply to MIS or MM. In particular, as discussed earlier, for very small values of~$\Delta \ll \log\log n$, it is known that MIS and MM require $\Omega(\Delta)$ rounds~\cite{Balliu0KO21}. However, this lower bound does not persist if we allow a small $\poly(\log\log n)$ dependence on~$n$. In fact, determining the correct dependence on~$\Delta$ when allowing a small $\poly(\log\log n)$ dependence on~$n$ remains a widely known open question in the field. This question is explicitly mentioned in the seminal paper by~\cite{BalliuBHORS21},\footnote{The significance of the paper by~\cite{BalliuBHORS21} was recognized with a best paper award at FOCS 2019.} which presented the lower bounds for small values of~$\Delta \ll \log\log n$ via the celebrated round elimination technique. The authors specifically state:

\begin{center}
    ``.. when we consider
algorithms that take $\poly(\log\log n)$ rounds as a function of $n$, there is a gap in the complexity as a function of $\Delta$: the
current lower bound is $\Omega\left(\log\Delta/\log\log\Delta\right)$, while the upper bound is $O(\log\Delta)$; it is wide open whether the speedup
simulation technique is applicable in the study of such questions."~\cite{BalliuBHORS21}
\end{center}

Moreover, this gap also translates to a $\Theta\left (\sqrt{\log\log n}\right)$ gap in trees. In particular, the algorithms of~\cite{Ghaffari16,BarenboimEPS16} imply an $O(\sqrt{\log n})$ upper bound in trees, and the $\Omega\left(\sqrt{\log n/\log\log n}\right)$ lower bound by~\cite{KuhnMW16} can be extended to trees~\cite{BarenboimEPS16,BalliuGKO23}.

%Moreover, this $\Theta(\log\log\Delta)$ gap translates also to a $\Theta(\sqrt{\log\log n})$ gap for both MIS and MM in trees.\footnote{Very recent progress on MIS in trees makes progress towards reducing this gap~\cite{UBMIS}, on which we elaborate shortly.}

%In subpolynomial-degree graphs, these algorithms improve upon the classical logarithmic-round ones. Which raises the question 

%The runtime dependence on $\Delta$ in these algorithms comes very close to the celebrated lower bound of $\Omega(\min\{\log\Delta/\log\log\Delta,\sqrt{\log n/\log\log n}\})$ by Kuhn, Moscibroda, and Wattenhofer~\cite{}. 

\paragraph{Our Contribution:} In this work, we present an $\Omega(\min\{\log\Delta,\sqrt{\log n}\})$ lower bound against randomized algorithms for MM in $\Delta$-ary trees.\footnote{A $\Delta$-ary tree is a tree where all non-leaf nodes have degree $\Delta$.} By a folklore reduction, the same lower bound applies to MIS, albeit not in trees.\footnote{In this reduction, we find an MM by using an MIS algorithm on the line graph of the original graph. In the line graph, each node represents an edge of the original graph, and two nodes are connected if the two corresponding edges in the original graph intersect. Clearly, when applied on trees, this reduction doesn't maintain the tree structure.}

\begin{restatable}{thm}{thmmain}\label{thm:main}
Any $C_0\min\{\log\Delta, \sqrt{\log n}\}$-round randomized LOCAL algorithm cannot produce a maximal matching with probability greater than $\Delta^{-1/1000}$ in $n$-vertex $\Delta$-regular graphs, where $C_0 :=10^{-10}$. The same lower bound applies to $\Delta$-ary trees.
\end{restatable}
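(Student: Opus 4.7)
The plan is to work in $\Delta$-regular high-girth graphs, where $r$-neighborhoods are exact $\Delta$-regular trees of depth $r$, or directly in $\Delta$-ary trees, and to study the \emph{vertex survival probability}, i.e., the probability $s_r(A)$ that a distinguished vertex $v$ is left unmatched after running an $r$-round algorithm $A$. The goal is to show that any $r$-round algorithm with $r \le C_0\min\{\log \Delta, \sqrt{\log n}\}$ must have noticeably large $s_r(A)$, then to use the tree structure to upgrade pointwise survival into the existence of an adjacent unmatched pair with probability $\ge 1-\Delta^{-1/1000}$, which means the output fails to be a maximal matching except on an event of probability at most $\Delta^{-1/1000}$.

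The core of the argument is a \textbf{self-reduction lemma}: from any $r$-round algorithm $A$ with per-edge matching probability at most $q$, construct an $(r-1)$-round algorithm $A'$ with per-edge matching probability at most $f(q)$, where $f$ is a controlled blow-up whose per-step loss is small enough, heuristically $f(q) \le q \cdot (1 + O(1/\log \Delta))$, so that iterating $r = \Theta(\log \Delta)$ times still leaves a small per-edge matching probability. To build $A'$, each vertex simulates $A$ on its $(r-1)$-round view, filling in the missing round-$r$ messages by sampling from their conditional distribution given what it already sees. The essential difficulty — and what distinguishes the argument from classical round elimination — is that the output of $A'$ must still be a valid matching: both endpoints of every candidate edge must reach the same decision. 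I would enforce consistency by coupling the two endpoints' simulations through an auxiliary shared random string attached to the edge, accepting an edge only when both simulations agree; the per-step blow-up $f$ above is exactly the cost of this coupling.

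Iterating the self-reduction $r$ times reduces the problem to a $0$-round algorithm, whose decisions depend only on each vertex's own ID and randomness. A direct calculation gives a base-case bound $q_0 = O(1/\Delta)$: without communication, any pair of vertices can agree on a shared edge with probability only $O(1/\Delta)$, since each endpoint independently picks among $\Delta$ neighbors. Inverting the recurrence yields $q_r \le \Delta^{-1-\Omega(1)}$, and therefore $s_r \ge 1 - \Delta \cdot q_r = 1 - \Delta^{-\Omega(1)}$. The two terms in $\min\{\log \Delta, \sqrt{\log n}\}$ arise separately: the $\log\Delta$ term bounds the number of iterations that the per-step blow-up can be absorbed, while the $\sqrt{\log n}$ term bounds how large $r$ can be before $r$-neighborhoods in a $\Delta$-regular graph on $n$ vertices cease to be tree-like (their size must remain $\ll n$).

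In any valid maximal matching, every unmatched vertex has all of its neighbors matched, so if both endpoints of any edge survive the algorithm fails. Starting from the per-vertex bound $s_r \ge 1 - \Delta^{-\Omega(1)}$ and using that decisions at distance much greater than $r$ are independent, a union-bound or second-moment argument then shows that with probability at least $1 - \Delta^{-1/1000}$ some edge has both endpoints surviving, which is the desired contradiction with maximality. The main obstacle I expect is the coupling inside the self-reduction: classical round elimination permits $\Pi' \ne \Pi$ and thereby slackens the joint-decision constraint, whereas insisting that $\Pi' = \Pi$ forces a nontrivial shared-randomness coupling whose analysis is the technical heart of the proof and the source of the $O(1/\log \Delta)$ per-step slack that dictates the final $\log \Delta$ term.
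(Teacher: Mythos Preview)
Your high-level architecture matches the paper's: self-reduction on matching itself, tracked via vertex survival probability, followed by a combinatorial argument that many survivors force an unmatched edge. But two things are off, and the first is fatal.

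\textbf{The recurrence and its conclusion are wrong.} You claim $s_r \ge 1 - \Delta^{-\Omega(1)}$, i.e., survival is nearly~1. This is false: a single round of Luby already matches a constant fraction of vertices, and $O(\log\Delta)$ rounds drive survival down to $2^{-\Theta(r)}$. The correct target (and what the paper proves as Theorem~\ref{thm:vertex-main}) is the much weaker $P_f \ge C_1^{-r}$ for a universal constant $C_1$, i.e., survival is at least a \emph{small} polynomial $\Delta^{-\Omega(1)}$, not close to~1. Correspondingly, the per-step recurrence is $P_g \le C_1 \cdot P_f$ with a \emph{constant} blow-up, not $(1+O(1/\log\Delta))$, and the base case is $P_0 = 1$ (a 0-round matching-certified algorithm must reject every edge, since any two 0-flowers are incident), yielding $P_f \ge C_1^{-r}$. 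Your formulation in terms of an upper bound $q$ on edge-matching probability, iterated from $q_0 = O(1/\Delta)$, cannot produce a contradiction in the right direction.

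\textbf{The coupling mechanism is where the real work is, and ``simulate round $r$ with shared edge randomness'' does not get you there.} The paper does not simulate $f$. Instead it defines $g(y)=1$ for an $(r{-}1)$-flower $y$ iff (a) $y$ is ``$C_5$-good'' (random $r$-neighborhood extensions from both sides have $\mathrm{dir}(f,\cdot)$ pointing towards $y$ with probability $\ge 1-C_5$) and (b) both endpoint $(r{-}1)$-neighborhoods have their \emph{dominant direction} pointing towards $y$. Condition (b) makes $g$ matching-certified deterministically, with no coupling needed. The delicate part is bounding how much mass condition (b) discards: a naive analysis (one fixed $\delta$) gives only $P_g \le O(\sqrt{P_f})$ and hence an $\Omega(\log\log\Delta)$ lower bound. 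The paper's improvement (Lemma~\ref{lem:r-1-neighborhood-low-strong} and Proposition~\ref{prop:complement-prob}) shows $\mathbb{E}_x[\delta_{\mathrm{dom}}(f,x)] = O(P_f)$, which yields the constant-factor blow-up. Your proposal does not contain an analogue of this step.

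Finally, with the correct survival bound $P_f \ge \Delta^{-1/1000}$ (not $1 - \Delta^{-\Omega(1)}$), a second-moment argument alone is not enough to force an unmatched edge; the paper instead works on a specific high-girth $\Delta$-regular graph whose independence number is $O(n\log\Delta/\Delta) \ll n\Delta^{-1/1000}$, so the (concentrated) survivor set cannot be independent.
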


\paragraph{Implications:} 

\begin{enumerate}
    \renewcommand{\labelenumi}{(I\arabic{enumi})}
    \item We bring the first advancement in our understanding of the randomized complexity of MIS and MM as a function of $n$ in more than two decades.  
    \item We close the $\Theta(\log\log\Delta)$ gap discussed above for algorithms that are allowed to have a small $\poly(\log\log n)$ (or even $o(\sqrt{\log n}))$ dependence on $n$, showing the optimality of the $10$-year old algorithm of~\cite{Ghaffari16} and the $13$-year old algorithm of ~\cite{BarenboimEPS12,BarenboimEPS16} for MIS and MM, respectively, for a wide regime of $\Delta\in [\poly(\log n),2^{O(\sqrt{\log n})}]$. This answers the open question by~\cite{BalliuBHORS21} that we discussed above.
    \item We close the $\Theta(\sqrt{\log\log n})$ gap for MM in trees, settling its complexity at $\Theta(\sqrt{\log n})$.

    \item We show a surprising and counterintuitive separation between MIS and MM in trees. In very recent work~\cite{UBMIS}, a $o(\sqrt{\log n})$-round algorithm for MIS in trees was developed.\footnote{In particular, they show an $O(\sqrt{\log n/\log(\log^*(n))})$-round algorithm for MIS in trees (as well as an $O(\log\Delta/\log(\log^*\Delta)+\poly(\log\log n))$-round algorithm for graphs with girth 7).} That is, while MM can only be easier than MIS in general graphs, our lower bound shows that it is \emph{harder} in trees. 
    \item Together with the $O(\log\Delta/\log\log\Delta)$ algorithm for $O(1)$-approximate MVC of~\cite{Bar-YehudaCS17}, our lower bound implies a separation between MM and constant-approximate MVC for a wide range of $\Delta\in 2^{o(\sqrt{\log n} \cdot \log\log n)}$. Prior to our result, such a separation was only known for $\Delta\in \log^{o(1)} n$, due to the lower bound of~\cite{BalliuBHORS21}. 
    \item We exponentially improve the best-known lower bound as a function of $n$ for MIS and MM in regular graphs from $\Omega(\log\log n/\log\log\log n)$~\cite{BalliuBHORS21} to $\Omega(\sqrt{\log n})$. This also implies a higher separation between MM and fractional matching in regular graphs, as the latter is trivial.\footnote{By setting the fractional value of each edge to be $1/\Delta$.} 
\end{enumerate}

%Now talk about vertex cover... it cannot be extended because of the gregory result... and then we close the gap... ok cool. I need to think more here.... 

%\stodo{This is fine, but not here} It is worth mentioning that at the time that the algorithms by~\cite{} were published, the community had believed that these algorithms are strictly optimal as a function of $\Delta$. In particular, the community had believed that the lower bound of $\Omega(\min\{\log\Delta,\sqrt{\log n}\})$ could be improved to an $\Omega(\min\{\log\Delta,\sqrt{\log n}\})$ lower bound. Unfortunately, it was later clarified 

%\paragraph{Implications:} 

%\paragraph{Technical Front:}\seri{will fill later}

\paragraph{Roadmap:} In Section~\ref{sec:TechnicalOverview}, we begin with a technical overview of our result and explain the key differences compared to prior techniques. Then, in Section~\ref{sec:Prelims}, we introduce some definitions and preliminaries. The technical heart of the paper is given in Sections~\ref{sec:ProofBody} and~\ref{sec:VertexSurvivalProbability}.  Section~\ref{sec:ProofBody} contains the main structure of the proof, and Section~\ref{sec:VertexSurvivalProbability} contains a proof of a novel key lemma concerning \emph{vertex survival probability}, which paved the way towards our result.  

\section{Technical Overview}\label{sec:TechnicalOverview}

\textbf{Prior Techniques in a Nutshell:} There are essentially two techniques to obtain lower bounds for MM and MIS in the LOCAL model: indistinguishability and round elimination. The first was developed by~\cite{KuhnMW04} and was used to obtain a lower bound for constant-approximate MVC (even against polylogarithmic approximation). They showed a construction in which there is an unbalanced bipartite graph where the endpoints of each edge have the same view. Since each MVC should not include a large fraction of the bigger side in the bipartite graph, and since the nodes cannot distinguish the bigger side from the smaller one (as they have the same views), their construction implies a lower bound against approximate MVC. Moreover, the radius of the equal views is $\Omega\left(\min\left\{\frac{\log\Delta}{\log\log\Delta}, \sqrt{\frac{\log n}{\log\log n}}\right\}\right)$, implying the same lower bound on the number of rounds. As discussed earlier, this technique cannot be improved due to the $O\left(\frac{\log\Delta}{\log\log\Delta}\right)$-round $O(1)$-approximate MVC algorithm of \cite{Bar-YehudaCS17} (unless it would be developed in a very different way for MM or MIS, which would require novel and substantially different ideas).

The second technique, round elimination, has been used to obtain several lower bounds \cite{BalliuBHORS21,BO20,BalliuB0O24,Balliu0KO23,BalliuBO22,Balliu0KO21,Balliu0KO22,Brandt19,BrandtFHKLRSU16}, including for MM and MIS \cite{BalliuBHORS21}. This technique transforms an $r$-round algorithm for a problem $\Pi$ into an $(r-1)$-round algorithm for another problem $\Pi'$. While this approach has been very successful for several problems, including leading to the state-of-the-art lower bounds for deterministic algorithms, its performance is limited as a function of $n$ for randomized ones. In particular, this technique obtains only a $\tilde{\Omega}(\log\log n)$ lower bound against randomized algorithms, which is exponentially smaller than the lower bound obtained by~\cite{KuhnMW16} as a function of $n$. Furthermore, the approach is somewhat complicated, as deducing the problem $\Pi'$ benefits from the use of a computer program~\cite{Olivetti20}.

\paragraph{Our Technique:} We prove Theorem \ref{thm:main} by \emph{self-reduction} round elimination. We show that it is possible to make $\Pi = \Pi'$ with the help of a novel metric on which to assess randomized algorithms: \emph{vertex survival probability} (which will be discussed shortly). In particular, we do not need to use a computer program to derive $\Pi'$ in our proof.  

 %Note that the algorithm need not always output a \emph{maximal} matching.

For simplicity and without loss of generality, consider a LOCAL algorithm that always outputs a matching. I.e., this algorithm does not suffer from a small probability failure event where the output contains two incident edges. In other words, we assume that the algorithm always outputs a matching, and we analyze the failure probability of that matching not being maximal. We formally deal with this assumption in the proof of Theorem~\ref{thm:main} (we simply add an additional round that deletes any intersecting edges from the output). We call an algorithm that always outputs a matching a \emph{matching-certified algorithm}.

\paragraph{Vertex Survival Probability:} The vertex survival probability of a matching-certified algorithm $f$ (denoted $P_f$) is the probability that a given node is unmatched in the matching found by $f$. On an infinite $\Delta$-regular tree (or a finite $\Delta$-regular graph with sufficiently high girth), all nodes have the same view, and thus have the same (but correlated) output distribution as a function of the algorithm's randomness.\footnote{As we discuss shortly, since we are interested in randomized algorithms, we may assume without loss of generality that the nodes do not have unique IDs to begin with. Indeed, the unique IDs can be simply obtained from the randomness tapes.} This means that $P_f$ is the same for any vertex and is thus well-defined. We show a $2^{-O(r)}$-lower bound on the vertex survival probability of $r$-round algorithms (Theorem \ref{thm:vertex-main}).

%Applying Lemma \ref{lem:vertex-round-elim} together with a trivial 0-round vertex survival probability lower bound of 1 yields a $2^{-O(r)}$-lower bound on the vertex survival probability of $r$-round algorithms
% However, this $P_f$ lower bound alone is \emph{fundamentally} not enough on its own to obtain a hardness result for maximal matching. Theorem \ref{thm:vertex-main} even applies when $\Delta = 2$; i.e. for the cycle graph. In a cycle graph, it shows that any $c\log n$-round algorithm has vertex survival probability $> 1/\sqrt{n}$ for some constant $c$. At first glance, this may look like a $\Omega(\log n)$-round lower bound for maximal matching on a cycle, contradicting the classical Linial lower bound! Crucially, though, a maximal matching does not have to match all vertices. It simply has to match at least one endpoint of each edge. To tackle this issue directly, we would instead want to show a lower bound on \emph{edge} survival probability rather than vertex survival probability.

However, in general, algorithms for maximal matching need not be algorithms with low vertex survival probability. We mitigate this issue by carefully selecting the hard instance. Specifically, we use a $\Delta$-regular high girth graph $G$ in which any set $I$ of at least $n\Delta^{-1/1000}$ vertices cannot be an independent set. Then, for $r = c\log\Delta$ for some constant $c < 1$, Theorem \ref{thm:vertex-main} shows that any $r$-round matching-certified algorithm in $G$ has vertex survival probability at least $\Delta^{-1/1000}$. Thus, the set of unmatched vertices $I$ has expected size $n\Delta^{-1/1000}$, and therefore, with high probability, there is an edge between two unmatched vertices and the found matching is not maximal. To summarize, we prove Theorem \ref{thm:main} by a two-step strategy:

\begin{enumerate}
    \item We show that any $c \log\Delta$-round algorithm in $G$ has vertex survival probability at least $\Delta^{-1/1000}$, for any $\Delta \le 2^{O(\sqrt{\log n})}$\footnote{The upper bound on $\Delta$ is required to ensure that $r = \Theta(\log\Delta)$ is less than the girth $\Theta(\log_{\Delta} n)$ of the graph.} (Section \ref{sec:VertexSurvivalProbability}, Theorem \ref{thm:vertex-main}).
    \item We argue that with such large $\Delta^{-1/1000}$ vertex survival probability, there is an edge between two unmatched nodes in $G$ (Section \ref{sec:ProofBody}).
\end{enumerate}

Our $\Omega\left(\min\left\{\log\Delta,\sqrt{\log n}\right\}\right)$ lower bound is implied by the fact that the $\Omega(\log\Delta)$ lower bound applies for all $\Delta\leq 2^{O(\sqrt{\log n})}$.
To obtain the result for $\Delta$-ary trees as well, we show that an algorithm for $\Delta$-ary tree would imply an algorithm for our $\Delta$-regular graph $G$ (The extension of the proof to trees is given in Section~\ref{sec:LBTrees}).

\paragraph{Randomness and edge neighborhoods:} LOCAL algorithms are usually given IDs and randomness on vertices. IDs can be generated by simply taking the first $10\log n$ bits of the randomness tape and reserving them as IDs. Each vertex will have a unique ID with high probability. Furthermore, we may associate randomness with edges rather than vertices by splitting the randomness tape into $\Delta$ chunks (reserve every $\Delta$ bits), one per adjacent edge. Moreover, while $r$-round LOCAL algorithms are usually given the $r$-neighborhood of a vertex; in our proofs, it is easier to work with the $r$-neighborhood of an edge. That is, we assume that an $r$-round algorithm is associated with an edge (rather than a vertex) and it receives as input the union of the two $r$-neighborhoods of the endpoints of that edge (which we also refer to as \emph{$r$-flower}, as we discuss shortly). Observe that the difference between the complexity of $r$-flower and $r$-neighborhood algorithms is a single round.
 %Finally, an algorithm can be made matching-certified with one additional round by deleting any pair of incident edges. This only increases the success probability of the algorithm, as this procedure does not change (maximal) matchings.\\

\subsection{Round Elimination via Self-Reduction} \label{sec:TOVSP}

We now give an overview of the main technical result of the paper (Lemma~\ref{lem:vertex-round-elim}). In this result, we show that the existence of an $r$-round matching-certified algorithm $f$ implies the existence of an $(r-1)$-round matching-certified algorithm $g$ with slightly higher vertex survival probability. We break up the discussion into three parts:

\begin{enumerate}
    \item Strategy: Near-determinism of $f$ (Section \ref{sec:TOStrategy}).
    \item Warmup: Obtaining an $\Omega(\log\log\Delta)$ lower bound (Section \ref{sec:TOWarmup}).
    \item Refinement: Obtaining an $\Omega(\min(\log\Delta, \sqrt{\log n}))$ lower bound (Section \ref{sec:TORefinement}).
\end{enumerate}

%\seri{Very cool that you managed to make it into two steps instead of three. Suggestion: let's add the middle one in the Appendix. This way: they would think that we just wanted to help them and make it easier to read rather than fundamentally changing the story ;)}

%We do this overview incrementally in order to make it easier to understand the role of each lemma and proposition.

\subsubsection{Strategy: Determinism on $r$-Flowers and $r$-Neighborhoods}\label{sec:TOStrategy}

Consider an $r$-round matching-certified algorithm $f$ with vertex survival probability $P_f$. We consider the performance of $f$ in $\Delta$-ary trees (and  $\Delta$-regular high girth graphs, which locally look like $\Delta$-ary trees). Intuitively, for such graphs, we are able to argue that $f$ is nearly deterministic in several ways that together enable the construction of an $(r-1)$-round algorithm $g$ with vertex survival probability $O(P_f)$. We now discuss this intuition in more detail:

\paragraph{$\Delta$-ary tree structures: $r$-neighborhoods and $r$-flowers:} Recall that the algorithm has access to a tape of randomness on each edge, which we can think of as a numerical label selected uniformly at random from $[0,1]$. An $r$-\emph{neighborhood} of some vertex $A$ is a list of all edge labels present on edges within distance $r$ from $A$. An $r$-\emph{flower} of some edge $\{A,B\}$ is the union of the $r$-neighborhoods centered at $A$ and $B$. We depict this in Figure \ref{fig:unlabeled-rN-rF}.

\begin{figure}[H]
    \begin{tabular}{c | c}
    \begin{subfigure}{0.5\textwidth}
        \begin{tikzpicture}[
    dot/.style={circle, fill, inner sep=1.5pt}, scale=1.0 % Style for junction points (vertices)
]

% --- Define Coordinates ---
% Level 0 (Central Node)
\coordinate (c) at (0,0);

% Level 1 Nodes
\coordinate (n1) at (2, 0);    % Node corresponding to original Z11 position
\coordinate (n2) at (-1.5, 1.5); % Node corresponding to original Z12 position (Positive y)
\coordinate (n3) at (-1.5, -1.5);% Node corresponding to original Z13 position (Negative y)

% Level 2 Endpoints relative to Level 1 nodes
% --- Branches from n1 ---
\coordinate (n1_end1) at ($(n1) + (1.8, -1.0)$);  % Upper right branch endpoint
\coordinate (n1_end2) at ($(n1) + (1.8, 1.0)$); % Lower right branch endpoint

% --- Branches from n2 (Derived by reflecting n3 branches) ---
\coordinate (n2_end1) at ($(n2) + (-0.8, 1.5)$);    % Reflected n3_end1 endpoint
\coordinate (n2_end2) at ($(n2) + (-1.8, 0)$);   % Reflected n3_end2 endpoint

% --- Branches from n3 ---
\coordinate (n3_end1) at ($(n3) + (-1.8, 0)$);    % Left branch endpoint from n3
\coordinate (n3_end2) at ($(n3) + (-0.8, -1.5)$); % Bottom-left branch endpoint from n3

% --- Draw Nodes (Vertices) ---
% Draw dots at all defined connection points
\node[dot] at (c) {};
\node[dot] at (n1) {};
\node[dot] at (n2) {};
\node[dot] at (n3) {};
\node[dot] at (n1_end1) {};
\node[dot] at (n1_end2) {};
\node[dot] at (n2_end1) {};
\node[dot] at (n2_end2) {};
\node[dot] at (n3_end1) {};
\node[dot] at (n3_end2) {};

% --- Draw Edges and Place Labels ---
% Level 1 Edges
\draw (c) -- (n1) node[midway, below, sloped, inner sep=1pt] {};
\draw (c) -- (n2) node[midway, above, sloped, inner sep=1pt] {};
\draw (c) -- (n3) node[midway, below, sloped, inner sep=1pt] {};

% Level 2 Edges
% From n1
\draw (n1) -- (n1_end1) node[midway, below, sloped, inner sep=1pt] {};
\draw (n1) -- (n1_end2) node[midway, above, sloped, inner sep=1pt] {};

% From n2
\draw (n2) -- (n2_end1) node[midway, above, sloped, inner sep=1pt] {};
\draw (n2) -- (n2_end2) node[midway, below, sloped, inner sep=1pt] {};

% From n3
\draw (n3) -- (n3_end1) node[midway, above, sloped, inner sep=1pt] {};
\draw (n3) -- (n3_end2) node[midway, below, sloped, inner sep=1pt] {};

\end{tikzpicture}
        \caption{The 2-neighborhood when $\Delta = 3$.}
    \end{subfigure}
    &
    \begin{subfigure}{0.5\textwidth}
        \begin{tikzpicture}[
    dot/.style={circle, fill, inner sep=1.5pt}, scale=1.0 % Style for junction points (vertices)
]

% --- Define Coordinates ---
% Central Edge - Length remains ~1.8
\coordinate (c_left) at (-0.9, 0);
\coordinate (c_right) at (0.9, 0);

% --- Nodes connected to c_left (Left Side) ---
% Use polar coordinates: (anchor) + (angle:radius)
% Angle 135 deg for n2, 225 deg for n3. Radius ~1.8
\coordinate (n2) at ($(c_left) + (135:1.8)$); % Changed to polar coordinates
\coordinate (n3) at ($(c_left) + (225:1.8)$); % Changed to polar coordinates
% Endpoints relative to the NEW n2 position
\coordinate (n2_end1) at ($(n2) + (-0.8, 1.5)$); % Relative offset remains the same
\coordinate (n2_end2) at ($(n2) + (-1.8, 0)$);  % Relative offset remains the same
% Endpoints relative to the NEW n3 position
\coordinate (n3_end1) at ($(n3) + (-1.8, 0)$);  % Relative offset remains the same
\coordinate (n3_end2) at ($(n3) + (-0.8, -1.5)$); % Relative offset remains the same

% --- Nodes connected to c_right (Reflected Right Side) ---
% Use polar coordinates: (anchor) + (angle:radius)
% Angle 45 deg for n2r, 315 deg for n3r. Radius ~1.8
\coordinate (n2r) at ($(c_right) + (45:1.8)$); % Changed to polar coordinates
\coordinate (n3r) at ($(c_right) + (315:1.8)$);% Changed to polar coordinates
% Reflect endpoints relative to the NEW n2r position
\coordinate (n2r_end1) at ($(n2r) + (0.8, 1.5)$); % Relative offset remains the same
\coordinate (n2r_end2) at ($(n2r) + (1.8, 0)$);  % Relative offset remains the same
% Reflect endpoints relative to the NEW n3r position
\coordinate (n3r_end1) at ($(n3r) + (1.8, 0)$);  % Relative offset remains the same
\coordinate (n3r_end2) at ($(n3r) + (0.8, -1.5)$);% Relative offset remains the same

% --- Draw Nodes (Vertices) ---
% Draw dots at the new central edge endpoints
\node[dot, label=below right:$A$] at (c_left) {};
\node[dot, label=below left:$B$] at (c_right) {};
% Left Nodes
\node[dot] at (n2) {};
\node[dot] at (n3) {};
\node[dot] at (n2_end1) {};
\node[dot] at (n2_end2) {};
\node[dot] at (n3_end1) {};
\node[dot] at (n3_end2) {};
% Right Nodes
\node[dot] at (n2r) {};
\node[dot] at (n3r) {};
\node[dot] at (n2r_end1) {};
\node[dot] at (n2r_end2) {};
\node[dot] at (n3r_end1) {};
\node[dot] at (n3r_end2) {};

% --- Draw Edges and Place Labels ---
% --- Central Edge ---
\draw (c_left) -- (c_right) node[midway, above, inner sep=2pt] {}; % Label for central edge

% --- Original Left Side Edges (Connected to c_left) ---
\draw (c_left) -- (n2) node[midway, above, sloped, inner sep=1pt] {};
\draw (c_left) -- (n3) node[midway, below, sloped, inner sep=1pt] {};
\draw (n2) -- (n2_end1) node[midway, above, sloped, inner sep=1pt] {};
\draw (n2) -- (n2_end2) node[midway, below, sloped, inner sep=1pt] {};
\draw (n3) -- (n3_end1) node[midway, above, sloped, inner sep=1pt] {};
\draw (n3) -- (n3_end2) node[midway, below, sloped, inner sep=1pt] {};

% --- Reflected Right Side Edges (Connected to c_right) ---
\draw (c_right) -- (n2r) node[midway, above, sloped, inner sep=1pt] {}; % Placeholder Label
\draw (c_right) -- (n3r) node[midway, below, sloped, inner sep=1pt] {}; % Placeholder Label
\draw (n2r) -- (n2r_end1) node[midway, above, sloped, inner sep=1pt] {}; % Placeholder Label
\draw (n2r) -- (n2r_end2) node[midway, below, sloped, inner sep=1pt] {}; % Placeholder Label
\draw (n3r) -- (n3r_end1) node[midway, above, sloped, inner sep=1pt] {}; % Placeholder Label
\draw (n3r) -- (n3r_end2) node[midway, below, sloped, inner sep=1pt] {}; % Placeholder Label

\end{tikzpicture}
        \caption{The 2-flower centered at $\{A,B\}$ when $\Delta = 3$.}
    \end{subfigure}
    \end{tabular}
    
    \caption{2-flowers and 2-neighborhoods when $\Delta=3$ without edge labels. For edge-labeled versions, see Figure \ref{fig:rN-rF}.}
    \label{fig:unlabeled-rN-rF}
\end{figure}
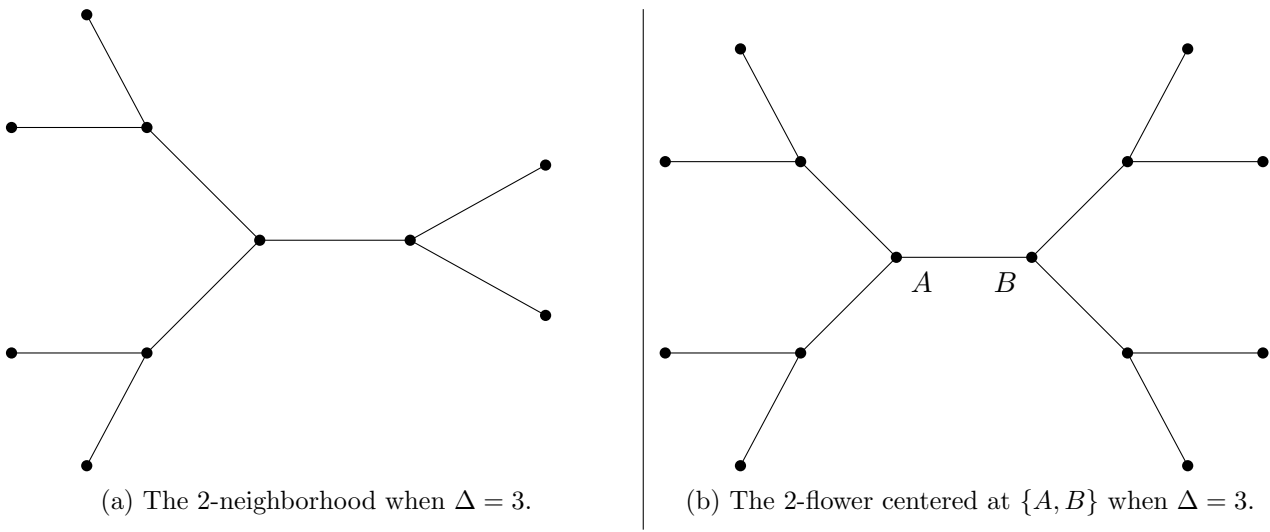

%\textbf{The algorithm $f$:}\\

%\textbf{How we think about $f$ on $(r-1)$-flowers:} When defining the $(r-1)$-round algorithm $g$, we need to define its output for any labeled $(r-1)$-flower $w$. We do this by considering the \emph{distribution} of extensions of $w$ to $r$-flowers, at which point we can study the distribution of outputs of $f$ on these extensions.\\

\paragraph{Determinism of $f$ on $r$-flowers:} As discussed earlier, we assume that an $r$-round algorithm $f$ takes as input an $r$-flower of an edge $\{A,B\}$, together with all the edge labels on the edges in that $r$-flower, and outputs $0$ or $1$ depending on whether it elects to add the edge $\{A,B\}$ to the matching. Since the edge labels already include the randomness tapes, the output of $f$ on $r$-flowers is deterministic. Interestingly, it is not very hard to see that $f$ is also deterministic on $r$-neighborhoods, as we discuss next.

\paragraph{Determinism of $f$ on $r$-neighborhoods:} Suppose that we only fix the labels in the $r$-neighborhood of some vertex $A$. This is not sufficient to deterministically specify the output of $f$, as $f$ receives $r$-flowers rather than $r$-neighborhoods as input. However, $f$ is deterministic on $r$-neighborhoods in another way: extensions to accepting $r$-flowers exist \emph{in only one direction}. An accepting $r$-flower is a flower for which the output of $f$ is 1. In other words, in a sense, only one direction is not deterministic. The statement is given formally in Proposition \ref{prop:r-ngbr-exclusive}.

In more detail, fix an $r$-neighborhood around a node $A$, let $B_1,B_2,\hdots,B_{\Delta}$ denote the neighbors of $A$, and consider the $\Delta$ different distributions of outputs obtained by applying $f$ to $r$-flowers centered at $\{A,B_i\}$ for various $i\in [\Delta]$. The following property holds: at most one $i$ exists for which there are $r$-flower extensions in direction $i$ on which $f$ outputs 1. This holds because if two such directions $i\neq j$ existed, the two directions could simultaneously cause $f$ to match $A$ with $B_i$ and with $B_j$, meaning that $f$ would not be matching-certified. We depict this reasoning in Figure \ref{fig:dir}. If such a value of $i$ exists for an $r$-neighborhood $x$ (i.e., if there is such an $i$ for which there are $r$-flower extensions to $x$ in direction $i$ on which $f$ outputs 1), we denote it by $\text{dir}(f,x)$. If $i$ does not exist, we define $\text{dir}(f,x)$ to be $0$.

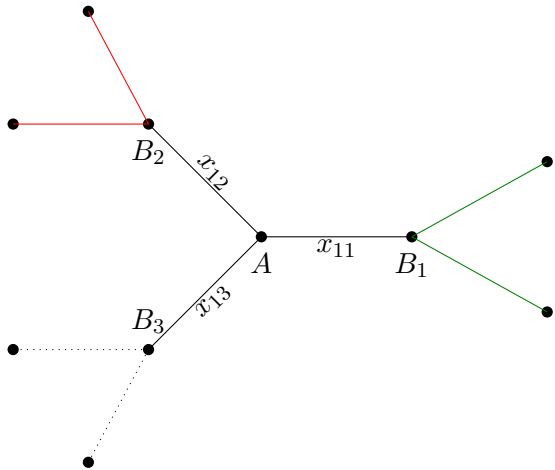
\begin{figure}[H]
    \centering
    \begin{tikzpicture}[
    dot/.style={circle, fill, inner sep=1.5pt}, scale=1.0 % Style for junction points (vertices)
]

% --- Define Coordinates ---
% Level 0 (Central Node)
\coordinate (c) at (0,0);

% Level 1 Nodes
\coordinate (n1) at (2, 0);    % Node corresponding to original Z11 position
\coordinate (n2) at (-1.5, 1.5); % Node corresponding to original Z12 position (Positive y)
\coordinate (n3) at (-1.5, -1.5);% Node corresponding to original Z13 position (Negative y)

% Level 2 Endpoints relative to Level 1 nodes
% --- Branches from n1 ---
\coordinate (n1_end1) at ($(n1) + (1.8, -1.0)$);  % Upper right branch endpoint
\coordinate (n1_end2) at ($(n1) + (1.8, 1.0)$); % Lower right branch endpoint

% --- Branches from n2 (Derived by reflecting n3 branches) ---
\coordinate (n2_end1) at ($(n2) + (-0.8, 1.5)$);    % Reflected n3_end1 endpoint
\coordinate (n2_end2) at ($(n2) + (-1.8, 0)$);   % Reflected n3_end2 endpoint

% --- Branches from n3 ---
\coordinate (n3_end1) at ($(n3) + (-1.8, 0)$);    % Left branch endpoint from n3
\coordinate (n3_end2) at ($(n3) + (-0.8, -1.5)$); % Bottom-left branch endpoint from n3

% --- Draw Nodes (Vertices) ---
% Draw dots at all defined connection points
\node[dot, label=below:$A$] at (c) {};
\node[dot, label=below:$B_1$] at (n1) {};
\node[dot, label=below:$B_2$] at (n2) {};
\node[dot, label=above:$B_3$] at (n3) {};
\node[dot] at (n1_end1) {};
\node[dot] at (n1_end2) {};
\node[dot] at (n2_end1) {};
\node[dot] at (n2_end2) {};
\node[dot] at (n3_end1) {};
\node[dot] at (n3_end2) {};

% --- Draw Edges and Place Labels ---
% Level 1 Edges
\draw (c) -- (n1) node[midway, below, sloped, inner sep=1pt] {$x_{11}$};
\draw (c) -- (n2) node[midway, above, sloped, inner sep=1pt] {$x_{12}$};
\draw (c) -- (n3) node[midway, below, sloped, inner sep=1pt] {$x_{13}$};

% Level 2 Edges
% From n1
\draw[color=green!50!black] (n1) -- (n1_end1) node[midway, below, sloped, inner sep=1pt] {};
\draw[color=green!50!black] (n1) -- (n1_end2) node[midway, above, sloped, inner sep=1pt] {};

% From n2
\draw[color=red] (n2) -- (n2_end1) node[midway, above, sloped, inner sep=1pt] {};
\draw[color=red] (n2) -- (n2_end2) node[midway, below, sloped, inner sep=1pt] {};

% From n3
\draw[dotted] (n3) -- (n3_end1) node[midway, above, sloped, inner sep=1pt] {};
\draw[dotted] (n3) -- (n3_end2) node[midway, below, sloped, inner sep=1pt] {};

\end{tikzpicture}
    \caption{Depicting the scenario in which there are two different $\text{dir}(f,x)$-directions for the 1-neighborhood $x$ (for the sake of contradiction). $x$ consists of three labels $x_{11}$, $x_{12}$, and $x_{13}$ in directions 1, 2, and 3 respectively. Suppose that $f$ is a 1-round matching-certified algorithm, and suppose that 1 and 2 are valid possible values for $\text{dir}(f,x)$. This would imply that labels for the green edges exist for which $f$ applied to the 1-flower centered at $\{A,B_1\}$ returns 1. Separately, it would imply that labels for the red edges exist for which $f$ applied to the 1-flower centered at $\{A,B_2\}$ returns 1. In this case, the red and green edges do not overlap, so they can both be set. However, this would imply that both $\{A,B_1\}$ and $\{A,B_2\}$ are in the matching, contradicting the fact that $f$ is matching-certified.}
    \label{fig:dir}
\end{figure}

\paragraph{The behaviour of $f$ on $(r-1)$-flowers and $(r-1)$-neighborhoods:} $f$ is not deterministic given solely a labeling of an $(r-1)$-flower or $(r-1)$-neighborhood. However, robust analogues (in a probabilistic sense) of the above statements hold for $(r-1)$-flowers and $(r-1)$-neighborhoods, which enable us to prove Lemma \ref{lem:vertex-round-elim} (in which we show the round-elimination step). We discuss these robust analogues more in the next section; Section \ref{sec:TOWarmup}.

\subsubsection{Warmup: An $\Omega(\log\log\Delta)$-Round Lower Bound}\label{sec:TOWarmup}

Given an $r$-round matching-certified algorithm $f$ with vertex survival probability $P_f$, we now describe how to obtain an $(r-1)$-round matching-certified algorithm $g$ with $P_g \le O(\sqrt{P_f})$ as a warmup (as opposed to $O(P_f)$).\footnote{We do not give a formal version of this proof in this paper, as it is superseded by Lemma \ref{lem:vertex-round-elim}.} When integrated with the other components of our framework, this already implies an $\Omega(\log\log\Delta)$-round lower bound for MM, which we consider a good starting point towards understanding our $\Omega(\log\Delta)$ lower bound. We begin with the notion of \emph{$\delta$-good flowers}.

\paragraph{The set of $\delta$-good flowers $\mc X_{r-1}(f,\delta)$:} Without loss of generality, we may say that $f$ accepts an $r$-flower $w$ centered at an edge $\{A,B\}$ if and only if $\text{dir}(f,w_A)$ and $\text{dir}(f,w_B)$ point towards $w$, where $w_A$ and $w_B$ are the $r$-neighborhoods centered around $A$ and $B$ respectively.\footnote{Indeed, in the case where $\text{dir}(f,w_A)$ and $\text{dir}(f,w_B)$ point towards $w$, we can safely add the edge $\{A,B\}$ to the matching, as no neighboring edge can satisfy the same condition due to the uniqueness of $\text{dir}(f,x)$ for a neighborhood $x$.} Towards applying a single round elimination step, we are interested in the set of $(r-1)$-flowers that are part of many accepting $r$-flowers. Intuitively speaking, for such $(r-1)$-flowers, most of their $r$-neighborhood extensions $x$ from both sides have $\text{dir}(f,x)$ pointing towards $y$. More formally, let $\mc X_{r-1}(f,\delta)$ be the set of $(r-1)$-flowers $y$ for which at least a $(1-\delta)$-fraction of $r$-neighborhood extensions $x$ from the side of $A$ have $\text{dir}(f,x)$ pointing towards $B$ (i.e., in the direction of $y$), \emph{and} at least a $(1-\delta)$-fraction of $r$-neighborhood extensions $x$ from the side of $B$ have $\text{dir}(f,x)$ pointing towards $A$. Interestingly, we are able to show that $\mc X_{r-1}(f,\delta)$ has large probability mass, which already gives us a hint towards finding an $(r-1)$-round algorithm $g$.

\paragraph{$\mc X_{r-1}(f,\delta)$ has large probability mass (Proposition \ref{prop:r-1-flower-high-weak}):} We now discuss the proof of Proposition \ref{prop:r-1-flower-high-weak}, in which we show that $\mc X_{r-1}(f,\delta)$ has large probability mass. An $(r-1)$-flower $y$ centered at $\{A,B\}$ can be extended into an $r$-neighborhood on side $A$ and on side $B$. Let $P_A$ denote the probability that a random $r$-neighborhood extension $x$ on the $A$ side of $y$ has $\text{dir}(f,x)$ pointing towards $y$. Similarly, let $P_B$ the probability that a randomly sampled $r$-neighborhood extension $x$ on the $B$-side of $y$ has $\text{dir}(f,x)$ pointing towards $y$. We show that when we randomize over the $(r-1)$-flower $y$, the expected value of both $P_A$ and $P_B$ is $1/\Delta$ (Proposition \ref{prop:one-side}) and the expectation of the product $P_A\cdot P_B$ is $(1-P_f)/\Delta$ (Proposition \ref{prop:two-side}). Since $P_f$ is small, the expected product is close to the expectation of each value, meaning that a sufficiently large fraction of these probabilities (associated with different $(r-1)$-flowers $y$) must be close to 1. The $(r-1)$-flowers for which both $P_A$ and $P_B$ are close to 1 by definition lie in $\mc X_{r-1}(f,\delta)$. Quantitatively, we show that at least a $\frac{1-2P_f/\delta}{\Delta}$-fraction of $(r-1)$-flowers are good (see Proposition \ref{prop:r-1-flower-high-weak} for the formal statement).

\paragraph{Existence of a dominant direction for membership in $\mc X_{r-1}(f,\delta)$ (Proposition \ref{prop:r-1-neighborhood-low-weak}):} Since $\mc X_{r-1}(f,\delta)$ has large probability mass, we could try to define the algorithm $g$ as follows: $g(y) = 1$ if and only if $y\in \mc X_{r-1}(f,\delta)$ (for a sufficiently small value of $\delta$). However, it is possible for two incident $y$s to be in $\mc X_{r-1}(f,\delta)$ simultaneously. Thus, $g$ is not matching-certified. We prove an additional property about any $(r-1)$-neighborhood that allows us to transform $g$ into a matching-certified algorithm. In more detail, for an $(r-1)$-neighborhood $x$ and $i\in [\Delta]$, let $P_i(f,x,\delta)$ be the probability that the $(r-1)$-flower extension of $x$ in the $i$'th direction is $\delta$-good. We show that for any $(r-1)$-neighborhood $x$, there exists a direction $i^*\in [\Delta]$ for which $\sum_{i\ne i^*} P_i(f,x,\delta) = O(\delta)$ (Proposition \ref{prop:r-1-neighborhood-low-weak}). This is a robust analogue of the uniqueness of $\text{dir}$ that occurred for $r$-neighborhoods. We prove this proposition and a generalization of it (Lemma \ref{lem:r-1-neighborhood-low-common}) by contradiction: if no such direction exists, then we use a sampling argument to show that there is an $r$-neighborhood $x$ with two different directions $\text{dir}(f,x)$, contradicting the uniqueness of $\text{dir}(f,x)$. For more overview, see the beginning of Section \ref{subsec:r-1-neighborhood-low}. If there are two different possible values of $i^*$, we may set $i^* = 0$, as $\sum_{i=1}^{\Delta} P_i(f,x,\delta) = O(\delta)$ in this case and it is thus not necessary to ever match an $(r-1)$-flower incident with $x$ anyways. Thus, we may assume that $i^*$ exists and is unique.

%flowers from the distribution over $r$-neighborhood extensions of $x$ conditioned on at least 2 out of the $\Delta$ directions being good $(r-1)$-flowers. If the sum of probabilities in directions other than some value $i$ is greater than $100\delta$, then that intersects the mass $\ge 1-\delta$ region for which a good $(r-1)$-flower in direction $i$ extends to an $i$-pointing $r$-neighborhood. But two such directions cannot exist by the uniqueness of $\text{dir}(f,z)$ for an $r$-neighborhood $z$. (See the proof of Lemma \ref{lem:r-1-neighborhood-low-common} for details.)\seri{I felt the additional wording making the overview here a bit heavier than needed but I think it is fine. I suggest to maybe just say here that otherwise, if no such direction exists, then we show that it contradicts the uniqueness of dir? let me think some more about this..}\seri{Is it clear that $i^*$ is unique here?.... unclear to me at least.}\\

\paragraph{Defining the algorithm $g$:} Now we are ready to define the algorithm $g$. For an $(r-1)$-flower $y$ centered at an edge $\{A,B\}$, $g(y) = 1$ if and only if (a) $y\in \mc X_{r-1}(f,\delta)$ (b) both of the $(r-1)$-neighborhoods in $y$ around $A$ and $B$ have their dominant directions $i^*$ pointing towards $y$.  The uniqueness of $i^*$ ensures that $g$ is matching-certified. 

%(b) $i^*$ is unique for both of the $(r-1)$-neighborhoods in $y$ around $A$ and $B$ (denoted $y_A$ and $y_B$) and

\paragraph{Vertex survival probability analysis:} The existence of $i^*$ shows that the probability that $y\in \mc X_{r-1}(f,\delta)$ and that $i^*$ for $y_A$ does not point towards $y$ is at most $\frac{1}{\Delta}\sum_{i\ne i^*} P_i(f,y_A,\delta) \le O(\delta)/\Delta$, where $y_A$ is the $r$-neighborhood centered at $A$. The probability that $y\in \mc X_{r-1}(f,\delta)$ is at least $(1/\Delta)(1 - 2P_f/\delta)$ as discussed earlier. Thus, $$\Pr[g(y) = 1] \ge \frac{1}{\Delta}(1 - 2P_f/\delta - 2O(\delta)) > (1/\Delta)(1 - O(\sqrt{P_f}))$$ where the last inequality holds for an optimal choice of $\delta\approx \sqrt{P_f}$. This translates into a vertex survival probability bound of $O(\sqrt{P_f})$ for $g$, as a vertex survives if and only if $g(y) = 0$ for all $\Delta$ incident $(r-1)$-flowers.

\subsubsection{Refinement: An $\Omega\left(\min\left\{\log\Delta, \sqrt{\log n}\right\}\right)$-Round Lower Bound}\label{sec:TORefinement}

We now describe how to obtain an $(r-1)$-round matching-certified algorithm $g$ with $P_g \le O(P_f)$. Combining this with the other components in our proof, this implies the desired lower bound for MM (Theorem~\ref{thm:main}). We do this by considering multiple values of $\delta$, as follows. Recall that in the previous section we defined $g(y) = 1$ only for $\delta$-good $(r-1)$-flowers $y$, for an optimized choice of $\delta \approx \sqrt{P_f}$. In designing $g$, we are effectively trying to approximate $f$ given just the $(r-1)$-flower. Using one value of $\delta$ results in a crude approximation, hence the large loss in vertex survival probability found in the previous section. Instead of just considering good flowers for a single value of $\delta$, we instead consider good flowers for all values of $\delta \le C$ for some sufficiently small constant $C\in [0,1]$.

\paragraph{Key improved lemma (Lemma \ref{lem:r-1-neighborhood-low-strong}) and defining dominant neighborhoods $\mc R_{r-1}(f,\delta)$:} We show an improvement to Proposition \ref{prop:r-1-neighborhood-low-weak} that uncovers the existence of interaction between good flower sets for different $\delta$s. Let $\mc R_{r-1}(f,\delta)$ denote the set of $(r-1)$-neighborhoods $x$ with a dominant $\delta$-direction, where a dominant $\delta$-direction is a direction $i^*$ for which $P_{i^*}(f,x,\delta) \ge \hat{C}$ for some large constant $\hat{C}$. We show in Lemma \ref{lem:r-1-neighborhood-low-strong} that for any $x\in \mc R_{r-1}(f,\delta)$, we have $\sum_{i\ne i^*} P_i(f,x,C) \le O(\delta)$.\footnote{This is equivalent to the statement of Lemma \ref{lem:r-1-neighborhood-low-strong}, which uses $\delta_{\text{dom}}$.} This is much stronger than the conclusion obtained by Proposition \ref{prop:r-1-neighborhood-low-weak}, which only reasons about $P_i(f,x,\delta)$ rather than $P_i(f,x,C)$. %This is proven using the same strategy as Proposition \ref{prop:r-1-neighborhood-low-weak}, as both rely on Lemma \ref{lem:r-1-neighborhood-low-common}.\\
%It is worth noting the nested behavior of these sets of neighborhoods. That is, observe that $\mc R_{r-1}(f,\delta)\supseteq \mc R_{r-1}(f,\delta')$ when $\delta'\le \delta$.

\paragraph{Showing that most neighborhoods are in $\mc R_{r-1}(f,\delta)$ for very small $\delta$ (Proposition \ref{prop:complement-prob}):} For an $r$-neighborhood $x$, we define $\delta_{\text{dom}}(f,x)$ to be the smallest value of $\delta$ for which $x$ has a dominant $\delta$-direction; i.e. $x\in \mc R_{r-1}(f,\delta)$. We show that a uniformly random $(r-1)$-neighborhood $x$ has $\delta_{\text{dom}}(f,x) = O(P_f)$ in expectation (Proposition \ref{prop:complement-prob}). A more detailed overview of the proof of this proposition is given in Section \ref{sec:complement-prob}. 

\paragraph{Improved version of $g$:} We use the exact same algorithm as before, but with setting $\delta=C$ for a specific choice of a constant $C\in [0,1]$. Specifically, for an $(r-1)$-flower $y$, we define $g(y) = 1$ if and only if (a) $y$ is $C$-good, (b) both of the $(r-1)$-neighborhoods in $y$ around $A$ and $B$ have their dominant direction $i^*$ pointing towards $y$, where $i^*$ is the direction $i\in [\Delta]$ that maximizes $P_i(f,x,C)$ or 0 if the maximizer is not unique.

\paragraph{Improved vertex survival probability analysis with a  wishful thinking assumption:} Temporarily assume that $\delta_{\text{dom}}(f,x) = O(P_f)$ for \emph{all} $x\in \mc R_{r-1}$, not just in expectation. One can sample $y\in \mc F_{r-1}$ by sampling $y_A\in \mc R_{r-1}$, picking a random direction $i\in [\Delta]$, and sampling an extension of $y_A$ in direction $i$ (Proposition \ref{prop:r-to-f-sample}). This sampling process shows that the probability that $y$ is $C$-good and $i^*$ for $y_A$ does not point towards $y$ is at most\footnote{This probability bound is formally given in the final proof of Lemma \ref{lem:vertex-round-elim} (Section \ref{sec:prooflemmaRE}).}

$$\frac{1}{\Delta}\sum_{i\ne i^*} P_i(f,y_A,C) \le O(\delta_{\text{dom}}(f,y_A))/\Delta \le O(P_f)/\Delta$$

where the first inequality comes from applying both Lemma \ref{lem:r-1-neighborhood-low-strong} and the second comes from the wishful thinking. The probability that $y$ is $C$-good is at least $(1/\Delta)(1 - 2P_f/C)$ (as discussed earlier, this is shown in Proposition \ref{prop:r-1-flower-high-weak}). Thus, by a union bound, $$\Pr[g(y) = 1] \ge \frac{1}{\Delta}\left(1 - 2P_f/C - 2O(P_f)\right) = \frac{1}{\Delta}\left(1 - O(P_f)\right)$$ 
This translates into a vertex survival probability bound of $O(P_f)$ for $g$, as desired.

\paragraph{Overcoming the wishful thinking assumption:} Applying Markov's Inequality in conjunction with Proposition \ref{prop:complement-prob} shows that all but a small constant fraction of $(r-1)$-neighborhoods $x$ have $\delta_{\text{dom}}(f,x) = O(P_f)$.\footnote{In fact, this can be proven directly without going through Proposition \ref{prop:complement-prob}. We do this as a warmup to Proposition \ref{prop:complement-prob} (Proposition \ref{prop:weak-complement-prob}).} Unfortunately, the vertex survival probability increase caused by ignoring this small constant fraction of neighborhoods is substantial. A naive bound\footnote{This can be obtained using Proposition \ref{prop:crude-distrib}.} for the constant fraction $p$ of $x\in \mc R_{r-1}$s with $\delta_{\text{dom}}(f,x) \gg P_f$  shows that the probability that $y$ is $C$-good and $i^*$ for $y_A$ does not point towards $y$ is at most $\frac{1}{\Delta}\sum_{i\ne i^*} P_i(f,y_A,C) \le 2/\Delta$. Plugging this bound in along with the better bound for the other $(1-p)$-fraction of neighborhoods would lead to

$$\Pr[g(y) = 1] \ge \frac{1}{\Delta}(1 - 2P_f/C - 2(1-p)O(P_f) - 4p) = \frac{1}{\Delta}(1 - O(1))$$

which would only lead to a vertex survival probability bound of $O(1)$ since $p$ is constant. In particular, accounting for the fact that the high $\delta_{\text{dom}}$ $y_A$-s only occur with constant probability does not meaningfully improve the analysis.

\paragraph{Final vertex survival probability analysis:} We rectify this issue by simply exploiting the fact that the sampling process in the previous analysis only needs a bound on the expected value of $\delta_{\text{dom}}(f,y_A)$, not the maximum value. In particular, the probability that $y$ is $C$-good and $i^*$ for $y_A$ does not point towards $y$ is at most

$$\E\left[\frac{1}{\Delta}\sum_{i\ne i^*} P_i(f,y_A,C)\right] \le O\left(\E[\delta_{\text{dom}}(f,y_A)]\right)/\Delta \le O(P_f)/\Delta$$

Plugging this into the previous analysis yields $P_g \le O(P_f)$ as desired. Intuitively, some $y_A$s may lead to higher-than-desired probability that $y$ is $C$-good and $i^*$ points away from $y$, but that high probability is counterbalanced by the low fraction of these $y_A$s. It is also worth noting that unlike the $\Omega(\log\log\Delta)$-round analysis from the previous section, which used Proposition \ref{prop:r-1-neighborhood-low-weak} for the single value of $\delta = \sqrt{P_f}$, the new analysis uses the improved version (Lemma \ref{lem:r-1-neighborhood-low-strong}), which implictly works with a \emph{continuous range} of $\delta$s (for each possible value of $\delta_{\text{dom}}(f,y_A)$).

\section{Preliminaries}\label{sec:Prelims}

In this section we provide some preliminaries and formal definitions. In this work, we show hardness results for matching on graphs that locally look like $\Delta$-ary trees for some positive integer $\Delta$. An $r$-round LOCAL algorithm for (any variant of) matching takes the $r$-hop neighborhood of a node and all the random tapes associated with all the vertices in this neighborhood and outputs for each incident edge 1 or 0 depending on whether it joins the matching. As discussed earlier, there is a simple transformation that allows us to assume that the randomness tapes are associated with edges instead of vertices. Furthermore, each randomness tape can be thought of as a single real number selected uniformly at random between $0$ and $1$, sampled independently from the others. In this section, we introduce notation for describing and manipulating $\Delta$-ary trees that are edge-labeled with these numbers. We give our definitions over the course of four subsections, divided as follows:

\begin{enumerate}
    \item Description: defining $r$-neighborhoods and $r$-flowers (Section \ref{sec:PrelimDescription})
    \item Manipulation: transforming neighborhoods into flowers and vice-versa (Section \ref{sec:PrelimManipulation})
    \item Algorithms: processing $r$-flowers (Section \ref{sec:PrelimAlgos})
    \item Distributions: (conditional) sampling from distributions over flowers and neighborhoods (Section \ref{sec:PrelimDistributions})
\end{enumerate}

We have an additional subsection (Section \ref{sec:PrelimOther}) for theorems unrelated to $\Delta$-ary trees that we use in this work.

\subsection{Description}\label{sec:PrelimDescription}

We define two key concepts related to $\Delta$-ary trees: $r$-\emph{neighborhoods} and $r$-\emph{flowers}, where $r$-flowers consist of an edge and all edges within distance $r$ of that edge's endpoints. We are interested in the sets of possible labelings of these subgraphs, which we denote by $\mc R_r$ and $\mc F_r$ respectively.

\paragraph{Some Notation:} We consider $\Delta$-ary trees in this work, for some integer $\Delta\ge 2$.\footnote{$\Delta = 1$ only allows for connected components consisting of a single isolated edge. $\Delta = 2$ leads to the simplest arbitrarily large graph (a path).} All of the $r$-flowers in this work for various values of $r$ are centered around the edge $\{A,B\}$. For a positive integer $r$, let $[r] := \{1,2,\hdots,r\}$ and $[[r]] := \{0\}\cup [r] = \{0,1,\hdots,\Delta\}$.

\begin{defn}[$r$-Flowers $\mathcal{F}_r$]\label{def:rF}
    Let $\mc S_0 := [0,1]$ and for each $r\ge 1$, let $\mc S_r := \mc S_{r-1}^{\Delta-1}$.\footnote{Powers of sets represent Cartesian products, i.e. $\mc S_{r-1}\times\mc  S_{r-1}\times \hdots\times\mc S_{r-1}$ where $\mc S_{r-1}$ appears $\Delta-1$ times.} For $r\ge 0$, let $\mc F_r := \mc S_0\times \mc S_1^2\times \mc S_2^2\times \hdots\times \mc S_{r-1}^2\times \mc S_r^2$. Index members of $w\in \mc F_r$ as a matrix, where $w_{sv}$ is the $v$th copy of $\mc S_s$ for $s\in [r]$ and $v\in \{A,B\}$, and $w_0$ the unique element in $\mc S_0$.
\end{defn}

To clarify, when we say that $w$ is a matrix, it is not a matrix of real numbers, but instead a list of lists of objects, where the 0th row consists of just one element ($w_0$) and the $s$th row for $s \ge 0$ consists of a column indexed by the symbols $\{A,B\}$ rather than integers. Furthermore, the entries in the $s$th row are members of $\mc S_s$ rather than real numbers. Now, we introduce neighborhoods:

\begin{defn}[$r$-Neighborhoods $\mathcal{R}_r$]\label{def:rN}
    For $r\ge 1$, let $\mc R_r := \mc S_0^{\Delta}\times \mc S_1^{\Delta}\times\hdots\times \mc S_{r-1}^{\Delta}$, with $z\in \mc R_r$ indexed as a matrix, with $z_{si}$ in the $i$th copy of $\mc S_{s-1}$\footnote{The discrepancy between $s$ and $s-1$ may be confusing here, but we chose this to align with the $r$-flower definition, specifically the fact that both $z_{si}$ and $w_{si}$ are labels on edges that are from distance $s-1$ to distance $s$ from the center or the edge $\{A,B\}$.} for $s\in [r]$ and $i\in [\Delta]$. For $r = 0$, define the trivial empty neighborhood $\emptyset$ and define $\mc R_0 := \{\emptyset\}$.
\end{defn}

In this definition, the matrix is integer-indexed, but the values are vectors of different dimensions (i.e. from different $\mc S_s$s). We now depict how the reader should think of these labels as being arranged on the $\Delta$-ary tree (Figure \ref{fig:rN-rF}):

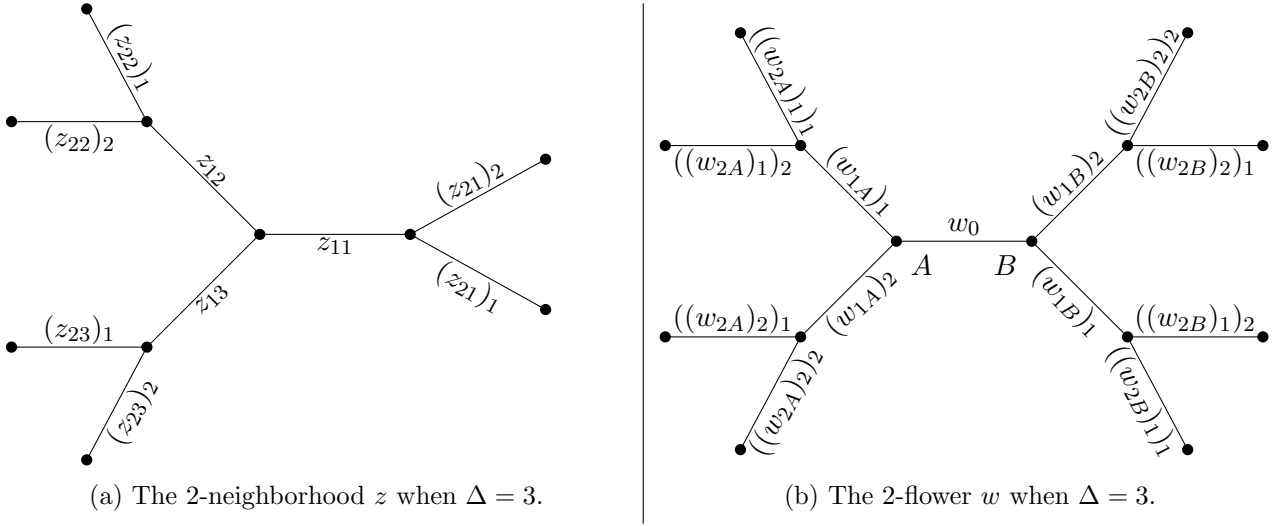
\begin{figure}[H]
    \begin{tabular}{c | c}
    \begin{subfigure}{0.5\textwidth}
        \begin{tikzpicture}[
    dot/.style={circle, fill, inner sep=1.5pt}, scale=1.0 % Style for junction points (vertices)
]

% --- Define Coordinates ---
% Level 0 (Central Node)
\coordinate (c) at (0,0);

% Level 1 Nodes
\coordinate (n1) at (2, 0);    % Node corresponding to original Z11 position
\coordinate (n2) at (-1.5, 1.5); % Node corresponding to original Z12 position (Positive y)
\coordinate (n3) at (-1.5, -1.5);% Node corresponding to original Z13 position (Negative y)

% Level 2 Endpoints relative to Level 1 nodes
% --- Branches from n1 ---
\coordinate (n1_end1) at ($(n1) + (1.8, -1.0)$);  % Upper right branch endpoint
\coordinate (n1_end2) at ($(n1) + (1.8, 1.0)$); % Lower right branch endpoint

% --- Branches from n2 (Derived by reflecting n3 branches) ---
\coordinate (n2_end1) at ($(n2) + (-0.8, 1.5)$);    % Reflected n3_end1 endpoint
\coordinate (n2_end2) at ($(n2) + (-1.8, 0)$);   % Reflected n3_end2 endpoint

% --- Branches from n3 ---
\coordinate (n3_end1) at ($(n3) + (-1.8, 0)$);    % Left branch endpoint from n3
\coordinate (n3_end2) at ($(n3) + (-0.8, -1.5)$); % Bottom-left branch endpoint from n3

% --- Draw Nodes (Vertices) ---
% Draw dots at all defined connection points
\node[dot] at (c) {};
\node[dot] at (n1) {};
\node[dot] at (n2) {};
\node[dot] at (n3) {};
\node[dot] at (n1_end1) {};
\node[dot] at (n1_end2) {};
\node[dot] at (n2_end1) {};
\node[dot] at (n2_end2) {};
\node[dot] at (n3_end1) {};
\node[dot] at (n3_end2) {};

% --- Draw Edges and Place Labels ---
% Level 1 Edges
\draw (c) -- (n1) node[midway, below, sloped, inner sep=1pt] {$z_{11}$};
\draw (c) -- (n2) node[midway, above, sloped, inner sep=1pt] {$z_{12}$};
\draw (c) -- (n3) node[midway, below, sloped, inner sep=1pt] {$z_{13}$};

% Level 2 Edges
% From n1
\draw (n1) -- (n1_end1) node[midway, below, sloped, inner sep=1pt] {$(z_{21})_1$};
\draw (n1) -- (n1_end2) node[midway, above, sloped, inner sep=1pt] {$(z_{21})_2$};

% From n2
\draw (n2) -- (n2_end1) node[midway, above, sloped, inner sep=1pt] {$(z_{22})_1$};
\draw (n2) -- (n2_end2) node[midway, below, sloped, inner sep=1pt] {$(z_{22})_2$};

% From n3
\draw (n3) -- (n3_end1) node[midway, above, sloped, inner sep=1pt] {$(z_{23})_1$};
\draw (n3) -- (n3_end2) node[midway, below, sloped, inner sep=1pt] {$(z_{23})_2$};

\end{tikzpicture}
        \caption{The 2-neighborhood $z$ when $\Delta = 3$.}
    \end{subfigure}
    &
    \begin{subfigure}{0.5\textwidth}
        \begin{tikzpicture}[
    dot/.style={circle, fill, inner sep=1.5pt}, scale=1.0 % Style for junction points (vertices)
]

% --- Define Coordinates ---
% Central Edge - Length remains ~1.8
\coordinate (c_left) at (-0.9, 0);
\coordinate (c_right) at (0.9, 0);

% --- Nodes connected to c_left (Left Side) ---
% Use polar coordinates: (anchor) + (angle:radius)
% Angle 135 deg for n2, 225 deg for n3. Radius ~1.8
\coordinate (n2) at ($(c_left) + (135:1.8)$); % Changed to polar coordinates
\coordinate (n3) at ($(c_left) + (225:1.8)$); % Changed to polar coordinates
% Endpoints relative to the NEW n2 position
\coordinate (n2_end1) at ($(n2) + (-0.8, 1.5)$); % Relative offset remains the same
\coordinate (n2_end2) at ($(n2) + (-1.8, 0)$);  % Relative offset remains the same
% Endpoints relative to the NEW n3 position
\coordinate (n3_end1) at ($(n3) + (-1.8, 0)$);  % Relative offset remains the same
\coordinate (n3_end2) at ($(n3) + (-0.8, -1.5)$); % Relative offset remains the same

% --- Nodes connected to c_right (Reflected Right Side) ---
% Use polar coordinates: (anchor) + (angle:radius)
% Angle 45 deg for n2r, 315 deg for n3r. Radius ~1.8
\coordinate (n2r) at ($(c_right) + (45:1.8)$); % Changed to polar coordinates
\coordinate (n3r) at ($(c_right) + (315:1.8)$);% Changed to polar coordinates
% Reflect endpoints relative to the NEW n2r position
\coordinate (n2r_end1) at ($(n2r) + (0.8, 1.5)$); % Relative offset remains the same
\coordinate (n2r_end2) at ($(n2r) + (1.8, 0)$);  % Relative offset remains the same
% Reflect endpoints relative to the NEW n3r position
\coordinate (n3r_end1) at ($(n3r) + (1.8, 0)$);  % Relative offset remains the same
\coordinate (n3r_end2) at ($(n3r) + (0.8, -1.5)$);% Relative offset remains the same

% --- Draw Nodes (Vertices) ---
% Draw dots at the new central edge endpoints
\node[dot, label=below right:$A$] at (c_left) {};
\node[dot, label=below left:$B$] at (c_right) {};
% Left Nodes
\node[dot] at (n2) {};
\node[dot] at (n3) {};
\node[dot] at (n2_end1) {};
\node[dot] at (n2_end2) {};
\node[dot] at (n3_end1) {};
\node[dot] at (n3_end2) {};
% Right Nodes
\node[dot] at (n2r) {};
\node[dot] at (n3r) {};
\node[dot] at (n2r_end1) {};
\node[dot] at (n2r_end2) {};
\node[dot] at (n3r_end1) {};
\node[dot] at (n3r_end2) {};

% --- Draw Edges and Place Labels ---
% --- Central Edge ---
\draw (c_left) -- (c_right) node[midway, above, inner sep=2pt] {$w_{0}$}; % Label for central edge

% --- Original Left Side Edges (Connected to c_left) ---
\draw (c_left) -- (n2) node[midway, above, sloped, inner sep=1pt] {$(w_{1A})_1$};
\draw (c_left) -- (n3) node[midway, below, sloped, inner sep=1pt] {$(w_{1A})_2$};
\draw (n2) -- (n2_end1) node[midway, above, sloped, inner sep=1pt] {$((w_{2A})_1)_1$};
\draw (n2) -- (n2_end2) node[midway, below, sloped, inner sep=1pt] {$((w_{2A})_1)_2$};
\draw (n3) -- (n3_end1) node[midway, above, sloped, inner sep=1pt] {$((w_{2A})_2)_1$};
\draw (n3) -- (n3_end2) node[midway, below, sloped, inner sep=1pt] {$((w_{2A})_2)_2$};

% --- Reflected Right Side Edges (Connected to c_right) ---
\draw (c_right) -- (n2r) node[midway, above, sloped, inner sep=1pt] {$(w_{1B})_2$}; % Placeholder Label
\draw (c_right) -- (n3r) node[midway, below, sloped, inner sep=1pt] {$(w_{1B})_1$}; % Placeholder Label
\draw (n2r) -- (n2r_end1) node[midway, above, sloped, inner sep=1pt] {$((w_{2B})_2)_2$}; % Placeholder Label
\draw (n2r) -- (n2r_end2) node[midway, below, sloped, inner sep=1pt] {$((w_{2B})_2)_1$}; % Placeholder Label
\draw (n3r) -- (n3r_end1) node[midway, above, sloped, inner sep=1pt] {$((w_{2B})_1)_2$}; % Placeholder Label
\draw (n3r) -- (n3r_end2) node[midway, below, sloped, inner sep=1pt] {$((w_{2B})_1)_1$}; % Placeholder Label

\end{tikzpicture}
        \caption{The 2-flower $w$ when $\Delta = 3$.}
    \end{subfigure}
    \end{tabular}
    
    \caption{The assignment of scalars within $z$ and $w$ to edges in a $\Delta$-ary tree. The nested indices are never actually used in our proofs -- we manipulate flowers and neighborhoods entirely using the functions $\text{end}_v$ and $\text{res}_i$ described later in this section. We state the properties of $\text{end}_v$ and $\text{res}_i$ that we require in Section \ref{sec:VertexSurvivalProbability} as propositions and lemmas in Section \ref{sec:PrelimDistributions}.}
    \label{fig:rN-rF}
\end{figure}

%\begin{defn}[$r$-round algorithms for matching and $r$-neighborhoods, flowers]

%\seri{This defines here neighborhoods and flowers} Let $\mc S_0 := [0,1]$ and for each $r\ge 1$, let $S_r := S_{r-1}^{\Delta-1}$ (representing extension in one direction in going from $(r-1)$ to $r$-flowers). For $r\ge 0$, let $\mc F_r := \mc S_0\times \mc S_1^2\times \mc S_2^2\times \hdots\times \mc S_{r-1}^2\times \mc S_r^2$. Index members of $w\in \mc F_r$ as a matrix, with $w_{sv}$ in the $v$th copy of $\mc S_s$ for $s\in [r]$ and $v\in \{A,B\}$, and $w_0$ the unique element in $\mc S_0$. For $r\ge 1$, let $\mc R_r := \mc S_0^{\Delta}\times \mc S_1^{\Delta}\times\hdots\times S_{r-1}^{\Delta}$, with $z\in \mc R_r$ indexed as a matrix, with $z_{si}$ in the $i$th copy of $\mc S_{s-1}$ for $s\in [r]$ and $i\in [\Delta]$. 

%\end{defn}

\subsection{Manipulation}\label{sec:PrelimManipulation}

We could have instead defined $r$-neighborhoods and $r$-flowers as labelings of certain graphs. We chose not to do this in order to more easily facilitate transformations between labelings of $r$-flowers and $r$-neighborhoods. These transformations are critical to obtaining a proof of our main technical result (Theorem \ref{thm:vertex-main}). First, we define $\sigma$-shuffles, which intuitively speaking are transformations on $r$-neighborhoods that permute the $\Delta$ different directions. Observe that in the following definition we are slightly abusing notation by not just letting $\sigma$ be a function over integers $[k]$, but also letting it be a function over $R_r$ and $[[k]]$:

\begin{defn}[$\sigma$-Shuffles]
For a positive integer $k$, let $\Pi_k$ denote the set of permutations on $[k]$; that is the set of bijective functions $\sigma:[k]\rightarrow[k]$. For any $\sigma\in \Pi_k$, define $\sigma(0) := 0$ (recall that $0\notin [k]$ so has not already been assigned). For a permutation $\sigma \in \Pi_{\Delta}$, $r\ge 0$, and $z\in \mc R_r$, define $\sigma(z)\in \mc R_r$ as follows: $(\sigma(z))_{s\sigma(i)} := z_{si}$ for all $s\in [r]$, $i\in [\Delta]$. 
\end{defn}

We often need to work with the specific permutation $\sigma_i\in \Pi_{\Delta}$ that switches 1 and $i$ for some direction $i\in [\Delta]$:

\begin{defn}[The special permutations $\sigma_i$]
For each $i\in [\Delta]$, let $\sigma_i\in \Pi_{\Delta}$ denote the permutation that does the following: $\sigma_i(1) = i$, $\sigma_i(i) = 1$, and $\sigma_i(j) = j$ for all other $j\in [\Delta]$.
\end{defn}

Next, we discuss the function $\text{end}_v(w)$. Intuitively speaking, $\text{end}_v(w)$ takes an $r$-flower $w$ centered around the edge $\{A,B\}$ and $v\in \{A,B\}$ and outputs the $r$-neighborhood centered around $v$, with $w$ being an extension of $\text{end}_v(w)$ in the 1-direction of $v$. We depict $\text{end}_A(w)$ in Figure \ref{fig:endpoint-of-flower}.

\begin{defn}[Endpoint $end_v$ of an $r$-Flower]\label{def:endpoint-of-flower}
    For $r\ge 1$, $v\in \{A,B\}$, and $w\in \mc F_r$, define the $v$-\emph{endpoint} $\text{end}_v(w) \in \mc R_r$ as follows:

\begin{enumerate}
    \item $(\text{end}_v(w))_{11} := w_0$
    \item $(\text{end}_v(w))_{s1} := w_{(s-1)u}$ for $s\in \{2,3,\hdots,r\}$, where $u$ is the unique member of $\{A,B\}\setminus \{v\}$
    \item $(\text{end}_v(w))_{s(i+1)} := (w_{sv})_i$ for any $i\in [\Delta-1]$ and $s\in [r]$
\end{enumerate}

For $r = 0$, define $\text{end}_v(w) := \emptyset$.
\end{defn}

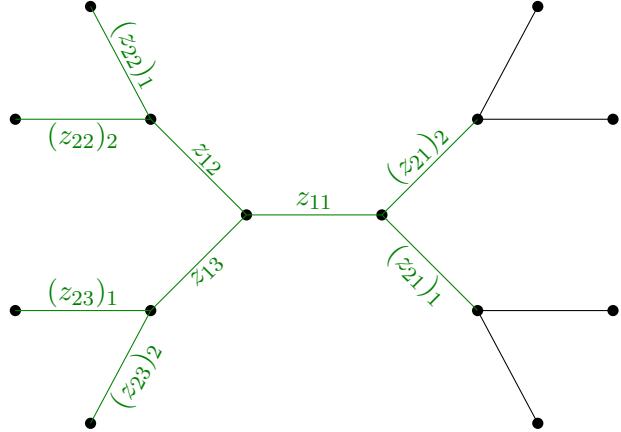
\begin{figure}[H]
    \centering
    \begin{tikzpicture}[
    dot/.style={circle, fill, inner sep=1.5pt}, scale=1.0 % Style for junction points (vertices)
]

% --- Define Coordinates ---
% Central Edge - Length remains ~1.8
\coordinate (c_left) at (-0.9, 0);
\coordinate (c_right) at (0.9, 0);

% --- Nodes connected to c_left (Left Side) ---
% Use polar coordinates: (anchor) + (angle:radius)
% Angle 135 deg for n2, 225 deg for n3. Radius ~1.8
\coordinate (n2) at ($(c_left) + (135:1.8)$); % Changed to polar coordinates
\coordinate (n3) at ($(c_left) + (225:1.8)$); % Changed to polar coordinates
% Endpoints relative to the NEW n2 position
\coordinate (n2_end1) at ($(n2) + (-0.8, 1.5)$); % Relative offset remains the same
\coordinate (n2_end2) at ($(n2) + (-1.8, 0)$);  % Relative offset remains the same
% Endpoints relative to the NEW n3 position
\coordinate (n3_end1) at ($(n3) + (-1.8, 0)$);  % Relative offset remains the same
\coordinate (n3_end2) at ($(n3) + (-0.8, -1.5)$); % Relative offset remains the same

% --- Nodes connected to c_right (Reflected Right Side) ---
% Use polar coordinates: (anchor) + (angle:radius)
% Angle 45 deg for n2r, 315 deg for n3r. Radius ~1.8
\coordinate (n2r) at ($(c_right) + (45:1.8)$); % Changed to polar coordinates
\coordinate (n3r) at ($(c_right) + (315:1.8)$);% Changed to polar coordinates
% Reflect endpoints relative to the NEW n2r position
\coordinate (n2r_end1) at ($(n2r) + (0.8, 1.5)$); % Relative offset remains the same
\coordinate (n2r_end2) at ($(n2r) + (1.8, 0)$);  % Relative offset remains the same
% Reflect endpoints relative to the NEW n3r position
\coordinate (n3r_end1) at ($(n3r) + (1.8, 0)$);  % Relative offset remains the same
\coordinate (n3r_end2) at ($(n3r) + (0.8, -1.5)$);% Relative offset remains the same

% --- Draw Nodes (Vertices) ---
% Draw dots at the new central edge endpoints
\node[dot] at (c_left) {};
\node[dot] at (c_right) {};
% Left Nodes
\node[dot] at (n2) {};
\node[dot] at (n3) {};
\node[dot] at (n2_end1) {};
\node[dot] at (n2_end2) {};
\node[dot] at (n3_end1) {};
\node[dot] at (n3_end2) {};
% Right Nodes
\node[dot] at (n2r) {};
\node[dot] at (n3r) {};
\node[dot] at (n2r_end1) {};
\node[dot] at (n2r_end2) {};
\node[dot] at (n3r_end1) {};
\node[dot] at (n3r_end2) {};

% --- Draw Edges and Place Labels ---
% --- Central Edge ---
\draw[color=green!50!black] (c_left) -- (c_right) node[midway, above, inner sep=2pt] {$z_{11}$}; % Label for central edge

% --- Original Left Side Edges (Connected to c_left) ---
\draw[color=green!50!black] (c_left) -- (n2) node[midway, above, sloped, inner sep=1pt] {$z_{12}$};
\draw[color=green!50!black] (c_left) -- (n3) node[midway, below, sloped, inner sep=1pt] {$z_{13}$};
\draw[color=green!50!black] (n2) -- (n2_end1) node[midway, above, sloped, inner sep=1pt] {$(z_{22})_1$};
\draw[color=green!50!black] (n2) -- (n2_end2) node[midway, below, sloped, inner sep=1pt] {$(z_{22})_2$};
\draw[color=green!50!black] (n3) -- (n3_end1) node[midway, above, sloped, inner sep=1pt] {$(z_{23})_1$};
\draw[color=green!50!black] (n3) -- (n3_end2) node[midway, below, sloped, inner sep=1pt] {$(z_{23})_2$};

% --- Reflected Right Side Edges (Connected to c_right) ---
\draw[color=green!50!black] (c_right) -- (n2r) node[midway, above, sloped, inner sep=1pt] {$(z_{21})_2$};
\draw[color=green!50!black] (c_right) -- (n3r) node[midway, below, sloped, inner sep=1pt] {$(z_{21})_1$};
\draw (n2r) -- (n2r_end1) node[midway, above, sloped, inner sep=1pt] {};
\draw (n2r) -- (n2r_end2) node[midway, below, sloped, inner sep=1pt] {};
\draw (n3r) -- (n3r_end1) node[midway, above, sloped, inner sep=1pt] {};
\draw (n3r) -- (n3r_end2) node[midway, below, sloped, inner sep=1pt] {};

\end{tikzpicture}
    \caption{The 2-neighborhood $z = \text{end}_A(w)$ with $\Delta = 3$ and the 2-flower $w$ given in Figure \ref{fig:rN-rF}. The $\{A,B\}$ edge became the edge pointing in the 1 direction.}
    \label{fig:endpoint-of-flower}
\end{figure}

It is worth noting that conceptually speaking, the set of $r$-flowers $y\in \mc F_r$ that extend the $r$-neighborhood $x\in \mc R_r$ in direction $i\in [\Delta]$ can be formally expressed as the set of $y\in \mc F_r$ for which $\text{end}_A(y) = \sigma_i(x)$. Intuitively, this is because $y$ extends the 1-direction of $\text{end}_A(y)$, which is in turn the $i$-direction of $x$.

Next, we introduce the function $\text{res}_i(z)$ that, intuitively speaking, takes an $r$-neighborhood $z$ and $i\in [\Delta]$ and outputs the $r$-flower in the direction $i$, with the center vertex of $z$ mapped to $A$. We depict $\text{res}_i(z)$ in Figure \ref{fig:restriction-of-neighborhood}.

\begin{defn}[Restrictions $\text{res}_i$ of $r$-Neighborhoods to $(r-1)$-Flowers]\label{def:rest-rN-rF}

For $r\ge 1$ and $z\in \mc R_r$, define the $1$-\emph{restriction} $\text{res}_1(z) \in \mc F_{r-1}$ as follows:

\begin{enumerate}
    \item $(\text{res}_1(z))_0 := z_{11}$
    \item $(\text{res}_1(z))_{sB} := z_{(s+1)1}$ for all $s\in [r-1]$
    \item $((\text{res}_1(z))_{sA})_i := z_{s(i+1)}$ for all $s\in [r-1]$, $i\in [\Delta-1]$
\end{enumerate}

For $i\in [\Delta]$ and $z\in \mc R_r$, define $\text{res}_i(z) := \text{res}_1(\sigma_i(z))$.
\end{defn}

\begin{figure}[H]
    \centering
    \begin{tikzpicture}[
    dot/.style={circle, fill, inner sep=1.5pt}, scale=1.0 % Style for junction points (vertices)
]

% --- Define Coordinates ---
% Level 0 (Central Node)
\coordinate (c) at (0,0);

% Level 1 Nodes
\coordinate (n1) at (2, 0);    % Node corresponding to original Z11 position
\coordinate (n2) at (-1.5, 1.5); % Node corresponding to original Z12 position (Positive y)
\coordinate (n3) at (-1.5, -1.5);% Node corresponding to original Z13 position (Negative y)

% Level 2 Endpoints relative to Level 1 nodes
% --- Branches from n1 ---
\coordinate (n1_end1) at ($(n1) + (1.8, -1.0)$);  % Upper right branch endpoint
\coordinate (n1_end2) at ($(n1) + (1.8, 1.0)$); % Lower right branch endpoint

% --- Branches from n2 (Derived by reflecting n3 branches) ---
\coordinate (n2_end1) at ($(n2) + (-0.8, 1.5)$);    % Reflected n3_end1 endpoint
\coordinate (n2_end2) at ($(n2) + (-1.8, 0)$);   % Reflected n3_end2 endpoint

% --- Branches from n3 ---
\coordinate (n3_end1) at ($(n3) + (-1.8, 0)$);    % Left branch endpoint from n3
\coordinate (n3_end2) at ($(n3) + (-0.8, -1.5)$); % Bottom-left branch endpoint from n3

% --- Draw Nodes (Vertices) ---
% Draw dots at all defined connection points
\node[dot, label=left:$A$] at (c) {};
\node[dot] at (n1) {};
\node[dot, label=below:$B$] at (n2) {};
\node[dot] at (n3) {};
\node[dot] at (n1_end1) {};
\node[dot] at (n1_end2) {};
\node[dot] at (n2_end1) {};
\node[dot] at (n2_end2) {};
\node[dot] at (n3_end1) {};
\node[dot] at (n3_end2) {};

% --- Draw Edges and Place Labels ---
% Level 1 Edges
\draw[color=green!50!black] (c) -- (n1) node[midway, below, sloped, inner sep=1pt] {$(y_{1A})_1$};
\draw[color=green!50!black] (c) -- (n2) node[midway, above, sloped, inner sep=1pt] {$y_0$};
\draw[color=green!50!black] (c) -- (n3) node[midway, below, sloped, inner sep=1pt] {$(y_{1A})_2$};

% Level 2 Edges
% From n1
\draw (n1) -- (n1_end1) node[midway, below, sloped, inner sep=1pt] {};
\draw (n1) -- (n1_end2) node[midway, above, sloped, inner sep=1pt] {};

% From n2
\draw[color=green!50!black] (n2) -- (n2_end1) node[midway, above, sloped, inner sep=1pt] {$(y_{1B})_1$};
\draw[color=green!50!black] (n2) -- (n2_end2) node[midway, below, sloped, inner sep=1pt] {$(y_{1B})_2$};

% From n3
\draw (n3) -- (n3_end1) node[midway, above, sloped, inner sep=1pt] {};
\draw (n3) -- (n3_end2) node[midway, below, sloped, inner sep=1pt] {};

\end{tikzpicture}
    \caption{The 1-flower $y = \text{res}_2(z)$ with $\Delta = 3$ and the 2-neighborhood $z$ given in Figure \ref{fig:rN-rF}.}
    \label{fig:restriction-of-neighborhood}
\end{figure}
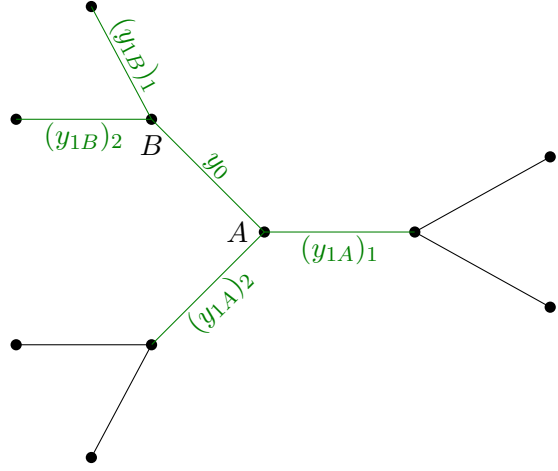

Next, we define reversals of flowers, obtained by simply reflecting the labels about the center edge $\{A,B\}$:
    
\begin{defn}[Reversals $\ol{w}$ of $r$-Flowers]\label{def:rev-rF}
For $r\ge 0$ and $w\in \mc F_r$, define the \emph{reversal} $\ol{w} \in \mc F_r$ as follows:

\begin{enumerate}
    \item $\ol{w}_0 := w_0$
    \item For any $s\in [r]$, $\ol{w}_{sA} := w_{sB}$
    \item For any $s\in [r]$, $\ol{w}_{sB} := w_{sA}$
\end{enumerate}
\end{defn}

Finally, we define projections of $r$-neighborhoods, which simply refers to removing all edges from distance $r-1$ to distance $r$:

\begin{defn}[Projection]
For $r\ge 1$ and $y\in \mc R_r$, define the \emph{projection} $\text{proj}(y) \in \mc R_{r-1}$ to be $\text{proj}(y) := \text{end}_A(\text{res}_1(y))$.
\end{defn}

\subsection{Algorithms}\label{sec:PrelimAlgos}

We now define the matching-certified algorithms that we analyze in Section \ref{sec:VertexSurvivalProbability}. Any LOCAL algorithm can be easily transformed into this algorithms with one additional round as discussed in Section \ref{sec:ProofBody}. This is discussed in the proof of Theorem \ref{thm:main}. Matching-certified algorithms have the property that they always output a matching, but not necessarily one that is maximal. To define said algorithms, we start by defining incidence between $r$-flowers:

\begin{defn}[Incidence of $r$-Flowers]\label{def:inc-rF}
For $r\ge 0$, two $w,w'\in \mc F_r$ are said to be \emph{incident} iff there exist $v,v'\in \{A,B\}$ and a permutation $\sigma\in \Pi_{\Delta}$ for which both

\begin{enumerate}
    \item $\text{end}_v(w) = \sigma(\text{end}_{v'}(w'))$
    \item $\sigma(1)\ne 1$\footnote{This constraint on $\sigma$ merely ensures that $w$ and $w'$ are extensions in different directions from $\text{end}_v(w)$. 1 is used here rather than some other direction due to the fact that $w$ and $w'$ are extensions of $\text{end}_v(w)$ and $\text{end}_{v'}(w')$ respectively in the 1 directions.}
\end{enumerate}
\item 

An \emph{incidence permutation} for the pair $w,w'$ is a (non-unique) $\sigma\in \Pi_{\Delta}$ for which the above conditions hold.
\end{defn}

Now, we are ready to define the algorithms of interest:

\begin{defn}[$r$-Round Matching-Certified Algorithms]\label{def:match-certified}
For $r \ge 0$, an $r$-\emph{round matching-certified algorithm} is a measurable\footnote{The set $\mc F_r$ is equal to $[0,1]^d$ for some positive integer $d$. In particular, it is a continuous space, so probabilities and events are only sensible to define for measurable functions. This is a technicality, as (for example) any $r$-round algorithm that only uses a finite amount of random bits at each vertex is measurable.} function $f:\mc F_r\rightarrow\{0,1\}$ with the following property:

\begin{enumerate}
    \item (Matching-Certified) For any incident $w,w'\in \mc F_r$, $f(w)f(w') = 0$.
\end{enumerate}
\end{defn}

In particular, note that an $r$-round matching-certified algorithm $f$ outputs a matching with probability 1 as defined above, but that said matching may not be maximal. We handle algorithms that merely output a matching with high probability in Section \ref{sec:ProofBody}.

We assess $r$-round matching certified algorithms by one key metric: vertex survival probability. Intuitively, the vertex survival probability $P_f$ of a matching-certified algorithm $f$ is simply the probability that any given vertex is unmatched in the matching found by $f$:

\begin{defn}[Sampling From Distributions and Vertex Survival Probability]\label{def:sampling-and-vert-survival}
For $r\ge 0$, let $z\sim \mc R_r$ and $w\sim \mc F_r$ denote sampling $z$ and $w$ from the uniform distributions over $\mc R_r$ and $\mc F_r$ respectively. More generally, for a set $\Omega$, let $w\sim \Omega$ denote a sample $w$ taken uniformly from the set $\Omega$. The \emph{vertex survival probability} of an $r$-round matching-certified algorithm $f$, denoted $P_f$, is defined to be

$$P_f := \Pr_{z\sim \mc R_{r+1}}[\text{for all } i\in [\Delta], f(\text{res}_i(z)) = 0]$$
\end{defn}

\subsection{Distributions}\label{sec:PrelimDistributions}

The definitions of $\text{end}_v$ and $\text{res}_i$ involved direct manipulations of indices. We do not do any indexing into flowers and neighborhoods in Section \ref{sec:VertexSurvivalProbability}. Instead, we only use facts about them that we prove in this section. Specifically, we need facts about equivalence of sampling from multiple pairs of distributions.

In this paper, we work with continuous distributions and conditional distributions derived from those distributions. When we condition, we often condition on a measure 0 event. This is often not well-defined due to the Borel-Kolmogorov paradox, in which conditioning on membership in different great circles of a sphere can lead to distributions with different probability density functions. However, our conditional distributions are very simple and can be defined by fixing one coordinate system and doing substitutions within that coordinate system. We handle these technicalities in Appendix \ref{sec:AppDistributions}, where we prove the results stated in this section.\\

\textbf{Definitions}: First, we state the conditional probabilities that we define (via coordinate substitution). These definitions take place in Appendix \ref{sec:ConditionalDefns}. This proposition defines the idea of sampling an $r$-neighborhood conditioned on the $(r-1)$-flower in direction $i$ being fixed to some value $x$:

\begin{restatable}{prop}{propcondresdefn}\label{prop:cond-res-defn}
For any positive integer $r$, $i\in [\Delta]$, random variable (measurable function) $X:\mc R_r\rightarrow \mathbb R$, and $x\in \mc F_{r-1}$, the conditional expectation

$$\E_{y\sim \mc R_r}[X(y) \mid \text{res}_i(y) = x]$$

is well-defined.
\end{restatable}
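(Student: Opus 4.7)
The plan is to make the conditional expectation precise by exhibiting $\text{res}_i$ as a coordinate projection after identifying $\mc R_r$ and $\mc F_{r-1}$ with finite-dimensional cubes, and then disintegrating Lebesgue measure along that projection. The Borel--Kolmogorov ambiguity dissolves because we fix the natural product coordinate system coming from Definitions~\ref{def:rN} and~\ref{def:rF} \emph{before} conditioning.

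For the coordinate identification, I would recursively unpack $\mc S_s = \mc S_{s-1}^{\Delta-1}$ with $\mc S_0 = [0,1]$ to obtain $\mc S_s \cong [0,1]^{(\Delta-1)^s}$, and hence $\mc R_r\cong [0,1]^N$ and $\mc F_{r-1}\cong [0,1]^M$ for the obvious dimensions $N = \Delta\sum_{s=0}^{r-1}(\Delta-1)^s$ and $M = 1 + 2\sum_{s=1}^{r-1}(\Delta-1)^s$; uniform sampling then becomes Lebesgue measure on each cube. Next, I would show by induction on $r$ that there exists an injection $\iota_i:[M]\hookrightarrow [N]$ satisfying $(\text{res}_i(z))_j = z_{\iota_i(j)}$ for every $z$ and every scalar slot $j$. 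For $i=1$ this is read off directly from Definition~\ref{def:rest-rN-rF}: the three clauses $(\text{res}_1(z))_0 = z_{11}$, $(\text{res}_1(z))_{sB} = z_{(s+1)1}$, and $((\text{res}_1(z))_{sA})_i = z_{s(i+1)}$ assign each scalar slot of $\text{res}_1(z)$ equal to a distinct scalar slot of $z$ once one descends into the nested $\mc S_s$ structure. The general case follows because $\sigma_i$ is itself a coordinate permutation of $[0,1]^N$ by the definition of $\sigma$-shuffles, and $\text{res}_i = \text{res}_1\circ \sigma_i$.

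With the projection structure in hand, let $S_i := \iota_i([M])\subseteq[N]$ and $T_i := [N]\setminus S_i$. For any $x\in [0,1]^M$, let $y(x,t)\in [0,1]^N$ denote the unique point with $y(x,t)_{\iota_i(j)} = x_j$ for $j\in[M]$ and $y(x,t)_k = t_k$ for $k\in T_i$. I would then \emph{define}
$$\E_{y\sim \mc R_r}[X(y)\mid \text{res}_i(y) = x] \;:=\; \int_{[0,1]^{|T_i|}} X(y(x,t))\,dt,$$
which is finite for integrable $X$ and, by Tonelli's theorem, is a measurable function of $x$; Fubini applied to the splitting $[0,1]^N\cong [0,1]^M\times [0,1]^{|T_i|}$ shows this definition agrees with the standard Radon--Nikodym conditional expectation, so it is canonical given the coordinate system. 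The main (and essentially only) obstacle is the bookkeeping in the projection step, namely verifying that the nested index assignments in Definition~\ref{def:rest-rN-rF} and in the $\sigma$-shuffle definition assemble into a single injective map $\iota_i$ at the level of scalar coordinates; this is routine but requires care in tracking the nested indexing conventions set up in Section~\ref{sec:PrelimDescription}.
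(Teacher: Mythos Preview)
Your proposal is correct and takes essentially the same approach as the paper: both define the conditional expectation by coordinate substitution, fixing the coordinates of $\mc R_r$ determined by $\text{res}_i$ and integrating the remaining ones against Lebesgue measure. The paper compresses this into a one-line definition via a named product measure $\rho_{r,i}(x)$ on the fiber, whereas you spell out the coordinate projection $\iota_i$ and invoke Fubini explicitly, but the substance is identical.
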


This proposition defines the idea of sampling an $r$-flower conditioned on the $r$-neighborhood centered around one endpoint (either $A$ or $B$) being fixed to some value $x$:

\begin{restatable}{prop}{propcondenddefn}\label{prop:cond-end-defn}
For any nonnegative integer $r$, $v\in \{A,B\}$, random variable $X:\mc F_r\rightarrow \mathbb R$, and $x\in \mc R_r$, the conditional expectation

$$\E_{y\sim \mc F_r}[X(y) \mid \text{end}_v(y) = x]$$

is well-defined.
\end{restatable}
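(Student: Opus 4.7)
The plan is to realize $\text{end}_v$ as a coordinate projection after unfolding every $\mc S_s$ into a Cartesian power of $[0,1]$, so that conditioning on its value amounts to fixing some coordinates and letting the rest vary uniformly --- a well-defined operation that sidesteps any Borel--Kolmogorov ambiguity. This parallels the strategy used for Proposition \ref{prop:cond-res-defn}.

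Fix $r \geq 1$; the case $r = 0$ is trivial since $\mc R_0 = \{\emptyset\}$, so the conditioning event has probability $1$ and the conditional expectation reduces to $\E_{y\sim \mc F_r}[X(y)]$. For $r \geq 1$, by Definition \ref{def:rF} a flower $w \in \mc F_r$ is specified by $w_0 \in \mc S_0$ together with $w_{sA}, w_{sB} \in \mc S_s$ for $s \in [r]$. Unfolding $\mc S_s$ inductively identifies $\mc F_r$ with a Cartesian power $[0,1]^{d}$ in which each coordinate corresponds to one edge of an $r$-flower. Reading off the three clauses of Definition \ref{def:endpoint-of-flower}, the entries of $\text{end}_v(w) \in \mc R_r$ are populated exactly by the coordinate of $w_0$, the coordinates of $w_{sv}$ for all $s \in [r]$, and the coordinates of $w_{su}$ for $s \in [r-1]$, where $u$ is the endpoint in $\{A,B\} \setminus \{v\}$. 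The only coordinates of $w$ not read off are those of $w_{ru} \in \mc S_r$.

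Consequently the map $\Phi : \mc F_r \to \mc R_r \times \mc S_r$ defined by $\Phi(w) := (\text{end}_v(w), w_{ru})$ is, after all three spaces are identified with Cartesian powers of $[0,1]$, literally a permutation of coordinates --- hence a measurable bijection that pushes the uniform probability measure on $\mc F_r$ forward to the product of uniform measures on $\mc R_r$ and $\mc S_r$. One then defines
\[
\E_{y\sim \mc F_r}[X(y) \mid \text{end}_v(y) = x] \;:=\; \E_{\tau \sim \mc S_r}\bigl[X\bigl(\Phi^{-1}(x,\tau)\bigr)\bigr],
\]
which is simply the Lebesgue integral of a measurable function against the uniform measure on a cube, and is therefore well-defined for any measurable $X$. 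The only real work is verifying that $\Phi$ is a bijection by inverting the three clauses of Definition \ref{def:endpoint-of-flower} explicitly (for instance, clause 2 yields $w_{(s-1)u} = x_{s1}$ for $s \in \{2,\ldots,r\}$, so the $w_{su}$ for $s \in [r-1]$ are recovered from $x$, while $w_{ru}$ is supplied by $\tau$). This is the main obstacle in the sense of care required, but it is purely notational bookkeeping across the nested Cartesian product structure --- there is no probabilistic subtlety beyond the observation that conditioning on a coordinate projection of a product measure produces the product measure on the remaining coordinates.
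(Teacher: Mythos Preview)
Your proposal is correct and takes essentially the same approach as the paper: both define the conditional expectation by freezing the coordinates of $w$ that are read off by $\text{end}_v$ and integrating uniformly over the sole free block $w_{ru}\in\mc S_r$. The paper packages this as a product measure $\phi_{r,v}(x) := \textbf{1}\times(\textbf{1}^2)^{r-1}\times(\textbf{1}\times\mu_r)$ (for $v=A$) on the slice $\mc F_r(v,x)$ and then sets $\E_{y\sim \mc F_r}[X(y)\mid \text{end}_v(y)=x] := \E_{y\sim \phi_{r,v}(x)}[X(y)]$, whereas you phrase the same construction via the coordinate-permutation bijection $\Phi(w)=(\text{end}_v(w),w_{ru})$; the two formulations are interchangeable.
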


\textbf{Permutation invariance}: We now state the properties that we need involving these conditional expectations. First, we need permutation invariance of the uniform distributions over $\mc R_r$ and $\mc F_r$. This proposition shows that sampling an $r$-flower and flipping it is equivalent to sampling a fresh $r$-flower:

\begin{restatable}{prop}{propedgeflip}\label{prop:edge-flip}
For any nonnegative integer $r$ and random variable $X:\mc F_r\rightarrow \mathbb R$, 

$$\E_{y\sim \mc F_r}[X(\ol{y})] = \E_{y\sim \mc F_r}[X(y)]$$
\end{restatable}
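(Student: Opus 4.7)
The plan is to observe that reversal $\overline{\,\cdot\,}$ is simply a coordinate permutation of the product space $\mc F_r$, and that any coordinate permutation of a product of copies of $[0,1]$ preserves the uniform (Lebesgue) measure.

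First, I would unfold the definition of $\mc F_r = \mc S_0 \times \mc S_1^2 \times \mc S_2^2 \times \cdots \times \mc S_r^2$ from Definition~\ref{def:rF}. Each $\mc S_s$ is itself a Cartesian power of $\mc S_0 = [0,1]$, specifically a product of $(\Delta-1)^s$ copies, so $\mc F_r$ is a finite product of copies of $[0,1]$ and the uniform distribution on $\mc F_r$ is just the product Lebesgue measure. Next, I would unpack Definition~\ref{def:rev-rF}: the map $w \mapsto \ol{w}$ fixes the single $\mc S_0$-coordinate $w_0$ and, for each $s \in [r]$, simply swaps the two $\mc S_s$ coordinates indexed by $A$ and $B$. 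Since these two coordinates both live in the same space $\mc S_s$, this swap is a bijection from $\mc S_s^2$ to itself. Composing these swaps (one per level $s$) yields a bijection $\ol{\,\cdot\,}:\mc F_r \to \mc F_r$; indeed $\ol{\ol{w}} = w$, so the map is an involution.

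Once reversal is recognized as a coordinate permutation on a product of copies of $[0,1]$, the conclusion is a routine change-of-variables fact: product Lebesgue measure is invariant under permutations of identically-distributed coordinates. Formally, one writes
\[
\E_{y\sim \mc F_r}[X(\ol{y})] = \int_{\mc F_r} X(\ol{y})\, d\mu(y) = \int_{\mc F_r} X(y)\, d(\ol{\,\cdot\,}_{*}\mu)(y) = \int_{\mc F_r} X(y)\, d\mu(y) = \E_{y\sim \mc F_r}[X(y)],
\]
where $\mu$ is the uniform distribution on $\mc F_r$ and $\ol{\,\cdot\,}_{*}\mu = \mu$ by the pushforward of the permutation being the identity on the product measure.

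There is essentially no obstacle here; the main content is bookkeeping to check that the reversal really does act as a coordinate permutation with each swap happening within a single copy of the same space $\mc S_s$ (so that the measure is preserved). The only subtlety, which is purely notational, is that $\mc S_s$ for $s \geq 1$ is itself a nested product, but the swap of $w_{sA}$ and $w_{sB}$ still operates on two copies of the exact same product space, which is all that is needed for measure preservation.
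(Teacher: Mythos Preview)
Your proposal is correct and takes essentially the same approach as the paper: both arguments rest on the observation that reversal is a coordinate swap between two copies of the same space $\mc S_s$ at each level and hence preserves the product Lebesgue measure on $\mc F_r$. The only cosmetic difference is that the paper verifies this invariance explicitly on indicator functions of product sets (generators of the product $\sigma$-algebra) and then extends, whereas you invoke the pushforward/change-of-variables formalism directly; the content is the same.
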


This proposition shows that sampling an $r$-neighborhood and permuting the directions is equivalent to sampling a fresh $r$-neighborhood:

\begin{restatable}{prop}{propvertpermute}\label{prop:vert-permute}
For any nonnegative integer $r$, distribution $\mc D$ over the finite set of permutations $\Pi_{\Delta}$, and random variable $X:\mc R_r\rightarrow \mathbb R$, 

$$\E_{\sigma\sim \mc D, y\sim \mc R_r}[X(\sigma(y))] = \E_{y\sim \mc R_r}[X(y)]$$
\end{restatable}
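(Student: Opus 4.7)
The plan is to reduce this proposition to the standard fact that coordinate permutations preserve Lebesgue measure on $[0,1]^d$. First I would unroll the recursive definition $\mc S_r := \mc S_{r-1}^{\Delta-1}$ starting from $\mc S_0 = [0,1]$ to conclude that every $\mc S_s$ is a finite Cartesian power of $[0,1]$, and therefore $\mc R_r = \mc S_0^\Delta \times \mc S_1^\Delta \times \cdots \times \mc S_{r-1}^\Delta$ is itself a cube $[0,1]^d$ for some integer $d = d(r,\Delta)$. Under this identification, the uniform distribution $y\sim \mc R_r$ is normalized Lebesgue measure on $[0,1]^d$.

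Next I would analyze the action of $\sigma \in \Pi_\Delta$ in these coordinates. By the definition $(\sigma(z))_{s\sigma(i)} := z_{si}$, the map $z\mapsto \sigma(z)$ acts independently on each of the $r$ layers $\mc S_{s-1}^\Delta$, and on each layer it rearranges the $\Delta$ blocks of coordinates (one block of size $\dim\mc S_{s-1}$ per top-level direction $i\in[\Delta]$) according to $\sigma$, without disturbing the internal structure of any block. After fixing, say, a lexicographic flattening of the nested indices into the coordinates of $[0,1]^d$, this means that $\sigma$ acts globally as a permutation of the $d$ coordinates, which is a measure-preserving bijection (its Jacobian has determinant $\pm 1$).

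With this in hand the conclusion follows immediately. For any fixed $\sigma\in \Pi_\Delta$, the change-of-variables formula gives
$$\E_{y\sim \mc R_r}[X(\sigma(y))] = \E_{y\sim \mc R_r}[X(y)].$$
Applying Fubini's theorem to integrate over $\sigma\sim \mc D$ (valid since $X$ is assumed integrable, as otherwise the claimed equality of expectations is vacuous) and plugging in the previous identity yields
$$\E_{\sigma\sim \mc D,\, y\sim \mc R_r}[X(\sigma(y))] \;=\; \E_{\sigma\sim \mc D}\!\bigl[\E_{y\sim \mc R_r}[X(\sigma(y))]\bigr] \;=\; \E_{\sigma\sim \mc D}\!\bigl[\E_{y\sim \mc R_r}[X(y)]\bigr] \;=\; \E_{y\sim \mc R_r}[X(y)].$$

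The main obstacle is purely notational rather than conceptual: making rigorous the claim that $z\mapsto \sigma(z)$ acts as a coordinate permutation on $[0,1]^d$ requires carefully tracking how the nested $(s,i,\text{internal})$-indices of $\mc R_r$ unfold into a single flat tuple of scalars. Once a specific flattening is fixed, the fact that the definition of $\sigma$-shuffles only rearranges top-level indices $i\in[\Delta]$ within each layer makes the coordinate-permutation claim immediate, and the rest of the argument is standard measure theory.
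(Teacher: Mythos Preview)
Your proposal is correct and takes essentially the same approach as the paper: both reduce to the fact that a $\sigma$-shuffle acts as a coordinate permutation on $\mc R_r\cong[0,1]^d$ and hence preserves the uniform (Lebesgue) measure. The paper verifies this on indicator functions of product sets by explicit relabeling after first reducing to a single fixed $\sigma$, whereas you invoke the change-of-variables formula directly and then average over $\sigma\sim\mc D$ via Fubini; these are just two phrasings of the same measure-theoretic fact.
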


This permutation invariance extends to the setting where sampling is conditioned on fixing the $(r-1)$-flower in some direction, with coordinates permuted accordingly:

\begin{restatable}{prop}{propcondvertpermute}\label{prop:cond-vert-permute}
For any positive integer $r$, $i\in [\Delta]$, $x\in \mc F_{r-1}$, and random variable $X:\mc R_r\rightarrow \mathbb R$, 

$$\E_{y\sim \mc R_r}[X(\sigma_i(y)) \mid \text{res}_i(y) = x] = \E_{z\sim \mc R_r}[X(z) \mid \text{res}_1(z) = x]$$
\end{restatable}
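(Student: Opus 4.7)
The plan is to perform a change of variables by setting $z = \sigma_i(y)$, which is an involution on $\mc R_r$ since $\sigma_i$ simply swaps directions $1$ and $i$. Two identities make this substitution collapse the left-hand side onto the right-hand side: by Definition \ref{def:rest-rN-rF} we have $\text{res}_i(y) = \text{res}_1(\sigma_i(y)) = \text{res}_1(z)$, and trivially $X(\sigma_i(y)) = X(z)$. So under the substitution, the conditioning event $\{\text{res}_i(y) = x\}$ becomes $\{\text{res}_1(z) = x\}$ and the integrand $X(\sigma_i(y))$ becomes $X(z)$, formally turning the LHS into the RHS.

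What remains is to verify that this substitution is measure-preserving in the conditional sense. The unconditional analogue is exactly Proposition \ref{prop:vert-permute} applied with $\mc D$ concentrated on $\sigma_i$: if $y$ is uniform on $\mc R_r$ then $z = \sigma_i(y)$ is uniform on $\mc R_r$. For the conditional version, I would unfold the coordinate-based construction of the conditional expectation guaranteed by Proposition \ref{prop:cond-res-defn} (defined in Appendix \ref{sec:ConditionalDefns}). Because $\mc R_r$ is a Cartesian product of $[0,1]$-valued coordinates (one per edge, indexed by depth and direction) and $\sigma_i$ acts by permuting only the direction index, it maps one product of coordinate intervals bijectively onto another. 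The conditional distribution obtained by fixing $\text{res}_i(y) = x$ is, by construction, the product of Lebesgue measures on the coordinates that are not pinned by $x$; likewise for $\text{res}_1(z) = x$, and $\sigma_i$ maps the unpinned coordinates of the former bijectively onto those of the latter.

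The main obstacle is bookkeeping: carefully tracking which coordinates of $y$ are pinned by $\text{res}_i(y) = x$ (namely those corresponding to the subtree in direction $i$ and the center edge relabeled as the $\{A,B\}$ edge of the flower) and verifying that $\sigma_i$ sends those bijectively onto the coordinates of $z$ pinned by $\text{res}_1(z) = x$, while permuting the remaining free coordinates among themselves. Once this coordinate-level bijection is established, the equality of conditional expectations is a change of variables with unit Jacobian, since all free coordinates are integrated against Lebesgue measure on $[0,1]$, which is invariant under coordinate permutations. The Borel--Kolmogorov pitfall flagged in the text is avoided precisely because the substitution $z = \sigma_i(y)$ matches the two coordinate systems used to define the two conditional expectations, so no ambiguity in parametrizing the measure-zero conditioning set arises.
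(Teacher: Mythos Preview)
Your proposal is correct and takes essentially the same approach as the paper: both exploit the definitional identity $\text{res}_i(y) = \text{res}_1(\sigma_i(y))$ together with the fact that $\sigma_i$ is a coordinate permutation sending the conditional measure $\rho_{r,i}(x)$ to $\rho_{r,1}(x)$. The paper simply carries out the coordinate bookkeeping you describe by reducing to indicators of product sets and computing both sides explicitly, whereas you phrase the same verification as a measure-preserving change of variables.
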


\textbf{Multi-step sampling equivalences}: We now state the four conditional sampling equivalences that we need. The first covers the equivalence between sampling $y$ uniformly from $\mc R_r$ and sampling from $\mc R_r$ via the following two-step procedure for an arbitrary $i\in [\Delta]$:

\begin{enumerate}
    \item Sample $x\sim \mc F_{r-1}$.
    \item Sample $y\sim \mc R_r$ conditioned on $\text{res}_i(y) = x$.
\end{enumerate}

Proving this and the other propositions in this section is very simple conceptually. For this proposition, for instance, we merely break up the coordinates of the random variable $y$ sampled uniformly from $\mc R_r$ into two groups: the coordinates kept in $\text{res}_i(y)$, and all other coordinates. The proofs of these results given in the appendix do this, but are somewhat cumbersome due to the necessity of working with the coordinate-based definitions of $\text{end}_v$ and $\text{res}_i$:

\begin{restatable}{prop}{propftorsample}\label{prop:f-to-r-sample}
For any positive integer $r$, $i\in [\Delta]$, and random variable $X:\mc R_r\rightarrow \mathbb R$,

$$\E_{x\sim \mc F_{r-1}}\left[\E_{y\sim \mc R_r}\left[X(y) \middle| \text{ }\text{res}_i(y) = x\right]\right] = \E_{y\sim \mc R_r}[X(y)]$$
\end{restatable}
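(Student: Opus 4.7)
\textbf{Plan for the proof of Proposition~\ref{prop:f-to-r-sample}.} The plan is to first reduce to the case $i = 1$ by permutation-invariance, then establish the $i = 1$ case as a direct instance of Fubini's theorem once we identify the coordinates of $y \in \mc R_r$ that $\text{res}_1$ consumes and those it leaves free.

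\emph{Reduction to $i = 1$.} For $i \ne 1$, apply Proposition~\ref{prop:cond-vert-permute} with the random variable $X \circ \sigma_i$ in place of $X$. Since $\sigma_i$ is an involution, the left-hand side of Proposition~\ref{prop:cond-vert-permute} simplifies to $\E_{y\sim \mc R_r}[X(y) \mid \text{res}_i(y) = x]$, so we obtain
\[
\E_{y\sim \mc R_r}[X(y) \mid \text{res}_i(y) = x] \;=\; \E_{z\sim \mc R_r}[X(\sigma_i(z)) \mid \text{res}_1(z) = x].
\]
Plugging this into the left-hand side of the proposition and applying the $i = 1$ case to the random variable $X \circ \sigma_i$ gives $\E_{z \sim \mc R_r}[X(\sigma_i(z))]$, which equals $\E_{y \sim \mc R_r}[X(y)]$ by Proposition~\ref{prop:vert-permute} applied to the Dirac distribution at $\sigma_i$. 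It therefore suffices to handle $i = 1$.

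\emph{The case $i = 1$.} Reading off Definition~\ref{def:rest-rN-rF}, the map $\text{res}_1 : \mc R_r \to \mc F_{r-1}$ is a bijective reindexing that uses exactly the coordinates $y_{s1}$ for $s \in [r]$ and $y_{si}$ for $s \in [r-1]$, $i \in \{2, \dots, \Delta\}$, and ignores the coordinates $y_{r2}, y_{r3}, \dots, y_{r\Delta} \in \mc S_{r-1}$. Hence $\mc R_r$ factors as a product of two Lebesgue measure spaces: one block whose uniform distribution is (via the coordinate bijection) the uniform distribution on $\mc F_{r-1}$, and an independent block $\mc S_{r-1}^{\Delta-1}$ of the ignored coordinates. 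Under this identification, the conditional expectation defined by coordinate substitution in Proposition~\ref{prop:cond-res-defn} coincides with the genuine integral over the second block with the first block fixed to $x$. Fubini's theorem for product Lebesgue measures then yields
\[
\E_{y \sim \mc R_r}[X(y)] \;=\; \E_{x \sim \mc F_{r-1}} \E_{u \sim \mc S_{r-1}^{\Delta - 1}}\bigl[X(x, u)\bigr] \;=\; \E_{x \sim \mc F_{r-1}}\Bigl[\E_{y \sim \mc R_r}\bigl[X(y) \,\bigm|\, \text{res}_1(y) = x\bigr]\Bigr],
\]
which is the desired identity.

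\emph{Main obstacle.} The only real work is bookkeeping: checking that the coordinate-wise definition of $\text{res}_1$ in Definition~\ref{def:rest-rN-rF} does produce a bijection between a distinguished subset of the coordinates of $\mc R_r$ and all the coordinates of $\mc F_{r-1}$, and that the product Lebesgue measures match on the two sides of this identification. This is straightforward index-chasing but unpleasant to read, which is presumably why the authors defer these verifications to Appendix~\ref{sec:AppDistributions}.
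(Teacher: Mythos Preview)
Your proof is correct and rests on the same underlying idea as the paper's: the map $\text{res}_1$ is a coordinate reindexing that splits $\mc R_r$ into a block carried bijectively to $\mc F_{r-1}$ and a free block $\mc S_{r-1}^{\Delta-1}$, so the two-step expectation is just Fubini on a product measure. The organizational choices differ slightly. The paper handles general $i$ directly by reducing to indicators of product sets $S = \prod_{s,j} S_{sj}$ and computing both sides explicitly (writing out $\text{res}_i(S)$ and multiplying the $\mu_{s-1}(S_{sj})$ factors). You instead first reduce to $i=1$ via Propositions~\ref{prop:cond-vert-permute} and~\ref{prop:vert-permute}, then invoke Fubini abstractly. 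Your route is a bit cleaner to read and reuses the permutation-invariance machinery already in the paper; the paper's route avoids that dependency but pays for it with more index-chasing. Neither approach has any real advantage in generality or strength.
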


The second covers the equivalence between sampling $y$ uniformly from $\mc F_{r-1}$ and sampling from $\mc F_{r-1}$ via the following three-step procedure:

\begin{enumerate}
    \item Sample $x\sim \mc R_{r-1}$.
    \item Sample $i\sim [\Delta]$.
    \item Sample $y\sim \mc F_{r-1}$ conditioned on $\text{end}_A(y) = \sigma_i(x)$.
\end{enumerate}

\begin{restatable}{prop}{proprtofsample}\label{prop:r-to-f-sample}
For any positive integer $r$ and random variable $X:\mc F_{r-1}\rightarrow \mathbb R$,

$$\E_{x\sim \mc R_{r-1}}\left[\E_{i\sim [\Delta]}\left[\E_{y\sim \mc F_{r-1}}\left[X(y) \middle|\text{ }\text{end}_A(y) = \sigma_i(x)\right]\right]\right] = \E_{y\sim \mc F_{r-1}}[X(y)]$$
\end{restatable}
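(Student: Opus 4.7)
The plan is to combine a single application of conditioning on the value of $\text{end}_A(y)$ with the permutation invariance already established in Proposition \ref{prop:vert-permute}. The key observation is that the outer expectation $\E_{x\sim \mc R_{r-1}}\E_{i\sim[\Delta]}[\cdot]$ on the left-hand side is, by permutation invariance, just a roundabout way of writing $\E_{x\sim \mc R_{r-1}}[\cdot]$, so what really needs to be proved is the "one-step" conditioning identity
\begin{equation*}
    \E_{y\sim \mc F_{r-1}}[X(y)] \;=\; \E_{x\sim \mc R_{r-1}}\!\left[\E_{y\sim \mc F_{r-1}}[X(y)\mid \text{end}_A(y)=x]\right].
\end{equation*}

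For this identity, I would inspect the definition of $\text{end}_A$ and observe that the map $y\mapsto \text{end}_A(y)$ is a coordinate projection (up to relabeling of indices): it records $y_0$, the entire $A$-subtree $(y_{sA})_{s\in[r-1]}$, and the $B$-spine labels $(y_{sB})_{s\in[r-2]}$, while leaving only the single coordinate $y_{(r-1)B}\in \mc S_{r-1}$ unfixed (the case $r=1$ is trivial since $\mc R_0$ is a singleton and the conditional expectation equals the unconditional one). Since the uniform distribution on $\mc F_{r-1}$ is a product of independent uniforms on $[0,1]$, its pushforward under $\text{end}_A$ is exactly the uniform distribution on $\mc R_{r-1}$, and the conditional distribution of $y$ given $\text{end}_A(y)=x$---defined in the fixed coordinate system of Proposition \ref{prop:cond-end-defn}---is uniform over the remaining coordinate $y_{(r-1)B}$. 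The one-step identity then follows from Fubini on this product structure.

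Next, I would apply Proposition \ref{prop:vert-permute} with $\mc D$ equal to the uniform distribution over the $\Delta$ permutations $\{\sigma_1,\ldots,\sigma_\Delta\}$ (indexed by $i\sim[\Delta]$), applied to the function $F(x):=\E_{y\sim \mc F_{r-1}}[X(y)\mid \text{end}_A(y)=x]$. This gives $\E_{x\sim \mc R_{r-1}}[F(x)]=\E_{x\sim \mc R_{r-1}}\E_{i\sim[\Delta]}[F(\sigma_i(x))]$. Chaining this with the one-step identity above and expanding $F(\sigma_i(x))$ back into its definition yields precisely the three-step sampling equivalence in the proposition.

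The main potential obstacle is the measure-theoretic bookkeeping around conditioning on the measure-zero event $\{\text{end}_A(y)=x\}$, which is precisely the Borel--Kolmogorov-style issue flagged in Section \ref{sec:PrelimDistributions}. However, this is exactly what Proposition \ref{prop:cond-end-defn} is designed to handle: by fixing a specific coordinate system on $\mc F_{r-1}$ and interpreting the conditional expectation via substitution within that system, the fiber-wise expectation reduces to an ordinary integral over the free coordinate $y_{(r-1)B}$, and no pathology arises. Apart from this definitional care, the argument is a short composition of the product structure of $\mc F_{r-1}$ with the already-established permutation invariance of $\mc R_{r-1}$.
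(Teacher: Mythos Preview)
Your proposal is correct and follows essentially the same approach as the paper: the paper isolates your ``one-step'' identity as a separate Proposition \ref{prop:r-to-f-sample-no-perm} (proved by reducing to product-set indicators and checking coordinates, which is exactly the Fubini-on-product-structure argument you sketch), and then derives Proposition \ref{prop:r-to-f-sample} from it by applying Proposition \ref{prop:vert-permute} with $\mc D$ uniform over $\{\sigma_i : i\in[\Delta]\}$ to the function $w\mapsto \E_{y\sim\mc F_{r-1}}[X(y)\mid\text{end}_A(y)=w]$, just as you do.
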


The third covers the equivalence between sampling $y$ uniformly from $\mc F_r$ and sampling from $\mc F_r$ via the following four-step procedure that samples the middle and extensions on two sides separately:

\begin{enumerate}
    \item Sample $x\sim \mc F_{r-1}$
    \item Sample $z_A\sim \mc R_r$ conditioned on $\text{res}_1(z_A) = x$.
    \item Sample $z_B\sim \mc R_r$ conditioned on $\text{res}_1(z_B) = \ol{x}$.
    \item There is a unique $r$-flower $y\in \mc F_r$ for which $\text{end}_A(y) = z_A$ and $\text{end}_B(y) = z_B$. Return it.
\end{enumerate}

\begin{restatable}{prop}{propftofsample}\label{prop:f-to-f-sample}
For any positive integer $r$ and pair of random variables $X_A, X_B :\mc R_r\rightarrow \mathbb R$,

$$\E_{x\sim \mc F_{r-1}}\left[\E_{z_A\sim \mc R_r}[X_A(z_A) \mid \text{res}_1(z_A) = x]\E_{z_B\sim \mc R_r}[X_B(z_B) \mid \text{res}_1(z_B) = \ol{x}]\right] = \E_{y\sim \mc F_r}[X_A(\text{end}_A(y))X_B(\text{end}_B(y))]$$
\end{restatable}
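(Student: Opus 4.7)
The plan is to establish the equivalence by decomposing an $r$-flower $y \in \mc F_r$ into three independent pieces: a central $(r-1)$-flower and two ``outer shells'' at distance $r$. Writing $y$ via its coordinates $(w_0, w_{1A}, w_{1B}, \ldots, w_{rA}, w_{rB})$ as in Definition \ref{def:rF}, let $\pi(y) \in \mc F_{r-1}$ denote the $(r-1)$-flower obtained by dropping $w_{rA}$ and $w_{rB}$. The identification $y \leftrightarrow (\pi(y), w_{rA}, w_{rB})$ realizes $\mc F_r$ as the Cartesian product $\mc F_{r-1} \times \mc S_r \times \mc S_r$, and pushes the uniform measure on $\mc F_r$ forward to the product of the uniform measures on the factors. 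The first step is to unpack Definitions \ref{def:endpoint-of-flower} and \ref{def:rest-rN-rF} to verify the two coordinate identities
$$\text{res}_1(\text{end}_A(y)) = \pi(y) \qquad \text{and} \qquad \text{res}_1(\text{end}_B(y)) = \ol{\pi(y)}.$$
Each $\text{end}_v$ places $w_0$ and the opposite side's inner values along its direction-$1$ thread, and $\text{res}_1$ peels precisely that thread off, reassembling the inner $(r-1)$-flower (with the roles of $A$ and $B$ swapped when $v = B$, hence the reversal).

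Next, for each fixed $x \in \mc F_{r-1}$ I would show that the map sending $w_{rA} \in \mc S_r$ to $\text{end}_A(y)$, where $y$ is the unique flower with $\pi(y) = x$ and $A$-shell $w_{rA}$ (the $B$-shell being irrelevant to $\text{end}_A$), is a bijection from $\mc S_r$ onto the level set $\{z \in \mc R_r : \text{res}_1(z) = x\}$. This holds because, by Definition \ref{def:endpoint-of-flower}, $\text{end}_A(y)$ is literally the concatenation of $w_0$ and the $B$-inner values of $\pi(y)$ along direction $1$ with the full $A$-side of $y$, including $w_{rA}$, along directions $2,\ldots,\Delta$; knowing $x$ and $\text{end}_A(y)$ therefore uniquely recovers $w_{rA}$. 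The analogous bijection on the $B$-side lands in $\{z : \text{res}_1(z) = \ol{x}\}$. Crucially, these bijections coincide with the coordinate substitutions used in Appendix \ref{sec:AppDistributions} to define $\E[\,\cdot \mid \text{res}_1(\,\cdot\,) = x]$ in Proposition \ref{prop:cond-res-defn}, so sampling $z_A \sim \mc R_r$ conditioned on $\text{res}_1(z_A) = x$ is the same as sampling $w_{rA} \sim \mc S_r$ uniformly and returning $\text{end}_A$ of the induced flower.

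With these ingredients, the proposition follows by a short Fubini computation. Sampling $y \sim \mc F_r$ is equivalent to sampling $(x, w_{rA}, w_{rB}) \sim \mc F_{r-1} \times \mc S_r \times \mc S_r$ uniformly and independently with $x = \pi(y)$. Conditional on $x$, the random variable $\text{end}_A(y)$ depends only on $(x, w_{rA})$ and $\text{end}_B(y)$ only on $(x, w_{rB})$, so they are conditionally independent. The previous paragraph identifies their conditional marginals with the two expectations on the left-hand side, and Fubini then yields
$$\E_{y\sim \mc F_r}[X_A(\text{end}_A(y))X_B(\text{end}_B(y))] = \E_{x\sim \mc F_{r-1}}\bigl[\E_{z_A\sim \mc R_r}[X_A(z_A)\mid \text{res}_1(z_A)=x]\,\E_{z_B\sim \mc R_r}[X_B(z_B)\mid \text{res}_1(z_B)=\ol{x}]\bigr].$$
The main obstacle is purely notational bookkeeping: aligning the coordinate-level bijection $(x, w_{rA}) \leftrightarrow \text{end}_A(y)$ with the coordinate substitution used in Appendix \ref{sec:AppDistributions} to define the conditional expectation, and tracking the reversal $\ol{x}$ on the $B$-side. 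Once this bookkeeping is done, the argument reduces to a change of variables on a product of Lebesgue measures.
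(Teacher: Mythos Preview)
Your proposal is correct and rests on the same core idea as the paper: the product decomposition $\mc F_r \cong \mc F_{r-1}\times\mc S_r\times\mc S_r$ via $y\leftrightarrow(\pi(y),w_{rA},w_{rB})$, together with the observation that $\text{end}_A(y)$ depends only on $(\pi(y),w_{rA})$ and $\text{end}_B(y)$ only on $(\pi(y),w_{rB})$, so the two endpoint neighborhoods are conditionally independent given the central $(r-1)$-flower. Your coordinate identities $\text{res}_1(\text{end}_A(y))=\pi(y)$ and $\text{res}_1(\text{end}_B(y))=\ol{\pi(y)}$ are exactly right, and your bijection $w_{rA}\leftrightarrow\text{end}_A(y)\in\{z:\text{res}_1(z)=x\}$ matches the coordinate substitution $\rho_{r,1}(x)$ used in Appendix~\ref{sec:AppDistributions}.

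The only difference is organizational. The paper does not argue the product decomposition directly; instead it routes through two already-proven building blocks, Proposition~\ref{prop:f-to-r-sample} and Proposition~\ref{prop:r-to-f-sample-no-perm}, to reduce the statement to the single equality
\[
\E_{y\sim\mc F_r}[X_B(\text{end}_B(y))\mid\text{end}_A(y)=z]=\E_{w\sim\mc R_r}[X_B(w)\mid\text{res}_1(w)=\ol{x}],
\]
which it then verifies by explicit coordinate bookkeeping on product-set indicators. Your self-contained Fubini argument is a bit more transparent and avoids invoking the auxiliary propositions; the paper's version has the advantage of reusing machinery already in place. Either way the substance is the same.
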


The last covers the equivalence between sampling $y$ uniformly from $\mc R_r$ conditioned on $\text{proj}(y) = x$ for some fixed $x\in \mc R_{r-1}$ and the following procedure:

\begin{enumerate}
    \item Given $x\in \mc R_{r-1}$:
    \begin{enumerate}
        \item (Independently) for each $i\in [\Delta]$, sample $z_i\sim \mc F_{r-1}$ conditioned on $\text{end}_A(z_i) = \sigma_i(x)$.
        \item There is a unique $r$-neighborhood $y\in \mc R_r$ for which $\text{res}_i(y) = z_i$ for all $i\in [\Delta]$. Return it.
    \end{enumerate}
\end{enumerate}

\begin{restatable}{prop}{proprtorsample}\label{prop:r-to-r-sample}
For any positive integer $r$, $x\in \mc R_{r-1}$, and any collection $z_1,z_2,\hdots,z_{\Delta} \in \mc F_{r-1}$ satisfying the constraint that $\text{end}_A(z_i) = \sigma_i(x)$ for all $i\in [\Delta]$, there exists a unique $r$-neighborhood $\text{glue}(z_1,z_2,\hdots,z_{\Delta})\in \mc R_r$ with the property that

$$\text{res}_i(\text{glue}(z_1,z_2,\hdots,z_{\Delta})) = z_i$$

for all $i\in [\Delta]$. Furthermore, for any random variable $X:\mc R_r\rightarrow \mathbb R$, any $i\in [\Delta]$, and any $\zeta\in \mc F_{r-1}$ satisfying $\text{end}_A(\zeta) = \sigma_i(x)$,

\begin{align*}
&\E_{y\sim \mc R_r}[X(y) \mid \text{res}_i(y) = \zeta]\\
&= \E_{z_j\sim \mc F_{r-1} \text{ }\forall j\in [\Delta]}\left[X(\text{glue}(z_1,z_2,\hdots,z_{\Delta})) \mid (z_i = \zeta) \text{ and } (\text{end}_A(z_j) = \sigma_j(x) \text{ for all } j\in [\Delta])\right]
\end{align*}
\end{restatable}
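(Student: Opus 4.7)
My plan is to handle this proposition in two essentially independent pieces: first the existence-and-uniqueness of $\text{glue}$, then the sampling equivalence. Both reduce to careful bookkeeping on the coordinate-level definitions of $\text{end}_v$, $\text{res}_i$, and $\sigma_i$ in Section~\ref{sec:PrelimManipulation}; there is no new conceptual idea required, only a disciplined tracking of which coordinate of $y\in\mc R_r$ gets read off by which coordinate of each $z_i \in \mc F_{r-1}$.

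For existence and uniqueness, I would work entirely in coordinates. Given $x \in \mc R_{r-1}$ and the flowers $z_1,\dots,z_{\Delta}$ satisfying $\text{end}_A(z_i) = \sigma_i(x)$, I define $y := \text{glue}(z_1,\dots,z_{\Delta}) \in \mc R_r$ by setting $y_{si'} := x_{si'}$ for all $s \in [r-1]$ and $i' \in [\Delta]$, and setting the top-layer entry $y_{ri'}$ equal to $(z_{i'})_{(r-1)B}$ (or, when $r=1$, $y_{1i'} := (z_{i'})_0$). Expanding $\text{res}_{i'}(y) = \text{res}_1(\sigma_{i'}(y))$ using Definition~\ref{def:rest-rN-rF} shows that every coordinate of $\text{res}_{i'}(y)$ equals the corresponding coordinate of $z_{i'}$: the center label and $B$-chain of $\text{res}_{i'}(y)$ pick up exactly $y_{1i'},\dots,y_{ri'}$, which by construction agree with $(z_{i'})_0,(z_{i'})_{1B},\dots,(z_{i'})_{(r-1)B}$ (the first $r-1$ of which are forced by the constraint $\text{end}_A(z_{i'})=\sigma_{i'}(x)$), and the $A$-side entries $((z_{i'})_{sA})_k$ pick up $y_{s\sigma_{i'}(k+1)} = x_{s\sigma_{i'}(k+1)}$, which is again forced by the same constraint. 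For uniqueness, note that $\text{res}_{i'}(y) = z_{i'}$ in fact pins down every coordinate of $y$: the inner coordinates $y_{si'}$ for $s \le r-1$ are recovered from $\text{end}_A(z_{i'}) = \sigma_{i'}(x)$, and the boundary coordinate $y_{ri'}$ is recovered from the one entry of $z_{i'}$ that is not used by $\text{end}_A$, namely $(z_{i'})_{(r-1)B}$.

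For the sampling equivalence, I would unpack both conditional expectations using the coordinate-substitution definitions of Appendix~\ref{sec:AppDistributions} (Propositions~\ref{prop:cond-res-defn} and~\ref{prop:cond-end-defn}). On the left, after conditioning on $\text{res}_i(y) = \zeta$, the only free coordinates of $y \in \mc R_r$ are the top-layer entries $\{y_{rj}\}_{j \ne i}$, each independent and uniform on $\mc S_{r-1}$ — all other coordinates of $y$ are forced: $y_{si'} = x_{si'}$ for $s \le r-1$ because $\text{end}_A(\zeta)=\sigma_i(x)$, and $y_{ri}$ is determined by $\zeta$. On the right, for each $j \ne i$ the condition $\text{end}_A(z_j) = \sigma_j(x)$ fixes every coordinate of $z_j$ except $(z_j)_{(r-1)B}$, which is independent uniform on $\mc S_{r-1}$, while $z_i$ is entirely fixed to $\zeta$. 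By the coordinate formula for $\text{glue}$ established above, $y_{rj} = (z_j)_{(r-1)B}$ for $j \ne i$ and all other coordinates of $y$ match the fixed values. Hence both sides are the same iterated Lebesgue integral of $X$ over $(\mc S_{r-1})^{\Delta-1}$ against the uniform measure, and the claimed equality follows.

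The hard part here is purely notational rather than conceptual: the combination of a permutation-indexed family $\sigma_i$, the $+1$ shift in the definition of $\text{end}_A$ (which converts direction $i \in [\Delta-1]$ into index $i+1$), and the $A/B$ asymmetry in $\text{res}_1$ makes it easy to misalign an index. I would carry out the coordinate verification once in full detail for $\text{res}_1(y) = z_1$ (where no permutation is involved) and then derive the general case for $\text{res}_i(y) = z_i$ by applying $\sigma_i$ via Proposition~\ref{prop:cond-vert-permute}, which eliminates almost all of the indexing work and isolates the only substantive observation: that the single coordinate of $z_i$ left free by $\text{end}_A(z_i) = \sigma_i(x)$ is precisely the one that encodes the top-layer value $y_{ri}$ in direction $i$.
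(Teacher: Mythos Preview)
Your proposal is correct and follows essentially the same route as the paper: both define $\text{glue}$ by $y_{si'}:=x_{si'}$ for $s\le r-1$ and $y_{ri'}:=(z_{i'})_{(r-1)B}$, verify $\text{res}_{i'}(y)=z_{i'}$ coordinatewise, and then match the free coordinates on both sides of the expectation identity (the paper does this via product-set indicator functions, you do it by directly identifying the $\Delta-1$ free top-layer entries). The only minor deviation is your plan to reduce general $i$ to $i=1$ via Proposition~\ref{prop:cond-vert-permute}; the paper instead handles arbitrary $i$ directly, which is no harder and avoids the bookkeeping of relabeling $x$ and the other $z_j$'s under the permutation.
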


\subsection{Other Results: Chernoff with Bounded Dependence}\label{sec:PrelimOther}

In our proofs, we also use Chernoff bounds with bounded dependence:

\begin{thm}[Inequality (3) of \cite{LampertRZ18}, derived from Theorem 2.1 of \cite{Janson04}]\label{thm:dep-chernoff}
Consider $n$ random variables $\mc A = \{X_1,X_2,\hdots,X_n\}$ with the property that $0\le X_i\le 1$ for all $i\in [n]$ almost surely. Then

$$\Pr\left[\sum_{i=1}^n X_i - \sum_{i=1}^n \E[X_i] > \lambda\right] \le \exp\left(-\frac{2\lambda^2}{n\chi(\mc A)}\right)$$

where $\chi(\mc A)$ is the chromatic number of the dependency graph of $\mc A$.
\end{thm}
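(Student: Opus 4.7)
The plan is to prove this by the Cram\'er--Chernoff exponential-moment method combined with H\"older's inequality applied across color classes of the dependency graph. Write $S := \sum_{i=1}^n X_i$, $\mu := \E[S]$, and $\chi := \chi(\mathcal{A})$. Fix a proper $\chi$-coloring of the dependency graph with color classes $V_1,\dots,V_\chi$; by definition of the dependency graph, within each $V_j$ the variables $\{X_i : i \in V_j\}$ are mutually independent. Let $S_j := \sum_{i \in V_j} X_i$, $n_j := |V_j|$, and $\mu_j := \E[S_j]$, so that $\sum_j n_j = n$, $\sum_j \mu_j = \mu$, and $S = \sum_j S_j$.

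First I would bound the moment generating function $\E[e^{tS}]$ for $t>0$. Factoring $e^{tS} = \prod_{j=1}^\chi e^{tS_j}$, an application of H\"older's inequality with conjugate exponents all equal to $\chi$ (so that $\chi\cdot 1/\chi = 1$) yields
$$\E[e^{tS}] \;\le\; \prod_{j=1}^\chi \E\!\left[e^{t\chi S_j}\right]^{1/\chi}.$$
Inside each color class the variables are independent and take values in $[0,1]$, so Hoeffding's lemma applied term by term gives $\E[e^{t\chi S_j}] \le \exp(t\chi\mu_j + t^2\chi^2 n_j/8)$. Taking the $1/\chi$-th power and multiplying over $j$, using $\sum_j n_j = n$ and $\sum_j \mu_j = \mu$, this collapses cleanly to $\E[e^{tS}] \le \exp(t\mu + t^2\chi n/8)$.

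Next I would apply Markov's inequality to $e^{t(S-\mu)}$: for every $t>0$,
$$\Pr[S - \mu > \lambda] \;\le\; e^{-t\lambda}\,\E[e^{t(S-\mu)}] \;\le\; \exp\!\left(-t\lambda + t^2\chi n/8\right).$$
Optimizing this quadratic-in-$t$ exponent by setting $t := 4\lambda/(n\chi)$ balances the two terms and produces exactly the stated bound $\exp(-2\lambda^2/(n\chi))$.

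The key subtlety (and the only real obstacle) is the choice of exponents in H\"older's inequality. A naive Jensen application to $e^{\sum_j(S_j - \mu_j)}$ gives only the weaker bound $\exp(-2\lambda^2/(n\chi^2))$, which has an extra factor of $\chi$ in the denominator that is insufficient for the claim. Using H\"older with $\chi$ conjugate exponents all equal to $\chi$ is exactly the device that produces one factor of $\chi$ in the variance term (from the $t\chi$ rescaling inside each Hoeffding MGF) rather than $\chi^2$. One can refine this further by replacing the proper coloring with an optimal fractional cover of the dependency graph and applying H\"older with exponents inversely proportional to the cover weights; this yields the sharper version of Janson's theorem with the fractional chromatic number $\chi^*(\mathcal{A}) \le \chi(\mathcal{A})$, from which the stated inequality follows as a direct corollary.
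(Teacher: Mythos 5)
Your argument is correct. Note, however, that the paper does not supply its own proof of Theorem~\ref{thm:dep-chernoff} --- it is stated purely as a citation to Lampert--Reitzner--Z\"orner and Janson --- so there is no in-paper argument to compare against. Your derivation (H\"older across the $\chi$ color classes with all conjugate exponents equal to $\chi$, Hoeffding's lemma within each independent class, then the Cram\'er--Chernoff optimization $t = 4\lambda/(n\chi)$) is the standard proof of Janson's Theorem~2.1 and checks out: H\"older is applied with $\sum_j 1/\chi = 1$, each class factorizes by independence, and the exponent $-t\lambda + t^2\chi n/8$ minimizes to $-2\lambda^2/(n\chi)$ as claimed.
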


\section{Proof of Theorem~\ref{thm:main}}\label{sec:ProofBody}

Our main technical result is the following result about vertex survival probability of $r$-round matching-certified algorithms, which we prove in Section \ref{sec:VertexSurvivalProbability}:

\begin{restatable}{thm}{thmvertexmain}\label{thm:vertex-main}
For any $r\ge 0$ and $r$-round matching-certified algorithm $f$, $P_f \ge C_1^{-r}$, where $C_1 := 10^{80}$.
\end{restatable}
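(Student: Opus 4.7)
The plan is to prove Theorem \ref{thm:vertex-main} by induction on $r$. For the base case $r = 0$, every $0$-flower satisfies $\text{end}_v(w) = \emptyset$ for all $v$, so any two $0$-flowers (including $w$ with itself) are incident via any transposition $\sigma \in \Pi_\Delta$ with $\sigma(1) \ne 1$, which exists since $\Delta \ge 2$. The matching-certified condition applied to $w = w'$ forces $f(w)^2 = 0$ for every $w$, hence $f \equiv 0$ and $P_f = 1 = C_1^0$.

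For the inductive step, I would prove a self-reduction lemma: from any $r$-round matching-certified algorithm $f$, construct an $(r-1)$-round matching-certified algorithm $g$ with $P_g \le C_1 P_f$; composing with the inductive hypothesis $P_g \ge C_1^{-(r-1)}$ then gives $P_f \ge C_1^{-r}$. Following the refinement strategy of Section \ref{sec:TORefinement}, I would fix a suitable constant $C \in (0, 1]$, and for each $(r-1)$-neighborhood $x$ define the dominant direction $i^*(x)$ as the unique $i$ (if one exists, otherwise $0$) maximizing $P_i(f, x, C)$, the probability that a random $r$-neighborhood extension of $x$ in direction $i$ both has $\text{dir}(f, \cdot)$ pointing back towards $x$ and induces a $C$-good $(r-1)$-flower. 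Set $g(y) = 1$ iff $y$ is $C$-good and the dominant directions of both endpoint $(r-1)$-neighborhoods $y_A, y_B$ point towards $y$; the uniqueness of $i^*$ guarantees that $g$ is matching-certified.

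To analyze $P_g$, I would use Proposition \ref{prop:r-to-f-sample} to sample $y \sim \mc F_{r-1}$ via the equivalent process of drawing $y_A \sim \mc R_{r-1}$, a uniform direction $i \in [\Delta]$, and the conditional extension. Two ingredients then suffice: first, combining Propositions \ref{prop:edge-flip}, \ref{prop:vert-permute}, and \ref{prop:f-to-f-sample} shows that the expected product of the two one-sided good probabilities over a random $(r-1)$-flower is $(1 - P_f)/\Delta$, which via a Markov-type argument yields $\Pr[y \text{ is $C$-good}] \ge (1 - O(P_f))/\Delta$; second, a robust uniqueness lemma giving $\sum_{i \ne i^*(x)} P_i(f, x, C) = O(\delta_{\text{dom}}(f, x))$ paired with the expectation bound $\E_{x \sim \mc R_{r-1}}[\delta_{\text{dom}}(f, x)] = O(P_f)$. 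A union bound gives $\Pr[g(y) = 1] \ge (1 - O(P_f))/\Delta$. Since $g$ is matching-certified, at most one of the $\Delta$ $(r-1)$-flowers incident to any vertex can be matched, so by linearity $1 - P_g = \Delta \cdot \Pr_{y \sim \mc F_{r-1}}[g(y) = 1] \ge 1 - O(P_f)$, hence $P_g \le C_1 P_f$.

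The main obstacle is proving the two robust statements on $(r-1)$-neighborhoods. The strong direction-uniqueness lemma is a probabilistic, continuous analogue of the clean uniqueness of $\text{dir}(f, \cdot)$ on $r$-neighborhoods; I would prove it by contradiction, using Proposition \ref{prop:r-to-r-sample} to glue two $(r-1)$-flower extensions in distinct directions into an $r$-neighborhood that would force $f$ to match two incident edges. The expectation bound on $\delta_{\text{dom}}$ is what replaces the naive single-threshold analysis (which only yields the $O(\sqrt{P_f})$ loss sufficient for an $\Omega(\log\log \Delta)$ lower bound) with an argument implicitly integrating over a continuous range of $\delta$'s, delivering the sharp $O(P_f)$ loss required for the full $\Omega(\log \Delta)$ bound.
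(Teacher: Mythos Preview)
Your proposal is correct and follows essentially the same approach as the paper: reduce Theorem \ref{thm:vertex-main} to the round-elimination Lemma \ref{lem:vertex-round-elim} via induction with the $r=0$ base case, construct $g$ exactly as you describe (accept iff the $(r-1)$-flower is $C$-good and both endpoint dominant directions point inward), and bound $P_g$ using Proposition \ref{prop:r-1-flower-high-weak}, Lemma \ref{lem:r-1-neighborhood-low-strong}, and Proposition \ref{prop:complement-prob}. One small slip: $P_i(f,x,\delta)$ in the paper is the probability that a random $(r-1)$-\emph{flower} extension of the $(r-1)$-neighborhood $x$ in direction $i$ is $\delta$-good, not a probability about $r$-neighborhood extensions; your analysis is otherwise aligned with the paper's.
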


We use this result to prove Theorem \ref{thm:main}. The hard instances of the lower bound on the survival probability from Theorem \ref{thm:vertex-main} are graphs where the $r$-neighborhood of each vertex is a $\Delta$-regular tree (or at least for the vast majority of vertices). In order to directly show that an algorithm does not produce a maximal matching, we need to have a lower bound on the algorithm's \emph{edge} survival probability. This would require an extension of Theorem \ref{thm:vertex-main}. Instead, we design the hard graph to make it so that lower bounding vertex survival probability suffices. In particular, random regular graphs can be sampled in such a way that they have a high girth and a small maximum independent set. We take advantage of these two properties as follows:

\begin{enumerate}
    \item High girth enables the application of Theorem \ref{thm:vertex-main} to show that any $r$-round algorithm for small enough $r$ has high vertex survival probability.
    \item Applying Chernoff with bounded dependence (Theorem \ref{thm:dep-chernoff}) shows that, with high probability, a large set of vertices is unmatched by the algorithm.
    \item Since the lower bound instance graph only has small independent sets, the unmatched set of vertices cannot be an independent set, meaning that there is an edge between some pair of unmatched vertices. Thus, the algorithm can only produce a maximal matching with very low probability.
\end{enumerate}
   
The following result describes our family of hard instances. Note that we state a stronger property than an upper bound on the size of any independent set -- we ask for all sets with size larger than the independence threshold to have their induced subgraph sizes concentrate around their expectations. This result was effectively shown by \cite{Bollobas80, Bollobas88, BrandtCGGRV22}, but we prove it in the appendix to be self-contained since we need stronger guarantees:

\begin{restatable}[Adapted from \cite{Bollobas80, Bollobas88, BrandtCGGRV22}]{lem}{lemconfigmodel}\label{lem:config-model}
For any two even positive integers $n,\Delta$, there exists a graph $G_{n,\Delta}$ with the following properties:
\begin{enumerate}
    \item (Regularity) $G_{n,\Delta}$ is $\Delta$-regular.
    \item (Girth) $G_{n,\Delta}$ has girth at least $(\log_\Delta n)/1000$.
    \item (Subgraph size) For any $S\subseteq V(G_{n,\Delta})$, the following properties hold:
    \begin{enumerate}
        \item If $\sqrt{n} \le |S| \le \frac{10^6n\ln\Delta}{\Delta}$, then $|E(G_{n,\Delta}[S])| < 10^8(\ln(n/|S|))|S|$.
        \item If $|S| > \frac{10^6n\ln\Delta}{\Delta}$, then $\left||E(G_{n,\Delta}[S])| - \frac{|S|^2\Delta}{2n}\right| < \frac{|S|^2\Delta}{50n}$.
    \end{enumerate}
\end{enumerate}
\end{restatable}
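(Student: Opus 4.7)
The plan is to construct $G_{n,\Delta}$ via the configuration model: place $n$ vertices each carrying $\Delta$ half-edges and take a uniformly random perfect matching on the $n\Delta$ half-edges (which is well-defined since $n\Delta$ is even by the parity hypothesis). This yields a random $\Delta$-regular multigraph. By the classical result of Bollob\'as, this multigraph is simple with probability bounded below by $e^{-(\Delta^2-1)/4}(1+o(1))$, which is strictly positive. Conditioning on simplicity then produces a uniform sample from the $\Delta$-regular simple graphs on $n$ vertices, so it suffices to show that each of the three properties (regularity is built-in) fails with probability $o(1)$ in the unconditioned model; a union bound then yields a single realization satisfying all requirements.

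For girth, the expected number of length-$k$ cycles in the configuration model is $\tfrac{(\Delta-1)^k}{2k}(1+o(1))$, so summing for $3 \le k \le g := (\log_\Delta n)/1000$ gives total expectation at most $(\Delta-1)^g \le n^{1/1000}$. Markov's inequality then ensures at most $n^{1/100}$ short cycles with probability $\ge 9/10$; we absorb these by starting from a slightly larger $n'$ and deleting one vertex per short cycle, which preserves $\Delta$-regularity and still lands at $n$. For subgraph size, fix $S$ with $|S| = s$ and let $X_S$ count edges of the induced subgraph. Its mean is $\mu_s = \binom{s\Delta}{2}/(n\Delta-1) = (s^2\Delta/2n)(1+o(1))$. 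In the \emph{large regime} $s > 10^6 n\ln\Delta/\Delta$ we have $\mu_s = \Omega(s\ln\Delta)$, and a Chernoff-type multiplicative concentration gives $\Pr[|X_S - \mu_s| > \mu_s/25] \le \exp(-c\,s\ln\Delta)$, which dominates $\binom{n}{s} \le e^{s\ln(en/s)}$ for $c$ large. In the \emph{small regime} $\sqrt{n} \le s \le 10^6 n\ln\Delta/\Delta$ we use the one-sided tail $\Pr[X_S \ge t] \le \binom{\binom{s}{2}}{t}\bigl(\tfrac{\Delta}{n-1}\bigr)^t \le \bigl(\tfrac{es^2\Delta}{2nt}\bigr)^t$, and choosing $t = 10^8 s\ln(n/s)$ makes the base at most $1/2$ (using $\ln(n/s) \ge \tfrac{1}{2}\ln\Delta$ throughout the regime), yielding probability $2^{-10^8 s \ln(n/s)}$, which easily absorbs the union bound $\binom{n}{s}$ and a further union bound over $s$.

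The main obstacle is justifying the Chernoff-type concentration in the large regime, because the edge indicators in the configuration model are not independent. The cleanest route is to expose the pairings half-edge by half-edge and apply Azuma--Hoeffding (or McDiarmid) to the resulting Doob martingale for $X_S$: exposing a single half-edge changes $X_S$ by at most a constant, giving the desired $\exp(-c\mu_s)$ tail. Alternatively, one invokes the switching method of Bollob\'as directly. Once both regimes are handled, union-bounding the three potential failure events -- non-simplicity, existence of a short cycle, and violation of either subgraph-size inequality for some $S$ -- still leaves probability bounded away from $1$, so a realization simultaneously satisfying all properties exists, which we take as $G_{n,\Delta}$.
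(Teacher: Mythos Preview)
Your overall architecture (configuration model, union bound over $S$, probabilistic existence) matches the paper's, but there are two genuine gaps in the technical execution.

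\textbf{Concentration in the large regime.} You correctly identify this as the crux, but Azuma--Hoeffding on the half-edge exposure martingale is too weak. Exposing the matching one pair at a time gives $\Theta(n\Delta)$ martingale steps with bounded differences $O(1)$, so Azuma yields $\Pr[|X_S-\mu_s|>t]\le\exp(-ct^2/(n\Delta))$. With $t=\mu_s/25\approx s^2\Delta/(50n)$ this is $\exp(-c\,s^4\Delta/n^3)$, which at the bottom of the large regime ($s\approx n\ln\Delta/\Delta$) is only $\exp(-c(\ln\Delta)^4/\Delta^3)$ --- far too large to beat $\binom{n}{s}\approx\exp(s\ln\Delta)$. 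Even restricting exposure to the $s\Delta$ half-edges in $S$ gives $\exp(-c\,s^3\Delta/n^2)$, which still fails by a factor of $\Delta/\ln\Delta$. To get the needed Poisson-type tail $\exp(-c\mu_s)$ you need something sharper than bounded-difference martingales: the paper computes the exact PMF of $X_S$ (number of edges of a uniformly random perfect matching lying inside a fixed set), verifies it is \emph{ultra log-concave}, and then applies a Poisson-type concentration bound for such variables. Freedman's inequality (tracking the predictable quadratic variation, which here is $\approx\mu_s$) would also work, but plain Azuma/McDiarmid does not.

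\textbf{Girth and regularity.} Deleting one vertex per short cycle does \emph{not} preserve $\Delta$-regularity: the neighbors of any deleted vertex drop to degree $\Delta-1$. More fundamentally, your framing ``each bad event has probability $o(1)$, so union-bound'' breaks down because for growing $\Delta$ the simplicity probability $e^{-(\Delta^2-1)/4}$ is itself exponentially small, not $\Theta(1)$. The paper instead invokes a result of McKay--Wormald--Wysocka that lower-bounds the probability of the \emph{joint} event ``simple and girth $>g$'' by $\exp\bigl(-\sum_{r=1}^g(\Delta-1)^r/(2r)\bigr)\ge\exp(-n^{1/100})$ for $g=(\log_\Delta n)/1000$, and then shows the subgraph-size property fails with the \emph{smaller} probability $\exp(-\sqrt{n})$; positivity of the difference gives existence. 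So the argument is a comparison of two rare events, not three $o(1)$ failures.
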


Now that we are equipped with a proof outline and a family of hard instances, we proceed towards proving our main results.

\subsection{A Lower Bound for MM in High-Girth Regular Graphs}\label{sec:LBHighGirthMM}

First, we reason about maximal matchings in regular graphs. Our proof conforms exactly to the outline given earlier. In particular, we do use the subgraph size guarantee of Lemma \ref{lem:config-model}.

\thmmain*

\begin{proof}[Proof of Theorem \ref{thm:main} for regular graphs]%\footnote{We do not need to show this to prove Theorem \ref{thm:main}, but include it anyways due to the simplicity of the proof and the better probability guarantee.}]
Assume for the sake of contradiction that there exists an $r = C_0\min(\log\Delta, \sqrt{\log n})$-round LOCAL algorithm $\mc A$ that, when given any unique fixed IDs, outputs a maximal matching with probability at least $\Delta^{-1/1000}$. Let $f_{\mc A}$ be the $(r+1)$-round matching-certified algorithm (which only receives randomness tapes for each edge, \emph{not} an ID) obtained by deleting incident edges in the solution produced by $\mc A$, reassigning randomness from vertices to edges, and reserving the first $10\log n$ bits of each randomness tape as a (with high probability unique) ID to supply to the algorithm $\mc A$. Note that the probability that $f_{\mc A}$ finds a maximal matching is greater than or equal to the probability that $\mc A$ finds a maximal matching, as the postprocessing step introduced to obtain $f_{\mc A}$ does not modify the obtained matching in any case in which $\mc A$ finds a matching. Thus, it suffices to upper bound the probability that $f_{\mc A}$ finds a maximal matching.

$f_{\mc A}$ is an $(r+1)$-round matching-certified algorithm. We now discuss which graph to apply $f_{\mc A}$ to. Let $\Delta := 2^{(r+1)/C_0}$ and let $n := \Delta^{(r+1)/C_0}$. We apply $f_{\mc A}$ to the graph $G := G_{n,\Delta}$ given by Lemma \ref{lem:config-model}. Note that $r+1 \le 2C_0\min(\log\Delta, \sqrt{\log n})$. By the girth guarantee, $G$ has girth at least $(\log_{\Delta} n) / 1000 > (r+1)/(1000C_0) > 10(r+1)$. Thus, the regularity guarantee implies that the $(r+1)$-neighborhood of any vertex is a $\Delta$-ary tree. We may therefore apply Theorem \ref{thm:vertex-main}. This shows that

$$P_{f_{\mc A}} > C_1^{-r-1} \ge C_1^{-2C_0\log \Delta} > \Delta^{-1/1000}$$

which in turn shows that applying $f_{\mc A}$ to $G$ results in any given vertex being unmatched with probability at least $\Delta^{-1/1000}$. For each $v\in V(G)$, let $X_v$ denote the indicator variable of the event that $v$ is unmatched by the algorithm $f_{\mc A}$. By linearity of expectation, $\E\left[\sum_{v\in V(G)} X_v\right] >  \Delta^{-1/1000} n$. These random variables are dependent, but the dependency graph on them has chromatic number at most its degree, which is at most $\chi(\{X_v\}_{v\in V(G)}) \le \Delta^{4r} < n^{1/1000}$. By Theorem \ref{thm:dep-chernoff} with $\lambda \gets \Delta^{-1/1000} n/2$ (applied to $-X_v$),

$$\Pr\left[\sum_{v\in V(G)} X_v < \Delta^{-1/1000} n/2\right] \le \exp\left(-\frac{2(\Delta^{-1/1000} n/2)^2}{n^{1 + 1/1000}}\right) \le \exp(-\sqrt{n})$$

Thus, the probability over the algorithm's randomness that the number of unmatched vertices is less than $\Delta^{-1/1000} n/2$ is at most $e^{-\sqrt{n}}$. By the subgraph size guarantee of Lemma \ref{lem:config-model}, part (b), any independent set in $G$ must have at most $10^6(n\log\Delta)/\Delta < \Delta^{-1/1000} n/2$ vertices. Thus, the set of unmatched vertices cannot form an independent set, meaning that the matching found by $f_{\mc A}$ is not maximal if the number of unmatched vertices is at least $\Delta^{-1/1000} n/2$. Thus, $f_{\mc A}$ and thus $\mc A$ outputs a maximal matching with probability at most $\exp(-\sqrt{n}) < \Delta^{-1/1000}$ on $G$, a contradiction to $\mc A$ having success probability at least $\Delta^{-1/1000}$ for any IDs, as desired.
\end{proof}

\subsection{A Lower Bound for MM in Trees}\label{sec:LBTrees}

Now, we reason about the ability for fast algorithms to find maximal matchings in trees. One can use a similar strategy to the proof of Theorem \ref{thm:main} for regular graphs (Section \ref{sec:LBHighGirthMM}) to show impossibility results for algorithms that have success probability higher than $1 - 1/(2n)$. We can do better though with a more nuanced approach. When applying an algorithm to $G_{n,\Delta}$, implicitly the algorithm is being run on $n$ different trees sampled from the same distribution. The number of vertices that are far away from any unmatched edge is thus proportional to the probability that the algorithm finds a matching when applied to any given tree. Since there are many surviving edges, the subgraph size property of $G_{n,\Delta}$ implies an expansion property as well. This expansion property can be used to show that few vertices in $G_{n,\Delta}$ are far away from surviving edges, which implies that any fast algorithm for trees must succeed with low probability, as desired.

Before proving the theorem, we prove a proposition linking the subgraph size guarantee to expansion. For a set of vertices $S$, let $N(S) \subseteq V(G)$ denote $S$ together with its neighbors in $G$:

\begin{prop}\label{prop:expansion}
For any two positive even integers $n,\Delta$ and any $S\subseteq G_{n,\Delta}$ with $n/e^{\Delta^{1/4}/10^9} \le |S| \le 4n/5$,

$$\frac{|N(S)|}{|S|} > \sqrt{\min\left(\Delta, \frac{n}{|S|}\right)}$$
\end{prop}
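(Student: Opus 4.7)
The plan is to apply Lemma~\ref{lem:config-model} twice---once to $S$ and once to $T := N(S)$---and tie them together using $\Delta$-regularity. Set $s := |S|$ and $t := |T|$, and argue by contradiction, assuming $t \le s\sqrt{\min(\Delta, n/s)}$. The first ingredient is the counting identity
\[
\Delta s = 2|E(G[S])| + e(S, N(S)\setminus S),
\]
which follows from $\Delta$-regularity together with the fact that all neighbors of $S$ lie in $N(S)$. Combined with $|E(G[N(S)])| \ge |E(G[S])| + e(S, N(S)\setminus S)$, this gives the key lower bound $|E(G[N(S)])| \ge \Delta s - |E(G[S])|$.

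Next, I would upper-bound $|E(G[S])|$ using Lemma~\ref{lem:config-model}(3). The hypothesis $s \ge n/e^{\Delta^{1/4}/10^9}$ gives $\ln(n/s) \le \Delta^{1/4}/10^9$, which is exactly what is needed to keep the weak-regime bound $10^8 s\ln(n/s)$ much smaller than $\Delta s$. In the strong regime $s > 10^6 n\ln\Delta/\Delta$, part (b) gives $|E(G[S])| < (26/50)s^2\Delta/n \le (104/250)\Delta s$ (using $s \le 4n/5$); in the weak regime, part (a) gives $|E(G[S])| \le (1/10)\Delta^{1/4} s$. Either way $|E(G[S])| \le 0.42\,\Delta s$, so $|E(G[N(S)])| \ge 0.58\,\Delta s$.

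The third step applies Lemma~\ref{lem:config-model}(3) to $T$ itself to derive a contradiction. If $t > 10^6 n\ln\Delta/\Delta$, part (b) gives $(26/50)t^2\Delta/n > 0.58\,\Delta s$, hence $t^2 > (29/26)\,sn > sn$, so $t > \sqrt{sn} = s\sqrt{n/s} \ge s\sqrt{\min(\Delta, n/s)}$, contradicting the assumption. If $\sqrt{n} \le t \le 10^6 n\ln\Delta/\Delta$, part (a) gives $10^8 t\ln(n/t) > 0.58\,\Delta s$; combined with $\ln(n/t) \le \ln(n/s) \le \Delta^{1/4}/10^9$, this yields $t > 5.8\,\Delta^{3/4} s > s\sqrt{\Delta} \ge s\sqrt{\min(\Delta, n/s)}$, again a contradiction.

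The main obstacle I foresee is the corner case $t < \sqrt{n}$, where Lemma~\ref{lem:config-model}(3) gives no direct bound on $|E(G[T])|$. Note that $t < \sqrt{n}$ forces $s < \sqrt{n}$, and combined with the hypothesis $s \ge n/e^{\Delta^{1/4}/10^9}$ this is only possible when $\Delta$ is enormous (roughly $\Delta \gtrsim (\ln n)^4 \cdot 10^{36}$). In the subcase $s\Delta \ge n$ (equivalently $\min(\Delta, n/s) = n/s$), the trivial bound $|E(G[T])| \le \binom{t}{2} < n/2$ already clashes with $|E(G[T])| \ge 0.58\,n$. In the remaining subcase $s\Delta < n$, one would invoke the high-girth property of $G_{n,\Delta}$ via a Moore-type bound on the edge count of $G[T]$ (a subgraph of girth at least $(\log_\Delta n)/1000$) to rule out $t \le s\sqrt{\Delta}$ as well.
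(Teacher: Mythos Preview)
Your approach is essentially the same as the paper's: use $\Delta$-regularity plus the subgraph-size bound on $S$ to lower-bound $|E(G[T])|$, then apply the subgraph-size bound to $T$ in the two regimes of Lemma~\ref{lem:config-model}(3). The paper argues directly rather than by contradiction and gets $|E(G[T])| > \tfrac{11}{20}\Delta|S|$ with slightly different arithmetic, but the structure is identical.

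On the corner case $t < \sqrt{n}$: the paper's own proof also implicitly assumes $|T| \ge \sqrt{n}$ (and indeed $|S| \ge \sqrt{n}$, which is needed even to apply part~(a) to $S$), without comment. In the paper's only use of this proposition (the tree lower bound in Section~\ref{sec:LBTrees}), the sets involved always have size at least $\Delta^{-1/1000}N/4 \gg \sqrt{N}$, so this corner case never arises. Your handling of the subcase $s\Delta \ge n$ via the trivial edge count is clean; the Moore-bound sketch for $s\Delta < n$ would need more work, but since the proposition is only ever invoked with $|S| \ge \sqrt{n}$, the simplest fix---for both your proof and the paper's---is to add that hypothesis.
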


\begin{proof}
Let $T := N(S)$ and $G := G_{n,\Delta}$ to simplify notation. We split the proof into two cases depending on the size of $S$. If $|S| \le \frac{10^6n\ln\Delta}{\Delta}$, then $|E(G[S])| < \frac{|S|\Delta^{1/4}}{10} < \frac{9|S|\Delta}{20}$ by the first part of the subgraph size guarantee. If $|S| > \frac{10^6n\ln \Delta}{\Delta}$, then $|E(G[S])| \le \frac{13|S|^2\Delta}{25n} \le \frac{13}{25}\frac{4}{5}|S|\Delta < \frac{9|S|\Delta}{20}$. The regularity of $G$ applied to the vertices in $S$ implies that

$$|E(G[T])| \ge \Delta |S| - |E(G[S])| > \frac{11\Delta |S|}{20}$$

By the subgraph size guarantee applied to $T$,

$$|E(G[T])| \le \max\left(10^8\ln\left(\frac{n}{|T|}\right)|T|,\frac{13|T|^2\Delta}{25n}\right) \le \max\left(\frac{\Delta^{1/4}|T|}{10}, \frac{13|T|^2\Delta}{25n}\right)$$

When the first term is the maximizer, combining the two inequalities yields

$$\frac{|T|}{|S|} > 11\Delta^{3/4}/2 > \sqrt{\Delta}$$

When the second term is the maximizer, combining the two inequalities yields

$$\frac{|T|^2}{|S|^2} > \frac{55n}{52|S|} > \frac{n}{|S|}$$

Thus,

$$\frac{|T|}{|S|} > \sqrt{\min\left(\Delta, \frac{n}{|S|}\right)}$$

as desired.
\end{proof}

Now, we are ready to prove the desired result about hardness on trees:

\thmmain*

\begin{proof}[Proof of Theorem \ref{thm:main} for $n$-vertex $\Delta$-ary trees]
Assume for the sake of contradiction that there exists an $r = C_0\min(\log\Delta, \sqrt{\log n})$-round LOCAL algorithm $\mc A$ that, when given a tree and any unique fixed IDs on the vertices of that tree, outputs a maximal matching with probability at least $\Delta^{-1/1000}$. This algorithm is not always required to output a matching. Let $f_{\mc A}$ be the $(r+1)$-round matching-certified algorithm (which only receives randomness tapes for each edge, \emph{not} an ID) obtained by deleting pairs of incident edges in the output solution, reassigning randomness from vertices to edges, and reserving the first $10\log n$ bits of each randomness tape as an ID to supply to the LOCAL algorithm $\mc A$. $f_{\mc A}$ always outputs a matching and has success probability at least as high as $\mc A$, so it suffices to reason about $f_{\mc A}$. Let $\Delta := 2^{(r+1)/C_0}$ and $n := \Delta^{(r+1)/C_0}$. Note that $r+1 \le 2C_0\min(\log\Delta,\sqrt{\log n})$. Let $T_v$ be a $\Delta$-ary tree with depth $(r+1)/C_0$ rooted at a vertex $v$ (i.e. so that $|V(T_v)| = n$). In particular, $T_v$ contains vertices without degree $\Delta$ (leaves). For a graph $H$, let $f_{\mc A}(H,\sigma)$ denote the matching obtained by applying the algorithm $f_{\mc A}$ to $H$ given randomness $\sigma$. We just need to upper bound the following (success) probability:

$$\Pr_{\sigma}[f_{\mc A}(T_v,\sigma) \text{ is maximal}]$$

It suffices to bound the probability of surviving for central vertices only:

$$\Pr_{\sigma}[f_{\mc A}(T_v,\sigma) \text{ is maximal}] \le \Pr_{\sigma}[f_{\mc A}(T_v,\sigma) \text{ is maximal in distance $\le (r+1)/(2C_0)$ from $v$}]$$

where maximality in distance $R$ refers to maximality on the subgraph induced by the $R$-radius neighborhood of $v$. We upper bound this probability by exploiting the indistinguishability of this graph from a certain high girth graph. Let $N := n^{1/C_0}$. We apply $f_{\mc A}$ to the graph $G := G_{N,\Delta}$ given by Lemma \ref{lem:config-model}. By the girth guarantee, $G$ has girth at least $(\log_\Delta N)/1000 > 100(r+1)/C_0$. Thus, the regularity guarantee implies that the $100(r+1)/C_0$-neighborhood of any vertex is a $\Delta$-ary tree, meaning that the vertices within distance $R := (r+1)/(2C_0)$ of a vertex $v\in V(G)$ cannot distinguish betweeen $T_v$ and $G$ in $r+1$ rounds. Thus, for any $v\in V(G)$,

$$\Pr_{\sigma}[f_{\mc A}(T_v,\sigma) \text{ is maximal in distance $\le R$ from $v$}] \le \Pr_{\sigma}[f_{\mc A}(G,\sigma) \text{ is maximal in distance $\le R$ from $v$}]$$

For randomness $\sigma$, let $X_{\sigma}\subseteq V(G)$ denote the set of vertices $v\in V(G)$ for which $f_{\mc A}(G,\sigma)$ is maximal within distance $\le R$ from $v$. By linearity of expectation,

$$\Pr_{\sigma}[f_{\mc A}(G,\sigma) \text{ is maximal in distance $\le R$ from $v$}] = \frac{\E_{\sigma}[|X_{\sigma}]|}{N}$$

Thus, it suffices to bound the expected size of $X_{\sigma}$. For randomness $\sigma$, let $Y_{\sigma}$ denote the set of vertices $v\in V(G)$ that are not incident to an edge in $f_{\mc A}(G,\sigma)$. By Theorem \ref{thm:vertex-main},

$$\E_{\sigma}[|Y_{\sigma}|] \ge P_{f_{\mc A}}N \ge C_1^{-r-1}N \ge C_1^{-2C_0\log\Delta}N > \Delta^{-1/1000}N$$

By Theorem \ref{thm:dep-chernoff},

$$\Pr_{\sigma}[|Y_{\sigma}| \le \Delta^{-1/1000}N/2] \le \exp\left(-\frac{2(\Delta^{-1/1000}N/2)^2}{N\Delta^{4r}}\right) \le \exp(-\sqrt{N})$$

Let $Z_{\sigma}$ denote the set of $v\in V(G)$ for which $v\in Y_{\sigma}$ and there exists a neighbor $u\in Y_{\sigma}$ of $v$ in $G$. When $|Y_{\sigma}| > \Delta^{-1/1000}N/2$, $|Y_{\sigma}| > \Delta^{-1/1000}N/2 > 2(10^6 N\log\Delta)/\Delta$. By definition, $Y_{\sigma}\setminus Z_{\sigma}$ is an independent set in $G$. Part (b) of the subgraph size property of $G$ implies that $|Y_{\sigma}\setminus Z_{\sigma}| \le \frac{10^6 N\ln \Delta}{\Delta}$. Thus, $|Z_{\sigma}| = |Y_{\sigma}| - |Y_{\sigma}\setminus Z_{\sigma}| > \Delta^{-1/1000}N/4$. If, for the sake of contradiction, $|X_{\sigma}| \ge \Delta^{-1/1000}N/2$, then applying Proposition \ref{prop:expansion} $\ell = \log\log\Delta$ times shows that the $\ell$-neighborhoods of both $X_{\sigma}$ and $Z_{\sigma}$ contain at least $4N/5$ vertices. Since $|V(G)| = N$ and $4N/5 + 4N/5 > N$, these neighborhoods must overlap, meaning that there exist $u\in X_{\sigma}$ and $v\in Z_{\sigma}$ within distance $2\ell < R$ of each other. This contradicts the definition of $X_{\sigma}$, so $|Y_{\sigma}| > \Delta^{-1/1000}N/2$ implies $|X_{\sigma}| < \Delta^{-1/1000}N/2$, meaning that

$$\Pr_{\sigma}[|X_{\sigma}| \ge \Delta^{-1/1000}N/2] \le \Pr_{\sigma}[|Y_{\sigma}| \le \Delta^{-1/1000}N/2] \le \exp(-\sqrt{N})$$

Thus,

$$\E_{\sigma}[|X_{\sigma}|] \le \Delta^{-1/1000}N/2 + N\exp(-\sqrt{N}) \le \Delta^{-1/1000}N$$

and the desired probability is at most $\Delta^{-1/1000}$. This contradicts the fact that $f_{\mc A}$ had success probability at least $\Delta^{-1/1000}$, as desired.
\end{proof}

\section{Vertex Survival Probability (Proof of Theorem \ref{thm:vertex-main})}\label{sec:VertexSurvivalProbability}

%\begin{figure}[H]
 %   \begin{center}
  %  \input{figures/dep_diagram}
   % \end{center}
    
   % \caption{Propositions \ref{prop:eapx} and \ref{prop:comp-monotone} are missing due to not being as important. \seri{Decide on importance.}}
   % \label{fig:dep-diagram}
%\end{figure}

%\aaron{Decide how to handle constants. Maybe reciprocalize them to make them all greater than 1. Maybe make a table. Or maybe do nothing.}
%\seri{We need to organize this section}

In this section we prove Theorem \ref{thm:vertex-main}, which recall is the following:

\thmvertexmain*

\noindent Theorem \ref{thm:vertex-main} is implied by the following self-reduction round elimination result:

\begin{restatable}{lem}{lemvertexroundelim}\label{lem:vertex-round-elim}
For any $r\ge 1$ and $r$-round matching-certified algorithm $f$, there exists an $(r-1)$-round matching-certified algorithm $g$ with $P_g \le C_1 P_f$.
\end{restatable}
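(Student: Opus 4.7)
The plan is to construct $g$ from $f$ following the refined ``dominant-direction'' strategy outlined in Section~\ref{sec:TORefinement}. First, using the matching-certified property of $f$, I will argue that for each $r$-neighborhood $x$ there is at most one direction $i \in [\Delta]$ admitting an $r$-flower extension $w$ with $\text{end}_A(w)=\sigma_i(x)$ and $f(w)=1$; call it $\text{dir}(f,x)$ (or $0$ if none exists). Uniqueness follows because two directions $i \ne j$ with accepting extensions can be glued, via Proposition~\ref{prop:r-to-r-sample}, into a configuration where $f$ matches the center to two distinct neighbors. Next, for an $(r-1)$-flower $y$, define $P_A(f,y)$ and $P_B(f,y)$ as the (conditional) probabilities that a uniformly sampled $r$-neighborhood extension on the $A$- or $B$-side of $y$ has its $\text{dir}$ pointing back toward $y$, and call $y$ $\delta$-good if both exceed $1-\delta$. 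For an $(r-1)$-neighborhood $x$, let $P_i(f,x,\delta)$ be the probability that the $(r-1)$-flower extension of $x$ in direction $i$ is $\delta$-good, and call $i^\ast$ a $\delta$-dominant direction if $P_{i^\ast}(f,x,\delta) \ge \hat C$ for an absolute constant $\hat C$.

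The algorithm $g$ outputs $g(y)=1$ iff $y$ is $C$-good and each of the two endpoints $\text{end}_A(y), \text{end}_B(y)$ has its unique $C$-dominant direction pointing toward $y$ (if two distinct directions qualify, set $i^\ast=0$). Uniqueness of $i^\ast$ makes $g$ matching-certified, by the same direction-conflict argument used for $\text{dir}(f,\cdot)$. Two ingredients drive the analysis: (i) a \emph{mass lemma} that $\Pr_{y\sim\mc F_{r-1}}[y \in \mc X_{r-1}(f,\delta)] \ge (1-2P_f/\delta)/\Delta$, obtained by applying Proposition~\ref{prop:f-to-f-sample} to the product $P_A(f,y) P_B(f,y)$ together with the identities $\E[P_A]=\E[P_B]=1/\Delta$ and $\E[P_AP_B]=(1-P_f)/\Delta$; and (ii) a \emph{dominance-propagation lemma} asserting that whenever $x$ has a $\delta$-dominant direction $i^\ast$, one actually has $\sum_{i\ne i^\ast} P_i(f,x,C) = O(\delta)$. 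This second lemma is the strengthened form of Proposition~\ref{prop:r-1-neighborhood-low-weak}, promoting the bound from threshold $\delta$ to the much larger threshold $C$; it is proved by contradiction, gluing multiple dominant directions via Proposition~\ref{prop:r-to-r-sample} to manufacture an $r$-neighborhood with two valid $\text{dir}$-directions. Letting $\delta_{\text{dom}}(f,x)$ be the smallest $\delta$ for which $x$ is $\delta$-dominant, an aggregation of (i) across scales yields $\E_{x\sim\mc R_{r-1}}[\delta_{\text{dom}}(f,x)] = O(P_f)$.

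To bound $P_g$, I will sample $y$ via Proposition~\ref{prop:r-to-f-sample}: draw $y_A=\text{end}_A(y) \sim \mc R_{r-1}$, then a direction $i\sim[\Delta]$, then an $r$-flower extension of $\sigma_i(y_A)$. Conditioned on $y_A$, the probability that $y$ is $C$-good while $i \ne i^\ast(y_A)$ is at most $\tfrac{1}{\Delta}\sum_{i\ne i^\ast}P_i(f,y_A,C)$; applying the dominance-propagation lemma pointwise and then taking expectation over $y_A$ gives a bound of $O(\E[\delta_{\text{dom}}(f,y_A)])/\Delta = O(P_f)/\Delta$. Combining with the mass lemma, a union bound yields
\[
\Pr[g(y)=1] \;\ge\; \frac{1}{\Delta}\bigl(1 - 2P_f/C - O(P_f)\bigr) \;=\; \frac{1}{\Delta}\bigl(1-O(P_f)\bigr).
\]
Since $g$ is matching-certified, at most one of the $\Delta$ restrictions $g(\text{res}_i(z))$ can equal $1$, so $P_g = 1 - \Delta \cdot \Pr_{y\sim\mc F_{r-1}}[g(y)=1] \le O(P_f)$; absorbing constants into $C_1$ completes the proof.

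The main obstacle will be the dominance-propagation lemma: unlike the warmup argument (which used a single scale $\delta \approx \sqrt{P_f}$ and lost a square root), the refined version must relate dominance at an internal threshold $\delta$ to $\delta$-size control at the far larger threshold $C$, requiring a quantitative application of the gluing operation rather than a qualitative one. A secondary but essential point is the ``expectation, not worst case'' maneuver in the final bound: a small constant fraction of $y_A$'s may have $\delta_{\text{dom}}(f,y_A)$ of order $1$, so only by integrating $\delta_{\text{dom}}$ over its distribution (rather than bounding it pointwise) do the contributions of bad $y_A$'s stay within the $O(P_f)$ budget.
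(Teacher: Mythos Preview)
Your plan mirrors the paper's proof almost exactly: the construction of $g$ via the dominant direction at threshold $C$, the matching-certified check, the mass lemma (Proposition~\ref{prop:r-1-flower-high-weak}), the dominance-propagation lemma (Lemma~\ref{lem:r-1-neighborhood-low-strong}), the expectation bound on $\delta_{\text{dom}}$ (Proposition~\ref{prop:complement-prob}), and the final averaging via Proposition~\ref{prop:r-to-f-sample} are precisely the paper's ingredients, assembled in the same order.

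There is, however, one genuine gap. You assert that ``an aggregation of (i) across scales yields $\E_{x}[\delta_{\text{dom}}(f,x)] = O(P_f)$,'' where (i) is the mass lemma $\Pr[y\in\mc X_{r-1}(f,\delta)]\ge(1-2P_f/\delta)/\Delta$. Integrating this bound in $\delta$ only gives $O(P_f\log(1/P_f))$, since $\int_{P_f}^{C}(P_f/\tau)\,d\tau = P_f\ln(C/P_f)$; carried through, that would yield $P_g\le O(P_f\log(1/P_f))$ and hence only an $\Omega(\log\Delta/\log\log\Delta)$ lower bound, merely matching~\cite{KuhnMW16}. The paper closes this gap inside the proof of Proposition~\ref{prop:complement-prob} with two further devices you do not mention. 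First, it replaces (i) by the strengthened Lemma~\ref{lem:r-1-flower-high-strong}, which bounds the \emph{$\tau$-averaged} good-flower mass by $(1-7P_f/\xi)/\Delta$ directly; the proof interchanges the integrals over $\tau$ and $x$ so that each flower is charged once at its own $Q$-level rather than at every scale above it, killing the logarithm. Second, for the $P_{\text{comp}}$ contribution it runs a self-referential bootstrap: bound $P_{\text{comp}}(f,x,\tau)\le 6e^4\min(\delta_{\text{dom}}(f,x),C_{10})$ via your dominance-propagation lemma, substitute back, and solve $\E[\min(\delta_{\text{dom}},C_{10})]\le O(P_f)+\tfrac12\E[\min(\delta_{\text{dom}},C_{10})]$. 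Both pieces are needed; the first is the missing idea in your sketch.

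One minor point: the paper defines $g$ through $i_{\max}(f,x,C_5)=\arg\max_i P_i(f,x,C_5)$ (set to $0$ on ties) rather than through a threshold $P_{i^\ast}\ge\hat C$. This makes $\sum_{i\ne i^\ast}P_i=P_{\text{comp}}$ hold identically and avoids a separate case for neighborhoods with no dominant direction; your threshold variant also works but requires handling that case.
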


\begin{proof}[Proof of Theorem \ref{thm:vertex-main} given Lemma \ref{lem:vertex-round-elim}]
Obtain a sequence of matching-certified algorithms $f_0,f_1,\hdots,f_r$ as follows. Let $f_r := f$ and for each $s\in \{r-1,r-2,\hdots,1,0\}$, let $f_s$ be the algorithm $g$ obtained by applying Lemma \ref{lem:vertex-round-elim} to $f_{s+1}$. By Lemma \ref{lem:vertex-round-elim}, $f_s$ is an $s$-round matching-certified algorithm with $P_{f_s} \le P_f C_1^{r-s}$. Since $f_0$ is matching-certified and 0-round, $f_0(w) = 0$ for any $w\in \mc F_0$, as any two 0-flowers are incident. Thus, $P_{f_0} = 1$. As a result, $1 \le P_f C_1^r$, which means that $P_f \ge C_1^{-r}$ as desired.
\end{proof}

The rest of this section is dedicated to proving Lemma~\ref{lem:vertex-round-elim}. To this end, we first obtain several important helper results that allow us to define the desired $(r{-}1)$-round algorithm~$g$. Some of these results concern $(r{-}1)$-flowers that can be extended to many accepting $r$-flowers (Proposition~\ref{prop:r-1-flower-high-weak} and Lemma~\ref{lem:r-1-flower-high-strong}, which are given in Section~\ref{sec:goodflowers}), while others concern $(r{-}1)$-neighborhoods that have a dominant direction that can be extended to many such $(r{-}1)$-flowers (Lemma~\ref{lem:r-1-neighborhood-low-common}, Proposition~\ref{prop:r-1-neighborhood-low-weak}, Lemma~\ref{lem:r-1-neighborhood-low-strong}, and Proposition~\ref{prop:complement-prob}, which are given in Section~\ref{sec:dominantneighborhoods}). The description of the $(r{-}1)$-round algorithm~$g$ is then given in Section~\ref{sec:prooflemmaRE}, along with the proof of Lemma~\ref{lem:vertex-round-elim}. The proofs of these various propositions and lemmas are given in their respective sections, except for Proposition~\ref{prop:r-1-flower-high-weak} and Lemma~\ref{lem:r-1-flower-high-strong}, which are proved in Section~\ref{subsec:r-1-flower-high}, Lemma~\ref{lem:r-1-neighborhood-low-common}, which is proved in Section~\ref{subsec:r-1-neighborhood-low}, and Proposition~\ref{prop:complement-prob}, which is proven in Section~\ref{sec:complement-prob}.

Let us start with the following simple proposition and definition (Proposition~\ref{prop:r-ngbr-exclusive} which is followed by the definition of $\text{dir}$). Observe that an $r$-round matching-certified algorithm $f$ has deterministic behavior when given $w\in \mc F_r$ by virtue of it being an $r$-round algorithm and the fact that $w$ already contains all the randomness tapes on all the edges it contains. In the following proposition we show that giving $f$ just an $r$-neighborhood rather than an $r$-flower also leads to deterministic behavior:\footnote{Recall that for $x\in \mc R_r$, $\sigma_i(x)$ is the $r$-neighborhood obtained by switching the 1 and $i$ directions. The constraint $\text{end}_A(w) = \sigma_i(y)$ is encoding that $w$ is an extension of $y$ in the $i$-th direction. Figure \ref{fig:dir} depicts the proof of Proposition \ref{prop:r-ngbr-exclusive}.}

\begin{prop}\label{prop:r-ngbr-exclusive}
For any $r$-round matching algorithm $f$ and any $y\in \mc R_r$, there is at most one $i\in [\Delta]$ for which there exists a $w\in \mc F_r$ for which both of the following properties are satisfied:
\begin{enumerate}
\item $f(w) = 1$
\item $\text{end}_A(w) = \sigma_i(y)$.
\end{enumerate}
\end{prop}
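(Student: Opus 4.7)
The plan is to prove the contrapositive by contradiction: suppose two distinct indices $i,j\in [\Delta]$ with $i\ne j$ both satisfy the stated property, producing witnesses $w_i,w_j\in \mc F_r$ with $f(w_i)=f(w_j)=1$, $\text{end}_A(w_i)=\sigma_i(y)$, and $\text{end}_A(w_j)=\sigma_j(y)$. My goal is to show that these two flowers are incident in the sense of Definition~\ref{def:inc-rF}, which together with the matching-certified property of $f$ (Definition~\ref{def:match-certified}) yields $f(w_i)f(w_j)=0$, the desired contradiction. Intuitively, this captures the picture of Figure~\ref{fig:dir}: $w_i$ and $w_j$ supply label extensions that would force $f$ to simultaneously match the center vertex of $y$ along two distinct neighbors.

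To verify incidence, first I set $v=v'=A$ and define the candidate incidence permutation $\sigma := \sigma_i\circ \sigma_j^{-1}\in \Pi_\Delta$. Then from $\text{end}_A(w_i)=\sigma_i(y)$ and $\text{end}_A(w_j)=\sigma_j(y)$, we directly get $\text{end}_A(w_i)=\sigma(\text{end}_A(w_j))$, fulfilling the first bullet of Definition~\ref{def:inc-rF}.

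The second bullet, $\sigma(1)\ne 1$, I plan to verify by a short case split using the definition $\sigma_k(1)=k$, $\sigma_k(k)=1$, and $\sigma_k$ is the identity elsewhere (so $\sigma_j^{-1}(1)=j$ always). In the case $i=1$, $j\ne 1$: $\sigma(1)=\sigma_1(j)=j\ne 1$. In the case $j=1$, $i\ne 1$: $\sigma(1)=\sigma_i(\sigma_1^{-1}(1))=\sigma_i(1)=i\ne 1$. In the remaining case $i,j\notin \{1\}$ with $i\ne j$: $\sigma(1)=\sigma_i(j)=j\ne 1$ since $j\ne i$ and $j\ne 1$. In every case $\sigma(1)\ne 1$, so $w_i$ and $w_j$ are incident by Definition~\ref{def:inc-rF}.

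Finally, since $f$ is matching-certified, Definition~\ref{def:match-certified} forces $f(w_i)f(w_j)=0$, contradicting $f(w_i)=f(w_j)=1$. Hence at most one such $i$ can exist. I do not expect any real obstacle here: the only nontrivial step is the case analysis showing $\sigma(1)\ne 1$, and this is handled by the three straightforward sub-cases above. The whole argument is a clean formalization of the intuition in Figure~\ref{fig:dir}, that two ``accepting directions'' at a single vertex would let $f$ pick two incident edges into the matching.
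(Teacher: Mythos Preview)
Your proof is correct and follows essentially the same approach as the paper's own proof: both argue by contradiction, construct the permutation $\sigma_i\circ\sigma_j^{-1}$ (which equals $\sigma_i\circ\sigma_j$ since each $\sigma_k$ is an involution), verify via a short case analysis that this permutation does not fix $1$, and then invoke the matching-certified property to derive the contradiction. The only cosmetic difference is that the paper collapses your three cases into two (splitting on whether $j=1$ or not).
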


\begin{proof}
Suppose, for the sake of contradiction, that there exists $y\in \mc R_r$ and $i\ne i'\in [\Delta]$ along with $w,w'\in \mc F_r$ for which $f(w) = f(w') = 1$, $\text{end}_A(w) = \sigma_i(y)$, and $\text{end}_A(w') = \sigma_{i'}(y)$. The latter two constraints imply that

$$\text{end}_A(w) = \sigma_i(\sigma_{i'}(\text{end}_A(w')))$$

We now show that $\sigma_i(\sigma_{i'}(1)) \ne 1$. If $i' = 1$, then $i \ne 1$, which implies that $\sigma_i(\sigma_{i'}(1)) = \sigma_i(i') = \sigma_i(1) = i\ne 1$ as desired. If $i'\ne 1$, then $\sigma_i(\sigma_{i'}(1)) = \sigma_i(i') = i' \ne 1$ as desired, where the last equality uses both $i'\ne i$ and $i'\ne 1$. Thus, in both cases, $\sigma_i(\sigma_{i'}(1)) \ne 1$. Therefore, $w$ and $w'$ are incident. But $f(w)f(w') = 1$, so $f$ cannot be matching-certified. This is a contradiction. Thus, the desired result holds.
\end{proof}

\paragraph{Defining $\text{dir}(f,y)$:} For $r\ge 1$ and $y\in \mc R_r$, let $\text{dir}(f,y) \in [[\Delta]]$\footnote{Recall that $[[\Delta]] = \{0,1,\hdots,\Delta\}$ and that any permutation $\sigma$ that we use in this paper satisfies $\sigma(0) = 0$.} be the value of $i$ obtained from Proposition \ref{prop:r-ngbr-exclusive}. If $i$ does not exist (no extension of $y$ is matched by $f$), then define $\text{dir}(f,y) := 0$.\footnote{$\text{dir}(f,y)$ could be defined to be an arbitrary direction in this case, but a distributed algorithm cannot necessarily symmetry-break to return an arbitrary index, but it can always return 0.} 

\subsection{Good $(r-1)$-Flowers}\label{sec:goodflowers}
So far, we have learned about the determinism of an $r$-round algorithm $f$ when restricted to both $r$-flowers (immediate) and $r$-neighborhoods (Proposition \ref{prop:r-ngbr-exclusive}). $f$ does not behave deterministically on $(r-1)$-flowers, but it turns out that $f$ does behave \emph{nearly} deterministically on $(r-1)$-flowers if it has low enough vertex survival probability. In this section we prove helper results towards illustrating this phenomenon. We begin with the following definition of $\delta$-good $(r-1)$-flowers. Intuitively speaking, these are $(r-1)$-flowers $x$ that can be extended to many $r$-neighborhoods $y$ (from either side) that have their $\text{dir}(f,y)$ pointing towards $x$.

\begin{defn}[\textbf{Good $(r-1)$-Flowers}]\label{def:goodFlowers}
    Suppose that $f$ is an $r$-round matching-certified algorithm and let $\delta\in (0,1]$. We define $\mc X_{r-1}(f,\delta) \subseteq \mc F_{r-1}$ to be the set of $(r-1)$-flowers $x$ for which \emph{both}

\begin{enumerate}
\item $\displaystyle\Pr_{y\sim \mc R_r} [\text{dir}(f,y) = 1 \mid \text{res}_1(y) = x] \ge (1 - \delta)$
\item $\displaystyle\Pr_{y\sim \mc R_r} [\text{dir}(f,y) = 1 \mid \text{res}_1(y) = \ol{x}] \ge (1 - \delta)$
\end{enumerate}
\end{defn}

The 1s that are present in this definition may look confusing -- why should 1 be a privileged direction? It is not important to this definition. The important thing is that $\text{dir}(f,y)$ is pointing in the \emph{same} direction for which $y$ is an extension of $x$ and stating 1 here is solely for convenience:

\begin{fact}\label{fact:same-dir}
For any $r\ge 1$, $i\in [\Delta]$, and $x\in \mc F_{r-1}$,

$$\Pr_{y\sim \mc R_r}[\text{dir}(f,y) = i \mid \text{res}_i(y) = x] = \Pr_{y\sim \mc R_r}[\text{dir}(f,y) = 1 \mid \text{res}_1(y) = x]$$
\end{fact}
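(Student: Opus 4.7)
The plan is to reduce the identity to Proposition~\ref{prop:cond-vert-permute} by showing that $\text{dir}$ is equivariant with respect to the involution $\sigma_i$. Concretely, I will establish the pointwise equivalence
$$\text{dir}(f, y) = i \iff \text{dir}(f, \sigma_i(y)) = 1,$$
and then apply Proposition~\ref{prop:cond-vert-permute} to the random variable $X(z) := \mathbb{1}[\text{dir}(f,z) = 1]$.

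First I would verify the pointwise equivalence. Note that $\sigma_1$ is the identity permutation and $\sigma_i$ is a transposition (hence self-inverse). By the defining property of $\text{dir}$ (Proposition~\ref{prop:r-ngbr-exclusive}), $\text{dir}(f, y) = i$ iff there exists $w \in \mc F_r$ with $f(w) = 1$ and $\text{end}_A(w) = \sigma_i(y)$. Setting $y' := \sigma_i(y)$, so that $\sigma_i(y') = y$ by self-inversion and $\sigma_1(y') = y'$ trivially, this condition is equivalent to: there exists $w \in \mc F_r$ with $f(w) = 1$ and $\text{end}_A(w) = y' = \sigma_1(y')$, which is precisely $\text{dir}(f, \sigma_i(y)) = 1$.

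With this equivariance in hand, the left-hand side of the fact rewrites as
$$\Pr_{y \sim \mc R_r}[\text{dir}(f, y) = i \mid \text{res}_i(y) = x] = \E_{y \sim \mc R_r}[X(\sigma_i(y)) \mid \text{res}_i(y) = x],$$
and Proposition~\ref{prop:cond-vert-permute} applied with this $X$ immediately yields
$$\E_{y \sim \mc R_r}[X(\sigma_i(y)) \mid \text{res}_i(y) = x] = \E_{z \sim \mc R_r}[X(z) \mid \text{res}_1(z) = x] = \Pr_{z \sim \mc R_r}[\text{dir}(f, z) = 1 \mid \text{res}_1(z) = x],$$
completing the proof. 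The only potentially delicate step is the equivariance verification, and the only subtlety there is making sure that the convention $\sigma_1 = \text{id}$ and the self-inversion $\sigma_i \circ \sigma_i = \text{id}$ are invoked correctly; everything else is an immediate application of the preliminaries.
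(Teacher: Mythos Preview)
Your proof is correct and follows essentially the same approach as the paper: both establish the equivariance of $\text{dir}$ under the involution $\sigma_i$ (the paper phrases it as $\text{dir}(f,\sigma_i(y)) = \sigma_i(\text{dir}(f,y))$, you verify the special case $\text{dir}(f,y)=i \iff \text{dir}(f,\sigma_i(y))=1$ directly from the definition), and then invoke Proposition~\ref{prop:cond-vert-permute} with the indicator $X(z)=\mathbb{1}[\text{dir}(f,z)=1]$.
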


\begin{proof}
Follows immediately from Proposition \ref{prop:cond-vert-permute} (symmetry of $\mc R_r$ under coordinate permutation), the fact that $\text{res}_i(y)$ is defined to be $\text{res}_1(\sigma_i(y))$, and the fact that $\text{dir}(f,\sigma_i(y)) = \sigma_i(\text{dir}(f,y))$ by definition of $\sigma_i(y)$:

\begin{align*}
\Pr_{y\sim \mc R_r}[\text{dir}(f,y) = 1 \mid \text{res}_1(y) = x] &= \Pr_{y\sim \mc R_r}[\text{dir}(f,\sigma_i(y)) = 1 \mid \text{res}_1(\sigma_i(y)) = x]\\
&= \Pr_{y\sim \mc R_r}[\text{dir}(f,y) = i \mid \text{res}_i(y) = x]
\end{align*}
\end{proof}

Consider $w\in \mc F_r$. If $\text{dir}(f,\text{end}_A(w))$ and $\text{dir}(f,\text{end}_B(w))$ both point towards $w$, we may without loss of generality assume\footnote{We do not actually use this assumption in the proof.} that $f(w) = 1$, as the definition of $\text{dir}$ ensures that no other $r$-flower $w'$ incident with $w$ can have $f(w') = 1$. Thus, a $\delta$-good $(r-1)$-flower $x$ also has the property that a uniformly random extension of $x$ to an $r$-flower $w$ has $f(w) = 1$ with probability at least $(1 - \delta)^2$. In other words, the behavior of $f$ on good $(r-1)$-flowers is nearly deterministic; in particular, the algorithm nearly always accepts when solely given a good $(r-1)$-flower.

In our analysis, it is essential to know the probability that a random $(r-1)$-flower is $\delta$-good. Intuitively, for any matching-certified algorithm $f$, the fraction of $(r-1)$-flowers that are $\delta$-good is at most $1/((1-\delta)^2\Delta) \approx 1/\Delta$, no matter its vertex survival probability. This is because for any $(r+1)$-neighborhood, only 1 out of every $\Delta$ of its incident $r$-flowers can be in the matching, and the previous paragraph asserted that for every $\delta$-good $(r-1)$-flower, a $(1-\delta)^2$-fraction of extensions must accept. So naturally, it is helpful to know how close we can get to matching that bound. If $f$ has low vertex survival probability, then we can get quite close:

\begin{restatable}{prop}{proproneflowerhighweak}\label{prop:r-1-flower-high-weak}
Let $f$ be an $r$-round matching-certified algorithm and let $\delta\in (0,1]$. It holds that:

$$\Pr_{x\sim \mc F_{r-1}}[x\in \mc X_{r-1}(f,\delta)] \ge \frac{1 - 2P_f/\delta}{\Delta}$$
\end{restatable}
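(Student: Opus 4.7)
The plan is to analyze $P_A(x) := \Pr_{y \sim \mc R_r}[\text{dir}(f,y) = 1 \mid \text{res}_1(y) = x]$ and $P_B(x) := \Pr_{y \sim \mc R_r}[\text{dir}(f,y) = 1 \mid \text{res}_1(y) = \ol{x}]$ through first- and second-moment computations, then close the argument by truncation. A useful preliminary reduction is to replace $f$ by $f'(w) := \mathbf{1}[\text{dir}(f, \text{end}_A(w)) = 1 \wedge \text{dir}(f, \text{end}_B(w)) = 1]$: this $f'$ is matching-certified by Proposition~\ref{prop:r-ngbr-exclusive}, it satisfies $P_{f'} \le P_f$ and $\mc X_{r-1}(f', \delta) \subseteq \mc X_{r-1}(f, \delta)$ (since $\text{dir}(f', y) = 1$ implies $\text{dir}(f, y) = 1$), and it enjoys the clean identity $f'(y) = 1 \iff \text{dir}(f, \text{end}_A(y)) = \text{dir}(f, \text{end}_B(y)) = 1$; hence it suffices to prove the bound for $f'$.

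By Proposition~\ref{prop:f-to-r-sample} together with the permutation symmetry of $\mc R_r$ (Proposition~\ref{prop:vert-permute}), $\E_{x \sim \mc F_{r-1}}[P_A(x)] = \Pr_{y \sim \mc R_r}[\text{dir}(f, y) = 1] \le 1/\Delta$, since the $\Delta$ events $\{\text{dir}(f, y) = i\}$ for $i \in [\Delta]$ are disjoint and equiprobable. By Proposition~\ref{prop:f-to-f-sample} applied with $X_A, X_B$ equal to $\mathbf{1}[\text{dir}(f, \cdot) = 1]$, $\E_x[P_A(x) P_B(x)] = \Pr_{y \sim \mc F_r}[\text{both dirs equal }1] = \Pr_y[f'(y) = 1]$ by the above identity. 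The matching-certified property forces $\sum_{i \in [\Delta]} f'(\text{res}_i(z)) \in \{0,1\}$ for every $z \in \mc R_{r+1}$, so $\sum_i \Pr_z[f'(\text{res}_i(z)) = 1] = 1 - P_{f'}$; combined with symmetry over $i$ and Proposition~\ref{prop:f-to-r-sample}, this yields $\Pr_y[f'(y) = 1] = (1-P_{f'})/\Delta$. Hence $\E[P_A(1-P_B)] = \E[P_A] - \E[P_A P_B] \le P_{f'}/\Delta$, and by the mirror-image argument $\E[(1-P_A) P_B] \le P_{f'}/\Delta$.

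To finish, I apply the pointwise inequality $\mathbf{1}[P_B < 1-\delta] \le (1-P_B)/\delta$, yielding $\E[P_A \mathbf{1}[P_B < 1-\delta]] \le P_{f'}/(\Delta \delta)$ and therefore $\E[P_A P_B\,\mathbf{1}[P_B < 1-\delta]] \le (1-\delta)\,\E[P_A \mathbf{1}[P_B < 1-\delta]] \le (1-\delta) P_{f'} / (\Delta \delta)$, with a symmetric bound handling $\{P_A < 1-\delta\}$. Since $\{x \notin \mc X_{r-1}(f', \delta)\} \subseteq \{P_A < 1-\delta\} \cup \{P_B < 1-\delta\}$, a union bound gives $\E[P_A P_B \mathbf{1}[x \notin \mc X_{r-1}(f', \delta)]] \le 2(1-\delta)P_{f'}/(\Delta\delta)$. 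Combined with $\E[P_A P_B \mathbf{1}[\mc X_{r-1}(f', \delta)]] \le \Pr[x \in \mc X_{r-1}(f', \delta)]$ (from $P_A P_B \le 1$), rearrangement gives $\Pr[x \in \mc X_{r-1}(f', \delta)] \ge (1-P_{f'})/\Delta - 2(1-\delta)P_{f'}/(\Delta\delta) = (1 - P_{f'}(2-\delta)/\delta)/\Delta \ge (1 - 2P_f/\delta)/\Delta$, where the last inequality uses $(2-\delta)/\delta \le 2/\delta$ and $P_{f'} \le P_f$; the containment $\mc X_{r-1}(f', \delta) \subseteq \mc X_{r-1}(f, \delta)$ transfers this to $f$.

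The main obstacle is the subtle asymmetry in relating $\E[P_A P_B]$ to $P_f$: without care, the step $f(y) = 1 \Rightarrow \text{dir}(f, \text{end}_B(y)) = 1$ would require the witness $\ol{y}$ with $f(\ol{y}) = 1$, which is not automatic from the matching-certified definition alone. The reduction to $f'$ captures exactly the "both directions point inward" event and converts this subtle step into the tautology $f'(y) = 1 \iff \text{both dirs} = 1$, after which everything reduces to moment identities derived from the sampling propositions of Section~\ref{sec:PrelimDistributions}.
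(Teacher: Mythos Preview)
Your argument is correct and reaches the same bound $(1-2P_f/\delta)/\Delta$ as the paper, but via a somewhat cleaner route. The paper decomposes $\E_x[Q(f,x)]$ into four pieces $s_{\ge\ge}, s_{\ge<}, s_{<\ge}, s_{<<}$ according to whether each of $Q(f,x)$ and $Q(f,\ol{x})$ crosses the threshold $1-\delta$, and then combines three linear constraints (including the symmetry-based observation $s_{\ge<}\ge s_{<\ge}$) to isolate $s_{\ge\ge}$. You instead bound the bad region directly: from $\E[P_A(1-P_B)]\le P_f/\Delta$ and the pointwise Markov bound $\mathbf{1}[P_B<1-\delta]\le (1-P_B)/\delta$ you get $\E[P_AP_B\,\mathbf{1}[P_B<1-\delta]]\le (1-\delta)P_f/(\Delta\delta)$, union-bound over the two sides, and subtract from $\E[P_AP_B]$. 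This avoids the four-way bookkeeping and the third-constraint trick, while resting on the same two moment identities (the paper's Propositions~\ref{prop:one-side} and~\ref{prop:two-side}).

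One caveat: your preliminary reduction to $f'$ does not actually sidestep the asymmetry you flag. The claim $P_{f'}\le P_f$ amounts to ``$f(w)=1$ implies $f'(w)=1$'', which in turn needs $\text{dir}(f,\text{end}_B(w))=1$ whenever $f(w)=1$---precisely the step you were trying to avoid (without a symmetry assumption one only gets $\text{dir}(f,\text{end}_B(w))\in\{0,1\}$, since the natural witness $\ol{w}$ need not satisfy $f(\ol{w})=1$). The paper makes the same implicit move in its proof of Proposition~\ref{prop:two-side}. It is harmless in context---any LOCAL edge-output algorithm can be symmetrized so that $f(w)=f(\ol{w})$, after which the implication holds---but the $f'$ device by itself does not close the gap. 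A minor slip in the same vein: in your union-bound line you write $\mc X_{r-1}(f',\delta)$ where you mean $\mc X_{r-1}(f,\delta)$, since your $P_A,P_B$ are built from $\text{dir}(f,\cdot)$; with that correction the containment $\mc X_{r-1}(f',\delta)\subseteq\mc X_{r-1}(f,\delta)$ is never actually used.
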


Intuitively, the probability mass of good $(r-1)$-flowers is so close to the upper bound that we can ignore the behavior of $f$ on all other $(r-1)$-flowers when constructing the desired $(r-1)$-round algorithm $g$ for Lemma \ref{lem:vertex-round-elim}. Nonetheless, this lower bound is not quite strong enough for our purposes. In the following stronger lemma, instead of working with a given specific value of $\delta$, we work with all values of $\delta$ together via an expectation:

\begin{restatable}{lem}{lemroneflowerhighstrong}\label{lem:r-1-flower-high-strong}
For any $r$-round matching-certified algorithm $f$ and $\xi\in (0,1)$,

$$\E_{\tau\sim [0,1]}\left[\Pr_{x\sim \mc F_{r-1}}\left[x\in \mc X_{r-1}(f,\tau)\right] \middle| \tau\le\xi\right] \ge \frac{1 - 7P_f/\xi}{\Delta}$$
\end{restatable}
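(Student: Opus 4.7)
The plan is to trade the conditional expectation for a single expectation via Fubini, then decompose using the identity $\min(a,b)=\tfrac12(a+b-|a-b|)$. Writing
\[
P_A(x) := \Pr_{y\sim\mc R_r}[\text{dir}(f,y)=1 \mid \text{res}_1(y)=x],\qquad P_B(x) := \Pr_{y\sim\mc R_r}[\text{dir}(f,y)=1 \mid \text{res}_1(y)=\overline{x}],
\]
and $U(x):=1-P_A(x)$, $V(x):=1-P_B(x)$, the definition gives $x\in\mc X_{r-1}(f,\tau)\iff \max(U,V)\le\tau$. Conditioning on $\tau\le\xi$ makes $\tau$ uniform on $[0,\xi]$, so swapping the order of expectation yields
\[
\xi\cdot \E_{\tau\sim[0,1]}\!\left[\Pr_x[x\in\mc X_{r-1}(f,\tau)] \,\middle|\,\tau\le\xi\right]
= \int_0^\xi \!\Pr_x[\max(U,V)\le\tau]\,d\tau
= \E_x\!\big[(\xi-\max(U,V))_+\big].
\]
Hence it suffices to show $\E_x[(\xi-\max(U,V))_+]\ge(\xi-7P_f)/\Delta$. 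Observing that $(\xi-\max(U,V))_+ = \min((\xi-U)_+,(\xi-V)_+)$, the $\min$-identity reduces the task to lower bounds on $\E[(\xi-U)_+]$ and $\E[(\xi-V)_+]$ together with an upper bound on $\E[|(\xi-U)_+-(\xi-V)_+|]$.

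For the one-sided lower bound I would prove $\E[(\xi-V)_+]\ge(\xi-2P_f)/\Delta$; the analog for $(\xi-U)_+$ follows by Proposition \ref{prop:edge-flip} (weighting by $P_B$ instead of $P_A$). Since $P_A\le 1$,
\[
\E[(\xi-V)_+] \;\ge\; \E[P_A(\xi-V)_+] \;=\; \xi\,\E[P_A\mathbf{1}\{V\le\xi\}] - \E[P_A V\mathbf{1}\{V\le\xi\}].
\]
By Propositions \ref{prop:one-side} and \ref{prop:two-side}, $\E[P_A]=1/\Delta$ and $\E[P_A V]=\E[P_A(1-P_B)]=P_f/\Delta$. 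A single-line Markov argument using $\mathbf{1}\{V>\xi\}\le V/\xi$ gives $\E[P_A\mathbf{1}\{V>\xi\}]\le P_f/(\xi\Delta)$, so $\xi\,\E[P_A\mathbf{1}\{V\le\xi\}]\ge \xi/\Delta-P_f/\Delta$; the second term is at most $\E[P_A V]=P_f/\Delta$, and the two estimates combine to $(\xi-2P_f)/\Delta$. For the mismatch term, the $1$-Lipschitz property of $t\mapsto(\xi-t)_+$ and $U-V=P_B-P_A$ yield $\E[|(\xi-U)_+-(\xi-V)_+|]\le\E[|P_A-P_B|]$; writing $|P_A-P_B|=P_A+P_B-2\min(P_A,P_B)$, using $\min(P_A,P_B)\ge P_A P_B$ (valid since $P_A,P_B\in[0,1]$), and invoking Proposition \ref{prop:two-side} gives $\E[|P_A-P_B|]\le 2/\Delta-2(1-P_f)/\Delta=2P_f/\Delta$. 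Substituting into the $\min$-identity,
\[
\E_x[(\xi-\max(U,V))_+] \;\ge\; \tfrac{1}{2}\bigl(2(\xi-2P_f)/\Delta - 2P_f/\Delta\bigr) \;=\; (\xi-3P_f)/\Delta,
\]
and dividing by $\xi$ yields $(1-3P_f/\xi)/\Delta$, which is strictly stronger than the required $(1-7P_f/\xi)/\Delta$.

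The principal obstacle to watch for is that one cannot simply integrate Proposition \ref{prop:r-1-flower-high-weak} over $\tau\in[0,\xi]$: the pointwise bound $(1-2P_f/\tau)/\Delta$ has a non-integrable $1/\tau$ singularity at $\tau=0$. The Fubini reformulation circumvents this because the target $\E[(\xi-\max(U,V))_+]$ is an inherently smoothed quantity — what matters is the total amount by which $\max(U,V)$ lies below $\xi$, not the per-$\tau$ probabilities near zero. The $\min$-identity then reduces the joint bound to a one-sided inequality plus a small $L^1$ mismatch $\E[|P_A-P_B|]$, each controlled to order $P_f/\Delta$ directly from the moment information in Propositions \ref{prop:one-side} and \ref{prop:two-side}; crucially, no step in the argument ever divides by $\tau$, so the singularity that spoils the naive approach simply never appears.
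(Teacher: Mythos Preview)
Your argument is correct and gives a genuinely different proof from the paper's, with a slightly better constant. One small slip: Proposition~\ref{prop:one-side} only gives $\E[P_A]\le 1/\Delta$, not equality, and your one-sided bound needs a \emph{lower} bound on $\E[P_A]$. This is harmless, since $\E[P_A]\ge\E[P_AP_B]\ge(1-P_f)/\Delta$ by Proposition~\ref{prop:two-side}; tracing this through gives $\E[(\xi-V)_+]\ge(\xi-3P_f)/\Delta$ rather than $(\xi-2P_f)/\Delta$, and the final bound becomes $(1-4P_f/\xi)/\Delta$, still well inside the stated $(1-7P_f/\xi)/\Delta$.

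As for the comparison: both proofs begin with the same Fubini swap to reach $\E_x[(\xi-\max(U,V))_+]$, and both use the elementary inequality $\min(P_A,P_B)\ge P_AP_B$ somewhere. The paper then lower-bounds $\min(P_A,P_B)$ by the product $P_AP_B$ directly inside the positive part, restricts attention to the set $\mc X_{r-1}(f,\xi/2)$ (where $P_AP_B>1-\xi$ so the positive part drops), invokes Proposition~\ref{prop:r-1-flower-high-weak} to control the mass of that set, and finishes with the moment bounds. Your route instead keeps the $\min$ of the two clipped one-sided quantities, splits via $\min(a,b)=\tfrac12(a+b-|a-b|)$, and bounds each piece directly from the moment Propositions~\ref{prop:one-side} and~\ref{prop:two-side}; the product inequality enters only in bounding the $L^1$ mismatch $\E|P_A-P_B|$. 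Your argument is shorter and more self-contained (it never calls Proposition~\ref{prop:r-1-flower-high-weak}), while the paper's approach makes the role of the good-flower set more explicit.
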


The proofs of Proposition \ref{prop:r-1-flower-high-weak} and Lemma \ref{lem:r-1-flower-high-strong} are both given in Section \ref{subsec:r-1-flower-high} (which is solely dedicated to proving these two results). Following these two results, membership in $\mc X_{r-1}(f,\delta)$ (for some choice of a small $\delta$) makes for a promising candidate to define the $(r-1)$-round algorithm $g$ that is needed to prove Lemma~\ref{lem:vertex-round-elim} (i.e., for the round-elimination step). Unfortunately, this algorithm is not necessarily a matching-certified algorithm, as nothing prevents $\mc X_{r-1}(f,\delta)$ from containing two incident $(r-1)$-flowers. Towards making $g$ matching-certified, in the following section, we analyze the performance of an $r$-round algorithm $f$ when given just an $(r-1)$-neighborhood. 

\subsection{Dominant Directions of $(r-1)$-Neighborhoods}\label{sec:dominantneighborhoods}

In this section, we prove several key properties about the performance of an $r$-round algorithm $f$ on $(r-1)$-neighborhoods. We seek to understand how often extensions of $(r-1)$-neighborhoods are good $(r-1)$ flowers. We now define some quantities involving these extensions:

\begin{defn}[$(r-1)$-Neighborhood Extension Probabilities]
For any $r\ge 1$, $r$-round matching-certified algorithm $f$, $x\in \mc R_{r-1}$, $i\in [\Delta]$, and $\delta\in [0,1]$, define

$$P_i(f,x,\delta) := \Pr_{y\sim \mc F_{r-1}}[y\in \mc X_{r-1}(f,\delta) \mid \text{end}_A(y) = \sigma_i(x)]$$

Also define

$$P_{\max}(f,x,\delta) := \max_{i\in [\Delta]} P_i(f,x,\delta)$$

and

$$i_{\max}(f,x,\delta) := \arg\max_{i\in [\Delta]} P_i(f,x,\delta)$$

with $i_{\max}(f,x,\delta) := 0$ if there is a tie.\footnote{In particular, $i_{\max}(f,x,\delta) \in [[\Delta]] = \{0,1,\hdots,\Delta\}$. As with the definition of $\text{dir}$, $i_{\max}$ could also be defined arbitrarily in the tiebreaking case, but returning 0 makes it so that the algorithm does not have to break symmetry.} Moreover, define 

$$P(f,x,\delta) := \sum_{i=1}^{\Delta} P_i(f,x,\delta)$$

and

$$P_{\text{comp}}(f,x,\delta) := P(f,x,\delta) - P_{\max}(f,x,\delta)$$
\end{defn}

Note that for any $i\in [\Delta]$, $P_i(f,x,\delta)$ is a nondecreasing function of $\delta$. As a result, $P_{\max}(f,x,\delta)$ is also a nondecreasing function of $\delta$. It turns out that $P_{\text{comp}}(f,x,\delta)$ is also a nondecreasing function of $\delta$:

\begin{prop}\label{prop:comp-monotone}
For any $r$-round matching-certified algorithm $f$, $x\in \mc R_{r-1}$, and $\delta\le \tau\in [0,1]$,

$$P_{\text{comp}}(f,x,\delta) \le P_{\text{comp}}(f,x,\tau)$$
\end{prop}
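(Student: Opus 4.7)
The plan is to first establish that $\mc X_{r-1}(f,\delta)$ is monotone nondecreasing in $\delta$: the two inequalities defining membership in Definition~\ref{def:goodFlowers} both have the form $\Pr[\cdot] \ge 1 - \delta$, and this constraint only becomes weaker as $\delta$ grows, so $\mc X_{r-1}(f,\delta) \subseteq \mc X_{r-1}(f,\tau)$ whenever $\delta \le \tau$. Conditioning on $\text{end}_A(y) = \sigma_i(x)$ and taking probabilities, this gives $P_i(f,x,\delta) \le P_i(f,x,\tau)$ for every $i \in [\Delta]$, and summing yields $P(f,x,\delta) \le P(f,x,\tau)$. The target inequality $P_{\text{comp}}(f,x,\delta) \le P_{\text{comp}}(f,x,\tau)$ rearranges to
$$P_{\max}(f,x,\tau) - P_{\max}(f,x,\delta) \le P(f,x,\tau) - P(f,x,\delta),$$
so it suffices to bound the growth of $P_{\max}$ by the (already-established-nonnegative) growth of the sum $P$.

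For the key step, I would pick $j, k \in [\Delta]$ realizing the respective maxima, i.e.\ $P_j(f,x,\delta) = P_{\max}(f,x,\delta)$ and $P_k(f,x,\tau) = P_{\max}(f,x,\tau)$. Then using $P_j(f,x,\delta) \ge P_k(f,x,\delta)$ (from $j$ being a maximizer at $\delta$), one has
$$P_{\max}(f,x,\tau) - P_{\max}(f,x,\delta) \;=\; P_k(f,x,\tau) - P_j(f,x,\delta) \;\le\; P_k(f,x,\tau) - P_k(f,x,\delta).$$
The right-hand side is a single summand of the telescope $\sum_{i=1}^{\Delta}\bigl(P_i(f,x,\tau)-P_i(f,x,\delta)\bigr) = P(f,x,\tau)-P(f,x,\delta)$, and every summand is nonnegative by the per-coordinate monotonicity established in the first paragraph. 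Hence $P_k(f,x,\tau) - P_k(f,x,\delta) \le P(f,x,\tau) - P(f,x,\delta)$, which chains with the above to give the desired bound.

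There is no real obstacle in this argument; it is a two-line manipulation once monotonicity of $\mc X_{r-1}(f,\cdot)$ is noticed. The only point meriting mild care is the tiebreaking convention $i_{\max}(f,x,\delta) := 0$ in case of ties: one should not literally set $j := i_{\max}(f,x,\delta)$, but since $P_{\max}$ is defined as the numerical maximum over $[\Delta]$, it is legitimate to choose any genuine maximizer in $[\Delta]$ for the chain above, and the argument goes through unchanged.
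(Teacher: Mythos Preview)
Your proof is correct and is essentially the same as the paper's: both arguments hinge on coordinatewise monotonicity $P_i(f,x,\delta)\le P_i(f,x,\tau)$ together with the observation that the maximizer at $\delta$ dominates the maximizer at $\tau$ when evaluated at $\delta$. The only difference is cosmetic---you rearrange to bound the growth of $P_{\max}$ by the growth of $P$, while the paper directly manipulates the sum $\sum_{k\ne i}P_k$---and your remark about not literally using $i_{\max}$ in case of ties matches the paper's footnote.
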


\begin{proof}
Let $i$ and $j$ be indices for which $P_i(f,x,\delta) = P_{\max}(f,x,\delta)$ and $P_j(f,x,\tau) = P_{\max}(f,x,\tau)$ respectively\footnote{We did not define these based on $i_{\max}$ because $i_{\max} = 0$ when the maximizer is not unique.}. If $i = j$, the result immediately follows from the fact that $P_k(f,x,\delta) \le P_k(f,x,\tau)$ for all $k\in [\Delta]$, so assume that $i\ne j$. Then

\begin{align*}
P_{\text{comp}}(f,x,\delta) &= P_j(f,x,\delta) + \sum_{k\in [\Delta], k\ne i,j} P_k(f,x,\delta)\\
&\le P_i(f,x,\delta) + \sum_{k\in [\Delta], k\ne i,j} P_k(f,x,\delta)\\
&\le P_i(f,x,\tau) + \sum_{k\in [\Delta], k\ne i,j} P_k(f,x,\tau)\\
&= P_{\text{comp}}(f,x,\tau)
\end{align*}

as desired.
\end{proof}

Intuitively speaking, we show that any $(r-1)$-neighborhood has a dominant direction. This direction has the property that all other directions combined have a limited number of good $(r-1)$-flowers that they can be extended to. The following result formalizes this intuition, where $i^*$ is the dominant direction:

\begin{restatable}{lem}{lemroneneighborhoodlowcommon}\label{lem:r-1-neighborhood-low-common}
For any $r\ge 1$, $r$-round matching-certified algorithm $f$, $x\in \mc R_{r-1}$, and $\delta_1,\delta_2,\hdots,\delta_{\Delta}\in [0,1/2]$, there exists an $i^*\in [\Delta]$ for which

$$\sum_{i\in [\Delta], i\ne i^*} P_i(f,x,\delta_i) \le F_1(\delta_{i^*})$$

where $F_1(\delta) := 6e^4\delta$.
\end{restatable}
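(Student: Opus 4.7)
The proof is by contradiction via a sampling argument on $r$-neighborhoods extending $x$. Assume no $i^* \in [\Delta]$ satisfies the claimed inequality, i.e.\ $\sum_{i \ne i^*} P_i(f, x, \delta_i) > F_1(\delta_{i^*}) = 6e^4 \delta_{i^*}$ for every $i^*$. I sample $y \sim \mc R_r$ conditioned on $\text{proj}(y) = x$. By Proposition~\ref{prop:r-to-r-sample}, this is equivalent to drawing $z_i \sim \mc F_{r-1}$ independently for each $i \in [\Delta]$ (subject to $\text{end}_A(z_i) = \sigma_i(x)$) and gluing them via $\text{glue}$. The independence of the $z_i$'s makes the events $E_i := \{z_i \in \mc X_{r-1}(f, \delta_i)\}$ mutually independent, with $\Pr[E_i] = P_i(f, x, \delta_i)$. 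Set $T := \text{dir}(f, y) \in [[\Delta]]$; by Proposition~\ref{prop:r-ngbr-exclusive}, the events $\{T = i\}_{i \in [\Delta]}$ are pairwise disjoint.

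The core probabilistic link between $E_i$ and $T$ comes from the $A$-side of $\delta_i$-goodness. For any $z_i^* \in \mc X_{r-1}(f, \delta_i)$ compatible with $x$, Fact~\ref{fact:same-dir} gives $\Pr[T = i \mid \text{res}_i(y) = z_i^*] \ge 1 - \delta_i$, and integrating over $z_i^* \in E_i$ yields $\Pr[T = i \mid E_i] \ge 1 - \delta_i$. Using mutual independence of the $E_j$'s, a short conditional-probability calculation (split by $\{\bigcap_{j \in S \setminus \{i\}} E_j\}$ vs.\ its complement, exactly as in the derivation of $(\star)$) upgrades this to
\[
\Pr\Bigl[T = i \,\Big|\, \bigcap_{j \in S} E_j\Bigr] \;\ge\; 1 - \frac{\delta_i}{\prod_{j \in S \setminus \{i\}} P_j}
\]
for any $S \ni i$. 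Summing over $i \in S$ and invoking the disjointness bound $\sum_{i \in S} \Pr[T = i \mid \bigcap_j E_j] \le 1$ yields the family of constraints
\[
\sum_{i \in S} \delta_i P_i \;\ge\; (|S| - 1) \prod_{j \in S} P_j \qquad \text{for every nonempty } S \subseteq [\Delta].
\]

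The hard part will be converting this family of constraints into a dimension-free bound with the advertised constant $6e^4$. The pairwise case $|S| = 2$ alone gives only $P_i P_j \le \delta_i P_i + \delta_j P_j$, which produces only a $\Delta$-dependent bound and is by itself insufficient. I expect the full argument to aggregate these inequalities across many choices of $|S|$ — plausibly via a Chernoff-style exponential-moment bound on the random variable $N := |\{i : z_i \in E_i\}|$, which has mean $P := \sum_i P_i$ and concentrates sharply thanks to independence of the $E_i$'s. The $e^4$ factor in $F_1$ strongly hints at such a saddle-point estimate with a carefully tuned constant parameter. Under the contradiction hypothesis one has $\delta_i < (P - P_i)/(6e^4)$ uniformly in $i$, so combining this with the constraint family — either directly or through the associated moment generating function — should push $\sum_{i \in [\Delta]} \Pr[T = i]$ strictly above $1$, contradicting the disjointness inherited from Proposition~\ref{prop:r-ngbr-exclusive}. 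Carrying out this quantitative bookkeeping so that the resulting constant lands precisely at $6e^4$ is the technical crux of the proof.
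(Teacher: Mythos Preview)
Your setup is on target: sampling the $r$-neighborhood via independent direction-wise $(r-1)$-flower extensions, defining $E_i$ and $T = \text{dir}(f,y)$, and extracting $\Pr[T \ne i \mid E_i] \le \delta_i$ from the $A$-side of goodness are all the right moves. But the argument is genuinely incomplete: after deriving the constraint family $\sum_{i\in S}\delta_iP_i \ge (|S|-1)\prod_{j\in S}P_j$ you hand-wave the finish, openly conceding that the quantitative step is ``the technical crux.'' That step is not bookkeeping; it needs an ingredient you never invoke, namely a \emph{crude a priori bound} on $\sum_i P_i$. From $\Pr[T=i] \ge (1-\delta_i)P_i \ge P_i/2$ (here is where $\delta_i \le 1/2$ enters) and disjointness $\sum_i \Pr[T=i] \le 1$, one gets $\sum_i P_i \le 2$. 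Without this your pairwise constraints are dimension-dependent and the proposed Chernoff bound on $N = |\{i:E_i\}|$ has no anchor for its mean; the suggestion of ``pushing $\sum_i \Pr[T=i]$ above $1$'' is the right \emph{shape} of contradiction but you give no mechanism to produce it.

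With the crude bound in hand, the paper's route is more direct than conditioning on $\bigcap_{j\in S}E_j$. Compare two \emph{unconditional} probabilities: (a) $\Pr[N \ge 2]$, and (b) $\Pr[\exists i: E_i \text{ and } T \ne i] \le \sum_i \delta_i P_i$ by a union bound. For (a), since $\sum_k P_k \le 2$, one bounds $\Pr[I_z = \{i,j\}\cup\Gamma]$ (with $\Gamma$ the at-most-three indices having $P_k > 1/2$) from below by $\theta_i\theta_j\prod_{k\ne i,j}e^{-2P_k} \ge e^{-4}P_iP_j$, giving $\Pr[N\ge 2] \ge \tfrac{1}{6}e^{-4}\sum_{i\ne j}P_iP_j = \tfrac{1}{6}e^{-4}\sum_i P_i\sum_{j\ne i}P_j$. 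Under the contradiction hypothesis $\sum_{j\ne i}P_j > 6e^4\delta_i$ this strictly exceeds $\sum_i \delta_iP_i$, so $\Pr[N\ge 2] > \Pr[\exists i: E_i,\,T\ne i]$. Hence some sample has at least two good directions \emph{all} of which equal $T$, contradicting that $T$ is a single value. The constant $6e^4$ falls out exactly: $e^{-4}$ from $\prod_k e^{-2P_k}$ together with $\sum_kP_k\le 2$, and the factor $6$ from pair double-counting and the $|\Gamma|\le 3$ overcounting.
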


We prove Lemma~\ref{lem:r-1-neighborhood-low-common} in Section~\ref{subsec:r-1-neighborhood-low} (which is solely dedicated to proving this lemma). We do not directly use Lemma \ref{lem:r-1-neighborhood-low-common}. Instead, we apply two immediate consequences of Lemma~\ref{lem:r-1-neighborhood-low-common}: Proposition \ref{prop:r-1-neighborhood-low-weak} and Lemma \ref{lem:r-1-neighborhood-low-strong}. We derive the first consequence of Lemma \ref{lem:r-1-neighborhood-low-common} by setting all $\delta_i$s equal:

\begin{prop}\label{prop:r-1-neighborhood-low-weak}
For any $r$-round matching-certified algorithm $f$, $x\in \mc R_{r-1}$, and $\delta\in [0,1/2]$,

$$P_{\text{comp}}(f,x,\delta) \le F_1(\delta)=6e^4\delta$$
\end{prop}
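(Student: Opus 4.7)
The plan is to apply Lemma \ref{lem:r-1-neighborhood-low-common} with the uniform choice $\delta_1 = \delta_2 = \cdots = \delta_\Delta = \delta$. Since $\delta \in [0, 1/2]$, this choice is admissible, so Lemma \ref{lem:r-1-neighborhood-low-common} provides an index $i^{*} \in [\Delta]$ such that
\[
\sum_{i \in [\Delta],\, i \neq i^{*}} P_i(f, x, \delta) \;\le\; F_1(\delta) \;=\; 6e^{4}\delta.
\]

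The remaining step is to observe that $P_{\text{comp}}(f, x, \delta)$ is precisely the sum of the $P_i$'s \emph{excluding the maximum one}; that is, writing $j^{*} \in [\Delta]$ for an index achieving $P_{j^{*}}(f,x,\delta) = P_{\max}(f,x,\delta)$, we have
\[
P_{\text{comp}}(f, x, \delta) \;=\; \sum_{i \in [\Delta]} P_i(f, x, \delta) \;-\; P_{\max}(f, x, \delta) \;=\; \sum_{i \in [\Delta],\, i \neq j^{*}} P_i(f, x, \delta).
\]
By the defining property of $j^{*}$ as the maximizer, removing $j^{*}$ from the sum gives a value no larger than removing any other fixed index, so in particular
\[
\sum_{i \in [\Delta],\, i \neq j^{*}} P_i(f, x, \delta) \;\le\; \sum_{i \in [\Delta],\, i \neq i^{*}} P_i(f, x, \delta).
\]

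Combining the two inequalities yields $P_{\text{comp}}(f, x, \delta) \le F_1(\delta) = 6e^{4}\delta$, completing the proof. There is essentially no obstacle here: the proposition is packaged as an immediate specialization of Lemma \ref{lem:r-1-neighborhood-low-common}, and the only content of the proof beyond that invocation is the elementary observation that the sum of all $P_i$'s minus the maximum is dominated by the sum with any single index removed. All the nontrivial work sits inside Lemma \ref{lem:r-1-neighborhood-low-common} itself, which is proved in Section \ref{subsec:r-1-neighborhood-low}.
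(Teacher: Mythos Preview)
Your proof is correct and follows essentially the same approach as the paper: apply Lemma \ref{lem:r-1-neighborhood-low-common} with all $\delta_i$ equal to $\delta$, and then use that $P_{\text{comp}}(f,x,\delta) = P(f,x,\delta) - P_{\max}(f,x,\delta) \le P(f,x,\delta) - P_{i^*}(f,x,\delta) = \sum_{i\neq i^*} P_i(f,x,\delta)$. The paper states this in a single sentence; your version simply spells out the last inequality more explicitly.
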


\begin{proof}
Follows immediately from Lemma \ref{lem:r-1-neighborhood-low-common}, with $\delta_i$ set to $\delta$ for all $i\in [\Delta]$ and using the fact that $P_{\text{comp}}(f,x,\delta) \le P(f,x,\delta) - P_{i^*}(f,x,\delta)$.
\end{proof}

To our knowledge, this lemma is only strong enough to obtain an $\Omega(\log\log\Delta)$-round hardness result. To obtain the desired $\Omega(\log\Delta)$-round hardness result, we need a neighborhood-dependent bound. This bound depends on just how dominant the dominant direction is:

\begin{defn}[Very Dominant $(r-1)$-Neighborhoods]
\seri{Handling constant here is very important.}For any $r\ge 1$, $r$-round matching-certified algorithm $f$, and $\delta\in [0,1]$, define $\mc R_{r-1}(f,\delta)$ to be the set of all $x\in \mc R_{r-1}$ for which

$$P_{\max}(f,x,\delta) \ge C_4$$

where $C_4 := 99/100$. Furthermore, for $x\in \mc R_{r-1}$, define $\delta_{\text{dom}}(f,x) \in [0,1]$ to be the infimum\footnote{We use infimum instead of minimum here because $P_{\max}(f,x,\delta)$ is not necessarily a continuous function of $\delta$. Note that this infimum is well-defined as it is taken over a nonempty set, because $P_{\max}(f,x,1) = 1$ for any $x\in \mc R_{r-1}$.} of the set of values $\delta \in [0,1]$ for which

$$x\in \mc R_{r-1}(f,\delta)$$

\end{defn}

We are able to obtain a bound on $P_{\text{comp}}(f,x,C_5)$ for some constant $C_5$ in terms of $\delta_{\text{dom}}(f,x)$. Note that this is much smaller than the $O(C_5)$ bound that would be obtained by applying Proposition \ref{prop:r-1-neighborhood-low-weak}. We obtain this from Lemma \ref{lem:r-1-neighborhood-low-common} by using distinct values of $\delta_i$ for different $i$:

\begin{lem}\label{lem:r-1-neighborhood-low-strong}
For any $r$-round matching-certified algorithm $f$ and $x\in \mc R_{r-1}$

$$P_{\text{comp}}(f,x,C_5) \le F_1(\delta_{\text{dom}}(f,x))=6e^4\delta_{\text{dom}}(f,x)$$

where $C_5 := e^{-3}/1000$.
\end{lem}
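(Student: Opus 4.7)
The plan is to apply Lemma~\ref{lem:r-1-neighborhood-low-common} with an asymmetric choice of $(\delta_1,\delta_2,\ldots,\delta_{\Delta})$ designed so that the index $i^*$ the lemma returns is forced to coincide with a direction along which $x$ is already very dominant. First, I would dispose of the easy regime $\delta_{\text{dom}}(f,x) \ge C_5$ directly: applying Proposition~\ref{prop:r-1-neighborhood-low-weak} at $\delta = C_5$ yields $P_{\text{comp}}(f,x,C_5) \le F_1(C_5)$, and monotonicity of $F_1$ gives $F_1(C_5) \le F_1(\delta_{\text{dom}}(f,x))$. So I may assume $\delta_{\text{dom}}(f,x) < C_5$ from now on.

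Fix a small $\epsilon > 0$ and set $\delta^* := \delta_{\text{dom}}(f,x) + \epsilon$. By the infimum definition of $\delta_{\text{dom}}$ together with the monotonicity of $P_{\max}(f,x,\cdot)$, we have $P_{\max}(f,x,\delta^*) \ge C_4 = 99/100$, so some index $i^{**}\in [\Delta]$ satisfies $P_{i^{**}}(f,x,\delta^*) \ge C_4$. I would then invoke Lemma~\ref{lem:r-1-neighborhood-low-common} on $x$ with the asymmetric choice $\delta_{i^{**}} := \delta^*$ and $\delta_i := C_5$ for every $i \ne i^{**}$; all of these values lie in $[0,1/2]$ once $\epsilon$ is small enough, since $C_5 < 1/2$ and $\delta^* < C_5 + \epsilon$. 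The lemma then returns some $i^* \in [\Delta]$ satisfying $\sum_{i\ne i^*} P_i(f,x,\delta_i) \le F_1(\delta_{i^*})$.

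I claim that $i^* = i^{**}$. Otherwise $\delta_{i^*} = C_5$ and the term $P_{i^{**}}(f,x,\delta^*)$ would appear as one of the summands on the left side, yielding $C_4 \le F_1(C_5) = 6e^4 \cdot e^{-3}/1000 = 6e/1000$, which contradicts $C_4 = 99/100$. With $i^* = i^{**}$, the lemma's conclusion reduces to $\sum_{i\ne i^{**}} P_i(f,x,C_5) \le F_1(\delta^*)$, and since $P_{\text{comp}}(f,x,C_5) = P(f,x,C_5) - P_{\max}(f,x,C_5) \le P(f,x,C_5) - P_{i^{**}}(f,x,C_5) = \sum_{i\ne i^{**}} P_i(f,x,C_5)$, we obtain $P_{\text{comp}}(f,x,C_5) \le F_1(\delta_{\text{dom}}(f,x) + \epsilon)$. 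Letting $\epsilon \to 0^+$ and using continuity of $F_1$ closes the argument.

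The only real delicacy --- which is essentially a constant-chasing check rather than a true obstacle --- is the strict inequality $C_4 > F_1(C_5)$ that drives the contradiction forcing $i^* = i^{**}$. The constants $C_4 = 99/100$ and $C_5 = e^{-3}/1000$ are chosen precisely to separate these two quantities by a wide margin, which is exactly what lets the asymmetric parameter choice in Lemma~\ref{lem:r-1-neighborhood-low-common} upgrade the uniform Proposition~\ref{prop:r-1-neighborhood-low-weak} bound (linear in $C_5$) into the much stronger neighborhood-dependent bound (linear in $\delta_{\text{dom}}(f,x)$) needed here.
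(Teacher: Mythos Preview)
Your proof is correct and follows essentially the same approach as the paper: both apply Lemma~\ref{lem:r-1-neighborhood-low-common} with the asymmetric choice $\delta_{i^{**}} = \delta$ (slightly above $\delta_{\text{dom}}(f,x)$) and $\delta_j = C_5$ elsewhere, then use the numerical check $C_4 > F_1(C_5)$ to force $i^* = i^{**}$ and conclude by taking $\delta \downarrow \delta_{\text{dom}}(f,x)$. Your explicit case split on $\delta_{\text{dom}}(f,x) \ge C_5$ is a minor addition that the paper handles implicitly by only needing $\delta$ arbitrarily close to $\delta_{\text{dom}}(f,x)$.
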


\begin{proof}
It suffices to show that

$$P_{\text{comp}}(f,x,C_5) \le F_1(\delta)$$

for any $\delta > \delta_{\text{dom}}(f,x)$. Let $i' \in [\Delta]$ be an arbitrary index for which $P_{i'}(f,x,\delta) \ge C_4$. Define $\delta_{i'} := \delta$ and $\delta_j := C_5$ for all $j\ne i'$. By Lemma \ref{lem:r-1-neighborhood-low-common}, there exists an $i^*\in [\Delta]$ for which

$$\sum_{j\in [\Delta], j\ne i^*} P_j(f,x,\delta_j) \le F_1(\delta_{i^*})$$

First, we show that $i^* = i'$. Indeed, if $i^* \ne i'$, then the sum includes $j = i'$ and $\delta_{i^*} = C_5$, meaning that

$$C_4 > F_1(C_5) = F_1(\delta_{i^*}) \ge \sum_{j\in [\Delta], j\ne i^*} P_j(f,x,\delta_j) \ge P_{i'}(f,x,\delta) \ge C_4$$

which is a contradiction. Thus, $i^* = i'$, which implies that

$$P_{\text{comp}}(f,x,C_5) \le \sum_{j\ne i^*} P_j(f,x,C_5) \le F_1(\delta_{i^*}) = F_1(\delta_{i'}) = F_1(\delta)$$

as desired.
\end{proof}

In Section \ref{sec:complement-prob}, we show that most neighborhoods have a very dominant direction:

\begin{restatable}{prop}{propcomplementprob}\label{prop:complement-prob}
For $r\ge 1$ and any $r$-round matching-certified algorithm $f$,

$$\E_{x\sim \mc R_{r-1}}[\delta_{\text{dom}}(f,x)] \le C_{11} P_f$$

where $C_{11} := 168(10)^4e^4$.
\end{restatable}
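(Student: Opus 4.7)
The plan is to derive a linear integral inequality for the layer-cake integral
$$F(\xi) \;:=\; \E_{x\sim \mc R_{r-1}}\!\bigl[\min(\xi, \delta_{\text{dom}}(f,x))\bigr] \;=\; \int_0^\xi \Pr_x[\delta_{\text{dom}}(f,x) > t]\,dt,$$
so that $D := \E_x[\delta_{\text{dom}}(f,x)] = F(1)$, and then to solve it via Gronwall. The starting point is the sharper pointwise estimate
$$P_{\text{comp}}(f,x,\tau) \;\le\; 6e^4 \min\bigl(\tau,\,\delta_{\text{dom}}(f,x)\bigr) \qquad \text{for } \tau \le C_5,$$
obtained by combining Proposition~\ref{prop:r-1-neighborhood-low-weak} (which supplies the $6e^4\tau$ side) with Lemma~\ref{lem:r-1-neighborhood-low-strong} together with Proposition~\ref{prop:comp-monotone} (which supply the $6e^4\delta_{\text{dom}}$ side). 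Taking $\tau$-integral and $x$-expectation, this upgrades to $\int_0^\xi \E_x[P_{\text{comp}}(f,x,\tau)]\,d\tau \le 6e^4 \int_0^\xi F(\tau)\,d\tau$.

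Next, I would apply Lemma~\ref{lem:r-1-flower-high-strong} together with Proposition~\ref{prop:r-to-f-sample} (which gives $\E_x[P(f,x,\tau)] = \Delta\cdot \Pr_{y\sim\mc F_{r-1}}[y\in \mc X_{r-1}(f,\tau)]$) to obtain the integral lower bound $\int_0^\xi \E_x[P(f,x,\tau)]\,d\tau \ge \xi - 7P_f$. Complementarily, since $P_{\max}(f,x,\tau) \le 1$ always and $P_{\max}(f,x,\tau) < C_4$ whenever $\tau < \delta_{\text{dom}}(f,x)$, we have $\E_x[P_{\max}(f,x,\tau)] \le 1 - (1-C_4)\Pr_x[\delta_{\text{dom}}(f,x) > \tau]$, so $\int_0^\xi \E_x[P_{\max}(f,x,\tau)]\,d\tau \le \xi - (1-C_4) F(\xi)$. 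Writing $P = P_{\max} + P_{\text{comp}}$ and combining these three estimates produces the self-referential integral inequality
$$(1-C_4)\, F(\xi) \;\le\; 7P_f + 6e^4 \int_0^\xi F(\tau)\,d\tau \qquad \text{for all } \xi \le C_5.$$

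Setting $H(\xi) := \int_0^\xi F(\tau)\,d\tau$ (so $H'=F$, $H(0)=0$) turns this into the linear differential inequality $H'(\xi) \le 700 P_f + 600 e^4 H(\xi)$, which Gronwall resolves as $H(\xi) \le (7P_f/(6e^4))(e^{600 e^4 \xi} - 1)$, and back-substitution yields $F(\xi) \le 700 P_f\, e^{600 e^4 \xi}$. Evaluated at $\xi = C_5 = e^{-3}/1000$ the exponent is $600 e^4 C_5 = 0.6 e$, an absolute constant, so $F(C_5) \le 700\,e^{0.6 e}\, P_f$. The tail beyond $C_5$ is handled by the monotonicity bound $\Pr_x[\delta_{\text{dom}}(f,x) > C_5] \le F(C_5)/C_5$, which gives $D \le F(C_5) + (1-C_5) F(C_5)/C_5 = F(C_5)/C_5 \le 7\cdot 10^5\, e^{3+0.6e}\, P_f$; since $e^{0.6e - 1}\approx 1.88 < 240$, this fits inside $C_{11} P_f = 168\cdot 10^4\, e^4 P_f$.

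The main obstacle will be avoiding a spurious $\log(1/P_f)$ factor. Combining only the weak estimates (Propositions~\ref{prop:r-1-flower-high-weak} and~\ref{prop:r-1-neighborhood-low-weak}) produces a pointwise bound of the shape $\Pr_x[\delta_{\text{dom}}(f,x) > t] \le 200 P_f/t + 600 e^4 t$, whose layer-cake integral incurs a $\log(1/P_f)$ factor that the downstream round-elimination argument cannot absorb. The key to removing the log is to pair the \emph{integrated} strong form (Lemma~\ref{lem:r-1-flower-high-strong}) on the lower-bound side with the \emph{pointwise} refinement $P_{\text{comp}} \le 6e^4 \min(\tau,\delta_{\text{dom}})$ on the upper-bound side; this converts the naive pointwise inequality into an integral inequality whose self-reference Gronwall tames to a clean linear bound.
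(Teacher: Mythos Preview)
Your proposal is correct and follows essentially the same architecture as the paper's proof: both derive the self-referential inequality $(1-C_4)F(\xi) \le 7P_f + 6e^4\int_0^\xi F(\tau)\,d\tau$ by decomposing $P = P_{\max} + P_{\text{comp}}$, invoking Lemma~\ref{lem:r-1-flower-high-strong} for the $P$-term and the sharp bound $P_{\text{comp}}\le 6e^4\min(\tau,\delta_{\text{dom}})$ for the complement term. The only difference is in resolving the self-reference: you use Gronwall, whereas the paper simply bounds $\int_0^\xi F(\tau)\,d\tau \le \xi F(\xi)$ and evaluates at a single cutoff $C_{10} = \frac{1}{1200e^4}$ chosen so that the self-reference coefficient $\frac{6e^4 C_{10}}{1-C_4}\le \frac12$, allowing direct rearrangement---a more elementary closure that avoids the ODE machinery.
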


\subsection{Completing the Round Elimination Result (Proof of Lemma~\ref{lem:vertex-round-elim})}\label{sec:prooflemmaRE}

Now, we are ready to define the $(r-1)$-round matching-certified algorithm $g$ given the $r$-round matching-certified algorithm $f$. For $y\in \mc F_{r-1}$, let $g(y) = 1$ if and only if all of the following hold:

\begin{enumerate}
    \item (First Condition) $y\in \mc X_{r-1}(f,C_5)$
    \item (Second Condition)\footnote{The 1s are important here, unlike in the definition of good flowers. They cannot be replaced by arbitrary members of $[\Delta]$. They appear solely due to the definitions of $\text{end}_A(y)$ and $\text{end}_B(y)$. Specifically, $y$ is an extension of the $(r-1)$-neighborhood $\text{end}_A(y)$ in the 1 direction. If they were extensions in other directions, different numbers would appear here.} $i_{\max}(f,\text{end}_A(y),C_5) = 1$ and $i_{\max}(f,\text{end}_B(y),C_5) = 1$.
\end{enumerate}

The second condition ensures that $g$ is an $(r-1)$-round matching-certified algorithm by only letting $g(y) = 1$ if the $i_{\max}$s on both sides point towards $y$. Thus, it suffices to upper bound the vertex survival probability of $g$; i.e. $P_g$. The key step is to show that the second condition of $g$ does not noticeably degrade the high inclusion probability (Proposition \ref{prop:r-1-flower-high-weak}) of the first condition. We can upper bound the degradation (probability that an $(r-1)$-flower is good and that $i_{\max}$ from one side does not point towards it) by sampling the $(r-1)$ flower a different way: sample an $(r-1)$-neighborhood, pick a random direction $i$, and sample an extension in that direction (Proposition \ref{prop:r-to-f-sample}). Given an $(r-1)$-neighborhood $x$, the probability that an extension in a random direction is $C_5$-good and not in the $i_{\max}$ direction (first condition but not second) is at most $\frac{1}{\Delta}P_{\text{comp}}(f,x,C_5)$ (average of non-$i_{\max}$ directional extension probabilities). Thus, randomizing over $x$ yields an overall probability by taking the expectation, which is bounded using Proposition \ref{prop:complement-prob}. We are now ready to prove Lemma \ref{lem:vertex-round-elim}, which we restate for convenience:

\lemvertexroundelim*

\begin{proof}

\textbf{Matching-certified:}

Consider two incident $(r-1)$-flowers $y,y'\in \mc F_{r-1}$. By definition, there exist $v,v'\in \{A,B\}$ and a permutation $\sigma$ for which $\text{end}_v(y) = \sigma(\text{end}_{v'}(y'))$ and $\sigma(1)\ne 1$. Suppose for the sake of contradiction that $g(y) = g(y') = 1$. By the second condition, $i_{\max}(f,\text{end}_{v'}(y'),C_5) = 1$ and $i_{\max}(f,\sigma(\text{end}_{v'}(y')),C_5) = i_{\max}(f,\text{end}_v(y),C_5) = 1$. But this means that $i_{\max}(f,\text{end}_{v'}(y'),C_5) = \sigma^{-1}(i_{\max}(f,\sigma(\text{end}_{v'}(y')),C_5)) = \sigma^{-1}(1) \ne 1$, a contradiction. Thus, $g$ is matching-certified.\\

\textbf{Vertex survival probability:}
We start by showing that

$$\Pr_{y\sim \mc F_{r-1}}\left[y\in \mc X_{r-1}(f,C_5) \text{ and } i_{\max}(f,\text{end}_A(y),C_5) \ne 1 \right] \le \frac{12e^4C_{11} P_f}{\Delta}$$

To do this, start by applying Proposition \ref{prop:r-to-f-sample} to the random variable $\textbf{1}_{y\in \mc X_{r-1}(f,C_5)}\textbf{1}_{i_{\max}(f,\text{end}_A(y),C_5)\ne 1}$ and apply the definition of $P_i(f,x,C_5)$ to show that

\begin{align*}
&\Pr_{y\sim \mc F_{r-1}}\left[y\in \mc X_{r-1}(f,C_5) \text{ and } i_{\max}(f,\text{end}_A(y),C_5) \ne 1 \right]\\
&= \E_{x\sim \mc R_{r-1}}\left[\E_{i\sim [\Delta]}\left[\Pr_{y\sim \mc F_{r-1}}\left[y\in \mc X_{r-1}(f,C_5) \text{ and } i_{\max}(f,\text{end}_A(y),C_5) \ne 1\mid \text{end}_A(y) = \sigma_i(x)\right]\right]\right]\\
&= \E_{x\sim \mc R_{r-1}}\left[\E_{i\sim [\Delta]}\left[\textbf{1}_{i_{\max}(f,\sigma_i(x),C_5)\ne 1}\Pr_{y\sim \mc F_{r-1}}\left[y\in \mc X_{r-1}(f,C_5)\mid \text{end}_A(y) = \sigma_i(x)\right]\right]\right]\\
&= \E_{x\sim \mc R_{r-1}}\left[\E_{i\sim [\Delta]}\left[\textbf{1}_{i_{\max}(f,x,C_5)\ne i}P_i(f,x,C_5)\right]\right]\\
\end{align*}

When $i_{\max}(f,x,C_5) \ne 0$, $P_{i_{\max}(f,x,C_5)}(f,x,C_5) = P_{\max}(f,x,C_5)$ so

$$\E_{i\sim [\Delta]}[\textbf{1}_{i_{\max}(f,x,C_5) \ne i}P_i(f,x,C_5)] = \frac{1}{\Delta}\left(\left(\sum_{i=1}^{\Delta} P_i(f,x,C_5)\right) - P_{\max}(f,x,C_5)\right) = \frac{1}{\Delta}P_{\text{comp}}(f,x,C_5)$$

When $i_{\max}(f,x,C_5) = 0$, the maximizing index is not unique, so $P_{\max}(f,x,C_5) \le P_{\text{comp}}(f,x,C_5)$ and

$$\E_{i\sim [\Delta]}[\textbf{1}_{i_{\max}(f,x,C_5) \ne i}P_i(f,x,C_5)] = \frac{1}{\Delta}P(f,x,C_5) \le \frac{2}{\Delta}P_{\text{comp}}(f,x,C_5)$$

Thus, combining both cases yields

$$\E_{x\sim \mc R_{r-1}}\left[\E_{i\sim [\Delta]}\left[\textbf{1}_{i_{\max}(f,x,C_5)\ne i}P_i(f,x,C_5)\right]\right] \le \frac{2}{\Delta}\E_{x\sim \mc R_{r-1}}\left[P_{\text{comp}}(f,x,C_5)\right]$$

By Lemma \ref{lem:r-1-neighborhood-low-strong}, $P_{\text{comp}}(f,x,C_5) \le 6e^4\delta_{\text{dom}}(f,x)$ for any $x\in \mc R_{r-1}$. Substitution and Proposition \ref{prop:complement-prob} show that

\begin{align*}
\Pr_{y\sim \mc F_{r-1}}\left[y\in \mc X_{r-1}(f,C_5) \text{ and } i_{\max}(f,\text{end}_A(y),C_5) \ne 1 \right] &\le \frac{2}{\Delta}\E_{x\sim \mc R_{r-1}}\left[P_{\text{comp}}(f,x,C_5)\right]\\
&\le \frac{12e^4}{\Delta}\E_{x\sim \mc R_{r-1}}\left[\delta_{\text{dom}}(f,x)\right]\\
&\le \frac{12e^4C_{11} P_f}{\Delta}
\end{align*}

This is the desired first step. Now, we use this to obtain an inclusion probability lower bound for $g$ via a union bound, the observation that $A$ and $B$ are interchangeable, and Proposition \ref{prop:r-1-flower-high-weak}:

\begin{align*}
\Pr_{y\sim \mc F_{r-1}}[g(y) = 1] &\ge \Pr_{y\sim \mc F_{r-1}}[y\in \mc X_{r-1}(f,C_5)]\\
&- \Pr_{y\sim \mc F_{r-1}}[y\in \mc X_{r-1}(f,C_5) \text{ and } i_{\max}(f,\text{end}_A(y),C_5) \ne 1]\\
&- \Pr_{y\sim \mc F_{r-1}}[y\in \mc X_{r-1}(f,C_5) \text{ and } i_{\max}(f,\text{end}_B(y),C_5) \ne 1]\\
&= \Pr_{y\sim \mc F_{r-1}}[y\in \mc X_{r-1}(f,C_5)]\\
&- 2\Pr_{y\sim \mc F_{r-1}}[y\in \mc X_{r-1}(f,C_5) \text{ and } i_{\max}(f,\text{end}_A(y),C_5) \ne 1]\\
&\ge \frac{1 - 2P_f/C_5}{\Delta} - \frac{24e^4C_{11}P_f}{\Delta}\\
&\ge \frac{1 - 48e^4C_{11}P_f}{\Delta} 
\end{align*}

Thus,

$$P_g = 1 - \Delta \Pr_{x\sim \mc F_{r-1}}[g(x) = 1] \le 48e^4C_{11}P_f \le C_1 P_f$$

as desired.
\end{proof}

\subsection{Good $(r-1)$-Flower Proofs (Proofs of Proposition \ref{prop:r-1-flower-high-weak} and Lemma \ref{lem:r-1-flower-high-strong})}\label{subsec:r-1-flower-high}

In this section, we show that the set of good $(r-1)$-flowers has high probability mass. We do this by rewriting the vertex survival probability of the algorithm $f$ in terms of the expectation of functions of a certain random variable $Q(f,x)$, and using a Markov-like argument on related random variables to show that many $(r-1)$-flowers are good. We prove the two desired results of this section over the course of a few subsections:

\begin{enumerate}
    \item In Section \ref{sec:FlowerQ}, we introduce a random variable $Q(f,x)$ for each $x\in \mc F_{r-1}$. $Q(f,x)$ is equal to the $A$-side probability that is used to define the set of good flowers $\mc X_{r-1}(f,\delta)$. We then prove two moment-like bounds on $Q(f,x)$ -- Propositions \ref{prop:one-side} and \ref{prop:two-side} -- which extract all of the information that we need from matching-certified algorithms in order to prove the desired results (Proposition \ref{prop:r-1-flower-high-weak} and Lemma \ref{lem:r-1-flower-high-strong}).
    \item Intuitively, the moment-like bounds show that $Q(f,x)$ is almost always close to 0 or 1. In Section \ref{sec:FlowerWeak}, we give a Markov-like argument to formalize this intuition and prove Proposition \ref{prop:r-1-flower-high-weak}.
    \item In Section \ref{sec:FlowerStrong}, we do a similar argument to show Lemma \ref{lem:r-1-flower-high-strong}.
\end{enumerate}

We provide a more detailed overview at the beginning of each subsection.

\subsubsection{Defining $Q(f,x)$ And Proving Moment-Like Bounds}\label{sec:FlowerQ}

For $x\in \mc F_{r-1}$ and an $r$-round matching-certified algorithm $f$, let

$$Q(f,x) := \Pr_{y\sim \mc R_r}[\text{dir}(f,y) = 1 \mid \text{res}_1(y) = x]$$

That is, $Q(f,x)$ is the probability that an $r$-neighborhood extension $y$ of the $(r-1)$-flower $x$ has its $\text{dir}(f,y)$ pointing towards $x$.\footnote{Recall that the choice of 1 in this definition is not important due to Fact 1 -- it can be replaced with any $i\in [\Delta]$.} This definition allows us to state the definition of good flowers more succinctly. In particular, for $\delta\in [0,1]$, $\mc X_{r-1}(f,\delta)$ is the set of $x\in \mc F_{r-1}$ for which $Q(f,x) \ge 1-\delta$ and $Q(f,\ol{x}) \ge 1-\delta$.

We prove two moment-like bounds on the random variable $Q(f,x)$: Propositions \ref{prop:one-side} and \ref{prop:two-side}. These bounds aim to abstract away all that we need to know about matching-certified algorithms for the purposes of proving the main results (Proposition \ref{prop:r-1-flower-high-weak} and Lemma \ref{lem:r-1-flower-high-strong}). Furthermore, note that one of these moment-like bounds is an upper bound while the other is a lower bound -- this has the effect of sandwiching $Q(f,x)$ strongly enough to show that it is generally close to either 0 or 1.

The first moment-like bound states that the expected probability that an $r$-neighborhood extension points towards $x$ is at most $1/\Delta$, which holds due to the exclusivity of dir (Proposition \ref{prop:r-ngbr-exclusive}):

\begin{prop}\label{prop:one-side}
For any $r$-round matching-certified algorithm $f$,

$$\E_{x\sim \mc F_{r-1}}[Q(f,x)] \le \frac{1}{\Delta}$$
\end{prop}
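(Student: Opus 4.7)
The plan is to rewrite the left-hand side as a single unconditional probability over $\mc R_r$, and then use the symmetry of $\mc R_r$ under direction-permutations together with the exclusivity of $\text{dir}$ (Proposition~\ref{prop:r-ngbr-exclusive}).

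First, I would apply Proposition~\ref{prop:f-to-r-sample} with $i=1$ to the random variable $X(y) := \mathbf{1}_{\text{dir}(f,y) = 1}$. Since $Q(f,x) = \E_{y \sim \mc R_r}[X(y) \mid \text{res}_1(y) = x]$ by definition, this immediately gives
$$\E_{x \sim \mc F_{r-1}}[Q(f,x)] = \Pr_{y \sim \mc R_r}[\text{dir}(f,y) = 1].$$

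Next, I would show that the right-hand side is the same for any direction $i \in [\Delta]$, i.e. $\Pr_{y \sim \mc R_r}[\text{dir}(f,y) = i] = \Pr_{y \sim \mc R_r}[\text{dir}(f,y) = 1]$. This is a direct consequence of Proposition~\ref{prop:vert-permute} applied with $\mc D$ the point-mass at $\sigma_i$: since $\sigma_i(y)$ is distributed as $y$ when $y \sim \mc R_r$, it suffices to check the equivariance $\text{dir}(f,\sigma_i(y)) = \sigma_i(\text{dir}(f,y))$. This follows from the definition of $\text{dir}$: if $w \in \mc F_r$ witnesses $\text{dir}(f,y) = j$ (so $\text{end}_A(w) = \sigma_j(y)$ and $f(w) = 1$), then the same $w$ witnesses $\text{dir}(f,\sigma_i(y)) = \sigma_i(j)$, since $\text{end}_A(w) = \sigma_j(y) = (\sigma_j \sigma_i^{-1})(\sigma_i(y)) = \sigma_{\sigma_i(j)}(\sigma_i(y))$ (using that $\sigma_i$ is an involution and that the permutations $\sigma_k$ act by swapping 1 with $k$). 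The converse direction is symmetric, and the zero case is easy.

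Finally, Proposition~\ref{prop:r-ngbr-exclusive} together with the definition of $\text{dir}(f,y)$ implies that the events $\{\text{dir}(f,y) = i\}_{i \in [\Delta]}$ are pairwise disjoint, so their probabilities sum to at most $1$. Combined with the symmetry above, this yields $\Delta \cdot \Pr_{y \sim \mc R_r}[\text{dir}(f,y) = 1] \le 1$, and chaining with the first step gives the claim. I do not expect any real obstacle here; the only subtlety is verifying the equivariance $\text{dir}(f, \sigma(y)) = \sigma(\text{dir}(f,y))$ carefully from the definition, which is a straightforward unwinding.
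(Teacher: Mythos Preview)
Your proposal is correct and follows essentially the same route as the paper: apply Proposition~\ref{prop:f-to-r-sample} to reduce to $\Pr_{y\sim\mc R_r}[\text{dir}(f,y)=1]$, then use the permutation symmetry of $\mc R_r$ (Proposition~\ref{prop:vert-permute}) together with the exclusivity of $\text{dir}$ to bound this by $1/\Delta$.

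One small caveat on your equivariance step: the identity $\sigma_j\sigma_i^{-1}=\sigma_{\sigma_i(j)}$ that you invoke is \emph{not} true in general (take $i,j\neq 1$ with $i\neq j$; then $\sigma_j\sigma_i$ is a $3$-cycle, not a transposition $\sigma_k$), so the full equivariance $\text{dir}(f,\sigma_i(y))=\sigma_i(\text{dir}(f,y))$ is not established by your ``same $w$'' argument. Fortunately, your proof only needs the special instance $\text{dir}(f,\sigma_i(y))=1\iff\text{dir}(f,y)=i$, and for that you have $j=i$ (or $\sigma_i(j)=1$), where $\sigma_j\sigma_i^{-1}=\sigma_i\sigma_i=\mathrm{id}=\sigma_1$ does hold. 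This is precisely the equivalence the paper uses, so the argument goes through; just state the needed special case rather than the full equivariance.
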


\begin{proof}
By definition of $Q(f,x)$ and Proposition \ref{prop:f-to-r-sample} for the random variable $\textbf{1}_{\text{dir}(f,y) = 1}$ and $i\gets 1$,

$$\E_{x\sim \mc F_{r-1}}[Q(f,x)] = \E_{x\sim \mc F_{r-1}}\left[\Pr_{y\sim \mc R_r}[\text{dir}(f,y) = 1 \mid \text{res}_1(y) = x]\right] = \Pr_{y\sim \mc R_r}[\text{dir}(f,y) = 1]$$

By Proposition \ref{prop:vert-permute} applied to the same random variable and the uniform distribution over permutations $\sigma_i$ for $i\sim [\Delta]$,

$$\Pr_{y\sim \mc R_r}[\text{dir}(f,y) = 1] = \E_{y\sim \mc R_r}\left[\Pr_{i\sim [\Delta]}[\text{dir}(f,\sigma_i(y)) = 1]\right] = \E_{y\sim \mc R_r}\left[\Pr_{i\sim [\Delta]}[\text{dir}(f,y) = i]\right] \le \E_{y\sim \mc R_r}[1/\Delta] = 1/\Delta$$ as desired.
\end{proof}

Our second moment-like bound lower bounds the product of the two directional extension probabilities for typical $(r-1)$-flowers. This complements the first moment-like bound that we had, which was an upper bound. We prove this lower bound in two steps: first, show that an $f$ accepts an $r$-flower $y$ with probability governed by its vertex survival probability; then, upper bound $f$'s $r$-flower acceptance probability in terms of the probability that dir on both sides points towards $y$, which can be bounded because the two dir-pointing probabilities are independent due to being functions of randomness on non-overlapping edges sets (two different sides):

\begin{prop}\label{prop:two-side}
For any $r$-round matching-certified algorithm $f$,

$$\E_{x\sim \mc F_{r-1}}[Q(f,x)Q(f,\ol{x})] \ge \frac{1-P_f}{\Delta}$$
\end{prop}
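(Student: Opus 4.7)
The plan is to lower bound $\E_{x \sim \mc F_{r-1}}[Q(f,x) Q(f, \ol x)]$ by first relating it to the acceptance probability $\Pr_{y \sim \mc F_r}[f(y) = 1]$ and then to the vertex survival probability $P_f$. I would proceed in three steps.

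First, I would establish the identity $\Pr_{y \sim \mc F_r}[f(y) = 1] = (1 - P_f)/\Delta$. Sample $z \sim \mc R_{r+1}$; the center vertex survives iff $f(\text{res}_i(z)) = 0$ for all $i \in [\Delta]$. For $i \neq j$, the two flowers $\text{res}_i(z)$ and $\text{res}_j(z)$ share the center of $z$ and are incident per Definition \ref{def:inc-rF} (take $v = v' = A$ and the permutation $\sigma_i \sigma_j$, which sends $1$ to a value in $\{i,j\} \setminus \{1\}$), so matching-certifiedness forbids both from being accepted. Hence the $\Delta$ acceptance events partition the failure event of vertex survival, yielding $1 - P_f = \sum_{i=1}^{\Delta} \Pr[f(\text{res}_i(z)) = 1]$; Proposition \ref{prop:vert-permute} makes the summands equal, and Proposition \ref{prop:f-to-r-sample} identifies the marginal distribution of $\text{res}_1(z)$ over $z \sim \mc R_{r+1}$ with the uniform distribution on $\mc F_r$.

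Second, I would prove the pointwise implication that $f(y) = 1$ forces both $\text{dir}(f, \text{end}_A(y)) = 1$ and $\text{dir}(f, \text{end}_B(y)) = 1$. The first part is immediate from the definition of $\text{dir}$ via the witness $w = y$ (since $\sigma_1$ is the identity). For the second, take $w = \ol y$: a short unpacking of Definitions \ref{def:rev-rF} and \ref{def:endpoint-of-flower} gives $\text{end}_A(\ol y) = \text{end}_B(y)$, so by the reversal symmetry $f(\ol y) = f(y) = 1$ intrinsic to any LOCAL matching algorithm, $\ol y$ is a valid accepting witness for $\text{dir}(f, \text{end}_B(y)) = 1$.

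Third, I would apply Proposition \ref{prop:f-to-f-sample} with $X_A(z) = X_B(z) = \textbf{1}_{\text{dir}(f,z)=1}$. By the definition of $Q(f, \cdot)$, the left-hand side becomes $\E_{x \sim \mc F_{r-1}}[Q(f,x) Q(f, \ol x)]$, and the right-hand side is $\Pr_{y \sim \mc F_r}[\text{dir}(f, \text{end}_A(y)) = 1 \text{ and } \text{dir}(f, \text{end}_B(y)) = 1]$. Combining this with step two and then step one gives the desired lower bound $(1 - P_f)/\Delta$. The one mildly delicate point is the reversal symmetry $f(\ol y) = f(y)$ used in step two, which is not part of Definition \ref{def:match-certified} but is automatic for any algorithm outputting a well-defined matching; if one prefers to avoid invoking it, the whole argument can be rerun with the symmetrized indicator $f(y) f(\ol y)$ in place of $f(y)$, which is still matching-certified, is reversal symmetric by construction, and satisfies the same vertex survival identity in step one by the same disjointness-plus-permutation argument together with Proposition \ref{prop:edge-flip}.
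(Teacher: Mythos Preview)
Your argument is essentially the paper's proof: both establish $\Pr_{y\sim\mc F_r}[f(y)=1]=(1-P_f)/\Delta$ via mutual exclusivity of the $\Delta$ matching events, observe that $f(y)=1$ forces $\text{dir}(f,\text{end}_A(y))=\text{dir}(f,\text{end}_B(y))=1$, and then apply Proposition~\ref{prop:f-to-f-sample} with $X_A=X_B=\textbf{1}_{\text{dir}(f,\cdot)=1}$. You are in fact more careful than the paper in flagging that the $B$-side implication uses the reversal symmetry $f(\ol y)=f(y)$, which the paper invokes only implicitly (it is automatic for any $f$ arising from a LOCAL edge-output algorithm and is preserved by the construction of $g$ in Section~\ref{sec:prooflemmaRE}).

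One small correction: your fallback via the symmetrized indicator $\tilde f(y)=f(y)f(\ol y)$ does not quite work as stated. Proposition~\ref{prop:edge-flip} gives $\Pr[f(\ol y)=1]=\Pr[f(y)=1]$, but this does not force $\Pr[f(y)=1,\,f(\ol y)=0]=0$, so in general $\Pr[\tilde f(y)=1]<(1-P_f)/\Delta$ and you would only obtain $\E[Q(f,x)Q(f,\ol x)]\ge (1-P_{\tilde f})/\Delta$ with $P_{\tilde f}\ge P_f$. The clean fix is simply to take reversal symmetry as part of the hypothesis (or add it to Definition~\ref{def:match-certified}), which is harmless for the paper's purposes.
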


\begin{proof}
For any $y\in \mc F_r$, $f(y) = 1$ implies that $\text{dir}(f,\text{end}_A(y)) = 1$ and $\text{dir}(f,\text{end}_B(y)) = 1$ by definition of $\text{dir}$. Applying Proposition \ref{prop:f-to-f-sample} for the random variables $X_A(z) := X_B(z) := \textbf{1}_{\text{dir}(f,z) = 1}$ shows that

\begin{align*}
\Pr_{y\sim \mc F_r}[f(y) = 1] &\le \Pr_{y\sim \mc F_r}[\text{dir}(f,\text{end}_A(y)) = 1 \text{ and } \text{dir}(f,\text{end}_B(y)) = 1]\\
&= \E_{x\sim \mc F_{r-1}}\left[\Pr_{z_A\sim \mc R_r}[\text{dir}(f,z_A) = 1 \mid \text{res}_1(z_A) = x]\Pr_{z_B\sim \mc R_r}[\text{dir}(f,z_B) = 1 \mid \text{res}_1(z_B) = \ol{x}]\right]\\
&= \E_{x\sim \mc F_{r-1}}[Q(f,x)Q(f,\ol{x})]
\end{align*}

$\displaystyle\Delta\Pr_{y\sim \mc F_r}[f(y) = 1] = 1 - P_f$ because the $\Delta$ different matching events in $\mc R_{r+1}$ are mutually exclusive. Thus, we have the desired result.
\end{proof}

\subsubsection{A Lower Bound on the Fraction of Good Flowers (Proposition \ref{prop:r-1-flower-high-weak})}\label{sec:FlowerWeak}

We now use the moment-like bounds from the previous section to prove Proposition \ref{prop:r-1-flower-high-weak} via a Markov-inequality-like strategy, which we now overview. Recall that the desired result is as follows:

\proproneflowerhighweak*

We now give intuition for the proof in stages:\\

\textbf{Wishful Thinking}: Temporarily assume that $Q(f,x) = Q(f,\ol{x})$ for all $x\in \mc F_{r-1}$. We discuss how to get rid of this assumption at the end of this intuition discussion.\\

\textbf{Attempt 1: Actually Use Markov's Inequality}: We now discuss a proof attempt to illustrate how Markov's Inequality is relevant to the desired result. Let $\mc D$ denote the non-uniform distribution over $\mc F_{r-1}$ obtained by sampling $x\in \mc F_{r-1}$ with probability proportional to $Q(f,x)$. Dividing the Proposition \ref{prop:one-side} and \ref{prop:two-side} bounds, complementing, and applying the wishful assumption shows that

$$\E_{x\sim \mc D}[1-Q(f,x)] = \frac{\displaystyle\E_{x\sim \mc F_{r-1}}[Q(f,x)(1 - Q(f,x))]}{\displaystyle\E_{x\sim \mc F_{r-1}}[Q(f,x)]} \le 1 - \frac{(1-P_f)/\Delta}{1/\Delta} = P_f$$

Since $Q(f,x)$ is a probability, it is always at most 1, meaning that $1 - Q(f,x)$ is a nonnegative random variable to which Markov's Inequality can be applied. Thus,

$$\Pr_{x\sim \mc D}[1 - Q(f,x) > \delta] \le \frac{P_f}{\delta}$$

Rearranging, using the wishful assumption, and using the definition of $\mc X_{r-1}(f,\delta)$ shows that

$$\Pr_{x\sim \mc D}[x\in \mc X_{r-1}(f,\delta)] \ge 1 - \frac{P_f}{\delta}$$

Thus, a $(1 - P_f/\delta)$-fraction of this conditional distribution has the desired property. By Proposition \ref{prop:two-side} and the wishful assumption, $\E_{x\sim \mc F_{r-1}}[Q(f,x)] \ge \E_{x\sim \mc F_{r-1}}[Q(f,x)^2] \ge (1-P_f)/\Delta$. Intuitively speaking, a $(1 - P_f/\delta)$-fraction of the support of the distribution $\mc D$ is good, and a $(1 - P_f)/\Delta$-fraction of the uniform distribution is in the support of $\mc D$, so taking the product would lead a bound of $\Pr_{x\sim \mc F_{r-1}}[x\in \mc X_{r-1}(f,\delta)] \ge (1 - P_f/\delta)(1 - P_f)/\Delta \ge \frac{1 - 2P_f/\delta}{\Delta}$.\\

\textbf{Attempt 2: A Direct Approach}: Attempt 1 has two issues: (a) it is not immediately clear how to make the product reasoning at the end work and (b) we need to get rid of the wishful assumption. We start by addressing (a). We lower bound the desired probability by means of a proxy and the wishful assumption:

$$\Pr_{x\sim \mc F_{r-1}}[x \in \mc X_{r-1}(f,\delta)] \ge \E_{x\sim \mc F_{r-1}}[Q(f,x)\textbf{1}_{x\in \mc X_{r-1}(f,\delta)}] = \E_{x\sim \mc F_{r-1}}[Q(f,x)\textbf{1}_{Q(f,x) \ge 1-\delta}]$$

\noindent Define the quantities $$s_{\ge} := \E_{x\sim \mc F_{r-1}}[Q(f,x)\textbf{1}_{Q(f,x) \ge 1-\delta}]$$ and $$s_{<} := \E_{x\sim \mc F_{r-1}}[Q(f,x)\textbf{1}_{Q(f,x) < 1-\delta}]$$ We just need to obtain a lower bound on $s_{\ge}$. We start doing that by noticing two constraints that follow from these definitions and Propositions \ref{prop:one-side}, \ref{prop:two-side}:

$$s_{\ge} + s_{<} = \E_{x\sim \mc F_{r-1}}[Q(f,x)] \le \frac{1}{\Delta}$$

\begin{align*}
s_{\ge} + (1 - \delta)s_{<} &= \E_{x\sim \mc F_{r-1}}[Q(f,x)\textbf{1}_{Q(f,x)\ge 1-\delta}] + (1 - \delta)\E_{x\sim \mc F_{r-1}}[Q(f,x)\textbf{1}_{Q(f,x)< 1-\delta}]\\
&\ge \E_{x\sim \mc F_{r-1}}[Q(f,x)^2\textbf{1}_{Q(f,x)^2\ge 1-\delta}] + \E_{x\sim \mc F_{r-1}}[Q(f,x)^2\textbf{1}_{Q(f,x)< 1-\delta}]\\
&= \E_{x\sim \mc F_{r-1}}[Q(f,x)^2]\\
&\ge \frac{1-P_f}{\Delta}
\end{align*}

The larger contribution of $s_{\ge}$ than $s_{<}$ to the second constraint makes it possible to obtain a lower bound on $s_{\ge}$:

$$s_{\ge} = \frac{s_{\ge} + (1-\delta)s_{<} - (1-\delta)(s_{\ge} + s_{<})}{\delta} \ge \frac{(1-P_f)/\Delta - (1-\delta)(1/\Delta)}{\delta} = \frac{1 - P_f/\delta}{\Delta}$$

Therefore, $\displaystyle\Pr_{x\sim \mc F_{r-1}}[x \in \mc X_{r-1}(f,\delta)] \ge s_{\ge} \ge \frac{1 - P_f/\delta}{\Delta}$ as desired.\\

\textbf{Removing the Wishful Assumption}: Attempt 2 dealt with issue (a), so now we deal with issue (b). Attempt 2 developed constraints on the desired quantity by writing the expectations in Propositions \ref{prop:one-side} and \ref{prop:two-side} as sums of two quantities. To remove the wishful assumption, we instead write those expectations as sums of \emph{four} quantities $s_{\ge\ge}$, $s_{\ge<}$, $s_{<\ge}$, and $s_{<<}$, depending on whether or not each of $Q(f,x)$ \emph{and} $Q(f,\ol{x})$ rather than just $Q(f,x)$ are above or below the threshold $1-\delta$. To prove the desired result, we need to lower bound $s_{\ge\ge}$. The analogue of the second constraint from Attempt 2 ends up being

$$s_{\ge\ge} + s_{<\ge} + (1-\delta)s_{\ge<} + (1-\delta)s_{<<} \ge \frac{1-P_f}{\Delta}$$

Unlike in the two-quantity case, the term-to-bound $s_{\ge\ge}$ does not contribute more to this sum than any other term; in particular $s_{<\ge}$. We deal with this complication by noticing that $s_{<\ge} \le s_{\ge<}$. Intuitively, one would expect this because each of these quantities are expectations over thresholded $Q(f,x)$s, and the $s_{<\ge}$ terms have $Q(f,x) < 1-\delta$ while the $s_{\ge<}$ terms have $Q(f,x) \ge 1-\delta$, a larger number. Thus, the lower coefficient on $s_{\ge<}$ compensates for the higher coefficient on $s_{<
\ge}$ to obtain a more useful second constraint:

\begin{align*}
&s_{\ge\ge} + (1-\delta/2)s_{<\ge} + (1-\delta/2)s_{\ge<} + (1-\delta)s_{<<}\\
&= s_{\ge\ge} + s_{<\ge} + (1-\delta)s_{\ge<} + (1-\delta)s_{<<} + (\delta/2)(s_{\ge<} - s_{<\ge})\\
&\ge s_{\ge\ge} + s_{<\ge} + (1-\delta)s_{\ge<} + (1-\delta)s_{<<}\\
&\ge \frac{1-P_f}{\Delta}
\end{align*}

Now, the term-to-bound $s_{\ge\ge}$'s coefficient dominates all other coefficients, so we can get the desired bound using a similar proof to Attempt 2. This completes the intuition discussion. We now restate the result once more for convenience and prove it:

\proproneflowerhighweak*

\begin{proof}
For any pair of symbols $\sigma\in \{\ge, <\}$ and $\tau\in \{\ge, <\}$, let $\mc Y_{\sigma\tau} := \{x\in \mc F_{r-1} : Q(f,x)\text{ }\sigma\text{ }(1 - \delta) \text{ and } Q(f,\ol{x})\text{ }\tau\text{ }(1 - \delta)\}$ and let $$s_{\sigma\tau} := \E_{x\sim \mc F_{r-1}}[Q(f,x)\textbf{1}_{x\in \mc Y_{\sigma\tau}}]$$ 

We just need to lower bound $s_{\ge\ge}$, as

$$\Pr_{x\sim \mc F_{r-1}}[x\in \mc X_{r-1}(f,\delta)] = \E_{x\sim \mc F_{r-1}}[\textbf{1}_{x\in \mc Y_{\ge\ge}}] \ge \E_{x\sim \mc F_{r-1}}[Q(f,x)\textbf{1}_{x\in \mc Y_{\ge\ge}}] = s_{\ge\ge}$$

We now derive three constraints on the $s$ variables. First, Proposition \ref{prop:one-side} shows that

\begin{align*}
s_{\ge\ge} + s_{<\ge} + s_{\ge<} + s_{<<} &= \E_{x\sim \mc F_{r-1}}[Q(f,x)\textbf{1}_{x\in \mc Y_{\ge\ge}}] + \E_{x\sim \mc F_{r-1}}[Q(f,x)\textbf{1}_{x\in \mc Y_{<\ge}}] + \\
&\E_{x\sim \mc F_{r-1}}[Q(f,x)\textbf{1}_{x\in \mc Y_{\ge<}}] + \E_{x\sim \mc F_{r-1}}[Q(f,x)\textbf{1}_{x\in \mc Y_{<<}}]\\
&= \E_{x\sim \mc F_{r-1}}[Q(f,x)]\\
&\le \frac{1}{\Delta}
\end{align*}

Secondly, Proposition \ref{prop:two-side} shows that

\begin{align*}
s_{\ge\ge} + s_{<\ge} + (1 - \delta)s_{\ge<} + (1 - \delta)s_{<<} &= \E_{x\sim \mc F_{r-1}}[Q(f,x)\textbf{1}_{x\in \mc Y_{\ge\ge}}] + \E_{x\sim \mc F_{r-1}}[Q(f,x)\textbf{1}_{x\in \mc Y_{<\ge}}] + \\
&(1-\delta)\E_{x\sim \mc F_{r-1}}[Q(f,x)\textbf{1}_{x\in \mc Y_{\ge<}}] + (1-\delta)\E_{x\sim \mc F_{r-1}}[Q(f,x)\textbf{1}_{x\in \mc Y_{<<}}]\\
&\ge \E_{x\sim \mc F_{r-1}}[Q(f,x)Q(f,\ol{x})\textbf{1}_{x\in \mc Y_{\ge\ge}}] + \E_{x\sim \mc F_{r-1}}[Q(f,x)Q(f,\ol{x})\textbf{1}_{x\in \mc Y_{<\ge}}] + \\
&\E_{x\sim \mc F_{r-1}}[Q(f,x)Q(f,\ol{x})\textbf{1}_{x\in \mc Y_{\ge<}}] + \E_{x\sim \mc F_{r-1}}[Q(f,x)Q(f,\ol{x})\textbf{1}_{x\in \mc Y_{<<}}]\\
&= \E_{x\sim \mc F_{r-1}}[Q(f,x)Q(f,\ol{x})]\\
&\ge \frac{1 - P_f}{\Delta}
\end{align*}

Thirdly, Proposition \ref{prop:edge-flip} applied to the random variable $Q(f,x)\textbf{1}_{x\in \mc Y_{<\ge}}$ (third equality) shows that

\begin{align*}
s_{\ge<} &= \E_{x\sim \mc F_{r-1}}[Q(f,x)\textbf{1}_{x\in \mc Y_{\ge<}}]\\
&\ge \E_{x\sim \mc F_{r-1}}[Q(f,\ol{x})\textbf{1}_{x\in \mc Y_{\ge<}}]\\
&= \E_{x\sim \mc F_{r-1}}[Q(f,\ol{x})\textbf{1}_{\ol{x}\in \mc Y_{<\ge}}]\\
&= \E_{x\sim \mc F_{r-1}}[Q(f,x)\textbf{1}_{x\in \mc Y_{<\ge}}]\\
&= s_{<\ge}
\end{align*}

Combining the second and third constraints produces a more useful second constraint, as described in the intuition:

\begin{align*}
&s_{\ge\ge} + (1-\delta/2)s_{<\ge} + (1-\delta/2)s_{\ge<} + (1-\delta)s_{<<}\\
&= s_{\ge\ge} + s_{<\ge} + (1-\delta)s_{\ge<} + (1-\delta)s_{<<} + (\delta/2)(s_{\ge<} - s_{<\ge})\\
&\ge s_{\ge\ge} + s_{<\ge} + (1-\delta)s_{\ge<} + (1-\delta)s_{<<}\\
&\ge \frac{1-P_f}{\Delta}
\end{align*}

In particular,

$$s_{\ge\ge} + (1-\delta/2)(s_{<\ge} + s_{\ge<} + s_{<<}) \ge \frac{1-P_f}{\Delta}$$

Thus,

\begin{align*}
s_{\ge\ge} &= \frac{(s_{\ge\ge} + (1-\delta/2)(s_{<\ge} + s_{\ge<} + s_{<<})) - (1-\delta/2)(s_{\ge\ge} + s_{<\ge} + s_{\ge<} + s_{<<})}{\delta/2}\\
&\ge \frac{(1-P_f)/\Delta - (1-\delta/2)/\Delta}{\delta/2}\\
&= \frac{1-2P_f/\delta}{\Delta}
\end{align*}

and $\displaystyle\Pr_{x\sim \mc F_{r-1}}[x\in \mc X_{r-1}(f,\delta)] \ge s_{\ge\ge} \ge \frac{1 - 2P_f/\delta}{\Delta}$ as desired.
\end{proof}

\subsubsection{Proof of Lemma \ref{lem:r-1-flower-high-strong}}\label{sec:FlowerStrong}

In this section, we prove a stronger version of Proposition \ref{prop:r-1-flower-high-weak} (up to the constant factor 7 vs 2):

\lemroneflowerhighstrong*

We now provide some intuition for this result by comparing it to Proposition \ref{prop:r-1-flower-high-weak}, giving a tight algorithm $f$ for Proposition \ref{prop:r-1-flower-high-weak}, pointing out the slack in the proof of Proposition \ref{prop:r-1-flower-high-weak}, and describing the way that we improve the argument to prove Lemma \ref{lem:r-1-flower-high-strong}.\\

\textbf{Deriving (an inconsequentially-weaker) Proposition \ref{prop:r-1-flower-high-weak} from Lemma \ref{lem:r-1-flower-high-strong}}: The function $h(\tau) := \displaystyle\Pr_{x\sim \mc F_{r-1}}[x\in \mc X_{r-1}(f,\tau)]$ is nondecreasing as a function of $\tau$. Therefore, Proposition \ref{prop:r-1-flower-high-weak} follows immediately from using the fact that maximum is greater than or equal to expectation (with $(1-2P_f/\xi)/\Delta$ replaced by the slightly lower $(1-7P_f/\xi)/\Delta$).\\

\textbf{Deriving a consequentially-weaker version of Lemma \ref{lem:r-1-flower-high-strong} from Proposition \ref{prop:r-1-flower-high-weak}}: Substituting Proposition \ref{prop:r-1-flower-high-weak} into the statement of Lemma \ref{lem:r-1-flower-high-strong} yields\footnote{The first approximation comes from the idea that $\xi$ smaller than the typical (from Proposition \ref{prop:two-side}) value $P_f$ of $1-Q(f,x)$ do not need to be considered and the second approximation follows from integration.}

$$\E_{\tau\sim [0,\xi]}\left[\Pr_{x\sim \mc F_{r-1}}[x\in \mc X_{r-1}(f,\tau)]\right] \ge \E_{\tau\sim [0,\xi]}\left[\frac{1-2P_f/\tau}{\Delta}\right] \approx \E_{\tau\sim [P_f,\xi]}\left[\frac{1-2P_f/\tau}{\Delta}\right] \approx \frac{1 - P_f\log(\xi/P_f)}{\Delta}$$

Plugging this into the rest of the argument towards Lemma \ref{lem:vertex-round-elim} would lead to a weaker vertex survival probability guarantee of $P_g \le O(P_f\log(\xi/P_f)) \approx O(P_f\log(1/P_f))$, which would yield an $\Omega\left(\min\left(\frac{\log\Delta}{\log\log\Delta}, \sqrt{\frac{\log n}{\log\log n}}\right)\right)$-round lower bound, merely matching KMW \cite{KuhnMW04} rather than exceeding it.\\

\textbf{Tight instances for Proposition \ref{prop:r-1-flower-high-weak}}: Towards actually proving Lemma \ref{lem:r-1-flower-high-strong}, we examine situations in which Proposition \ref{prop:r-1-flower-high-weak} is tight, which will help illustrate why the inequality from the $\log(1/P_f)$ proof is not tight. For some $\delta\in (0,1)$, consider an algorithm $f$ that satisfies the following properties:\footnote{An algorithm with these properties (and vertex survival probability $P_f$) may not exist, but that would only make an argument easier.}

\begin{enumerate}
    \item With probability $\frac{1-P_f/(6\delta)}{\Delta(1-P_f/3)}$, $x\sim \mc R_{r-1}$ has $Q(f,x) = Q(f,\ol{x}) = 1-P_f/3$.
    \item With probability $\frac{P_f/(6\delta)}{\Delta(1-2\delta)}$, $x\sim \mc R_{r-1}$ has $Q(f,x) = Q(f,\ol{x}) = 1-2\delta$.
    \item Otherwise, $Q(f,x) = Q(f,\ol{x}) = 0$.
\end{enumerate}

Note that $f$ satisfies the conclusions of Propositions \ref{prop:one-side} and \ref{prop:two-side}:

\begin{align*}
\E_{x\sim \mc F_{r-1}}[Q(f,x)] &= \frac{1-P_f/(6\delta)}{\Delta(1-P_f/3)}(1-P_f/3) + \frac{P_f/(6\delta)}{\Delta(1-2\delta)}(1-2\delta) \le \frac{1}{\Delta}\\
\E_{x\sim \mc F_{r-1}}[Q(f,x)Q(f,\ol{x})] &= \frac{1-P_f/(6\delta)}{\Delta(1-P_f/3)}(1-P_f/3)^2 + \frac{P_f/(6\delta)}{\Delta(1-2\delta)}(1-2\delta)^2\\
&= \frac{(1 - P_f/(6\delta) - P_f/3 + P_f^2/(18\delta)) + (P_f/(6\delta) - P_f/3)}{\Delta}\\
&\ge \frac{1-P_f}{\Delta}
\end{align*}

But it is also the case that $f$ nearly tightly satisfies the conclusion of Proposition \ref{prop:r-1-flower-high-weak} for small enough constant $\delta$:

$$\Pr_{x\sim \mc F_{r-1}}[x\in \mc X_{r-1}(f,\delta)] = \frac{1-P_f/(6\delta)}{\Delta(1-P_f/3)} < \frac{1-(1/7)(P_f/\delta)}{\Delta}$$

Thus, $f$, which we rename to $f_{\delta}$ for the next part, could be problematic for Lemma \ref{lem:r-1-flower-high-strong}.\\

\textbf{The slack in proving the weaker version of Lemma \ref{lem:r-1-flower-high-strong}}: When we proved Lemma \ref{lem:r-1-flower-high-strong} with an additional $\log(1/P_f)$ factor, we invoked Proposition \ref{prop:r-1-flower-high-weak} for all $\tau\in [P_f,\xi]$. However, the tight instances $f_{\delta}$ that we just constructed yield tight $\tau$-good probability bounds only for $\tau \approx \delta$, not for arbitrary $\tau$. In fact, by proving Lemma \ref{lem:r-1-flower-high-strong}, we show that there is no $f$ that is simultaneously tight for all values of $\delta$. Indeed, we can see this lack of tightness for $f_{\delta}$ by computing the probability that a flower is $\tau$-good according to the algorithm $f_{\delta}$, for $\tau \ll \delta$:

$$\Pr_{x\sim \mc F_{r-1}}[x\in \mc X_{r-1}(f_{\delta},\tau)] \ge \frac{1 - P_{f_{\delta}}/(6\delta)}{\Delta(1-P_{f_{\delta}}/3)} > \frac{1 - 2P_{f_{\delta}}/\tau}{\Delta}$$

This slack comes from the derivation in the second constraint in the proof of Proposition \ref{prop:r-1-flower-high-weak}; specifically, when $1 - \delta > Q(f_{\delta},\ol{x})$ is used for $x\in \mc Y_{\ge<}$ and $x\in \mc Y_{<<}$. While this improvement may look small, it is actually quite substantial -- plugging in this improved value into the weak version analysis eliminates the $\log(1/P_{f_{\delta}})$ factor.\\

\textbf{Closing the slack in general}: The previous part suggests that grouping $(r-1)$-flowers by their value of $Q(f,x)$ rather than grouping $x$s by membership in $\mc X_{r-1}(f,\tau)$ could lead to a more efficient analysis. This is indeed the case and can be formalized simply by interchanging the expectation over $[0,\xi]$ on the outside with the expectation over $x\in \mc F_{r-1}$ on the inside.\\

This completes the intuition behind Lemma \ref{lem:r-1-flower-high-strong}. We now restate and prove the result:

\lemroneflowerhighstrong*

\begin{proof}
Start by interchanging expectations and simplifying:

\begin{align*}
\E_{\tau\sim [0,1]}\left[\Pr_{x\sim\mc F_{r-1}}\left[x\in \mc X_{r-1}(f,\tau)\right]\middle|\tau\le\xi\right] &= \E_{\tau\sim [0,\xi]}\left[\E_{x\sim\mc F_{r-1}}\left[\textbf{1}_{Q(f,x) \ge 1-\tau}\textbf{1}_{Q(f,\ol{x}) \ge 1-\tau}\right]\right]\\
&= \E_{x\sim\mc F_{r-1}}\left[\E_{\tau\sim [0,\xi]}\left[\textbf{1}_{Q(f,x) \ge 1-\tau}\textbf{1}_{Q(f,\ol{x}) \ge 1-\tau}\right]\right]\\
&= \E_{x\sim\mc F_{r-1}}\left[\E_{\tau\sim [0,\xi]}\left[\textbf{1}_{1-\min(Q(f,x),Q(f,\ol{x})) \le \tau}\right]\right]\\
&= \E_{x\sim\mc F_{r-1}}\left[\max\left(\frac{\xi - (1 - \min\left(Q(f,x),Q(f,\ol{x})\right))}{\xi}, 0\right)\right]\\
&\ge \E_{x\sim\mc F_{r-1}}\left[\max\left(\frac{Q(f,x)Q(f,\ol{x}) - (1-\xi)}{\xi}, 0\right)\right]\\
&\ge \E_{x\sim\mc F_{r-1}}\left[\textbf{1}_{x\in \mc X_{r-1}(f,\xi/2)}\max\left(\frac{Q(f,x)Q(f,\ol{x}) - (1-\xi)}{\xi}, 0\right)\right]\\
&= \E_{x\sim\mc F_{r-1}}\left[\textbf{1}_{x\in \mc X_{r-1}(f,\xi/2)}\left(\frac{Q(f,x)Q(f,\ol{x}) - (1-\xi)}{\xi}\right)\right]\\
&= \E_{x\sim\mc F_{r-1}}\left[\textbf{1}_{x\in \mc X_{r-1}(f,\xi/2)}\left(1 - \frac{1 - Q(f,x)Q(f,\ol{x})}{\xi}\right)\right]\\
\end{align*}

Applying Proposition \ref{prop:r-1-flower-high-weak} with $\delta\gets\xi/2$ shows that

$$\E_{x\sim\mc F_{r-1}}\left[\textbf{1}_{x\in \mc X_{r-1}(f,\xi/2)}\right] = \Pr_{x\sim \mc F_{r-1}}[x\in \mc X_{r-1}(f,\xi/2)] \ge \frac{1-4 P_f/\xi}{\Delta}$$

By Propositions \ref{prop:one-side} and \ref{prop:two-side},

\begin{align*}
\E_{x\sim\mc F_{r-1}}\left[\textbf{1}_{x\in \mc X_{r-1}(f,\xi/2)}Q(f,x)\left(1 -Q(f,\ol{x})\right)\right] &\le \E_{x\sim\mc F_{r-1}}\left[Q(f,x)\left(1 -Q(f,\ol{x})\right)\right]\\
&\le \frac{1}{\Delta} - \frac{1-P_f}{\Delta}\\
&\le \frac{P_f}{\Delta}
\end{align*}

This means, by Proposition \ref{prop:edge-flip} applied to the random variable $\textbf{1}_{x\in \mc X_{r-1}(f,\xi/2)}Q(f,\ol{x})\left(1-Q(f,x)\right)$, that

\begin{align*}
(1-\xi/2)\E_{x\sim\mc F_{r-1}}\left[\textbf{1}_{x\in \mc X_{r-1}(f,\xi/2)}\left(1 -Q(f,x)\right)\right] &\le \E_{x\sim\mc F_{r-1}}\left[\textbf{1}_{x\in \mc X_{r-1}(f,\xi/2)}Q(f,\ol{x})\left(1 - Q(f,x)\right)\right]\\
&= \E_{x\sim\mc F_{r-1}}\left[\textbf{1}_{x\in \mc X_{r-1}(f,\xi/2)}Q(f,x)\left(1 - Q(f,\ol{x})\right)\right]\\
&\le \frac{P_f}{\Delta}
\end{align*}

Rearranging and summing both inequalities shows that

$$\E_{x\sim\mc F_{r-1}}\left[\textbf{1}_{x\in \mc X_{r-1}(f,\xi/2)}\left(1-Q(f,x)Q(f,\ol{x})\right)\right] \le \frac{P_f}{\Delta} + \frac{P_f}{(1-\xi/2)\Delta}\le \frac{3P_f}{\Delta}$$

Dividing by $\xi$ and substituting shows that

$$\E_{\tau\sim [0,1]}\left[\Pr_{x\sim\mc F_{r-1}}\left[x\in \mc X_{r-1}(f,\tau)\right]\middle|\tau\le\xi\right] \ge \frac{1-4P_f/\xi}{\Delta} - \frac{3P_f}{\xi\Delta} \ge \frac{1 - 7P_f/\xi}{\Delta}$$

as desired.
\end{proof}

\subsection{Dominant Directions in $(r-1)$-Neighborhoods (Proof of Lemma \ref{lem:r-1-neighborhood-low-common})}\label{subsec:r-1-neighborhood-low}

In this section, we show the existence of a dominant direction (Lemma \ref{lem:r-1-neighborhood-low-common}), which we restate here for convenience:

\lemroneneighborhoodlowcommon*

\textbf{Intuition}: We prove the desired result by contradiction. Suppose for the sake of contradiction that

$$\sum_{j\in [\Delta], j\ne i} P_j(f,x,\delta_j) > F_1(\delta_i)$$

for all $i\in [\Delta]$. In Lemma \ref{lem:r-1-neighborhood-low-common}, we are extending an $(r-1)$-neighborhood $x$ in $\Delta$ different directions to obtain an $r$-neighborhood. One thing to note about these $\Delta$ different extensions is that they are taking place on disjoint sets of edges (they only overlap in $x$) and are thus independent. We obtain a contradiction by showing that $\text{dir}(f,x)$ actually has to point in two different directions if all $\Delta$ sums are high. We obtain this contradiction as follows:

\begin{enumerate}
    \item Key Question: Does there exist an $r$-neighborhood $z$ that extends $x$ for which both
    \begin{enumerate}
        \item at least two distinct $i\in [\Delta]$ have the property that the $(r-1)$-flower in direction $i$ ($\text{res}_i(z)$) is $\delta_i$-good
        \item all $i\in [\Delta]$ for which $\text{res}_i(z)$ is $\delta_i$-good have the property that $\text{dir}(f,z)$ points in direction $i$ ($\text{dir}(f,z) = i$)?
    \end{enumerate}
    \item If this happens, we immediately get a contradiction -- combining the two properties shows that $\text{dir}(f,z)$ points in at least two distinct directions, contradicting the definition of $\text{dir}$/Proposition \ref{prop:r-ngbr-exclusive}.
    \item We can prove the existence of such a $z$ using the probabilistic method:
    \begin{enumerate}
        \item $\Pr[\text{(a) holds}] \approx \sum_{j\ne i\in [\Delta]} P_i(f,x,\delta_i)P_j(f,x,\delta_j)$, as we can roughly speaking lower bound the probability that at least 2 directions are good by summing over all possible pairs of directions. Independence of the $\Delta$ different directions is used to lower bound each of these terms.\footnote{Terms involving 3 or more indices do matter, making the actual probability calculation more complicated, but this is the rough idea.}
        \item By the contradiction assumption, $\sum_{j\ne i\in [\Delta]} P_i(f,x,\delta_i)P_j(f,x,\delta_j) \ge \sum_{i\in [\Delta]} P_i(f,x,\delta_i)F_1(\delta_i)$.
        \item $\Pr[\text{(b) does not hold} | \text{(a) holds}] \le \frac{\Pr[\text{(b) does not hold}]}{\Pr[\text{(a) holds}]}$. The previous step lower bounded the denominator. It suffices to show that this probability is less than 1.
        \item To upper bound the numerator, we union bound: \begin{align*}\Pr[\text{(b) does not hold}] &\le \sum_{i=1}^{\Delta} \Pr[\text{(b) does not hold for direction } i]\\ &= \sum_{i=1}^{\Delta} \Pr[\text{dir}(f,z) \ne i \mid \text{res}_i(z) \text{ is $\delta_i$-good}]\Pr[\text{res}_i(z) \text{ is $\delta_i$-good}]\end{align*}
        \item We bound each of the two terms separately:
        \begin{enumerate}
            \item Left Term: Here, we are extending the $i$-direction $(r-1)$-flower to an $r$-neighborhood. Since the $(r-1)$-flower in that direction is $\delta_i$-good, $$\Pr[\text{dir}(f,z) \ne i \mid \text{res}_i(z) \text{ is $\delta_i$-good}] \le \delta_i$$ by definition of $\delta_i$-goodness.
            \item Right Term: Here, we are extending the $(r-1)$-neighborhood $x$ in direction $i$. This flower is $\delta_i$-good by definition with probability $P_i(f,x,\delta_i)$.
            \item Thus, $$\Pr[\text{(b) does not hold}] \le \sum_{i=1}^{\Delta} \delta_i P_i(f,x,\delta_i)$$
        \end{enumerate}
        \item Therefore, $$\Pr[\text{(b) does not hold} | \text{(a) holds}] \le \frac{\sum_{i=1}^{\Delta} \delta_i P_i(f,x,\delta_i)}{\sum_{i=1}^{\Delta}F_1(\delta_i)P_i(f,x,\delta_i)} < 1$$ as desired.
        \item \textbf{To summarize}, for each $i\in [\Delta]$ for which the $i$ direction $(r-1)$-flower is $\delta_i$-good, the set of $r$-neighborhood extensions $z$ that have at least one other good direction (probability mass $>F_1(\delta_i)$) overlaps with the set of extensions $z$ for which $\text{dir}(f,z) = i$ (probability mass $>1-\delta_i$).
    \end{enumerate}
\end{enumerate}

\textbf{Roadmap}: We break the proof up into three parts, which are done in three separate subsubsections:

\begin{enumerate}
    \item Abstract away problem-specific details of Lemma \ref{lem:r-1-neighborhood-low-common} to obtain a problem that only deals with coordinate tuples (Proposition \ref{prop:distrib}, Section \ref{sec:DistribReduce}).
    \item Prove a crude bound on the sum of all $P_i$s first (Proposition \ref{prop:crude-distrib}) to make further bounds easier later on. This part uses the fact that the $\delta_i$s are at most 1/2 and does not use independence of the directions. (Section \ref{sec:DistribCrude})
    \item Prove the desired conditional probability upper bound to obtain the desired result (Proposition \ref{prop:distrib}) using independence of the directions. (Section \ref{sec:DistribFinal})
\end{enumerate}

\subsubsection{Abstracting Away Neighborhoods and Flowers (Proposition \ref{prop:distrib} Statement, Lemma \ref{lem:r-1-neighborhood-low-common} Proof)}\label{sec:DistribReduce}

We simplify Lemma \ref{lem:r-1-neighborhood-low-common} down to the following proposition, where $\Omega_i$ represents the set of $(r-1)$-flower extensions of $x$ in direction $i$, $\Theta_i$ represents (a slight superset of) the set of $\delta_i$-good flowers in direction $i$, and $h(z) = \text{dir}(f,z)$:

\begin{prop}\label{prop:distrib}
Let $\Omega_1,\hdots,\Omega_{\Delta}$ be arbitrary probability spaces and let $\Omega = \Omega_1\times \Omega_2\times \hdots\Omega_{\Delta}$. Let $h:\Omega\rightarrow [[\Delta]]$\footnote{Recall that $[[\Delta]] = \{0,1,\hdots,\Delta\}$.} and $\delta_1,\delta_2,\hdots,\delta_{\Delta} \in [0,1/2]$. Define

$$\Theta_i := \{y\in \Omega_i : \Pr_{z\sim \Omega}[h(z) = i \mid z_i = y] \ge 1 - \delta_i\}$$

Let $\theta_i := \Pr_{y\sim \Omega_i}[y\in \Theta_i]$. Then, there exists $i^*\in [\Delta]$ such that

$$\sum_{i\in [\Delta], i\ne i^*} \theta_i \le F_1(\delta_{i^*}) = 6e^4\delta_{i^*}$$
\end{prop}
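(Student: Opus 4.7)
The plan is to prove Proposition~\ref{prop:distrib} by contradiction. Suppose, for contradiction, that $T - \theta_{i^*} > 6e^4 \delta_{i^*}$ for every $i^* \in [\Delta]$, where $T := \sum_i \theta_i$. Sample $z = (z_1,\ldots,z_\Delta)$ with the $z_i \sim \Omega_i$ independent and let $S(z) := \{i : z_i \in \Theta_i\}$. Consider the events $A := \{|S(z)| \ge 2\}$ and $B := \{h(z) = i \text{ for every } i \in S(z)\}$. Since $h(z)$ is single-valued, $A \cap B = \emptyset$, and hence $\Pr[A \cap B] \ge \Pr[A] - \Pr[B^c]$ implies that it suffices to derive $\Pr[A] > \Pr[B^c]$ to obtain the required contradiction.

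For the upper bound on $\Pr[B^c]$, I would use a union bound: the definition of $\Theta_i$ gives $\Pr[h(z) \ne i \mid z_i = y] \le \delta_i$ for every $y \in \Theta_i$, so integrating yields $\Pr[z_i \in \Theta_i, h(z) \ne i] \le \theta_i \delta_i$, and therefore $\Pr[B^c] \le \sum_i \theta_i \delta_i$. The contradiction assumption converts this to $\Pr[B^c] < (T^2 - \sum_i \theta_i^2)/(6e^4)$. I would also record the crude bound (Proposition~\ref{prop:crude-distrib}), obtained from disjointness of the events $\{h(z) = i, z_i \in \Theta_i\}$: $\sum_i (1-\delta_i)\theta_i \le 1$, which combined with $\delta_i \le 1/2$ gives $T \le 2$.

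For the lower bound on $\Pr[A]$, I would split into two cases. When $\max_i \theta_i \le 1/2$, the disjointness of the events $\{S(z) = \{i,j\}\}$ gives $\Pr[A] \ge \sum_{i<j} \theta_i \theta_j \prod_{k \ne i,j}(1-\theta_k)$, and combined with $1-x \ge e^{-2x}$ for $x \in [0, 1/2]$ this yields $\Pr[A] \ge \tfrac{1}{2} e^{-2T}(T^2 - \sum_i \theta_i^2) \ge \tfrac{1}{2} e^{-4} (T^2 - \sum_i \theta_i^2)$, which exceeds the bound on $\Pr[B^c]$ by a factor of $6 e^4 \cdot \tfrac{1}{2} e^{-4} = 3$, and hence $\Pr[A] > \Pr[B^c]$ in this regime.

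The main obstacle is the regime where some $\theta_{i_0} > 1/2$, because then $\prod_k (1-\theta_k)$ can be arbitrarily small and the Case~A estimate breaks down. My plan is to first sharpen the crude bound using the contradiction assumption: combining $\sum_i \theta_i \delta_i \ge T - 1$ (from $\sum_i \theta_i(1-\delta_i) \le 1$) with the earlier estimate $\sum_i \theta_i \delta_i < T^2/(6e^4)$ forces $T \le 1 + O(1/e^4)$, so in particular $T' := T - \theta_{i_0}$ is bounded by a small constant less than $1$. I would then exploit only the $i_0$-direction: $\Pr[A] \ge \theta_{i_0}\bigl(1 - \prod_{j\ne i_0}(1 - \theta_j)\bigr) \ge \theta_{i_0}(1 - e^{-T'}) \ge \theta_{i_0} T'/2 \ge T'/4$, using $1 - e^{-x} \ge x/2$ on the relevant range. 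Meanwhile the contradiction assumption applied coordinate-wise bounds $\sum_i \theta_i \delta_i < T'(\theta_{i_0} + T)/(6e^4) \le T'(1+T)/(6e^4) \ll T'/4$, again giving $\Pr[A] > \Pr[B^c]$ and closing the contradiction. The delicate step is verifying that the constants align in both regimes, which I expect is what drives the specific choice $F_1(\delta) = 6e^4 \delta$.
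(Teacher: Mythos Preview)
Your argument is correct and follows the same contradiction skeleton as the paper: define $S(z)=\{i:z_i\in\Theta_i\}$, upper-bound $\Pr[\exists i\in S(z):h(z)\ne i]$ by $\sum_i\theta_i\delta_i$, lower-bound $\Pr[|S(z)|\ge 2]$, and compare. The difference is in how the two handle large $\theta_i$. The paper introduces $\Gamma=\{i:\theta_i>1/2\}$, observes $|\Gamma|\le 3$ from the crude bound, and lower-bounds $\Pr[|S(z)|\ge 2]$ by summing $\Pr[S(z)=\{i,j\}\cup\Gamma]$ over all pairs; for indices in $\Gamma$ it replaces $1-\theta_k$ by $\theta_k$ (which is larger) so that the exponential bound $e^{-2\theta_k}$ applies uniformly, yielding a single clean estimate $\Pr[|S(z)|\ge 2]\ge\frac{e^{-4}}{6}\sum_i\theta_i F_1(\delta_i)$. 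You instead feed the contradiction hypothesis back into the crude bound to get $T<1+4/(6e^4)$, which forces at most one index $i_0$ with $\theta_{i_0}>1/2$, and then bound $\Pr[|S(z)|\ge 2]$ in that case by the simpler $\theta_{i_0}\bigl(1-\prod_{j\ne i_0}(1-\theta_j)\bigr)$. Your route avoids the $\Gamma$-bookkeeping and the factor $1/3$ from overcounting, at the cost of a two-case split and the extra bootstrapping step; the paper's route gives a single unified computation but needs the combinatorial observation about $\{i,j\}\cup\Gamma$. Both land comfortably inside the constant $6e^4$.
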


\begin{proof}[Proof of Lemma \ref{lem:r-1-neighborhood-low-common} given Proposition \ref{prop:distrib}]
Apply Proposition \ref{prop:distrib} with the following parameter choices:

\begin{enumerate}
    \item For any $i\in [\Delta]$, $\Omega_i\gets $ the set of all $y\in \mc F_{r-1}$ with the property that $\text{end}_A(y) = \sigma_i(x)$.
    \item For any $z\in \Omega$, define $h(z) \gets \text{dir}(f,\text{glue}(z_1,z_2,\hdots,z_{\Delta}))$.
    \item For any $i\in [\Delta]$, define $\delta_i\gets\delta_i\text{  }(\in [0,1/2])$.
\end{enumerate}

For each $i\in [\Delta]$, applying Fact \ref{fact:same-dir} (second equality below) and Proposition \ref{prop:r-to-r-sample} with $\zeta\gets y$ and $X(w)\gets \textbf{1}_{\text{dir}(f,w) = i}$ (third equality below) shows that

\begin{align*}
P_i(f,x,\delta_i) &= \Pr_{y\sim \mc F_{r-1}}[y\in \mc X_{r-1}(f,\delta_i) \mid \text{end}_A(y) = \sigma_i(x)]\\
&\le \Pr_{y\sim \mc F_{r-1}}\left[\Pr_{z\sim \mc R_r}[\text{dir}(f,z) = 1 \mid \text{res}_1(z) = y] \ge 1-\delta_i \mid \text{end}_A(y) = \sigma_i(x)\right]\\
&= \Pr_{y\sim \mc F_{r-1}}\left[\Pr_{z\sim \mc R_r}[\text{dir}(f,z) = i \mid \text{res}_i(z) = y] \ge 1-\delta_i \mid \text{end}_A(y) = \sigma_i(x)\right]\\
&= \Pr_{y\sim \Omega_i}\left[\Pr_{z\sim \Omega}[h(z) = i \mid z_i = y] \ge 1-\delta_i\right]\\
&= \theta_i
\end{align*}

Thus, Proposition \ref{prop:distrib} implies that there is an $i^*\in [\Delta]$ for which

$$\sum_{i\in [\Delta], i\ne i^*} P_i(f,x,\delta_i) \le \sum_{i\in [\Delta], i\ne i^*} \theta_i \le F_1(\delta_{i^*})$$

as desired.
\end{proof}

\subsubsection{A Crude Bound (Proposition \ref{prop:crude-distrib})}\label{sec:DistribCrude}

It will be helpful to have the following crude upper bound on the $\theta_i$s to be bootstrapped later. We prove this by lower bounding the expected number of directions that $h$ must point in:

\begin{prop}\label{prop:crude-distrib}
Let $\Omega_1,\Omega_2,\hdots,\Omega_{\Delta}$ be arbitrary probability spaces, $h:\Omega\rightarrow[[\Delta]]$, and $\delta_1,\delta_2,\hdots,\delta_{\Delta}\in [0,1/2]$. Define $\Omega$, $\Theta_1,\hdots,\Theta_{\Delta}$ and $\theta_1,\hdots,\theta_{\Delta}$ using the $\Omega_i$s, $h$, and $\delta_i$s as in Proposition \ref{prop:distrib}. Then

$$\sum_{i=1}^{\Delta} \theta_i \le \frac{1}{1-\displaystyle\max_{i\in[\Delta]}\delta_i} \le 2$$
\end{prop}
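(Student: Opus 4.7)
The plan is to exploit the disjointness of the events $\{h(z)=i\}$ across different $i\in[\Delta]$. Since $h(z)$ takes a single value in $[[\Delta]]$, the events $E_i := \{z\in\Omega : z_i\in \Theta_i \text{ and } h(z)=i\}$ for $i\in[\Delta]$ are pairwise disjoint, and hence $\sum_{i=1}^{\Delta} \Pr_{z\sim\Omega}[E_i] \le 1$. So it suffices to lower bound each $\Pr[E_i]$ in terms of $\theta_i$.

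To bound $\Pr[E_i]$, I would use the tower property of conditional expectation. Concretely,
\[
\Pr_{z\sim\Omega}[E_i] = \E_{y\sim\Omega_i}\!\left[\textbf{1}_{y\in\Theta_i}\cdot\Pr_{z\sim\Omega}[h(z)=i \mid z_i=y]\right] \ge \E_{y\sim\Omega_i}\!\left[\textbf{1}_{y\in\Theta_i}(1-\delta_i)\right] = (1-\delta_i)\theta_i,
\]
where the inequality uses the defining property of $\Theta_i$ (conditional on $z_i=y\in\Theta_i$, the probability that $h(z)=i$ is at least $1-\delta_i$).

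Summing over $i\in[\Delta]$ and combining with the disjointness bound gives
\[
\sum_{i=1}^{\Delta}(1-\delta_i)\theta_i \le 1.
\]
Since $\delta_i \le \max_{j\in[\Delta]}\delta_j \le 1/2$, we have $1-\delta_i \ge 1-\max_j \delta_j \ge 1/2$ for each $i$. Factoring this uniform lower bound out of the left-hand side yields
\[
\left(1-\max_{j\in[\Delta]}\delta_j\right)\sum_{i=1}^{\Delta}\theta_i \le \sum_{i=1}^{\Delta}(1-\delta_i)\theta_i \le 1,
\]
and rearranging gives $\sum_i \theta_i \le 1/(1-\max_j\delta_j) \le 2$, as desired.

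There is no real obstacle here; the only thing to be careful about is that $\Theta_i$ is defined in terms of a conditional probability, so applying the tower property cleanly requires noting that $\textbf{1}_{y\in\Theta_i}$ is a function of $y=z_i$ alone, which lets it be pulled inside the conditional expectation. Everything else is elementary.
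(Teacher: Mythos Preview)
Your proof is correct and essentially identical to the paper's. The paper phrases the disjointness step as $\sum_i \Pr[h(y)=i]\le 1$ and then lower-bounds each $\Pr[h(y)=i]$ by conditioning on $y_i\in\Theta_i$, arriving at the same inequality $\sum_i(1-\delta_i)\theta_i\le 1$ and the same final rearrangement.
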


\begin{proof}
For $y\in \Omega$, let $S_y := \{i\in [\Delta]: i = h(y)\}$. Since $h$ is a function, $|S_y| \le 1$ for all $y\in \Omega$. Thus,

$$\E_{y\sim \Omega}[|S_y|] \le 1$$

However, we can also write this expectation as a sum over coordinates by linearity of expectation:

$$\sum_{i=1}^{\Delta} \Pr_{y\sim \Omega}[h(y) = i] = \sum_{i=1}^{\Delta} \Pr_{y\sim \Omega}[i\in S_y] = \E_{y\sim \Omega}\left[\sum_{i=1}^{\Delta} \textbf{1}_{i\in S_y}\right] = \E_{y\sim \Omega}[|S_y|] \le 1$$

For any fixed $i\in [\Delta]$, we can sample $y$ by first sampling $x\sim \Omega_i$ and then sampling $y\sim \Omega$ conditioned on $y_i = x$. This means that

\begin{align*}
\Pr_{y\sim \Omega}[h(y) = i] &= \E_{x\sim \Omega_i}\left[\Pr_{y\sim \Omega}[h(y) = i \mid y_i = x]\right]\\
&\ge \E_{x\sim \Omega_i}\left[\Pr_{y\sim \Omega}[h(y) = i \mid y_i = x]\textbf{1}_{x\in \Theta_i}\right]\\
&\ge (1 - \delta_i)\Pr_{x\sim \Omega_i}[x\in \Theta_i]\\
&= (1 - \delta_i)\theta_i
\end{align*}

Plugging this inequality in shows that

$$\left(1-\max_{i\in [\Delta]}\delta_i\right)\sum_{i=1}^{\Delta} \theta_i \le \sum_{i=1}^{\Delta} (1-\delta_i)\theta_i \le 1$$

as desired.
\end{proof}

\subsubsection{Final Bound (Proof of Proposition \ref{prop:distrib})}\label{sec:DistribFinal}

\begin{prop}[Folklore]\label{prop:eapx}
For any $x\in [0,1)$,

$$e^{-x/(1-x)} \le 1-x \le e^{-x}$$
\end{prop}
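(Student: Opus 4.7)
My plan is to prove the two inequalities separately, observing that the left inequality reduces to the right one via a substitution.

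For the right inequality $1 - x \le e^{-x}$, I would give a standard one-line derivative argument: define $h(x) := e^{-x} - (1 - x)$, note $h(0) = 0$, and observe $h'(x) = 1 - e^{-x} \ge 0$ for all $x \ge 0$, so $h$ is nondecreasing on $[0, \infty)$ and hence nonnegative on $[0, 1)$. (Equivalently, this can be obtained from the Taylor series $-\ln(1 - x) = \sum_{k \ge 1} x^k / k \ge x$ for $x \in [0, 1)$, or viewed as the well-known inequality $1 + t \le e^t$ with $t = -x$.)

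For the left inequality $e^{-x/(1-x)} \le 1 - x$, I would reduce it to the right inequality via the substitution $t := x / (1 - x)$. For $x \in [0, 1)$ this gives a bijection with $t \in [0, \infty)$, and we have the algebraic identity $1 - x = 1/(1 + t)$. So the desired inequality rewrites as $e^{-t} \le 1/(1 + t)$, or equivalently $1 + t \le e^t$, which is exactly the right inequality applied at the point $t \ge 0$ (read as $1 - (-t) \le e^{-(-t)}$, and this is valid for all real arguments).

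There is no real obstacle: both parts are folklore and amount to one-line arguments once the substitution $t = x / (1 - x)$ is made. The only thing worth double-checking is that the substitution is well-defined on the stated domain $x \in [0, 1)$, which it is: $1 - x > 0$ ensures $t \ge 0$ and $e^{-x/(1-x)}$ is a finite positive number. This completes the proof.
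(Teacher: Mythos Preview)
Your proof is correct and essentially the same as the paper's. For the upper bound, the paper uses the alternating Taylor series of $e^{-x}$ (pairing successive terms) rather than a derivative argument, but these are interchangeable one-liners; for the lower bound, the paper does exactly your substitution, writing $e^{x/(1-x)} \ge 1 + \frac{x}{1-x} = \frac{1}{1-x}$ and taking reciprocals.
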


\begin{proof}
For the upper bound, for any $k\ge 2$, $\frac{x^k}{k!} \ge \frac{x^{k+1}}{(k+1)!}$, so $1-x \le \sum_{k=0}^{\infty} \frac{(-1)^kx^k}{k!} = e^{-x}$. For the lower bound, $x/(1-x)$ is positive, so $e^{x/(1-x)} \ge 1 + \frac{x}{1-x} = \frac{1}{1-x}$, as desired.
\end{proof}

We now prove the desired result using the intuition stated at the beginning of this section:

\begin{proof}[Proof of Proposition \ref{prop:distrib}]
Assume, for the sake of contradiction, that

$$\sum_{j\in [\Delta], j\ne i} \theta_j > F_1(\delta_i)$$

for all $i\in [\Delta]$. For $z\in \Omega$, let $I_z\subseteq [\Delta]$ denote the set of all $i\in [\Delta]$ for which $z_i\in \Theta_i$. For $z\in \Omega$, let $J_z\subseteq I_z$ denote the set of all $i\in I_z$ for which $h(z) = i$. We will obtain a contradiction by sampling $z$ uniformly from $\Omega$ conditioned on $|I_z| \ge 2$ and showing that there is a nonzero probability that $J_z = I_z$, contradicting the fact that $h$ has one output.

We start by lower bounding the denominator of the relevant conditional probability. Let $\Gamma \subseteq [\Delta]$ be the set of all $i\in [\Delta]$ for which $\theta_i > 1/2$. Proposition \ref{prop:crude-distrib} implies that $|\Gamma| \le 3$, so the list of sets of the form $\{i,j\}\cup \Gamma$ for all $1\le i < j\le \Delta$ repeats a set at most 3 times. For all $i\in \Gamma$, $\theta_i \ge 1-\theta_i$. Thus,

\begin{align*}
\Pr_{z\sim \Omega}[|I_z| \ge 2] &\ge \sum_{i,j\in [\Delta], i < j} \frac{1}{3}\Pr[I_z = \{i,j\} \cup \Gamma]\\
&= \frac{1}{3}\sum_{i,j\in [\Delta], i < j} \theta_i \theta_j \prod_{k\in \Gamma\setminus \{i,j\}} \theta_k \prod_{k\in [\Delta]\setminus \Gamma, k\ne i,j} (1 - \theta_k)\\
&\ge \frac{1}{3}\sum_{i,j\in [\Delta], i < j} \theta_i \theta_j \prod_{k\in \Gamma\setminus \{i,j\}} e^{-(1-\theta_k)/\theta_k} \prod_{k\in [\Delta]\setminus \Gamma, k\ne i,j} e^{-\theta_k/(1-\theta_k)}\\
&\ge \frac{1}{3}\sum_{i,j\in [\Delta], i < j} \theta_i \theta_j \prod_{k\in \Gamma\setminus \{i,j\}} e^{-2(1-\theta_k)} \prod_{k\in [\Delta]\setminus \Gamma, k\ne i,j} e^{-2\theta_k}\\
&\ge \frac{1}{3}\sum_{i,j\in [\Delta], i < j} \theta_i \theta_j \prod_{k\in [\Delta], k\ne i,j} e^{-2\theta_k}\\
&\ge \frac{1}{3}e^{-4} \sum_{i,j\in [\Delta], i < j} \theta_i \theta_j\\
&= \frac{1}{6}e^{-4} \sum_{i,j\in [\Delta], i\ne j} \theta_i \theta_j\\
&= \frac{1}{6}e^{-4} \sum_{i=1}^{\Delta} \theta_i\sum_{j\in [\Delta], j\ne i} \theta_j\\
&\ge \frac{1}{6}e^{-4} \sum_{i=1}^{\Delta} \theta_i F_1(\delta_i)\\
\end{align*}

where the first inequality follows from Proposition \ref{prop:eapx} and the second-to-last follows from Proposition \ref{prop:crude-distrib}. This is the desired lower bound for the denominator. Next, we upper bound the numerator of the relevant conditional probability using a union bound:

\begin{align*}
\Pr_{z\sim \Omega}[J_z \ne I_z] &\le \sum_{i=1}^{\Delta} \Pr_{z\sim \Omega}[i\notin J_z \text{ and } i\in I_z]\\
&= \sum_{i=1}^{\Delta} \E_{y\sim \Omega_i}\left[\Pr_{z\sim \Omega}[i\notin J_z \text{ and } i\in I_z \mid z_i = y]\right]\\
&= \sum_{i=1}^{\Delta} \E_{y\sim \Omega_i}\left[\Pr_{z\sim \Omega}[i\notin J_z \text{ and } z_i\in \Theta_i \mid z_i = y]\right]\\
&= \sum_{i=1}^{\Delta} \E_{y\sim \Omega_i}\left[\textbf{1}_{y\in \Theta_i}\Pr_{z\sim \Omega}[i\notin J_z \mid z_i = y]\right]\\
&\le \sum_{i=1}^{\Delta} \E_{y\sim \Omega_i}\left[\textbf{1}_{y\in \Theta_i}\delta_i\right]\\
&= \sum_{i=1}^{\Delta} \theta_i\delta_i
\end{align*}

Thus,

$$\Pr_{z\sim \Omega}[J_z = I_z \mid |I_z| \ge 2] \ge 1 - \frac{\Pr_{z\sim\Omega}[J_z \ne I_z]}{\Pr_{z\sim\Omega}[|I_z|\ge 2]} \ge 1 - \frac{\sum_{i=1}^{\Delta} \theta_i\delta_i}{(e^{-4}/6)\sum_{i=1}^{\Delta} \theta_iF_1(\delta_i)} > 0$$

which means that there exists a $z\in \Omega$ for which $J_z = I_z$ and $|I_z| \ge 2$. But this $z$ also has the property that $|J_z| \ge 2$, meaning that $h(z)$ must have two different values, a contradiction. Thus, the desired result holds.
\end{proof}

\subsection{Bound on Expected $\delta_{\text{dom}}$ (Proof of Proposition \ref{prop:complement-prob})}\label{sec:complement-prob}

To prove Proposition \ref{prop:complement-prob}, we need to show that most $(r-1)$-neighborhoods $x$ have a low value of $\delta_{\text{dom}}(f,x)$. As a warmup\footnote{This proposition is not used anywhere to prove Theorem \ref{thm:main}.} to illustrate the key ideas behind Proposition \ref{prop:complement-prob}, we show a tail bound on $\delta_{\text{dom}}$, i.e. that it is $O(P_f)$ with constant probability:

\begin{prop}\label{prop:weak-complement-prob}
For any $\tau\in (0,1/2)$,

$$\Pr_{x\sim \mc R_{r-1}}[\delta_{\text{dom}}(f,x) > \tau] = \Pr_{x\sim \mc R_{r-1}}[x\notin \mc R_{r-1}(f,\tau)] \le 600e^4(P_f/\tau + \tau)$$

In particular, when $P_f \le 1/(60000e^4)^2$,

$$\Pr_{x\sim \mc R_{r-1}}[\delta_{\text{dom}}(f,x) > 60000e^4 P_f] \le 1/25$$
\end{prop}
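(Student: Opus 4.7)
The plan is to show that $1 - P_{\max}(f,x,\tau)$ is small in expectation over $x \sim \mc R_{r-1}$ and then apply Markov's inequality, leveraging the two opposing bounds we already have: Proposition \ref{prop:r-1-flower-high-weak}, which forces the total mass of good flowers to be close to $1/\Delta$, and Proposition \ref{prop:r-1-neighborhood-low-weak}, which forces all but one direction out of any $(r-1)$-neighborhood to contribute negligibly to that mass. Concretely, I would first invoke Proposition \ref{prop:r-to-f-sample} with the random variable $\mathbf{1}_{y \in \mc X_{r-1}(f,\tau)}$, to rewrite
\[
\Pr_{y\sim \mc F_{r-1}}[y \in \mc X_{r-1}(f,\tau)] \;=\; \frac{1}{\Delta}\,\E_{x \sim \mc R_{r-1}}\!\left[\,P(f,x,\tau)\,\right].
\]
Combined with Proposition \ref{prop:r-1-flower-high-weak} this yields $\E_{x \sim \mc R_{r-1}}[P(f,x,\tau)] \ge 1 - 2 P_f / \tau$.

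Next, I would apply Proposition \ref{prop:r-1-neighborhood-low-weak} pointwise: for every $x \in \mc R_{r-1}$ we have $P(f,x,\tau) - P_{\max}(f,x,\tau) = P_{\text{comp}}(f,x,\tau) \le 6 e^4 \tau$. Taking expectations and rearranging gives
\[
\E_{x \sim \mc R_{r-1}}\!\left[\,1 - P_{\max}(f,x,\tau)\,\right] \;\le\; \frac{2 P_f}{\tau} + 6 e^4 \tau,
\]
where I use that $P_{\max}(f,x,\tau) \le 1$, since it is a probability, so the random variable $1 - P_{\max}(f,x,\tau)$ is nonnegative.

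Now I would apply Markov's inequality to this nonnegative random variable at threshold $1/100 = 1 - C_4$:
\[
\Pr_{x \sim \mc R_{r-1}}\!\left[\,1 - P_{\max}(f,x,\tau) > \tfrac{1}{100}\,\right] \;\le\; 100\left(\frac{2 P_f}{\tau} + 6 e^4 \tau\right) \;\le\; 600 e^4\left(\frac{P_f}{\tau} + \tau\right).
\]
The event on the left is precisely $P_{\max}(f,x,\tau) < C_4$, i.e.\ $x \notin \mc R_{r-1}(f,\tau)$, and by monotonicity of $P_{\max}(f,x,\cdot)$ in $\delta$ together with the definition of $\delta_{\text{dom}}(f,x)$ as an infimum, this is in turn equivalent to $\delta_{\text{dom}}(f,x) > \tau$ (up to the measure-zero boundary which does not affect the probability). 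The second part of the statement, with $\tau = 60000 e^4 P_f$, is then a direct substitution giving $600 e^4 \cdot (1/(60000 e^4) + 60000 e^4 P_f) \le 1/25$ under the hypothesis $P_f \le 1/(60000 e^4)^2$.

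There is no real obstacle here — this is essentially a Markov-style tail bound that packages together the two main estimates of the preceding sections. The only subtle point worth flagging is the equivalence between $\delta_{\text{dom}}(f,x) > \tau$ and $x \notin \mc R_{r-1}(f,\tau)$, which relies on the fact that once $P_{\max}(f,x,\tau) < C_4$ is witnessed at some $\tau$, monotonicity in $\delta$ rules out membership in $\mc R_{r-1}(f,\delta)$ for all $\delta \le \tau$ as well, so the infimum defining $\delta_{\text{dom}}$ lies at or above $\tau$; the event on the right-hand side coincides with $\{P_{\max}(f,x,\tau) < C_4\}$, which is exactly what the Markov step bounds.
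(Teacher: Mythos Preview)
Your proposal is correct and follows essentially the same approach as the paper: apply Markov's inequality to $1-P_{\max}(f,x,\tau)$, split this as $(1-P(f,x,\tau))+P_{\text{comp}}(f,x,\tau)$, bound the first term via Proposition~\ref{prop:r-to-f-sample} combined with Proposition~\ref{prop:r-1-flower-high-weak}, and bound the second term via Proposition~\ref{prop:r-1-neighborhood-low-weak}. The paper's proof is organized identically, and your extra remark about the equivalence $\{\delta_{\text{dom}}(f,x)>\tau\}=\{x\notin\mc R_{r-1}(f,\tau)\}$ up to a boundary set is a nice clarification the paper glosses over with ``holds by definition.''
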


\begin{proof}
The equality holds by definition and the last conclusion holds by substituting $\tau = 60000e^4 P_f$, so it suffices to show the inequality. By definition of $\mc R_{r-1}(f,\tau)$,

$$\Pr_{x\sim \mc R_{r-1}}\left[x \notin \mc R_{r-1}(f,\tau)\right] = \Pr_{x\sim \mc R_{r-1}}\left[P_{\max}(f,x,\tau) < C_4\right]$$

We can upper bound this probability using Markov's Inequality on the random variable $1 - P_{\max}(f,x,\tau)$, using the fact that $P_{\max}(f,x,\tau) \le 1$ to show that the relevant random variable is nonnegative:

\begin{align*}
\Pr_{x\sim \mc R_{r-1}}\left[P_{\max}(f,x,\tau) < C_4\right] &= \frac{\displaystyle\E_{x\sim \mc R_{r-1}}\left[1 - P_{\max}(f,x,\tau)\right]}{1 - C_4}\\
&= \frac{1}{1-C_4}\left(\E_{x\sim \mc R_{r-1}}\left[1 - P(f,x,\tau)\right] + \E_{x\sim \mc R_{r-1}}\left[P_{\text{comp}}(f,x,\tau)\right]\right)
\end{align*}

We bound the two terms within the expectation separately. Intuitively, $1 - P_{\max}$ is very close to $1 - P$ because $P_{\text{comp}}$ is always small, and the expectation of $1 - P$ is very close to 0 because of the low vertex survival probability of $f$. We now formalize this, starting with a bound on the left term. Applying Proposition \ref{prop:r-to-f-sample} for the random variable $\textbf{1}_{y\in \mc X_{r-1}(f,\tau)}$ yields

\begin{align*}
\Pr_{y\sim \mc F_{r-1}}[y\in \mc X_{r-1}(f,\tau)] &= \E_{x\sim \mc R_{r-1}}\left[\E_{i\sim [\Delta]}\left[\Pr_{y\sim \mc F_{r-1}}[y\in \mc X_{r-1}(f,\tau) \mid \text{end}_A(y) = \sigma_i(x)]\right]\right]\\
&= \E_{x\sim \mc R_{r-1}}\left[\E_{i\sim [\Delta]}\left[P_i(f,x,\tau)\right]\right]\\
&= \frac{\displaystyle\E_{x\sim \mc R_{r-1}}[P(f,x,\tau)]}{\Delta}
\end{align*}

Rearranging and applying Proposition \ref{prop:r-1-flower-high-weak} shows that

$$\E_{x\sim \mc R_{r-1}}[1 - P(f,x,\tau)] = 1 - \Delta \Pr_{y\sim \mc F_{r-1}}[y\in \mc X_{r-1}(f,\tau)] \le 2P_f/\tau$$

We now proceed to bounding the second term. We bound this using Proposition \ref{prop:r-1-neighborhood-low-weak}:

$$\E_{x\sim \mc R_{r-1}}[P_{\text{comp}}(f,x,\tau)] \le \E_{x\sim \mc R_{r-1}}[F_1(\tau)] \le 6e^4\tau$$

Therefore,

$$\Pr_{x\sim \mc R_{r-1}}[x\notin \mc R_{r-1}(f,\tau)] \le \frac{6e^4}{1 - C_4}(P_f/\tau + \tau) \le 600e^4(P_f/\tau + \tau)$$

as desired.
\end{proof}

As discussed in Section \ref{sec:TORefinement}, this tail bound is not enough. We also need an expectation bound. To obtain the desired expectation bound, we improve a few aspects of this proof:

\begin{enumerate}
    \item For any random variable $X$ that takes values in $[0,1]$, $$\E[X] = \E_{\tau\sim [0,1]}[\Pr[X > \tau]]$$Applying this with $X := \delta_{\text{dom}}(f,x)$ allows us to write the desired expectation as an expectation over the tail probabilities bounded by Proposition \ref{prop:weak-complement-prob}.
    \item Substituting the bound given by Proposition \ref{prop:weak-complement-prob} would yield (up to a constant factor) a bound of $$\E_{\tau\sim [0,1]}[P_f/\tau + \tau] = P_f\E[1/\tau] + \E[\tau] \gg P_f$$on the desired expectation. Note that both terms are too large.
    \item Improving the first term:
    \begin{enumerate}
        \item To improve the first term, it suffices to only consider $\tau > P_f$, and when we do this $\displaystyle\E_{\tau\sim [P_f, 1]}[1/\tau] \approx \ln (1/P_f)$ so the first term is off from the desired value by a logarithmic factor.
        \item This logarithmic factor is intuitively coming from overcounting of $(r-1)$-flowers that are $\tau$-good for very low values of $\tau$. For more details, see the discussion in Section \ref{sec:FlowerStrong}.
        \item Formally, we correct this logarithmic factor loss simply by replacing the application of Proposition \ref{prop:r-1-flower-high-weak} with Lemma \ref{lem:r-1-flower-high-strong}.
    \end{enumerate}
    \item Improving the second term:
    \begin{enumerate}
        \item The uniform bound of $\tau$ can be replaced with the neighborhood-dependent bound of $\delta_{\text{dom}}(f,x)$.
        \item While this term may look initially look complicated to bound on its own, it is simple in our case, because it is the same as the thing we are trying to bound! So we can simply rearrange terms to get the desired bound. Roughly speaking, we show that $$\E[\delta_{\text{dom}}(f,x)] \le O(P_f) + (1/2)\E[\delta_{\text{dom}}(f,x)]$$and get the desired result by rearranging.\footnote{We actually show this for $\min(\delta_{\text{dom}}(f,x), C)$ rather than $\delta_{\text{dom}}(f,x)$ for some constant $C$.} For details, see the proof.
        \item This rearrangement trick may look opaque. Intuitively, the rearrangement is really just bootstrapping better tail bounds from worse ones. In more detail, notice that Proposition \ref{prop:weak-complement-prob} gives good probability bounds (where the first term $P_f/\tau$ is greater than the second term $\tau$) for all $\tau < \sqrt{P_f}$. We can bootstrap this bound to obtain good bounds up to $\tau < P_f^{1/3}$ as follows. Split $(r-1)$-neighborhoods into ones in $\mc R_{r-1}(f,\kappa)$ and ones not in $\mc R_{r-1}(f,\kappa)$ for the optimized choice $\kappa := \sqrt{P_f\tau}$. We can replace the relevant part of the Proposition \ref{prop:weak-complement-prob} proof as follows: \begin{align*}\E_{x\sim \mc R_{r-1}}[P_{\text{comp}}(f,x,\tau)] &= \E_{x\sim \mc R_{r-1}}[P_{\text{comp}}(f,x,\tau) \mid x\in \mc R_{r-1}(f,\kappa)] \Pr_{x\sim \mc R_{r-1}}[x\in \mc R_{r-1}(f,\kappa)]\\ &+ \E_{x\sim \mc R_{r-1}}[P_{\text{comp}}(f,x,\tau) \mid x\notin \mc R_{r-1}(f,\kappa)] \Pr_{x\sim \mc R_{r-1}}[x\notin \mc R_{r-1}(f,\kappa)]\\
        &\le 6e^4\E_{x\sim \mc R_{r-1}}[\delta_{\text{dom}}(f,x) \mid x\in \mc R_{r-1}(f,\kappa)] \Pr_{x\sim \mc R_{r-1}}[x\in \mc R_{r-1}(f,\kappa)]\\ &+ 6e^4\E_{x\sim \mc R_{r-1}}[\tau \mid x\notin \mc R_{r-1}(f,\kappa)] \Pr_{x\sim \mc R_{r-1}}[x\notin \mc R_{r-1}(f,\kappa)]\\
        &\le 6e^4\kappa\\ &+ 3600e^8(\tau)(P_f/\kappa + \kappa)\\
        &= O(\kappa + \tau P_f/\kappa)
        \end{align*}This is a better bound than $O(\tau)$ for many choices of $\kappa$. Combining with the first term leads to an improved tail bound of $$\Pr_{x\sim \mc R_{r-1}}[\delta_{\text{dom}}(f,x) > \tau] \le O(P_f/\tau + \kappa + \tau P_f/\kappa) = O(P_f/\tau + \sqrt{P_f\tau})$$ and the first term dominates up to $\tau < P_f^{1/3}$ as desired. The rearrangement trick does this bootstrapping in a continuous way.
    \end{enumerate}
\end{enumerate}

We now restate and formally prove Proposition \ref{prop:complement-prob}:

\propcomplementprob*

\begin{proof}
Let $C_{10} := \min(C_5, \frac{1-C_4}{12e^4}) = \frac{1}{1200e^4}$. We instead prove the stronger bound of

$$\E_{x\sim \mc R_{r-1}}[\min(\delta_{\text{dom}}(f,x), C_{10})] \le C_{10}C_{11}P_f$$

which is stronger due to the fact that $\delta_{\text{dom}}(f,x)\le 1$ and in turn $\delta_{\text{dom}}(f,x) \le \frac{1}{C_{10}}\min(\delta_{\text{dom}}(f,x), C_{10})$ for all $x\in \mc R_{r-1}$. We start by rewriting the expectation as an expectation over tail probabilities and apply Markov's Inequality to bound each tail probability individually:

\begin{align*}
\E_{x\sim \mc R_{r-1}}[\min(\delta_{\text{dom}}(f,x),C_{10})] &= \E_{x\sim \mc R_{r-1}}\left[\Pr_{\tau\sim [0,1]}\left[\delta_{\text{dom}}(f,x) > \tau \text{ and } C_{10} > \tau\right]\right]\\
&= \E_{x\sim \mc R_{r-1}}\left[\Pr_{\tau\sim [0,1]}\left[x \notin \mc R_{r-1}(f,\tau) \text{ and } C_{10} > \tau\right]\right]\\
&= \E_{\tau\sim [0,1]}\left[\Pr_{x\sim \mc R_{r-1}}\left[x \notin \mc R_{r-1}(f,\tau)\right] \textbf{1}_{C_{10} > \tau}\right]\\
&= \E_{\tau\sim [0,1]}\left[\Pr_{x\sim \mc R_{r-1}}\left[P_{\max}(f,x,\tau) < C_4\right] \textbf{1}_{C_{10} > \tau}\right]\\
&\le \E_{\tau\sim [0,1]}\left[\frac{\displaystyle\E_{x\sim \mc R_{r-1}}\left[1-P_{\max}(f,x,\tau)\right]}{1-C_4} \textbf{1}_{C_{10} > \tau}\right]\\
&= \frac{C_{10}}{1-C_4}\E_{\tau\sim [0,1]}\left[\E_{x\sim \mc R_{r-1}}\left[(1-P(f,x,\tau)) + P_{\text{comp}}(f,x,\tau)\right] \middle| \tau \le C_{10}\right]\\
\end{align*}

We bound the two terms within the expectation separately. We start with the left term. Applying Proposition \ref{prop:r-to-f-sample} for the random variable $\textbf{1}_{y\in \mc X_{r-1}(f,\tau)}$ yields

\begin{align*}
\Pr_{y\sim \mc F_{r-1}}[y\in \mc X_{r-1}(f,\tau)] &= \E_{x\sim \mc R_{r-1}}\left[\E_{i\sim [\Delta]}\left[\Pr_{y\sim \mc F_{r-1}}[y\in \mc X_{r-1}(f,\tau) \mid \text{end}_A(y) = \sigma_i(x)]\right]\right]\\
&= \E_{x\sim \mc R_{r-1}}\left[\E_{i\sim [\Delta]}\left[P_i(f,x,\tau)\right]\right]\\
&= \frac{\displaystyle\E_{x\sim \mc R_{r-1}}[P(f,x,\tau)]}{\Delta}
\end{align*}

Applying Lemma \ref{lem:r-1-flower-high-strong} with $\xi\gets C_{10}$ yields

\begin{align*}
\E_{\tau\sim [0,1]}\left[\E_{x\sim \mc R_{r-1}}[1 - P(f,x,\tau)]\middle|\tau\le C_{10}\right] &= 1 - \Delta\E_{\tau\sim [0,1]}\left[\Pr_{y\sim \mc F_{r-1}}[y \in \mc X_{r-1}(f,\tau)]\middle|\tau\le C_{10}\right]\\
&\le \frac{7 P_f}{C_{10}}
\end{align*}

This completes the bound of the left term. Now, we proceed to the right term. By Propositions \ref{prop:r-1-neighborhood-low-weak} and \ref{prop:comp-monotone}, for any $x\in \mc R_{r-1}$ and any $\tau\le C_{10}$

$$P_{\text{comp}}(f,x,\tau) \le P_{\text{comp}}(f,x,C_{10}) \le 6e^4 C_{10}$$

By definition of $\delta_{\text{dom}}(f,x)$, Proposition \ref{prop:comp-monotone}, Lemma \ref{lem:r-1-neighborhood-low-strong}, and the fact that $C_{10} \le C_5$,

$$P_{\text{comp}}(f,x,\tau) \le P_{\text{comp}}(f,x,C_5) \le 6e^4 \delta_{\text{dom}}(f,x)$$

This completes the bound of the right term. Substitution shows that

\begin{align*}
\E_{x\sim \mc R_{r-1}}[\min(\delta_{\text{dom}}(f,x),C_{10})] &\le \frac{7P_f}{1-C_4} + \frac{6e^4 C_{10}}{1-C_4}\E_{\tau\sim [0,1]}\left[\E_{x\sim\mc R_{r-1}}\left[\min(\delta_{\text{dom}}(f,x),C_{10})\right]\middle| \tau\le C_{10}\right]\\
&= \frac{7P_f}{1-C_4} + \frac{6e^4 C_{10}}{1-C_4}\E_{x\sim\mc R_{r-1}}\left[\min(\delta_{\text{dom}}(f,x),C_{10})\right]\\
&\le \frac{7P_f}{1-C_4} + \frac{1}{2}\E_{x\sim\mc R_{r-1}}\left[\min(\delta_{\text{dom}}(f,x),C_{10})\right]\\
\end{align*}

Rearranging as discussed in the intuition prior to this proof shows that

$$\E_{x\sim \mc R_{r-1}}[\min(\delta_{\text{dom}}(f,x),C_{10})] \le \frac{14P_f}{1-C_4} = C_{10}C_{11}P_f$$

as desired.
\end{proof}
\paragraph{Acknowledgment:} We would like to thank Amir Abboud, Sepehr Assadi, and Shafi Goldwasser for several helpful discussions and comments.
\bibliographystyle{plain}
\bibliography{refs.bib}

\appendix

\section{ID Graph Construction}

We now construct the graph $G_{n,\Delta}$. This was done by prior work but we include this to prove additional properties of the resulting graph.

\begin{defn}[Ultra Log-Concave Random Variables \cite{AravindaMM21}]
A nonnegative-integer-valued random variable $X$ with probability mass function $p$ is called \emph{ultra log-concave} if it has contiguous support and satisfies the following property for any integer $n\ge 1$:

$$p(n)^2 \ge \frac{n+1}{n}p(n-1)p(n+1)$$
\end{defn}

\begin{thm}[Corollary 1.1 of \cite{AravindaMM21}]\label{thm:ultra-logconcave}
Let $X$ be an ultra log-concave random variable. Then, for all $t\ge 0$,

$$\Pr[X - \E[X] \ge t] \le e^{-\frac{t^2}{2(t+\E[X])}}$$

and

$$\Pr[X - \E[X] \le -t] \le e^{-\frac{t^2}{2\E[X]}}$$
\end{thm}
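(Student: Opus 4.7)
The plan is to reduce both tail bounds to a single moment generating function estimate of Poisson type, namely $\E[e^{\lambda X}] \le \exp(\mu(e^{\lambda}-1))$ for every $\lambda \in \mathbb{R}$, where $\mu := \E[X]$. Once this is in hand, both inequalities follow from the standard Chernoff method: apply Markov's inequality to $e^{\lambda(X-\mu-t)}$ with $\lambda > 0$ for the upper tail and to $e^{-\lambda(X-\mu+t)}$ for the lower tail, optimize in $\lambda$, and simplify. The optimal choice gives a rate function $\mu \phi(t/\mu)$ with $\phi(x) = (1+x)\ln(1+x) - x$ for the upper tail and $\mu \psi(t/\mu)$ with $\psi(x)=(1-x)\ln(1-x)+x$ for the lower tail. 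An elementary calculus check (differentiating twice and using the monotonicity of $\ln(1+s)$) yields $\phi(x) \ge x^2/(2(1+x))$ for $x \ge 0$ and $\psi(x) \ge x^2/2$ for $x \in [0,1)$, which produce exactly the stated bounds $\exp(-t^2/(2(t+\mu)))$ and $\exp(-t^2/(2\mu))$.

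The substantive work is proving the Poisson MGF domination for ultra log-concave $X$. My approach is to invoke the structural characterization due to Walkup (equivalently the Newton-inequality-based theorem): a probability mass function $p$ on $\mathbb{Z}_{\ge 0}$ with finite support is ultra log-concave if and only if its probability generating function $f(z) = \sum_k p(k) z^k$ has only real non-positive roots, and such polynomials are exactly the generating functions of laws of sums $X \stackrel{d}{=} \sum_{i=1}^{N} Y_i$ with independent $Y_i \sim \mathrm{Bern}(p_i)$ and $\sum p_i = \mu$. Granted this representation, the MGF bound is immediate: for each Bernoulli factor, $\E[e^{\lambda Y_i}] = 1 + p_i(e^{\lambda}-1) \le \exp(p_i(e^{\lambda}-1))$, and independence multiplies these bounds to give $\E[e^{\lambda X}] \le \exp(\mu(e^{\lambda}-1))$. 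For ultra log-concave distributions with unbounded support, I would approximate $X$ by the truncations $X \mid X \le M$, verify that the truncated law is still ultra log-concave (the defining inequality is preserved under truncation up to renormalization), apply the finite-support result, and pass to the limit $M \to \infty$ using dominated convergence in $\lambda$ together with the fact that the truncated mean tends to $\mu$.

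I expect the main obstacle to be the first half of the second paragraph, i.e., securing the Bernoulli-sum representation (or at least the MGF domination) from ultra log-concavity. The cleanest route is to cite Walkup/Newton, but if one wants a self-contained argument, the challenge is to show directly that $\ln f(w) \le \mu(w-1)$ on $w \ge 0$. Here $h(w) := \ln f(w) - \mu(w-1)$ satisfies $h(1)=0$ and $h'(1)=0$, so it suffices to control the sign of $h''$; expanding $f f'' - (f')^2$ one gets a quadratic form whose nonpositivity on $w \in (0,1)$ and $(1,\infty)$ would have to be derived from the ultra log-concavity inequality $k p(k)^2 \ge (k+1) p(k-1) p(k+1)$, likely through a symmetrization over pairs of indices. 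Alternatively, one can bypass moment generating functions entirely and prove the stronger statement that $X$ is dominated in the convex order by $\mathrm{Pois}(\mu)$ (a theorem of Johnson), which gives $\E[\phi(X)] \le \E[\phi(\mathrm{Pois}(\mu))]$ for every convex $\phi$ and in particular for $\phi(x) = e^{\lambda x}$; this sidesteps polynomial-root arguments but still requires the key lemma that ultra log-concavity propagates under a Markov coupling to the Poisson law.
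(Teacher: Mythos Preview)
The paper does not prove this theorem at all: it is quoted verbatim as Corollary~1.1 of \cite{AravindaMM21} and used as a black box in the appendix to derive Proposition~\ref{prop:intersection-concentration}. There is nothing to compare your attempt against in this paper.

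That said, your outline is essentially the argument of the cited source. Aravinda--Marsiglietti--Melbourne establish exactly the Poisson MGF domination $\E[e^{\lambda X}] \le \exp(\mu(e^{\lambda}-1))$ for ultra log-concave $X$ and then run the Chernoff computation you describe, with the same rate-function lower bounds $\phi(x) \ge x^2/(2(1+x))$ and $\psi(x) \ge x^2/2$. Your route to the MGF bound via the Walkup/Newton real-rootedness characterization (ultra log-concave with finite support $\Leftrightarrow$ sum of independent Bernoullis) is valid and arguably the most transparent way to see it; the cited paper proceeds slightly differently but to the same end. The one place to be careful is the passage from finite to infinite support: truncating $X$ to $\{X \le M\}$ does preserve ultra log-concavity, but you should also check that the MGF of the truncation converges to that of $X$ for each fixed $\lambda$, which follows from monotone convergence for $\lambda \ge 0$ and from the finiteness of $\E[e^{\lambda X}]$ (trivial for $\lambda < 0$ since $X \ge 0$).
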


\begin{prop}\label{prop:intersection-concentration}
Consider an even positive integer $n$, $G = K_n$, and a set $S\subseteq V(G)$ with $k := |S|$ and $n := |V(G)|$. Let $\mu := \frac{k(k-1)}{2(n-1)}$. Let $\mc P$ denote the set of perfect matchings of $G$. For any $\delta > 0$,

$$\Pr_{M\sim \mc P}[|M\cap E(G[S])| \ge (1 + \delta)\mu] \le e^{-\frac{\delta^2\mu}{2+2\delta}}$$

For any $\delta\in (0,1)$,

$$\Pr_{M\sim \mc P}[|M\cap E(G[S])| \le (1 - \delta)\mu] \le e^{-\frac{\delta^2\mu}{2}}$$

\end{prop}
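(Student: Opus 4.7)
The plan is to verify two facts about $X := |M \cap E(G[S])|$ --- that its mean equals $\mu$ and that it is ultra log-concave --- and then apply Theorem \ref{thm:ultra-logconcave} with $t = \delta \mu$.

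The mean follows from linearity of expectation. In a uniformly random perfect matching of $K_n$, any fixed edge belongs to $M$ with probability $1/(n-1)$, since a fixed vertex is matched to each of the other $n-1$ vertices with equal probability. Summing over the $\binom{k}{2}$ edges of $G[S]$ gives $\E[X] = \binom{k}{2}/(n-1) = \mu$.

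For ultra log-concavity, I would compute $p(j) := \Pr[X = j]$ by a direct counting argument. The number of perfect matchings of $K_n$ containing exactly $j$ edges inside $S$ equals the product of: $\frac{k!}{(k-2j)!\, 2^j\, j!}$ (choosing a matching of size $j$ in $K_k$), $\binom{n-k}{k-2j}(k-2j)!$ (matching the remaining $k - 2j$ vertices of $S$ to distinct vertices of $V \setminus S$), and $(n - 2k + 2j - 1)!!$ (completing with a perfect matching of the remaining $n - 2k + 2j$ vertices of $V \setminus S$), all divided by $(n-1)!!$. The support is a contiguous set of integers, and a direct simplification gives
\[
\frac{p(j+1)}{p(j)} \;=\; \frac{(k-2j)(k-2j-1)}{2(j+1)(n-2k+2j+2)}.
\]
After cancellation, the ultra log-concavity condition $p(j)^2 \ge \frac{j+1}{j} p(j-1) p(j+1)$ is equivalent to
\[
(k - 2j + 2)(k - 2j + 1)(n - 2k + 2j + 2) \;\ge\; (k - 2j)(k - 2j - 1)(n - 2k + 2j),
\]
which holds factor by factor, since each factor on the left strictly exceeds the corresponding factor on the right on the support of $X$.

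With these two facts in hand, Theorem \ref{thm:ultra-logconcave} with $t = \delta \mu$ yields $\Pr[X \ge (1+\delta)\mu] \le \exp(-\delta^2 \mu / (2 + 2\delta))$ and $\Pr[X \le (1-\delta)\mu] \le \exp(-\delta^2 \mu / 2)$, matching the stated bounds. The only substantive step is the algebraic verification of ultra log-concavity, which reduces to the termwise comparison above; everything else is bookkeeping.
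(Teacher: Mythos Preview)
Your proposal is correct and follows essentially the same route as the paper: both compute the probability mass function of $X$ by direct counting, derive the ratio $p(j+1)/p(j) = \frac{(k-2j)(k-2j-1)}{2(j+1)(n-2k+2j+2)}$, verify ultra log-concavity by the resulting termwise inequality, note $\E[X]=\mu$, and then invoke Theorem~\ref{thm:ultra-logconcave} with $t=\delta\mu$. The only cosmetic difference is that you compute the mean via linearity of expectation whereas the paper simply states it.
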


\begin{proof}
We start by showing that $|M\cap E(G[S])|$ is an ultra log-concave random variable. Let $t$ be a nonnegative integer for which $0\le k-2t \le \min(k,n-k)$. Recall the double factorial notation $\ell!! := (\ell-1)(\ell-3)\hdots (3)(1)$ for an even nonnegative integer $\ell$. The number of perfect matchings in $G$ is exactly $n!!$. The number of perfect matchings $M$ in $G$ for which $|M\cap E(G[S])| = t$ is exactly

$$\binom{k}{k-2t}\binom{n-k}{k-2t}(k-2t)!(2t)!!(n-2k+2t)!!$$

obtained by counting the number of options for unmatched vertex subsets of $S$ and $V\setminus S$ and selecting a matching within each set. Define

$$f(t) := \Pr_{M\sim \mc P}[|M\cap E(G[S])| = t]$$

$f(t)$ has the following form:

\begin{align*}
f(t) &= \frac{\binom{k}{k-2t}\binom{n-k}{k-2t}(k-2t)!(2t)!!(n-2k+2t)!!}{n!!}\\
&= \frac{k!(n-k)!(k-2t)!(2t)!(n-2k+2t)!2^{n/2}(n/2)!}{((k-2t)!)^2(2t)!(n-2k+2t)!2^t t! 2^{n/2-k+t}(n/2-k+t)!n!}\\
&= \frac{k!(n-k)!(n/2)!2^{k-2t}}{(k-2t)!t! (n/2-k+t)!n!}\\
\end{align*}

As a result,

$$f(t+1) = \frac{(k-2t)(k-2t-1)}{4(t+1)(n/2-k+t+1)}f(t)$$

and

$$f(t-1) = \frac{4t(n/2-k+t)}{(k-2t+2)(k-2t+1)}f(t)$$

so

$$f(t-1)f(t+1) = \frac{(k-2t)(k-2t-1)}{4(t+1)(n/2-k+t+1)}\frac{4t(n/2-k+t)}{(k-2t+2)(k-2t+1)}f(t)^2 \le \frac{t}{t+1}f(t)^2$$

so $|M\cap E(G[S])|$ is indeed ultra log-concave. Furthermore,

$$\E_{M\sim \mc P}[|M\cap E(G[S])|] = \mu$$

so applying Theorem \ref{thm:ultra-logconcave} yields the desired result.
\end{proof}

We construct the lower bound graph via the probabilistic method, using the \emph{configuration model} of regular random graphs \cite{Bollobas80}:

\begin{defn}[Configuration Model \cite{Bollobas80}]
For an even positive integer $n$ and a positive integer $\Delta$, let $\mc C(n,\Delta)$ denote a distribution over graphs, called the \emph{configuration model}, generated as follows:

\begin{enumerate}
    \item Define $n$ disjoint clusters $U_1, U_2, \hdots, U_n$, where $U_i = \{u_{i,1},u_{i,2},\hdots,u_{i,\Delta}\}$ is a set of size $\Delta$ for all $i\in [n]$.
    \item Let $M$ be a uniformly random perfect matching on $U_1\cup U_2\cup\hdots\cup U_n$.
    \item Define the sampled graph $G_M$ deterministically using $M$ as follows:
    \begin{enumerate}
        \item Let $V(G) := \{v_1,v_2,\hdots,v_n\}$.
        \item For any $\{u_{i,a}, u_{j,b}\} \in M$, add the edge $\{v_i, v_j\}$ to $E(G)$ if $i\ne j$.
    \end{enumerate}
\end{enumerate}
\end{defn}

Note that a graph $G_M$ sampled from $\mc C(n,\Delta)$ may not be regular, as $M$ can contain both (a) intracluster edges (cycles of length 1) and (b) multiple edges between the same pair of clusters (cycles of length 2). It has been shown that graphs sampled from $\mc C(n,\Delta)$ have a substantial probability of not only being $\Delta$-regular, but also having high girth:

\begin{thm}[Corollary 1.1 of \cite{McKayWW04}]\label{thm:config-model-girth}
For $(\Delta-1)^{2g-1} = o(n)$, the probability that $G_M\sim \mc C(n,\Delta)$ is $\Delta$-regular and has girth greater than $g$ is at least\footnote{In their work, they sum from $r=3$ rather than $r=1$; our definition of $\mc C(n,\Delta)$ slight differs in that they are sampling uniformly from the set of $\Delta$-regular graphs.}

$$\exp\left(-\sum_{r=1}^g \frac{(\Delta-1)^r}{2r}\right)$$
\end{thm}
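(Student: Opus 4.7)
The plan is to use the classical Poisson approximation for short cycles in the configuration model. For each $r \in [g]$, let $C_r$ denote the number of cycles of length $r$ in the multigraph determined by $M$, where a length-$1$ cycle is a self-loop within a single cluster and a length-$2$ cycle is a pair of parallel edges between two clusters. The event ``$G_M$ is $\Delta$-regular with girth exceeding $g$'' coincides exactly with $\{C_1 = C_2 = \cdots = C_g = 0\}$, so it suffices to lower bound $\Pr[C_r = 0 \text{ for all } r \in [g]]$ by $\exp\!\left(-\sum_{r=1}^g \lambda_r\right)$ with $\lambda_r := (\Delta-1)^r/(2r)$.

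The first step is to compute $\E[C_r]$ exactly. For $r \ge 3$, an oriented rooted length-$r$ cycle is specified by an ordered tuple of $r$ distinct clusters $(i_1,\ldots,i_r)$ together with, for each $j$, an ordered pair of distinct half-edges in $U_{i_j}$ serving as the incoming and outgoing endpoints. There are $n(n-1)\cdots(n-r+1)\cdot(\Delta(\Delta-1))^r$ such labeled configurations, and each unrooted unoriented cycle is overcounted exactly $2r$ times. Conditional on $r$ specified disjoint half-edge pairs appearing in $M$, the remaining $n\Delta - 2r$ half-edges form a uniform perfect matching, so the probability that the $r$ pairs are all present equals $\prod_{j=0}^{r-1}(n\Delta - 2j - 1)^{-1}$. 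The edge cases $r = 1, 2$ are handled analogously with slightly adjusted counts. Multiplying the three ingredients and using $(\Delta-1)^{2g-1} = o(n)$ gives $\E[C_r] = \lambda_r(1+o(1))$. An identical computation performed on ordered tuples of vertex-disjoint cycles with prescribed lengths $r_1, \ldots, r_k$ shows that the joint factorial moments satisfy $\E\!\left[\prod_j (C_{r_j})_{k_j}\right] = \prod_j \lambda_{r_j}^{k_j}(1+o(1))$, which is the hallmark of joint convergence of $(C_1, \ldots, C_g)$ to independent Poisson variables with means $\lambda_r$.

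Joint Poisson convergence of the cycle counts immediately gives $\Pr[C_1 = \cdots = C_g = 0] = \prod_r e^{-\lambda_r}(1+o(1))$, matching the target asymptotically. The main obstacle is turning this asymptotic equality into the stated inequality uniformly in $\Delta$ and $g$ subject to $(\Delta-1)^{2g-1} = o(n)$. I would do this via Brun's sieve: the odd-truncated Bonferroni inequality expresses a lower bound on $\Pr[\text{no short cycles}]$ as an alternating partial sum of the joint factorial moments, and the hypothesis $(\Delta-1)^{2g-1} = o(n)$ is precisely the regime under which the factorial moments match their Poisson counterparts $\prod_r \lambda_r^{k_r}/k_r!$ with a relative error that vanishes uniformly across all truncation depths considered. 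Letting the truncation depth tend to infinity alongside $n$, the Bonferroni lower bound converges to $\prod_r e^{-\lambda_r}$ from below, yielding the desired inequality for all sufficiently large $n$ in the specified range of parameters. The technical subtlety is bookkeeping the $o(1)$ corrections to the factorial moments carefully enough that they remain absorbable into the main term as both the truncation depth and $g$ grow.
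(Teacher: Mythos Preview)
The paper does not prove this theorem; it is stated as a direct citation of Corollary~1.1 of McKay, Wormald, and Wysocka~\cite{McKayWW04}, adapted to the configuration-model formulation (hence the summation starting at $r=1$ rather than $r=3$, as the footnote explains). There is therefore no ``paper's proof'' to compare against.

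Your sketch follows the classical route: compute factorial moments of the short-cycle counts $C_1,\ldots,C_g$, establish joint Poisson convergence with means $\lambda_r=(\Delta-1)^r/(2r)$, and extract the probability of the event $\{C_1=\cdots=C_g=0\}$. This is precisely the approach in the original literature (Bollob\'as, Wormald's survey, and \cite{McKayWW04} itself). One point worth flagging: the cited result in \cite{McKayWW04} is actually an asymptotic equality, $\Pr[\cdot]=(1+o(1))\exp\!\bigl(-\sum_r\lambda_r\bigr)$, not a strict inequality. The paper's statement as a clean lower bound is a mild simplification that holds for all sufficiently large $n$; in its only application (the proof of Lemma~\ref{lem:config-model}) the paper merely needs this probability to exceed $\exp(-n^{1/100})$, which the asymptotic provides with enormous slack. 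So while your Brun-sieve plan to convert the asymptotic to a uniform inequality is reasonable and would work, it is more than what either the paper or the original reference actually establishes or needs.
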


\lemconfigmodel*

\begin{proof}
We argue that sampling $G_M \sim \mc C(n,\Delta)$ produces a graph with the desired properties with positive probability. Fix a set $I\subseteq [n]$. Let $S_I := \{v_i : \text{ for all } i\in I\}$ and $U_I := \{u_{i,a} \text{ for all } i\in I, a\in [\Delta]$.

First, consider the case when $|I| \le \frac{10^6n\ln \Delta}{\Delta}$. Apply Proposition \ref{prop:intersection-concentration} with $n\gets n\Delta$, $S\gets U_I$, and $\delta \gets \frac{10^{12}(\ln (n/|I|))|I|}{(|I|\Delta)^2/(2n\Delta)} = \frac{2(10)^{12}(\ln(n/|I|))n}{|I|\Delta} > 1$ to show that %\aaron{Change $\mu$ definition to $k^2/(2n)$ to make things cleaner}

$$\Pr_{M\sim \mc P}\left[|M\cap E(G[U_I])| \ge 10^8(\ln(n/|I|))|I|\right] \le e^{-10^8(\ln(n/|I|))|I|/3} \le (|I|/n)^{30|I|}$$

Next, consider the case when $|I| > \frac{10^6n\ln \Delta}{\Delta}$. Apply Proposition \ref{prop:intersection-concentration} with $n\gets n\Delta$, $S\gets U_I$, and $\delta\gets 1/25$ to show that

$$\Pr_{M\sim \mc P}\left[\left||M\cap E(G[U_I])| - \frac{(|I|\Delta)^2}{2n\Delta}\right| \ge \frac{(|I|\Delta)^2}{50n\Delta}\right] \le 2e^{-(1/25)^2\frac{(|I|\Delta)^2}{6n\Delta}} \le e^{-10^3|I|\ln\Delta} = \Delta^{-10^3|I|}$$

Now, we do a union bound over all sets to get the desired guarantee for all sets. Specifically, there exists a set that does not satisfy the necessary guarantee with probability at most

\begin{align*}
&\sum_{k=\sqrt{n}}^{10^6n(\log\Delta)/\Delta} \binom{n}{k}(k/n)^{30k} + \sum_{k=10^6n(\log\Delta)/\Delta}^n \binom{n}{k}\Delta^{-10^3k}\\ \le& \sum_{k=\sqrt{n}}^{10^6n(\log\Delta)/\Delta} (k/n)^{29k} + \sum_{k=10^6n(\log\Delta)/\Delta}^n \Delta^{-10^3k+k}\\ \le& \exp(-\sqrt{n})
\end{align*}

But Theorem \ref{thm:config-model-girth} implies that the first two conditions are satisfied with $g\gets (\log_{\Delta} n)/1000$ with probability at least $\exp(-n^{1/100}) > \exp(-\sqrt{n})$. Thus, by a union bound, all properties are satisfied with positive probability, so a graph exists, as desired.
\end{proof}

\section{Details about Distributions}\label{sec:AppDistributions}

As mentioned earlier, conditioning on measure 0 sets needs to be defined due to the Borel-Kolmogorov paradox. This paradox says that if one conditions on two different great circles in a sphere, one gets conditional distributions with different probability density functions. One resolves the paradox by defining a coordinate system. We do this here. Furthermore, our conditionings in this paper are simple (just coordinate substitutions in the standard basis). As a result, we can define conditional distributions in our paper based on coordinate substitutions rather than limits. We do this in this section and prove the necessary properties here.

\subsection{Definitions of Relevant Conditional Distributions}\label{sec:ConditionalDefns}

To define the desired conditional probabilities, we need to discuss probability measures on $r$-neighborhoods and $r$-flowers.

\begin{defn}[Probability Measures on Flowers and Neighborhoods]
Let $\mu_0: 2^{[0,1]}\rightarrow [0,1]$ denote the standard uniform Lesbesgue measure on $\mc S_0 = [0,1]$. For each $r\ge 1$, let $\mu_r := \mu_{r-1}^{\Delta-1}$ be the $(\Delta-1)$-th power (product measure) of $\mu_{r-1}$. $\mu_r$ is the uniform measure over the continuous set $\mc S_r$. For $r\ge 0$, let $\phi_r$ and $\rho_r$ be the uniform measures over $\mc F_r$ and $\mc R_r$ respectively; more formally the product measures $\phi_r := \mu_0\times\mu_1^2\times\mu_2^2\times\hdots\times\mu_{r-1}^2\times\mu_r^2$ and $\rho_r := \mu_0^{\Delta}\times\mu_1^{\Delta}\times\hdots\times\mu_{r-1}^{\Delta}$.
\end{defn}

\begin{defn}[Expectation of a Random Variable according to a Measure]
For a probability space $(\Omega,\mc F,\mu)$\footnote{$\Omega$ is a set, $\mc F$ is a $\sigma$-algebra on that set, and $\mu:\mc F\rightarrow [0,1]$ is a probability measure on the $\sigma$-algebra.}, and a random variable $X:\Omega\rightarrow\mathbb R$ let $\displaystyle\E_{x\sim \mu}[X(x)]$ denote the expectation of the random variable $X$ against the measure $\mu$.
\end{defn}

Note that the expectations of a random variable with respect to two equal measures are equal.

\begin{defn}[Conditional Probability Measures on Flowers and Neighborhoods]
For any $r\ge 1$, $i\in [\Delta]$, and $x\in \mc F_{r-1}$, let $\mc R_r(i,x)$ denote the set of all $y\in \mc R_r$ for which $\text{res}_i(y) = x$ and let $\rho_{r,i}(x) := (\textbf{1}^{\Delta})^{r-1}\times(\mu_{r-1}^{i-1}\times\textbf{1}\times\mu_{r-1}^{\Delta-i})$ be the uniform measure on $\mc R_r(i,x)$, with $\textbf{1}$ denoting the uniform measure over the corresponding (singleton) set. For $r\ge 0$, $v\in \{A,B\}$, and $x\in \mc R_r$, let $\mc F_r(v,x)$ denote the set of all $y\in \mc F_r$ for which $\text{end}_v(y) = x$, let $\phi_{r,A}(x) := \textbf{1}\times(\textbf{1}^2)^{r-1}\times(\textbf{1}\times\mu_r)$ (uniform measure over $\mc F_r(A,x)$), and let $\phi_{r,B}(x) := \textbf{1}\times(\textbf{1}^2)^{r-1}\times(\mu_r\times\textbf{1})$ (uniform measure over $\mc F_r(B,x)$).
\end{defn}

\propcondresdefn*

\begin{proof}
Define it as

$$\E_{y\sim \mc R_r}[X(y) \mid \text{res}_i(y) = x] := \E_{y\sim \rho_{r,i}(x)}[X(y)]$$
\end{proof}

\propcondenddefn*

\begin{proof}
Define it as

$$\E_{y\sim \mc F_r}[X(y) \mid \text{end}_v(y) = x] := \E_{y\sim \phi_{r,v}(x)}[X(y)]$$
\end{proof}

\subsection{Proofs of Distributional Equivalences}

We now prove the desired distributional equivalences:

\propedgeflip*

\begin{proof}
It suffices to prove this result when $X$ is the indicator function of a set that can be written as a direct product; i.e. $X = \textbf{1}_S$ for some $S\subseteq \mc F_r$ for which there exist sets $S_0\subseteq \mc S_0$ and $S_{sA}, S_{sB}\subseteq \mc S_s$ for all $s\in [r]$ with the property that $S = S_0 \times \prod_{i=1}^r (S_{sA}\times S_{sB})$.\footnote{This is because product sets generate the product $\sigma$-algebra, so the result then can be shown for arbitrary set indicators (by the Caratheodory extension theorem). Then, for any nonnegative random variable $X$, $\E[X] = \int_0^{\infty} \Pr[X\ge \phi]d\phi = \int_0^{\infty} \E[\textbf{1}_{X\ge \phi}]d\phi$, so the result holding on indicators of sets leads to the result holding on all nonnegative random variables. Arbitrary random variables can be handled by writing as a difference of two nonnegative variables.} Then relabeling yields

\begin{align*}
\E_{y\sim \mc F_r}[\textbf{1}_S(\ol{y})] &= \Pr_{y\sim \mc F_r}[\ol{y}\in S]\\
&= \Pr_{y_0\sim \mu_0}[y_0\in S_0]\prod_{s=1}^r\left(\Pr_{y_{sB}\sim \mu_s}[y_{sB}\in S_{sA}]\Pr_{y_{sA}\sim \mu_s}[y_{sA}\in S_{sB}]\right)\\
&= \Pr_{y_0\sim \mu_0}[y_0\in S_0]\prod_{s=1}^r\left(\Pr_{y_{sA}\sim \mu_s}[y_{sA}\in S_{sA}]\Pr_{y_{sB}\sim \mu_s}[y_{sB}\in S_{sB}]\right)\\
&= \Pr_{y\sim \mc F_r}[y\in S] = \E_{y\sim \mc F_r}[\textbf{1}_S(y)]
\end{align*}

as desired.
\end{proof}

\propvertpermute*

\begin{proof}
It suffices to check this result when $\mc D$ is a deterministic distribution over a single permutation $\sigma$. Furthermore, as in the previous proposition, it suffices to prove the result when $X$ is the indicator function of a product set; i.e. $X = \textbf{1}_S$ for some $S\subseteq\mc R_r$ for which there exist sets $S_{sj}\subseteq\mc S_{s-1}$ for all $s\in [r]$ and $j\in [\Delta]$ with the property that $S = \prod_{s=1}^r (S_{s1}\times S_{s2}\times\hdots\times S_{s\Delta})$. Then relabeling yields

\begin{align*}
\E_{y\sim \mc R_r}[\textbf{1}_S(\sigma(y))] &= \Pr_{y\sim \mc R_r}[\sigma(y) \in S]\\
&= \prod_{s=1}^r\prod_{j=1}^{\Delta}\Pr_{y\sim \rho_r}[\sigma(y)_{sj} \in S_{sj}]\\
&= \prod_{s=1}^r\prod_{j=1}^{\Delta}\Pr_{y\sim \rho_r}[y_{s\sigma^{-1}(j)} \in S_{sj}]\\
&= \prod_{s=1}^r\prod_{j=1}^{\Delta}\Pr_{y_{s\sigma^{-1}(j)}\sim \mu_{r-1}}[y_{s\sigma^{-1}(j)} \in S_{sj}]\\
&= \prod_{s=1}^r\prod_{j=1}^{\Delta}\Pr_{y_{sj}\sim \mu_{r-1}}[y_{sj} \in S_{sj}]\\
&= \E_{y\sim \mc R_r}[\textbf{1}_S(y)]
\end{align*}

as desired.
\end{proof}

\propcondvertpermute*

\begin{proof}
As in the proof of the previous result, it suffices to prove this result when $X = \textbf{1}_S$ for some $S \subseteq \mc R_r$ for which there exist sets $S_{sj}\subseteq \mc S_{s-1}$ for all $s\in [r]$ and $j\in [\Delta]$ with the property that $S = \prod_{s=1}^r (S_{s1}\times S_{s2}\times\hdots\times S_{s\Delta})$. Note that the definition of $\text{res}_i$ implies the following constraints:

\begin{enumerate}
    \item $x_0 = \text{res}_i(y)_0 = \text{res}_1(\sigma_i(y))_0 = \sigma_i(y)_{11} = y_{1i}$
    \item $x_{sB} = \text{res}_i(y)_{sB} = \text{res}_1(\sigma_i(y))_{sB} = \sigma_i(y)_{(s+1)1} = y_{(s+1)i}$ for all $s\in [r-1]$
    \item $(x_{sA})_j = (\text{res}_i(y)_{sA})_j = (\text{res}_1(\sigma_i(y))_{sA})_j = \sigma_i(y)_{s(j+1)} = y_{s\sigma_i(j+1)}$ for all $s\in [r-1]$, $j\in [\Delta-1]$
\end{enumerate}

Re-notating leads to

\begin{enumerate}
    \item $y_{1i} = x_0$
    \item $y_{si} = x_{(s-1)B}$ for all $s\in \{2,3,\hdots,r\}$
    \item $y_{sj} = (x_{sA})_{\sigma_i(j)-1}$ for all $s\in [r-1]$, $j\in [\Delta]$ with $j\ne i$
\end{enumerate}

Similarly, for the right side, we get the following constraints on $z$:

\begin{enumerate}
    \item $z_{11} = x_0$
    \item $z_{s1} = x_{(s-1)B}$ for all $s\in \{2,3,\hdots,r\}$
    \item $z_{sj} = (x_{sA})_{j-1}$ for all $s\in [r-1]$, $j\in \{2,3,\hdots,\Delta\}$.
\end{enumerate}

Thus,

\begin{align*}
\E_{y\sim \mc R_r}[\textbf{1}_S(\sigma_i(y)) \mid \text{res}_i(y) = x] &= \Pr_{y\sim \rho_{r,i}(x)}[\sigma_i(y) \in S]\\
&= \prod_{s=1}^r \prod_{j=1}^{\Delta} \Pr_{y\sim \rho_{r,i}(x)}[\sigma_i(y)_{sj} \in S_{sj}]\\
&= \prod_{s=1}^r \prod_{j=1}^{\Delta} \Pr_{y\sim \rho_{r,i}(x)}[y_{s\sigma_i(j)} \in S_{sj}]\\
&= \prod_{s=1}^r \prod_{j=1}^{\Delta} \Pr_{y\sim \rho_{r,i}(x)}[y_{sj} \in S_{s\sigma_i(j)}]\\
&= \Pr_{y\sim \rho_{r,i}(x)}[y_{1i} \in S_{11}] \left(\prod_{s=2}^r \Pr_{y\sim \rho_{r,i}(x)}[y_{si} \in S_{s1}]\right)\\
&\left(\prod_{s=1}^{r-1}\prod_{j\in [\Delta], j\ne i} \Pr_{y\sim \rho_{r,i}(x)}[y_{sj} \in S_{s\sigma_i(j)}]\right)\prod_{j\in [\Delta], j\ne i}\Pr_{y\sim \rho_{r,i}(x)}[y_{rj} \in S_{r\sigma_i(j)}]\\
&= \textbf{1}_{x_0 \in S_{11}} \left(\prod_{s=2}^r \textbf{1}_{x_{(s-1)B} \in S_{s1}}\right)\\
&\left(\prod_{s=1}^{r-1}\prod_{j\in [\Delta], j\ne i} \textbf{1}_{(x_{sA})_{\sigma_i(j)-1} \in S_{s\sigma_i(j)}}\right)\prod_{j\in [\Delta], j\ne i}\mu_{r-1}(S_{r\sigma_i(j)})\\
&= \textbf{1}_{x_0 \in S_{11}} \left(\prod_{s=2}^r \textbf{1}_{x_{(s-1)B} \in S_{s1}}\right)\\
&\left(\prod_{s=1}^{r-1}\prod_{j\in [\Delta], j\ne 1} \textbf{1}_{(x_{sA})_{j-1} \in S_{sj}}\right)\prod_{j\in [\Delta], j\ne 1}\mu_{r-1}(S_{rj})\\
&= \Pr_{z\sim \rho_{r,1}(x)}[z\in S]\\
&= \E_{y\sim \mc R_r}[\textbf{1}_S(z) \mid \text{res}_1(z) = x]
\end{align*}

as desired.
\end{proof}

We now prove the desired distributional equivalences. Note that they would all be immediate consequences of the tower law of conditional expectations if conditioning were done on positive measure sets, but are not immediate given that we have defined conditioning on measure 0 sets using coordinate substitution.

\propftorsample*

\begin{proof}
It suffices to prove this result when $X$ is the indicator function $\textbf{1}_S$ of some product set $S := \prod_{s=1}^r (S_{s1}\times S_{s2}\times\hdots\times S_{s\Delta})$ for some sets $S_{sj}\subseteq \mc S_{s-1}$ with positive measure. By the definition of conditional expectation in the proof of Proposition \ref{prop:cond-res-defn},

$$\E_{y\sim \mc R_r}[\textbf{1}_S(y) \mid \text{res}_i(y) = x] = \Pr_{y\sim \rho_{r,i}(x)}[y\in S]$$

Let $\text{res}_i(S) = S_{1i}\times \prod_{s=1}^{r-1}(S_{(s+1)i}\times (S_{s2}\times S_{s3}\times\hdots\times S_{s(i-1)}\times S_{s1}\times S_{s(i+1)}\times\hdots\times S_{s\Delta}))$ denote the image of $S$ under the function $\text{res}_i$. Then

$$\Pr_{y\sim \rho_{r,i}(x)}[y\in S] = \textbf{1}_{x\in \text{res}_i(S)} \frac{1}{\mu_{r-1}(S_{ri})} \prod_{j=1}^{\Delta}\mu_{r-1}(S_{rj})$$

and

\begin{align*}
&\E_{x\sim \mc F_{r-1}}\left[\E_{y\sim \mc R_r}[\textbf{1}_S(y) \mid \text{res}_i(y) = x]\right]\\
&= \frac{1}{\mu_{r-1}(S_{ri})} \prod_{j=1}^{\Delta}\mu_{r-1}(S_{rj}) \Pr_{x\sim \mc F_{r-1}}[x\in \text{res}_i(S)]\\
&= \left(\frac{1}{\mu_{r-1}(S_{ri})} \prod_{j=1}^{\Delta}\mu_{r-1}(S_{rj})\right)\left(\mu_0(S_{1i})\prod_{s=1}^{r-1}\left(\mu_s(S_{(s+1)i})\left(\frac{\prod_{j=1}^{\Delta}\mu_{s-1}(S_{sj})}{\mu_{s-1}(S_{si})}\right)\right)\right)\\
&= \prod_{s=1}^r\prod_{j=1}^{\Delta}\mu_{s-1}(S_{sj})\\
&= \rho_r(S) = \E_{y\sim \mc R_r}[\textbf{1}_S(y)]
\end{align*}

as desired.
\end{proof}

\begin{prop}\label{prop:r-to-f-sample-no-perm}
For any positive integer $r$ and random variable $X:\mc F_{r-1}\rightarrow \mathbb R$,

$$\E_{x\sim \mc R_{r-1}}\left[\E_{y\sim \mc F_{r-1}}[X(y)\mid \text{end}_A(y) = x]\right] = \E_{y\sim \mc F_{r-1}}[X(y)]$$
\end{prop}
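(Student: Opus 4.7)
The plan is to closely follow the template of the proof of Proposition \ref{prop:f-to-r-sample} just above. First, it suffices to prove the identity for $X = \textbf{1}_S$ where $S = S_0\times \prod_{s=1}^{r-1}(S_{sA}\times S_{sB})$ is a measurable product set in $\mc F_{r-1}$. This reduction is standard: product sets generate the product $\sigma$-algebra on $\mc F_{r-1}$, so Carath\'eodory extension gives the result for indicators of arbitrary measurable sets, and the tail integral $\E[X] = \int_0^{\infty}\Pr[X\ge \phi]\,d\phi$ (applied separately to the positive and negative parts) extends it to general integrable $X$.

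With $X = \textbf{1}_S$, Proposition \ref{prop:cond-end-defn} rewrites the inner expectation as $\Pr_{y\sim \phi_{r-1,A}(x)}[y\in S]$. Since $\phi_{r-1,A}(x) = \textbf{1}\times (\textbf{1}^2)^{r-2}\times (\textbf{1}\times \mu_{r-1})$ concentrates every coordinate of $y$ except $y_{(r-1)B}$ on a single point determined by $x$, reading off Definition \ref{def:endpoint-of-flower} with $v = A$, $u = B$ yields that point: $y_0 = x_{11}$, $y_{sB} = x_{(s+1)1}$ for $s\in [r-2]$, and $(y_{sA})_i = x_{s(i+1)}$ for $s\in[r-1]$, $i\in[\Delta-1]$, while $y_{(r-1)B}\sim \mu_{r-1}$. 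Hence
$$\Pr_{y\sim \phi_{r-1,A}(x)}[y\in S] \;=\; \mu_{r-1}(S_{(r-1)B})\cdot \textbf{1}_{x_{11}\in S_0}\cdot \prod_{s=1}^{r-2}\textbf{1}_{x_{(s+1)1}\in S_{sB}}\cdot \prod_{s=1}^{r-1}\textbf{1}_{(x_{s2},\hdots,x_{s\Delta})\in S_{sA}}.$$

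Next I would take the outer expectation over $x\sim \rho_{r-1}$. Since $\rho_{r-1} = \mu_0^{\Delta}\times \mu_1^{\Delta}\times\hdots\times\mu_{r-2}^{\Delta}$ is a product measure over independent coordinates $x_{sj}\sim \mu_{s-1}$, and since the indicator conditions above are supported on pairwise disjoint coordinate groups that together exhaust all $(r-1)\Delta$ coordinates of $x$ (the single entry $x_{11}$; the column $\{x_{(s+1)1}\}_{s\in[r-2]}$; and the row tails $\{(x_{s2},\hdots,x_{s\Delta})\}_{s\in[r-1]}$), the outer expectation factorizes. Using $\mu_{s-1}^{\Delta-1}(S_{sA}) = \mu_s(S_{sA})$ and combining with the leftover factor $\mu_{r-1}(S_{(r-1)B})$ yields
$$\mu_0(S_0)\cdot \prod_{s=1}^{r-1}\mu_s(S_{sA})\mu_s(S_{sB}) \;=\; \phi_{r-1}(S) \;=\; \E_{y\sim \mc F_{r-1}}[\textbf{1}_S(y)],$$
which is the right-hand side.

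The main obstacle is purely index bookkeeping: verifying both that the fixed-coordinate description of the fiber $\{y\in \mc F_{r-1}: \text{end}_A(y) = x\}$ is as claimed, and that the described coordinate groups on the $x$-side partition all of $\mc R_{r-1}$'s coordinates (consistent with the dimension count $\dim(\mc F_{r-1}) - \dim(\mc R_{r-1}) = (\Delta-1)^{r-1}$, which is precisely the dimension of the free coordinate $y_{(r-1)B}\in \mc S_{r-1}$). Once these identifications are made, the argument reduces to a straightforward product-measure computation, with no technical difficulty beyond what already appears in the proof of Proposition \ref{prop:f-to-r-sample}.
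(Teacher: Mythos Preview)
Your proposal is correct and follows essentially the same approach as the paper's proof: reduce to product-set indicators, use the coordinate description of $\text{end}_A$ to identify which coordinates of $y$ are fixed by $x$ (all but $y_{(r-1)B}$), and then factor the outer expectation over the product measure $\rho_{r-1}$. The paper packages the fixed-coordinate constraints into a single image set $\text{end}_A(S)$ and writes the conditional probability as $\textbf{1}_{x\in \text{end}_A(S)}\mu_{r-1}(S_{(r-1)B})$, whereas you write out the individual coordinate indicators explicitly, but this is purely notational. The only small omission is that the paper separately notes the $r=1$ case is immediate (since $\mc R_0=\{\emptyset\}$ is a singleton and the conditioning is vacuous), while your index formulas implicitly assume $r\ge 2$; you should add that one-line remark.
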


\begin{proof}
This fact is immediate when $r = 1$, so suppose that $r\ge 2$. Furthermore, it suffices to show this when $X = \textbf{1}_S$ for some product set $S\subseteq \mc F_{r-1}$; i.e. a set $S$ for which there exist sets $S_0\subseteq \mc S_0$, $S_{sB}\subseteq \mc S_s$ for all $s\ge 1$, and $S_{sAj}\subseteq \mc S_{s-1}$ for all $j\in \{2,3,\hdots,\Delta\}$ and $s \ge 1$ with the property that $S := S_0\times \prod_{s=1}^{r-1}\left(\left(S_{sA2}\times S_{sA3}\times\hdots\times S_{sA(\Delta-1)}\times S_{sA\Delta}\right)\times S_{sB}\right)$.
% For convenience, we will also label $S_0$ as $S_{1A1}$ and $S_{sB}$ as $S_{(s+1)A1}$ for all $s\in [r-1]$.
By the definition of conditional expectation in the proof of Proposition \ref{prop:cond-end-defn},

$$\E_{y\sim \mc F_{r-1}}[\textbf{1}_S(y) \mid \text{end}_A(y) = x] = \Pr_{y\sim \phi_{r,A}(x)}[y\in S]$$

Let $\text{end}_A(S) = S_0\times S_{1A2}\times S_{1A3}\times\hdots\times S_{1A\Delta}\times \prod_{s=2}^{r-1}\left(S_{(s-1)B}\times S_{sA2}\times S_{sA3}\times\hdots\times S_{sA\Delta}\right)$ denote the image of $S$ under $\text{end}_A$. Then

$$\Pr_{y\sim \phi_{r,A}(x)}[y\in S] = \textbf{1}_{x\in \text{end}_A(S)}\mu_{r-1}(S_{(r-1)B})$$

and

\begin{align*}
&\E_{x\sim \mc R_{r-1}}\left[\E_{y\sim \mc F_{r-1}}[\textbf{1}_S(y) \mid \text{end}_A(y) = x]\right]\\
&= \mu_{r-1}(S_{(r-1)B})\Pr_{x\sim \mc R_{r-1}}[x\in \text{end}_A(S)]\\
&= \mu_{r-1}(S_{(r-1)B})\left(\mu_0(S_0)\prod_{j=2}^{\Delta}\mu_0(S_{1Aj}) \prod_{s=2}^{r-1}\left(\mu_{s-1}(S_{(s-1)B})\prod_{j=2}^{\Delta}\mu_{s-1}(S_{sAj})\right)\right)\\
&= \mu_0(S_0)\prod_{s=1}^{r-1}\left(\mu_s(S_{sB})\prod_{j=2}^{\Delta}\mu_{s-1}(S_{sAj})\right)\\
&= \phi_{r-1}(S) = \E_{y\sim F_{r-1}}[\textbf{1}_S(y)]
\end{align*}

as desired.
\end{proof}

\proprtofsample*

\begin{proof}
By Proposition \ref{prop:vert-permute} with $\mc D$ being the uniform distribution over the permutations $\sigma_i$ for $i\in [\Delta]$ and applied to the random variable $Z(w) := \displaystyle\E_{y\sim \mc F_{r-1}}[X(y)\mid \text{end}_A(y) = w]$ for $w\in \mc R_{r-1}$, 

$$\E_{x\sim \mc R_{r-1}}\left[\E_{i\sim [\Delta]}\left[\E_{y\sim \mc F_{r-1}}[X(y)\mid \text{end}_A(y) = \sigma_i(x)]\right]\right] = \E_{x\sim \mc R_{r-1}}\left[\E_{y\sim \mc F_{r-1}}[X(y)\mid \text{end}_A(y) = x]\right]$$

Thus, applying Proposition \ref{prop:r-to-f-sample-no-perm} yields the desired result.
\end{proof}

\propftofsample*

\begin{proof}
For any $z\in \mc R_r$, define

$$X(z) = X_A(z)\E_{y\sim \mc F_r}[X_B(\text{end}_B(y)) \mid \text{end}_A(y) = z]$$

By Proposition \ref{prop:f-to-r-sample} for this choice of $X$ (first equality) and Proposition \ref{prop:r-to-f-sample-no-perm} for the random variable $Z(y) = X_A(\text{end}_A(y))X_B(\text{end}_B(y))$ for $y\in \mc F_r$ (last equality),

\begin{align*}
&\E_{x\sim \mc F_{r-1}}\left[\E_{z_A\sim \mc R_r}\left[X_A(z_A)\E_{y\sim \mc F_r}\left[X_B(\text{end}_B(y)) \mid \text{end}_A(y) = z_A\right]\middle| \text{ }\text{res}_1(z_A) = x\right]\right]\\
&= \E_{z_A\sim \mc R_r}\left[X_A(z_A)\E_{y\sim \mc F_r}\left[X_B(\text{end}_B(y)) \mid \text{end}_A(y) = z_A\right]\right]\\
&= \E_{z_A\sim \mc R_r}\left[\E_{y\sim \mc F_r}\left[X_A(\text{end}_A(y))X_B(\text{end}_B(y)) \mid \text{end}_A(y) = z_A\right]\right]\\
&= \E_{y\sim \mc F_r}\left[X_A(\text{end}_A(y))X_B(\text{end}_B(y))\right]
\end{align*}

Thus, to prove the desired result, it suffices to show that

$$\E_{y\sim \mc F_r}\left[X_B(\text{end}_B(y)) \mid \text{end}_A(y) = z\right] = \E_{w\sim \mc R_r}\left[X_B(w) \mid \text{res}_1(w) = \ol{x}\right]$$

for any $z\in \mc R_r$ for which $\text{res}_1(z) = x$. Intuitively, this holds because the part of $X_B$ only being a function of $\text{end}_B(y)$ means that all edges at distance $(r-1)$ to $r$ from $A$ as part of the conditioning $\text{end}_A(y) = z$ are irrelevant. To show the desired equality formally, it suffices to show this when $X_B = \textbf{1}_S$ for some product set $S = \prod_{s=1}^r\prod_{j=1}^{\Delta}S_{sj}$, where $S_{sj}\in \mc S_{s-1}$ for all $s\in [r]$ and $j\in [\Delta]$. Furthermore, for all $s\ge 2$, we may only consider $S_{sj}$s for which $S_{sj} = S_{sj1}\times S_{sj2}\times\hdots\times S_{sj(\Delta-1)}$ for some sets $S_{sjk} \in \mc S_{s-2}$. By definition,

$$\E_{y\sim \mc F_r}\left[\textbf{1}_S(\text{end}_B(y)) \mid \text{end}_A(y) = z\right] = \Pr_{y\sim \phi_{r,A}(z)}[\text{end}_B(y)\in S]$$

By definition of $\text{end}_B(y)$ (second equality), $\phi_{r,A}(z)$ (fourth equality), $\text{res}_1(z)$ (fifth equality), and $\rho_{1,A}(\ol{x})$ (seventh equality)

\begin{align*}
&\Pr_{y\sim \phi_{r,A}(z)}[\text{end}_B(y)\in S]\\
&= \prod_{s=1}^r\prod_{j=1}^{\Delta} \Pr_{y\sim \phi_{r,A}(z)}[\text{end}_B(y)_{sj} \in S_{sj}]\\
&= \Pr_{y\sim \phi_{r,A}(z)}[y_0 \in S_{11}]\left(\prod_{s=2}^r \Pr_{y\sim \phi_{r,A}(z)}[y_{(s-1)A} \in S_{s1}]\right)\left(\prod_{s=1}^r\prod_{j=2}^{\Delta}\Pr_{y\sim \phi_{r,A}(z)}[(y_{sB})_{j-1} \in S_{sj}]\right)\\
&= \Pr_{y\sim \phi_{r,A}(z)}[y_0 \in S_{11}]\left(\prod_{s=2}^r\prod_{k=1}^{\Delta-1} \Pr_{y\sim \phi_{r,A}(z)}[(y_{(s-1)A})_k \in S_{s1k}]\right)\left(\prod_{s=1}^r\prod_{j=2}^{\Delta}\Pr_{y\sim \phi_{r,A}(z)}[(y_{sB})_{j-1} \in S_{sj}]\right)\\
&= \textbf{1}_{z_{11}\in S_{11}}\left(\prod_{s=2}^r\prod_{k=1}^{\Delta-1}\textbf{1}_{z_{(s-1)(k+1)}\in S_{s1k}}\right)\left(\prod_{s=1}^{r-1}\prod_{j=2}^{\Delta}\textbf{1}_{(z_{(s+1)1})_{j-1} \in S_{sj}}\right)\left(\prod_{j=2}^{\Delta}\mu_{r-1}(S_{rj})\right)\\
&= \textbf{1}_{x_0\in S_{11}}\left(\prod_{s=2}^r\prod_{k=1}^{\Delta-1}\textbf{1}_{(x_{(s-1)A})_k\in S_{s1k}}\right)\left(\prod_{s=1}^{r-1}\prod_{j=2}^{\Delta}\textbf{1}_{(x_{sB})_{j-1} \in S_{sj}}\right)\left(\prod_{j=2}^{\Delta}\mu_{r-1}(S_{rj})\right)\\
&= \textbf{1}_{\ol{x}_0\in S_{11}}\left(\prod_{s=2}^r\prod_{k=1}^{\Delta-1}\textbf{1}_{(\ol{x}_{(s-1)B})_k\in S_{s1k}}\right)\left(\prod_{s=1}^{r-1}\prod_{j=2}^{\Delta}\textbf{1}_{(\ol{x}_{sA})_{j-1} \in S_{sj}}\right)\left(\prod_{j=2}^{\Delta}\mu_{r-1}(S_{rj})\right)\\
&= \Pr_{w\sim \rho_{r,1}(\ol{x})}[w_{11}\in S_{11}]\left(\prod_{s=2}^r\prod_{k=1}^{\Delta-1}\Pr_{w\sim \rho_{r,1}(\ol{x})}[(w_{s1})_k\in S_{s1k}]\right)\times\\
&\left(\prod_{s=1}^{r-1}\prod_{j=2}^{\Delta}\Pr_{w\sim \rho_{r,1}(\ol{x})}[w_{sj} \in S_{sj}]\right)\left(\prod_{j=2}^{\Delta}\Pr_{w\sim \rho_{r,1}(\ol{x})}[w_{rj} \in S_{rj}]\right)\\
&= \Pr_{w\sim \rho_{r,1}(\ol{x})}[w\in S] = \E_{w\sim \mc R_r}[\textbf{1}_S(w) \mid \text{res}_1(w) = \ol{x}]
\end{align*}

This is the desired equality.
\end{proof}

\proprtorsample*

\begin{proof}
\textbf{Definition of glue:} Define $\text{glue}(z_1,z_2,\hdots,z_{\Delta})$ as follows:

\begin{enumerate}
    \item $\text{glue}(z_1,z_2,\hdots,z_{\Delta})_{ri} := (z_i)_{(r-1)B}$ if $r \ge 2$ or $(z_i)_0$ if $r = 1$ for all $i\in [\Delta]$.
    \item $\text{glue}(z_1,z_2,\hdots,z_{\Delta})_{si} := x_{si}$ for all $s\in [r-1]$, $i\in [\Delta]$.
\end{enumerate}

We now check that this choice has the desired property coordinatewise. Let $y := \text{glue}(z_1,z_2,\hdots,z_{\Delta})$. If $r = 1$, then for any $i\in [\Delta]$,

$$\text{res}_i(y)_0 = y_{1i} = (z_i)_0$$

so $\text{res}_i(y) = z_i$ as desired. Now, consider $r > 1$. In this case, for all $i\in [\Delta]$,

$$(\text{res}_i(y))_0 = \sigma_i(y)_{11} = y_{1i} = x_{1i} = \sigma_i(\text{end}_A(z_i))_{1i} = \text{end}_A(z_i)_{11} = (z_i)_0$$

for all $s\in [r-1]$, $i\in [\Delta]$, and $j\in [\Delta-1]$,

$$((\text{res}_i(y))_{sA})_j = \sigma_i(y)_{s(j+1)} = y_{s\sigma_i(j+1)} = x_{s\sigma_i(j+1)} = \sigma_i(\text{end}_A(z_i))_{s\sigma_i(j+1)} = \text{end}_A(z_i)_{s(j+1)} = ((z_i)_{sA})_j$$

for all $s\in [r-2]$ and $i\in [\Delta]$,

$$(\text{res}_i(y))_{sB} = \sigma_i(y)_{(s+1)1} = y_{(s+1)i} = x_{(s+1)i} = \sigma_i(\text{end}_A(z_i))_{(s+1)i} = \text{end}_A(z_i)_{(s+1)1} = (z_i)_{sB}$$

and for all $i\in [\Delta]$,

$$(\text{res}_i(y))_{(r-1)B} = \sigma_i(y)_{r1} = y_{ri} = (z_i)_{(r-1)B}$$

This covers all coordinates, so $\text{res}_i(y) = z_i$ for all $i\in [\Delta]$ as desired.\\

\textbf{Uniqueness of glue:} Suppose, for the sake of contradiction, that there exist $x\in \mc R_{r-1}$, $z_1,z_2,\hdots,z_{\Delta}\in \mc F_{r-1}$ with $\text{end}_A(z_i) = \sigma_i(x)$ for all $i\in [\Delta]$, and $y\ne y'\in \mc R_r$ for which $\text{res}_i(y) = \text{res}_i(y') = z_i$ for all $i\in [\Delta]$. Note that for any $s\in \{2,3,\hdots,r\}$ and $i\in [\Delta]$,

\begin{align*}
y_{si} &= (\sigma_i(y))_{s1} = \text{res}_1(\sigma_i(y))_{(s-1)B} = \text{res}_i(y)_{(s-1)B}\\
&= \text{res}_i(y')_{(s-1)B} = \text{res}_1(\sigma_i(y'))_{(s-1)B} = (\sigma_i(y'))_{s1} = y_{si}'\\
\end{align*}

and

\begin{align*}
y_{1i} &= (\sigma_i(y))_{11} = \text{res}_1(\sigma_i(y))_0 = \text{res}_i(y)_0\\
&= \text{res}_i(y')_0 = \text{res}_1(\sigma_i(y'))_0 = (\sigma_i(y'))_{11} = y_{1i}'\\
\end{align*}

This covers all coordinates of $y$ and $y'$, so $y = y'$, a contradiction.\\

\textbf{Expectation property:} It suffices to prove the desired result when $X = \textbf{1}_S$ for a set $S := \prod_{s=1}^r\prod_{j=1}^{\Delta} S_{sj}$ for $S_{sj}\in \mc S_{s-1}$. By the definition of conditional expectation in the proof of Proposition \ref{prop:cond-end-defn} and the definition of $\text{glue}$\footnote{This reasoning works when $r\ge 2$; when $r=1$, replace $\zeta_{(r-1)B}$ with $\zeta_0$.},

\begin{align*}
&\E_{z_j\sim \mc F_{r-1}\text{ }\forall j\in [\Delta]}\left[\textbf{1}_S(\text{glue}(z_1,\hdots,z_{\Delta})) \mid (z_i = \zeta)\text{ and }(\text{end}_A(z_j) = \sigma_j(x)\text{ for all }j\in [\Delta])\right]\\
&= \Pr_{z_i = \zeta,\text{ }z_j\sim \phi_{r-1,A}(\sigma_j(x))\text{ }\forall j\ne i}[\text{glue}(z_1,\hdots,z_{\Delta}) \in S]\\
&= \prod_{s=1}^r\prod_{k=1}^{\Delta}\Pr_{z_i = \zeta,\text{ }z_j\sim \phi_{r-1,A}(\sigma_j(x))\text{ }\forall j\ne i}[\text{glue}(z_1,\hdots,z_{\Delta})_{sk} \in S_{sk}]\\
&= \left(\prod_{s=1}^{r-1}\prod_{k=1}^{\Delta}\textbf{1}_{x_{sk} \in S_{sk}}\right)\textbf{1}_{\zeta_{(r-1)B} \in S_{ri}}\left(\prod_{k\in [\Delta], k\ne i}\mu_{r-1}(S_{rk})\right)\\
\end{align*}

Note that $\sigma_i(\text{proj}(z)) = \text{proj}(\sigma_i(z)) = \text{end}_A(\text{res}_i(z)) = \text{end}_A(\zeta) = \sigma_i(x)$, so $\text{proj}(z) = x$. Thus, $z_{sk} = x_{sk}$ for all $s\in [r-1]$ and $k\in [\Delta]$. By the definition of conditional expectation in the proof of Proposition \ref{prop:cond-res-defn},

\begin{align*}
\E_{z\sim \mc R_r}[\textbf{1}_S(z) \mid \text{res}_i(z) = \zeta] &= \Pr_{z\sim \rho_{r,i}(\zeta)}[z\in S]\\
&= \prod_{s=1}^r\prod_{k=1}^{\Delta}\Pr_{z\sim \rho_{r,i}(\zeta)}[z_{sk}\in S_{sk}]\\
&= \left(\prod_{s=1}^{r-1}\prod_{k=1}^{\Delta}\Pr_{z\sim \rho_{r,i}(\zeta)}[z_{sk}\in S_{sk}]\right)\left(\prod_{k=1}^{\Delta}\Pr_{z\sim \rho_{r,i}(\zeta)}[z_{rk}\in S_{rk}]\right)\\
&= \left(\prod_{s=1}^{r-1}\prod_{k=1}^{\Delta}\textbf{1}_{x_{sk}\in S_{sk}}\right)\left(\prod_{k=1}^{\Delta}\Pr_{z\sim \rho_{r,i}(\zeta)}[z_{rk}\in S_{rk}]\right)\\
&= \left(\prod_{s=1}^{r-1}\prod_{k=1}^{\Delta}\textbf{1}_{x_{sk}\in S_{sk}}\right)\textbf{1}_{\zeta_{(r-1)B}\in S_{ri}}\left(\prod_{k\in [\Delta], k\ne i}\mu_{r-1}(S_{rk})\right)\\
\end{align*}

so the two expressions are equal, as desired.
\end{proof}

\end{document}